\numberwithin{equation}{section}
\newtheorem{theorem}{Theorem}[section]
\newtheorem{corollary}[theorem]{Corollary}
\newtheorem{lemma}[theorem]{Lemma}
\newtheorem{proposition}[theorem]{Proposition}
\newtheorem{prop}[theorem]{Proposition}
\newtheorem{remark}[theorem]{Remark}
\newtheorem{definition}[theorem]{Definition}
\theoremstyle{remark}
\def\@rst #1 #2other{#1}
\newcommand\MR[1]{\relax\ifhmode\unskip\spacefactor3000 \space\fi
  \MRhref{\expandafter\@rst #1 other}{#1}}
\newcommand{\MRhref}[2]{\href{http://www.ams.org/mathscinet-getitem?mr=#1}{MR#2}}
\def\MR#1{\href{http://www.ams.org/mathscinet-getitem?mr=#1}{MR#1}}
\newcommand{\bfb}{{\mathbf b}}
\newcommand{\bft}{{\mathbf t}}
\newcommand{\C}{\mathbbm{C}}
\newcommand{\D}{\mathbbm{D}}
\newcommand{\E}{\mathbbm{E}}
\newcommand{\N}{\mathbbm{N}}
\newcommand{\R}{\mathbbm{R}}
\newcommand{\T}{\mathbbm{T}}
\renewcommand{\P}{\mathbbm{P}}
\newcommand{\bbH}{\mathbbm{H}}
\newcommand{\eps}{\varepsilon}
\newcommand{\disk}{\mathrm{disk}}
\newcommand{\sm}{\mathsf{m}}
\newcommand{\LF}{\mathrm{LF}}
\newcommand{\QD}{\mathrm{QD}}
\newcommand{\QS}{\mathrm{QS}}
\newcommand{\QA}{\mathrm{QA}}
\newcommand{\QP}{\mathrm{QP}}
\newcommand{\Ug}{\Upsilon_{\frac\gamma2}}
\newcommand{\CCLE}{{C^{\CLE}_\kappa(\lambda_1,\lambda_2,\lambda_3)}}
\newcommand{\lexp}{{\beta}}
\newcommand{\CR}{\mathrm{CR}}
\let\Re\undefined
\DeclareMathOperator{\Re}{Re}
\let\Im\undefined
\DeclareMathOperator{\Im}{Im}
\DeclareMathOperator{\SLE}{SLE}
\DeclareMathOperator{\CLE}{CLE}
\def\cX{\mathcal{X}}
\def\cS{\mathcal{S}}
\def\cM{\mathcal{M}}
\def\cL{\mathcal{L}}
\def\cJ{\mathcal{J}}
\def\cI{\mathcal{I}}
\def\cF{\mathcal{F}}
\def\cE{\mathcal{E}}
\def\cD{\mathcal{D}}
\def\cC{\mathcal{C}}
\def\alb#1\ale{\begin{align*}#1\end{align*}}
\def\allb#1\alle{\begin{align}#1\end{align}}
\newcommand{\aryb}{\begin{eqnarray*}}
\newcommand{\arye}{\end{eqnarray*}}
\def\alb#1\ale{\begin{align*}#1\end{align*}}
\newcommand{\eqb}{\begin{equation}}
\newcommand{\eqe}{\end{equation}}
\newcommand{\eqbn}{\begin{equation*}}
\newcommand{\eqen}{\end{equation*}}
\newcommand{\ol}{\overline}
\newcommand{\ul}{\underline}
\newcommand{\frk}{\mathfrak}
\newcommand{\rta}{\rightarrow}
\newcommand{\wt}{\widetilde}
\newcommand{\wh}{\widehat}
\newcommand{\bdy}{\partial}
\newcommand{\lp}{\mathrm{loop}}
\newcommand{\Cont}{\mathrm{Cont}}
\newcommand{\conf}{\mathrm{conf}}
\newcommand{\Weld}{\operatorname{Weld}}
\newcommand{\GQD}{\mathrm{GQD}}
\newcommand{\GA}{\mathrm{GA}}
\newcommand{\GP}{\mathrm{GP}}
\let\Re\undefined
\let\Im\undefined
\def\cX{\mathcal{X}}
\def\cS{\mathcal{S}}
\def\cM{\mathcal{M}}
\def\cL{\mathcal{L}}
\def\cJ{\mathcal{J}}
\def\cI{\mathcal{I}}
\def\cF{\mathcal{F}}
\def\cE{\mathcal{E}}
\def\cD{\mathcal{D}}
\def\cC{\mathcal{C}}
\def\alb#1\ale{\begin{align*}#1\end{align*}}
\def\allb#1\alle{\begin{align}#1\end{align}}
\def\alb#1\ale{\begin{align*}#1\end{align*}}
\newcommand{\Re}{\mathrm{Re}}
\newcommand{\Im}{\mathrm{Im}}
\let\originalleft\left
\let\originalright\right
\renewcommand{\left}{\mathopen{}\mathclose\bgroup\originalleft}
\renewcommand{\right}{\aftergroup\egroup\originalright}
\DeclareMathAlphabet{\mathpzc}{OT1}{pzc}{m}{it}
\begin{document}

\title{Integrability of Conformal Loop Ensemble: Imaginary DOZZ Formula and Beyond}
\author{
\begin{tabular}{c}Morris Ang\\[-3pt]\small UC San Diego \end{tabular}\; 
\begin{tabular}{c}Gefei Cai\\[-3pt]\small Peking University  \end{tabular}\; 
\begin{tabular}{c}Xin Sun\\[-3pt]\small Peking University  \end{tabular}\;
\begin{tabular}{c}Baojun Wu\\[-3pt]\small Peking University\end{tabular}
}

\date{  }

\maketitle

\begin{abstract}
The scaling limit of the probability that $n$ points are on the same cluster for 2D critical percolation 
is believed to be governed by a conformal field theory (CFT). Although this is not fully understood, Delfino and  Viti (2010) made a remarkable prediction on the exact value of a properly normalized three-point probability. 
It is expressed in terms of the imaginary DOZZ formula of Schomerus, Zamolodchikov and Kostov-Petkova, which extends the structure constants of minimal model CFTs to continuous parameters.
Later, similar conjectures were made for scaling limits of random cluster models and O$(n)$ loop models, representing 
certain three-point observables in terms of the imaginary DOZZ formula. Since 
the scaling limits of these models can be described by the conformal loop ensemble (CLE), such conjectures can be formulated as exact statements on CLE observables. In this paper, we prove Delfino and  Viti's conjecture on percolation as well as a conjecture of Ikhlef, Jacobsen and Saleur (2015) on the nesting loop statistics of CLE. Our proof is based on the coupling between CLE and Liouville quantum gravity on the sphere, and is inspired by the fact that after reparametrization, the imaginary DOZZ formula is the reciprocal of the three-point function of Liouville CFT. Recently, Nivesvivat, Jacobsen and Ribault systematically studied a CFT with a large class of CLE observables as its correlation functions, including the ones from these two conjectures.
We believe that our framework admits sufficient flexibility to exactly solve the three-point functions for CLE observables with natural geometric interpretations, including those from this CFT.   As a demonstration, we solve the case corresponding to three points lying on the same loop, where the answer is a variant of the imaginary DOZZ formula.
\end{abstract}

\setcounter{tocdepth}{1}
\tableofcontents

\section{Introduction}
Conformal field theory (CFT) and Schramm-Loewner evolution (SLE) are two successful approaches to the scaling limits of two-dimensional lattice models at criticality. In the CFT approach, the seminal work~\cite{bpz-conformal-symmetry} developed the conformal bootstrap formalism and solved a series of CFTs called minimal models, which include natural observables from the 2D Ising model.
In this approach, higher order correlation functions of a CFT are expressed in terms of the three-point correlation functions that are encoded by the so-called structure constants and data which are determined by conformal symmetry.  For minimal models, the parameters in the formula for the structure constants vary on finite sets since the theory is rational. 
Zamolodchikov found a natural extension of the formula where the parameters vary continuously~\cite{zamolodchikov-gmm}. This is the so-called imaginary DOZZ formula. It is closely related to the DOZZ formula, which is the formula for the structure constants of another important CFT --- the Liouville CFT, see Section~\ref{section:dozz}. 

The scaling limit of 2D critical percolation is expected to be closely related to CFT.
Based on this assumption, Cardy derived a formula for the crossing probabilities for percolation in  rectangles~\cite{cardy-formula}; this formula was later rigorously
proved by Smirnov~\cite{smirnov-cardy}. However, the precise CFT describing percolation is not a minimal model and remains mysterious. Since 2010, it has been noticed in physics that the imaginary DOZZ formula can express certain geometric observables in percolation and its close variants, random cluster models and O$(n)$ loop models. The first such example, found by Delfino and Viti~\cite{Delfino_2010}, is the scaling limit of the probability that three points lie in the same percolation cluster.  Later, Ikhlef, Jacobsen and Saleur gave a similar prediction for the nesting loop statistics of the O$(n)$ loop model~\cite{ijs-cle}.
In the SLE approach, the full scaling limit of a critical lattice model is described by the conformal loop ensemble (CLE), a collection of SLE-type loops.  In light of this, the aforementioned geometric observables can be defined in terms of CLE. In this paper, we confirm the Delfino-Viti conjecture for percolation and the Ikhlef-Jacobsen-Saleur conjecture for CLE, verifying that these statistics are indeed expressed by the imaginary DOZZ formula as predicted in physics.
See Theorems \ref{thm:dv} and \ref{thm:nesting}.

Our proofs rely on the coupling between CLE and Liouville quantum gravity (LQG), which enjoys two sources of exact solvability. 
First, LQG surfaces coupled with CLE inherit a rich integrable structure
from random planar maps decorated with the O$(n)$ loop model, as exhibited  in~\cite{bbck-growth-frag,ccm-perimeter-cascade,msw-non-simple,msw-cle-lqg}. Second, the field  theory describing LQG called Liouville  conformal field theory  (Liouville CFT) is integrable. 
In particular, the DOZZ formula~\cite{krv-dozz} and FZZ formula~\cite{ARS-FZZ} for its structure constants are crucial inputs to our results. 
The high-level strategy of combining inputs from both sources of solvability was first proposed in~\cite{AHS-SLE-integrability} and successfully implemented in several problems~\cite{nolin2024backboneexponenttwodimensionalpercolation,ang2024boundarytouchingprobabilitynestedpath, nonsimple-welding,ars-annuli}. In the setting of the Delfino-Viti conjecture, the main new difficulty lies in identifying the right set-up where the three-point function can be solved. This requires understanding the natural measures defined on the carpet or gasket of CLE. For the Ikhlef-Jacobsen-Saleur conjecture, the main new difficulty is to solve an LQG surface having the pair of pants topology.

Although our framework   requires case-by-case analysis, we believe it admits sufficient flexibility to solve the 
three-point function for any CLE observables having natural geometric definitions. In fact, as a warm-up for our proof of the Delfino-Viti conjecture, we first solve the analogous problem where three points are on the same percolation interface instead of on the same cluster. The answer is a variant of the imaginary DOZZ formula; see Theorem \ref{thm:mag}. This was an open question in the physics community. Recently, through private communication,  Nivesvivat, Ribault and Jacobsen informed us that they found the same formula using both the (non-rigorous) conformal bootstrap method from~\cite{Nivesvivat:2023kfp} and a numerical method based on the Temperley-Lieb algebra. We plan to investigate other  three-point functions of CFT significance in the future, such as the ones appeared in \cite{Nivesvivat:2023kfp,camia2024,baverez2024cftsleloopmeasures}. See Section~\ref{sec: outlook} for further discussion.

\subsection{The Liouville and imaginary DOZZ formulae}\label{section:dozz}

Liouville CFT was originally introduced by Polyakov~\cite{polyakov-qg1} and recently  made mathematically rigorous in~\cite{dkrv-lqg-sphere} and follow-up works~\cite{hrv-disk,drv-torus, grv-higher-genus,remy-annulus}; see Section~\ref{subsubsec:GFF} for more background.
Its three-point correlation function on the sphere depends on the three points $z_1, z_2, z_3 \in \C$ and parameters $\alpha_1, \alpha_2, \alpha_3 \in \R$. The dependence on $(z_1, z_2, z_3)$ is trivial by conformal covariance, and the dependence on $(\alpha_1, \alpha_2, \alpha_3)$ is encoded by the structure constants of Liouville CFT.
It  was first proposed in theoretical physics~\cite{do-dozz,zz-dozz} and then rigorously proved in~\cite{krv-dozz} that 
the structure constant has the following remarkable expression known as  the DOZZ formula.  
Suppose  $\gamma\in (0,2)$ is the coupling parameter for Liouville CFT and $Q=\frac{\gamma}{2} +\frac{2}{\gamma}$ is its background charge.
The DOZZ formula is defined in terms of the Barnes double gamma function $\Gamma_b$ \cite[Section 5]{rz-boundary}. Let $Q_b=b+\frac{1}{b}$ and $\Re(b)>0$, then $\Gamma_b(z)$ is the unique meromorphic function that admits the integral representation
$$
\log \Gamma_b(z)=\int_0^{\infty} \frac{\mathrm{d} t}{t}\left(\frac{\mathrm{e}^{\frac{t}{2}(Q_b-2 z)}-1}{4 \sinh \left(\frac{b t}{2}\right) \sinh \left(\frac{t}{2 b}\right)}-\frac{1}{8}(Q_b-2 z)^2 \mathrm{e}^{-t}-\frac{Q_b-2 z}{2 t}\right)\quad \textrm{when }
\operatorname{Re}(z)>0.
$$
The DOZZ formula (with cosmological constant being 1, see Proposition~\ref{lem-sph-area-law}) is given by
\begin{equation}\label{eq:DOZZ}
C^\mathrm{DOZZ}_\gamma(\alpha_1, \alpha_2, \alpha_3) = \left( \pi(\frac\gamma2)^{2-\gamma^2/2} \ell(\frac{\gamma^2}{4}) \right)^{\frac{2Q-\ol \alpha}{\gamma}}\frac{\Ug'(0) \Ug(\alpha_1)\Ug(\alpha_2)\Ug(\alpha_3)}{\Ug(\frac{\ol \alpha}2 - Q) \Ug(\frac{\ol \alpha}2 - \alpha_1)\Ug(\frac{\ol \alpha}2 - \alpha_2)\Ug(\frac{\ol \alpha}2 - \alpha_3)},
\end{equation}
where $\ol \alpha= \sum \alpha_j$, $\ell(x):=\Gamma(x)/\Gamma(1-x)$ and $\Ug(z)=\frac{1}{\Gamma_{\frac{\gamma}{2}}(z)\Gamma_{\frac{\gamma}{2}}(Q-z)}$. 

The imaginary DOZZ formula is a closely related formula proposed by~\cite{Schomerus2003,zamolodchikov-gmm,Kostov-DOZZ}:
\begin{align}\label{eq:imdozz}
C_{b}^{\rm ImDOZZ}(\wh\alpha_1,\wh\alpha_2,\wh\alpha_3)= A\Upsilon_b\left(2 b-b^{-1}+\sum_{j=1}^3 \wh\alpha_j\right)\prod_{i=1}^3\frac{ \Upsilon_b\left(\wh\alpha_1+\wh\alpha_2+\wh\alpha_3-2\wh\alpha_i+b\right) }{\left[\Upsilon_b\left( 2\wh\alpha_i+b\right) \Upsilon_b\left( 2\wh\alpha_i+2 b-b^{-1}\right)\right]^{1 / 2}}
\end{align}
where the normalization factor 
$
A=\frac{b^{b^{-2}-b^2-1}\left[\ell\left(b^2\right) \ell\left(b^{-2}-1\right)\right]^{1 / 2}}{\Upsilon_b(b)}
$  makes  $C_{b}^{\rm ImDOZZ}(\wh\alpha,\wh\alpha,0)=1$.
The formula~\eqref{eq:imdozz} can be obtained by solving Teschner's shift relations~\cite{Teschner-DOZZ} for the Liouville  DOZZ formula with the shifts being imaginary; see~\cite{Schomerus2003,zamolodchikov-gmm}. It was motivated by the problem of extending the minimal model CFT to continuously varying parameters. Indeed, when specializing to minimal model parameters, the imaginary DOZZ formula recovers the structure constants of these CFTs.
This formula has been used to express the structure constants of various CFTs with central charge less than 1. Understanding these  CFTs remains an active topic in physics; see~\cite{Witten-DOZZ,Ribault-Santachiara,Bautista2019} and references therein.
As observed  in~\cite{zamolodchikov-gmm} and~\cite{Kostov-DOZZ},
the product of the imaginary DOZZ formula and the normalized Liouville DOZZ formula is given by
\begin{equation}\label{legfactor}
\frac{C_{\gamma}^{\rm DOZZ}(\alpha_1,\alpha_2,\alpha_3)\sqrt{C_{\gamma}^{\rm DOZZ}(\gamma,\gamma,\gamma)}}{\sqrt{\prod_{i=1}^3 C_{\gamma}^{\rm DOZZ}(\alpha_i,\alpha_i,\gamma)}}   \times C_{b}^{\rm ImDOZZ}(\wh \alpha_1,\wh \alpha_2,\wh \alpha_3)=1.
\end{equation}
where $b=\frac{\gamma}{2}$ or $\frac{2}{\gamma}$ and $\wh\alpha_i=\frac{\alpha_i}{2}-b$ for $i=1,2,3$.
The motivation for this formula comes from bosonic string theory. In that context, the Liouville central charge $c_L=1+6(\frac{\gamma}{2}+\frac{2}{\gamma})^2$ and the matter central charge  $c_{m}=1-6(b-\frac{1}{b})^2$ should add up to 26, which means $b=\frac{\gamma}{2}$ or $\frac{2}{\gamma}$.  Moreover, the conformal scaling dimensions for $\alpha_i$ and $\wh \alpha_i$ adds up to 2, which is the so-called  Knizhnik-Polyakov-Zamolodchikov (KPZ) relation. Equation~\eqref{legfactor} is consistent with the probabilistic phenomenon that the SLE/LQG coupling enjoys rich solvability.

\subsection{The Delfino-Viti formula and  the Green functions of CLE clusters}\label{section: connectivity}

Consider the critical Bernoulli site percolation on the triangular lattice $\delta\T$ with mesh size $\delta$, where each site is colored open (black) or closed (white) independently with probability $\frac12$. For $n$ points $z_1, \ldots, z_n\in\C$, let $z_i^\delta$'s be their approximations on $\delta\T$. Define $P_n^\delta\left(z_1^\delta, \ldots, z_n^\delta\right)$ to be the probability that $z_1^\delta,...,z_n^\delta$ are in the same open cluster. Let $\pi_\delta=\P[0\leftrightarrow\partial B_1^\delta]$ be the probability that the origin is connected to the circle $\{|z|=1\}$. 
Then by \cite{camia2023conformal}, the limit $P_n\left(z_1, \ldots, z_n\right):=\lim _{a \rightarrow 0} \pi_\delta^{-n} P_n^\delta\left(z_1^\delta, \ldots, z_n^\delta\right)$ exists and satisfies  conformal covariance. Conformal covariance implies that for some constants $C_2,C_3$ we have $P_2(z_1,z_2)=C_2|z_1-z_2|^{-2(2-d)}$ and $P_3(z_1,z_2,z_3)=C_3|z_1-z_2|^{-(2-d)}|z_2-z_3|^{-(2-d)}|z_1-z_3|^{-(2-d)}$ with $d=\frac{91}{48}$ being the dimension of open clusters. Therefore
\begin{equation}\label{pc}
R^{\rm perc}:=\frac{P_3(z_1,z_2,z_3)}{\sqrt{P_2(z_1,z_2)P_2(z_1,z_3)P_2(z_2,z_3)}}
\end{equation} does not depend on $z_1,z_2,z_3$.
In this paper we prove the following conjecture from \cite{Delfino_2010}.
\begin{theorem}\label{thm:dv} With $b=2/\sqrt{6}$ in the imaginary DOZZ formula~\eqref{eq:imdozz}, we have
\begin{align}
   R^{\rm perc}=\sqrt{2}  C_{b}^{\rm ImDOZZ}(\frac{1}{4b}-\frac{b}{2},\frac{1}{4b}-\frac{b}{2},\frac{1}{4b}-\frac{b}{2})\approx1.02201.
\end{align}
\end{theorem}

We prove Theorem~\ref{thm:dv} using CLE$_6$. For $\kappa\in (\frac{8}{3},8)$,  CLE$_\kappa$ is a random countable collection of loops, each of which is an SLE$_\kappa$ curve~\cite{Sheffield2006ExplorationTA}. The loops are simple when $\kappa\in (\frac{8}{3},4]$  and non-simple when $\kappa\in (4,8)$. CLE satisfies conformal invariance and a spatial Markov property and is conjectured to describe the scaling limit of a wide class of 2D statistical physics models at criticality, such as the FK random cluster model and the O$(n)$ loop model. For Bernoulli site percolation on the triangular lattice, it is known that the interfaces between the open and closed clusters converge to  $\CLE_6$~\cite{camia-newman-sle6}. $\CLE_\kappa$ on a simply connected domain $D \subset \C$ was constructed by Sheffield~\cite{Sheffield2006ExplorationTA}, and full-plane  $\CLE_\kappa$ can be constructed by sending  $D \uparrow \C$. Full-plane $\CLE_\kappa$ can also be interpreted as living on the Riemann sphere $\wh \C = \C \cup \{\infty\}$, and in that setting,
the conformal invariance of full-plane  $\CLE_\kappa$ in the simple and non-simple regimes were established in~\cite{werner-sphere-cle} and \cite{gmq-cle-inversion} respectively.  

Let $\Gamma=\{\eta_i\}_{i\ge 1}$ be a full-plane $\CLE_\kappa$ with an enumeration of its loops. 
When $\kappa\in (\frac83,4]$, we say that a point $p\in \C$ is surrounded by a loop $\eta_i$ if $p$ is in the bounded connected component of $\C\setminus \eta_i$. When $\kappa\in (4,8)$ so that $\eta_i$ is non-simple, we say that $p\in \C\setminus \eta_i$ is surrounded by $\eta_i$ if $\eta_i$ has nonzero winding number with respect to $p$.
The loops in $\Gamma$ have a nested structure: for $i\neq j$, the regions surrounded by $\eta_i$ and $\eta_j$ are either disjoint, or one is a subset of the other.
Let $K_i$ be the closure of the set of points that are surrounded by $\eta_i$ but not surrounded by any loop inside $\eta_i$. We call $K_i$
the \emph{cluster} associated with $\eta_i$. In the literature, the CLE cluster is usually called the CLE carpet when $\kappa\in (\frac83,4]$ or the CLE gasket when $\kappa\in (4,8)$.

By~\cite{miller2014hausdorff,nacu-werner-carpet,ssw-radii}, the dimension of a CLE$_\kappa$ cluster is $d=2-\frac{(8-\kappa)(3 \kappa-8)}{32 \kappa}$. In \cite{miller2023existence}, Miller and Schoug constructed a $d$-dimensional measure on CLE clusters which is unique under a few assumptions including  conformal covariance. (A similar uniqueness result was also obtained by Li and the second named author~\cite{cai2022natural}.) This measure was constructed via LQG, and is conjectured to be equivalent to the $d$-dimensional Minkowski content in the spirit of~\cite{lawler-rezai-nat}.  Although the measure was constructed for the outermost cluster for CLE on the disk, the definition naturally extends to the full-plane setting. We recall the construction in Section~\ref{sec:natural-measures}. 
Let $\cM^{\rm cluster}_i$ be the Miller-Schoug measure for the cluster $K_i$  associated to $\eta_i$. 
For $n\ge 2$, we define the $n$-point Green function $G_n(z_1,z_2,..,z_n)$ of $\{\cM^{\rm cluster}_i\}_{i\ge 1}$ by requiring that for  disjoint compact sets $U_1,\cdots, U_n$ in $\C$ we have 
\begin{equation}\label{eq:greendef}
 \E[\sum_{ i\ge 1 }  \int_{U_1\times \cdots \times U_n}   \prod_{k=1}^n \cM^{\rm cluster}_i(dz_k)]  = \int_{U_1\times \cdots \times U_n}  G^{\rm cluster}_n(z_1, \cdots, z_n) \prod_{k=1}^n dz_k.   
\end{equation} 
For $\kappa=6$, the $\CLE_6$ clusters $\{K_i\}_{i\ge 1}$ describe the scaling limit of all (open and closed) macroscopic clusters of the Bernoulli site percolation on $\delta \T$.  As explained in \cite[Section 8.4]{cai2022natural}, the Miller-Schoug measure on a given cluster agrees with the scaling limit of the counting measure on the discrete cluster~\cite{gps-pivotal} up to a multiplicative constant. This gives:
\begin{proposition}\label{prop:Rperc}
 There exists a constant $\frak{C}>0$ such that for $n\ge 2$, we have $2P_n(z_1,\cdots, z_n)$ equals $\mathfrak{C}^nG^{\rm cluster}_n(z_1, \cdots, z_n)$ with $\kappa=6$.
\end{proposition}
 We provide a detailed proof of Proposition~\ref{prop:Rperc} in Appendix~\ref{percolation}. The factor $2$ comes from the fact that the definition of $P_n(z_1,\cdots, z_n)$ only involves  open clusters while for $G^{\rm cluster}_n$ we include both the open and closed ones.   By conformal covariance of the Miller-Schoug measure, we have:
\begin{lemma}\label{covariance_cluster}
We have $G_2^{\rm cluster}(z_1,z_2)=C_2^{\rm cluster} |z_1-z_2|^{-2(2-d)}$ and $G_3^{\rm cluster}(z_1,z_2,z_3)=C_3^{\rm cluster} |z_1-z_2|^{-(2-d)}|z_2-z_3|^{-(2-d)}|z_1-z_3|^{-(2-d)}$ with $d=2-\frac{(8-\kappa)(3 \kappa-8)}{32 \kappa}$ and constants $C_2^{\rm cluster}, C_3^{\rm cluster}$.
\end{lemma}
We explain Lemma~\ref{covariance_cluster} in more detail in Section~\ref{sec:natural-measures}. 
By Lemma~\ref{covariance_cluster}, we  can define
\begin{equation}\label{c}
R^{\rm cluster}(\kappa):=\frac{G^{\rm cluster}_3(z_1,z_2,z_3)}{\sqrt{G^{\rm cluster}_2(z_1,z_2)G^{\rm cluster}_2(z_1,z_3)G^{\rm cluster}_2(z_2,z_3)}}.
\end{equation}

Our next theorem expresses $R^{\rm cluster}(\kappa)$ via the imaginary DOZZ formula. 

\begin{theorem}\label{thm: connectivity}
For  $\kappa\in(\frac{8}{3},8)$ and $b=2/\sqrt{\kappa}$, 
\begin{align}\label{eq:carpet-formula}
R^{\rm cluster}(\kappa)
=C_{b}^{\rm ImDOZZ}(\frac{1}{4b}-\frac{b}{2},\frac{1}{4b}-\frac{b}{2},\frac{1}{4b}-\frac{b}{2}).
\end{align}
\end{theorem}

\begin{proof}[Proof of Theorem \ref{thm:dv} given Theorem~\ref{thm: connectivity}]
By Proposition~\ref{prop:Rperc}, the ratio $R^{\rm perc}$ from~\eqref{pc} equals  $\sqrt{2} R^{\rm cluster}(6)$. By Theorem~\ref{thm: connectivity}  we get Theorem \ref{thm:dv}.
\end{proof}
We can consider the analog of $R^{\rm perc}$ for any percolation model with $\CLE$ as its scaling limit. For
the FK-$q$ random cluster model with  $q=4\cos^2\frac{4\pi}{\kappa}\in (0,4)$, it was conjectured that the analogue of $R^{\rm perc}$ equals $\sqrt{2}R^{\rm cluster}(\kappa)$. 
If the convergence of the discrete cluster measure and $P_n^\delta(z_1^\delta,\cdots, z_n^\delta)$ to their  CLE counterparts is available, then the $\kappa\in (4,8)$ case of  Theorem~\ref{thm: connectivity}  confirms this prediction as well. For $q=2$, which is the Ising-FK model, the desired convergence was established  in \cite{camia2019conformal,camia2024conformalcovarianceconnectionprobabilities}.
As informed by the second named author of \cite{camia2024conformalcovarianceconnectionprobabilities},  the same convergence would hold for $1\le q\le 4$ once it is known that the percolation interfaces converge to CLE in the sense of Camia and Newman~\cite{camia-newman-sle6}.
For the spin cluster of the critical Ising model, whose scaling limit is described by $\CLE_3$ \cite{benoist-hongler-cle3}, it was conjectured in~\cite{Delfino:2013pma} that the analog of $R^{\rm perc}$ equals $R^{\rm cluster}(3)$. (There is no $\sqrt2$ factor in front because they consider both $+$ and $-$ spin clusters instead of a single type.) This is consistent with Theorem~\ref{thm: connectivity} for $\kappa=3$.

\subsection{The Ikhlef-Jacobsen-Saleur conjecture on the nesting loop statistics}\label{subsec:3pt-intro}

The O$(n)$ loop model is a statistical physics model where a configuration $\cL$ is a collection of disjoint loops on a graph. Given the inverse temperature $x>0$, the partition function is $Z(n)=\sum_{\cL} x^{V(\cL)}n^{N(\cL)} $, 
where $V(\cL)$ is the number of vertices occupied by all loops in $\cL$ and $N(\cL)$ is the number of loops. Let  $x_c=(2+(2-n)^{\frac{1}{2}})^{-\frac{1}{2}}$. For $n\in (0,2]$ and $x\geq x_c$, the loops in the O$(n)$ loop model on the hexagonal lattice are conjectured to converge to 
CLE$_\kappa$ where $\kappa$ satisfies  $n=-2\cos(4\pi/\kappa)$. 
The case  $x = x_c$ is called the dilute regime, and in this setting we have $\kappa \in (8/3, 4]$, whereas the case  $x > x_c$ is called the dense regime, and here $\kappa \in [4, 8)$.
The case $n=1$ and  $x=x_c$ corresponds to the spin interface of the Ising model and the convergence to $\CLE_3$ is known~\cite{benoist-hongler-cle3}. The case $n=1$ and  $x=1>x_c$ corresponds to the site percolation on the triangular lattice  hence 
the convergence to $\CLE_6$ is known~\cite{camia-newman-sle6}. 

In the O$(n)$ loop model each loop is assigned weight $n$. In~\cite{ijs-cle}, Ikhlef, Jacobsen, and Saleur considered the following modification. Fix three faces $z_1, z_2,z_3$ on the hexagonal lattice. If a loop separates $z_i$ ($i=1,2,3$) from the other two points, we assign it weight $n_i$, and we assign all other loops weight $n$. Let  $Z(n;n_1,n_2,n_3)$ be the new partition function.
It was conjectured  in~\cite{ijs-cle}  that the continuum limit of $Z(n;n_1,n_2,n_3)/Z(n)$ can be described by the imaginary DOZZ formula, where the parameters $b$ and $\wh\alpha_i$ in~\eqref{eq:imdozz}  depend on $\kappa$ and $n_i$ as follows. First, $b$ is given by
\begin{align}\label{eq:beta}
b=2/\sqrt{\kappa} \quad \textrm{where } n=-2\cos(4\pi/\kappa) \textrm{ and }  \kappa\in (8/3,8).
\end{align}
As before, we select $\kappa \in (8/3,4]$ if $x = x_c$ and $\kappa \in [4,8)$ if $x > x_c$.
Set $P_0=\frac{1}{2}(\frac{1}{b}-b )$ so that $n=2\cos(2\pi b P_0)$. Define $P_i$ by $n_i=2\cos(2\pi b P_i)$ with $2\pi b P_i\in (-\pi/2,\pi/2)$. Then $\wh\alpha_i$ is given by
\begin{equation}\label{eq:Pidef}
\wh\alpha_i=P_{0}-P_i \quad \textrm{for }  i=1,2,3.  
\end{equation}
This parametrization is from~\cite{Nivesvivat:2023kfp}, and the CFT meaning of the nest loop statistics is clarified in a prior work~\cite{Rilbault-diagonal}. See the discussion below Theorem~\ref{thm:mag}.

The existence of the scaling limit for  $Z(n;n_1,n_2,n_3)/Z(n)$ as assumed in~\cite{ijs-cle} has not been rigorously proved even when $n=1$ and $x\in \{x_c, 1\}$, as this requires a much stronger topology than known convergence results. However, the Ikhlef-Jacobsen-Saleur conjecture corresponds to the following statement on the CLE side, which we prove in Section~\ref{sec:3-pt}.

We say a (possibly non-simple) loop $\eta$ \emph{separates} $S, S' \subset \hat \C$ if $\eta \cap (S \cup S') = \emptyset$ and for every $z \in S$ and $z' \in S'$ the winding numbers of $\eta$ with respect to $z$ and $z'$ differ. This definition does not depend on the orientation of $\eta$ and is invariant under homeomorphisms of $\hat \C$.

\begin{theorem}\label{thm:nesting}
Consider full-plane  $\CLE_\kappa$ for $\kappa\in (\frac83, 8)$. Let $z_1, z_2, z_3$ be distinct points in $\C$. For $i = 1,2,3$ and a small $\eps>0$, let $X_\eps(z_i)$ be the number of loops that separate the ball $B_\eps(z_i)=\{z\in \C\mid |z-z_i|<\epsilon\}$ from the other two points. Let $n_i \in (0,2]$. For $P_{0}$ and $P_i$ defined as above~\eqref{eq:Pidef}, let $\Delta_i = 2(P_i^2-P_{0}^2)$. 
Set $\Delta_{12} = \frac12 (\Delta_1+\Delta_2 - \Delta_3)$ and likewise $\Delta_{23}, \Delta_{13}$. 
Then there is a constant $C_3(\Delta_1, \Delta_2, \Delta_3)$ such that 
\eqb\label{eq-main-limit}
\lim_{\eps \to 0} \E \left[ \prod_{i=1}^3 \eps^{-\Delta_i} (\frac{n_i}n)^{X_\eps(z_i)} \right] = C_3(\Delta_1, \Delta_2, \Delta_3) |z_1 - z_2|^{-2 \Delta_{12}} |z_1 - z_3|^{-2 \Delta_{13}} |z_2 - z_3|^{-2 \Delta_{23}}.
\eqe
For $b$ and $\wh\alpha_i$ defined in~\eqref{eq:beta} and~\eqref{eq:Pidef}, we have
\eqb \label{eq-main-idozz}
\frac{C_3 (\Delta_1, \Delta_2, \Delta_3)}{\sqrt{C_3(\Delta_1, \Delta_1, 0)C_3(\Delta_2, \Delta_2, 0)C_3(\Delta_3, \Delta_3, 0)}} = 
C_{b}^{\rm ImDOZZ}(\wh\alpha_1,\wh\alpha_2,\wh\alpha_3)
\eqe
\end{theorem}

As demonstrated in~\cite{mww-nesting,holden2022liouville}, the nesting statistics of CLE loops are closely related to the conformal radii of certain loops. To prove Theorem~\ref{thm:nesting}, the key is to obtain the joint moments of the conformal radii of $\eta_i$ viewed from $z_i$, where $\eta_i$ is the outermost loop separating $z_i$ and the other two points. See Section~\ref{sec:loop-counting} for more details. 

Note that as $n_i\to 0$ we have $\wh\alpha_i\to \frac{1}{4b}-\frac{b}{2}$ hence~\eqref{eq-main-idozz} becomes the RHS of Theorem~\ref{thm: connectivity}. This is intuitively clear as $n_1=n_2=n_3=0$ means that there is no loop separating $z_1,z_2,z_3$ hence they lie on the same CLE cluster.  However, we found it difficult to prove Theorem~\ref{thm: connectivity} directly from Theorem~\ref{thm:nesting}, partly because the definition of the Miller-Schoug measure for the CLE cluster goes through Liouville quantum gravity.

\subsection{The three-point Green function for the CLE Loops}\label{section:loop-connectivity}
Theorems~\ref{thm: connectivity} and \ref{thm:nesting} solve two instances of geometrically interesting CLE three-point functions. 
Another such example is the Green function for the natural measure on CLE loops. For $\kappa\in (\frac83,8)$, let $\Gamma$ be a full-plane CLE$_\kappa$ and $\{\eta_i\}_{i\ge 1}$ be an enumeration of all the loops. By~\cite{lawler-rezai-nat}, the $(1+\frac{\kappa}{8})$-dimensional Minkowski content defines a measure $\cM^{\rm loop}_i$ on each $\eta_i$ since it is a $\SLE_\kappa$ type loop. We can define the Green function $G^{\rm loop}_n(z_1,z_2,\cdots z_n)$ of $\{\cM^{\rm loop}_i\}_{i\ge 1}$ analogously to \eqref{eq:greendef}. 
Similarly to Lemma~\ref{covariance_cluster},
we have $G_2^\lp(z_1,z_2)=C_2^\lp |z_1-z_2|^{-2(2-d)}$ and $G_3^\lp(z_1,z_2,z_3)=C_3^\lp |z_1-z_2|^{-(2-d)}|z_2-z_3|^{-(2-d)}|z_1-z_3|^{-(2-d)}$ with $d=1+\frac{\kappa}{8}$ and some constants $C_2^\lp$ and  $C_3^\lp$.
Therefore, the ratio 
$$R^{\rm loop}(\kappa):=\frac{G_3^\lp(z_1,z_2,z_3)}{\sqrt{G_2^\lp(z_1, z_2) G_2^\lp (z_2,z_3) G_2^\lp(z_3,z_1)}}$$ 
does not depend on $z_1,z_2,z_3$. The ratio $R^{\rm loop}(\kappa)$ describes the scaling limit of the normalized probability that three edges are on the same loop in the planar O$(n)$ loop model on the hexagonal lattice. For the case $n=1$ and $x=1$ corresponding to the Bernoulli site percolation, this limit can be established by the same type of argument in~\cite{camia2023conformal}  based on~\cite{gps-pivotal}.  In physics, this was first considered by Estienne and Ikhlef~\cite{Estienne:2015sua} under the name of the three-point function for the magnetic operator of weight $(1,0)$ for the loop model, while the nesting loop statistics was called electric operators. The exact evaluation of $R^{\rm loop}(\kappa)$ was left open in~\cite{Estienne:2015sua}.
In Theorem~\ref{thm:mag}  below we express $R^{\rm loop}(\kappa)$ in terms of an extension of the imaginary DOZZ formula proposed by Nivesvivat, Jacobsen, and Rilbault~\cite{Nivesvivat:2023kfp}. Let 
$$C_{(r_1,s_1),(r_2,s_2),(r_3,s_3)}=\frac{1}{\prod_{\epsilon_1, \epsilon_2, \epsilon_3= \pm} \Gamma_b\left(\frac{b+b^{-1}}{2}+\left|\sum_{i=1}^3 \epsilon_i r_i\right| \frac{b}{2}+\left(\sum_{i=1}^3 \epsilon_i s_i\right) \frac{1}{2 b}\right)},$$
\begin{equation}\label{nor njr}
    \omega^{(b)}_{(r_1,s_1),(r_2,s_2),(r_3,s_3)}=\frac{C_{(r_1,s_1),(r_2,s_2),(r_3,s_3)}\sqrt{C_{(0,1-b^2),(0,1-b^2),(0,1-b^2)}}}{\sqrt{C_{(0,1-b^2),(r_2,s_2),(r_2,s_2)}C_{(r_3,s_3),(0,1-b^2),(r_3,s_3)}C_{(r_1,s_1),(r_1,s_1),(0,1-b^2)}}}.
\end{equation}
Using the shift equations
\begin{equation}\label{eq:shift double gamma}
\frac{\Gamma_b(x+b)}{\Gamma_b(x)}=\sqrt{2 \pi} \frac{b^{b x-\frac{1}{2}}}{\Gamma(b x)} \quad \textrm{and} \quad \frac{\Gamma_b\left(x+b^{-1}\right)}{\Gamma_b(x)}=\sqrt{2 \pi} \frac{b^{\frac{1}{2}-b^{-1} x}}{\Gamma\left(b^{-1} x\right)},
\end{equation} 
 the imaginary DOZZ formula~\eqref{eq:imdozz} can be written as 
\begin{equation}\label{eq:0s}
    C_{b}^{\rm ImDOZZ}(\wh\alpha_1,\wh\alpha_2, \wh\alpha_3)=\omega^{(b)}_{(0,2b P_1),(0,2b P_2),(0,2b P_3)}
\end{equation}
where $\wh\alpha_i=P_0(b)-P_i$ as in~\eqref{eq:Pidef}. Note that  $1-b^2$  from~\eqref{nor njr} can be written as $2b P_0$.
\begin{theorem}\label{thm:mag} 
For $\kappa\in(\frac{8}{3},8)$ and  $b=2/\sqrt{\kappa}$,
\begin{equation}\label{eq:Rformula}
    R^{\rm loop}(\kappa)=\frac{1}{\sqrt{2\cos(\pi(1-b^2))}}\omega^{(b)}_{(1,0),(1,0),(1,0)}.
\end{equation}\end{theorem}
The particular way~\eqref{eq:Rformula} of presenting the formula for $R^{\rm loop}(\kappa)$ was suggested by Nivesvivat, Jacobsen, and Rilbault in private communication.
They arrived at this formula independently from us using the CFT method from~\cite{Nivesvivat:2023kfp}, and checked it numerically using the  transfer matrix method.  
In fact, for each $\omega^{(b)}_{(r_1,s_1),(r_2,s_2),(r_3,s_3)}$ they can propose a geometric interpretation in terms of the O$(n)$ loop model. 
Each pair $(r,s)$ parameterizes an operator in the spectrum of that CFT, and $(0,s)$ corresponds to nesting loop statistics. Together with Theorem~\ref{thm: connectivity}, this gives a CFT explanation of the formulation~\eqref{eq:0s} for the imaginary DOZZ formula.
The case $(r,0)$ corresponds to $r$ loops passing through a given point, which explains the subscript $(1,0),(1,0),(1,0)$ in~\eqref{eq:Rformula}. As an intermediate step in proving Theorem~\ref{thm:mag}, we relate $\omega^{(b)}_{(1,0),(1,0),(0,s)}$ to its geometric counterpart; see  Remark~\ref{cor:mix3point}. We plan to investigate other cases in the future. See Section~\ref{sec: outlook}
for further discussion on the CFT from~\cite{Nivesvivat:2023kfp}.

\subsection{Proof method based on Liouville quantum gravity}\label{subsec:method}

Liouville quantum gravity (LQG) is a theory of random surfaces, where the geometry is governed by the exponential of the Gaussian free field (GFF) and its variants~\cite{shef-kpz,dddf-lfpp,gm-uniqueness}. An important source of  GFF variants is the Liouville CFT. On the one hand, the correlation functions of these fields are exactly solvable. In particular, the DOZZ formula gives the three-point function on the sphere~\cite{krv-dozz}.  On the other hand,  the scaling limits of natural discrete random surfaces (i.e.\ random planar maps) can be described by LQG surfaces whose fields are obtained from Liouville CFT~\cite{ahs-sphere, cercle-quantum-disk, AHS-SLE-integrability}. 
The independent coupling of CLE and LQG describes the scaling limit of random planar maps decorated by the critical FK random cluster or the O$(n)$ loop model. This insight originated from~\cite{shef-burger} and has been realized in various senses since then; see \cite{wedges} and~\cite[Section 5]{ghs-mating-survey} for references. Our general strategy for the proofs of Theorems~\ref{thm: connectivity}, \ref{thm:nesting} and \ref{thm:mag} is to express the product of a CLE three-point function on the sphere and the Liouville DOZZ formula as something solvable. We now explain how this is implemented in the three settings.

We start with Theorem~\ref{thm:mag} on the CLE loops since its proof is the easiest to explain.  
We rely on the  CLE$_\kappa$ loop intensity measure $\SLE_\kappa^\lp$ introduced by Kemppainen and Werner~\cite{werner-sphere-cle}, which describes the law of a loop chosen according to the counting measure on the set of all loops from a full-plane CLE$_\kappa$. This measure is infinite and conformally invariant. Modulo a multiplicative constant, it agrees with the SLE loop measure constructed by Zhan~\cite{zhan-loop-measures} when $\kappa\in (\frac83,8)$,  so we call a sample from $\SLE_\kappa^\lp$ an SLE loop. By definition, $G^\lp_n$ in Theorem~\ref{thm:mag} is the Green function for the Minkowski content of an SLE loop. 
Building on~\cite{shef-zipper}, it was demonstrated in~\cite{loopwelding,ACSW24a} that when the LQG parameter $\gamma\in (0,2)$ equals $\sqrt{\kappa}$ or $\sqrt{16/\kappa}$,  the SLE loop on top of an independent canonical LQG sphere has nice properties. Essentially, the two regions on the sphere separated by the loop are a pair of independent canonical LQG disks with the same LQG boundary length. Additional care is needed when $\kappa\in (4,8)$  since the loop is non-simple, but this is treated in~\cite{ACSW24a}.

One way to express the LQG boundary measure on an SLE curve is using the exponential of the field over the Minkowski content measure on the curve. By adding three points on the SLE loop and applying the Girsanov theorem, we can compute the product of $G^\lp_3$ and the Liouville DOZZ formula with appropriate parameters using the joint law of the loop length and the areas of the two regions separated by the loop.
The case of $G^\lp_2$ can be treated similarly by adding two points on the SLE loop, but we need to add a third point on one side of the loop to obtain finite quantities. 
This allows us to compute the ratio $R^{\rm loop}(\kappa)$. A crucial point in the computation is to keep track of several multiplicative constants in front of infinite measures, whose exact values were not necessary for earlier results.
These constants were determined in our companion paper~\cite{ACSW24a}.

We prove Theorems~\ref{thm: connectivity} for $R^{\rm cluster}$ in a similar manner but with additional difficulties. 
The Miller-Schoug measure and the LQG measure on the CLE cluster have the same kind of relation as the Minkowski content and LQG length for the SLE loop; see Section~\ref{sec:natural-measures}. 
This allows us to express the product of $G^{\rm cluster}_3$ and the Liouville DOZZ formula by the LQG mass of the CLE clusters together with certain lengths and areas observables as in the  $G^\lp_3$ case. At the technical level, we still need to use $\SLE_\kappa^\lp$ to express the Green function $G^{\rm cluster}_n$ in order to employ results from~\cite{loopwelding,ACSW24a}.  However, on the sphere, one CLE loop naturally corresponds to two CLE clusters that share the loop as their interface. We introduce an additional marked point on the sphere to uniquely specify a cluster.
This forces us to consider a four-pointed LQG sphere instead of three-pointed ones as in the $R^{\rm loop}$ case. 
Another challenge compared to the $R^{\rm loop}$ case lies in analyzing the LQG mass of the CLE clusters. By \cite{ccm-perimeter-cascade,msw-cle-lqg,msw-non-simple}, it can be expressed by observables of a stable L\'evy process. The final step in solving $R^{\rm cluster}$ is a fairly involved calculation concerning this L\'evy process.

For Theorem~\ref{thm:nesting} on the nested loop statistics, as mentioned below the theorem statement, the key is to solve the joint moments of the three conformal radii of $\eta_i$ viewed from $z_i$ ($i=1,2,3,$)  where $\eta_i$ is the outermost loop separating $z_i$ and the other two points. The three loops $\eta_1,\eta_2,\eta_3$ cut the LQG sphere into four surfaces. Three of them are LQG disks, each containing a point $z_i$. The last surface has the pair-of-pants topology, and we call it the quantum pair of pants. (As before, when $\kappa\in (4,8)$ the topology is more complicated.)
By a method developed by Holden and two of the authors~\cite{AHS-SLE-integrability}, after modifying the log singularity of the field around $z_1,z_2,z_3$ via the Girsanov theorem, the product of the conformal radius three-point function and the Liouville DOZZ formula can be expressed in terms of the LQG areas and boundary lengths of the quantum pair of pants and the three LQG disks with a modified log singularity near $z_i$. The LQG area and boundary length of the modified disk can be solved from the FZZ formula proved by Remy and two of the authors~\cite{ARS-FZZ}. For the quantum pair of pants, we again perform a fine analysis of the aforementioned L\'evy process.

\subsection{Outlook and perspective: towards conformal field theories for CLE}\label{sec: outlook}
There are several CFTs in the literature \cite{GKR23, Nivesvivat:2023kfp, baverez2024cftsleloopmeasures} whose correlation functions are supposed to be observables for CLE or the SLE loop. 
An important program is to rigorously establish these CFT-CLE connections, and exactly solve the correlation functions/observables. 
This is significant from both statistical physics and field theory perspectives. 
The most fundamental data to solve for these CFTs are the structure constants, namely three-point functions, since 
they prescribe the operator product expansion. These structure constants should be closely related to imaginary DOZZ formula. 
This is indeed the case for all solved instances. We expect that results and methods in our paper will play a crucial role in this program, for two reasons. 
First, the  CLE observables considered in Theorems~\ref{thm: connectivity}, \ref{thm:nesting} and \ref{thm:mag}, 
and similar ones are supposed to correspond to structure constants of these CFTs. Second, as explained in Section~\ref{subsec:method}, our method is especially suited for establishing relations between CLE observables and the  imaginary DOZZ formula. We now give an overview of these CFTs and the main mathematical challenges in this program.

\begin{itemize}
    \item A successful approach in physics to study 2D random loop models is to map them to height function models. It is assumed that at criticality the height functions converge to variants of the Gaussian free field, based on which certain loop observables can be expressed in terms of integrals over the free field. This is the Coulomb gas method; see e.g.\cite{diFrancesco:1987qf, Cardy_2006,Nivesvivat:2023kfp}. Recently  Guillarmou, Kupiainen and  Rhodes rigorously constructed a CFT called the compactified imaginary Liouville theory (CILT)~\cite{GKR23}  which is believed to capture the free field integrals in the Coulomb gas method.  
    The CLE observables from Theorems~\ref{thm:nesting} and \ref{thm:mag} are supposed to be related to the electronic and magnetic operators in the CILT, respectively. The three-point functions for CILT are not hard to compute and are indeed imaginary DOZZ type formulae. It is a major challenge to rigorously relate the CILT to CLE, putting the Coulomb gas method for loop model on the firm ground.  First, the convergence assumption is hard to prove. Second, it is not clear how to make sense of the height function mapping directly using CLE, since an intermediate object in the mapping is a complex-valued loop model that is hard to define in the continuum.

    \item     Another CFT is the one discussed  briefly below Theorem~\ref{thm:mag}. Instead of proposing a path integral, one can assume that a large class of observables in random cluster models or the O$(n)$ loop models form a CFT. Such a CFT shares some key features with the CILT: most importantly, it is non-diagnonal and logarithmic.   However, it admits correlation functions such as $n$ points lying in the same cluster or the same loop, which are beyond the Coulomb gas/CILT framework because the so-called neutrality condition in Coulomb gas is not satisfied. 
    Recently there have been several important work on the conformal bootstrap aspect of this CFT \cite{Picco:2019dkm,He:2020rfk,Nivesvivat-Ribault,Nivesvivat:2023kfp}; see~\cite[Section 5]{Ribault2024Notes} for an overview. In particular, 
    the authors from~\cite{Nivesvivat:2023kfp} made precise predictions on the conformal bootstrap expansion for several natural four-point functions, where each term in the expansion is a product of three-point functions and known functions called the conformal blocks. Despite this remarkable success, the operator product expansion in this CFT is not fully understood yet. The four-point function expansion is only determined in a term-by-term fashion via an involved numerical scheme, and  there is no predictions on $n$-point functions for $n\ge 5$. It would be extremely interesting to completely solve this CFT and prove their implications on CLE.

    \item Although the interaction between SLE and CFT has been an intensive topic 
    in probability, the study is mainly focused on boundary CFT; see the overview~\cite{peltola2019toward} and the very recent work \cite{feng2024multiplesleskappain08}.  Recently, Baverez and Jego~\cite{baverez2024cftsleloopmeasures}, and Gordina, Qian and Wang \cite{gordina2024} constructed the Virasoro representation for the $\SLE_\kappa$ loop measure for $\kappa\in (0,4]$ based on conformal restriction property. Furthermore, Baverez and Jego~\cite{baverez2024cftsleloopmeasures} used it to prove the uniqueness conjecture for the SLE loop from \cite{kontsevich2006}. These work and our work together reveal two important aspects of the bulk CFT for SLE and CLE: Virasoro representation and integrability. We expect that the fusion of these two approaches will lead to further progress. For example, Baverez and Jego~\cite{baverez2024cftsleloopmeasures} proposed an interesting CFT for the $\SLE_\kappa$ loop measure,  we plan to use the proof strategy for Theorem \ref{thm:nesting} to solve these three-point functions. A main technical challenge is to solve the quantum pair of pants in this setting, where the tool of Levy process  as in our proof of Theorem~\ref{thm:nesting} is not available.

 \item Besides the three CFTs above, there are other natural three-point functions that are of interest in 2D statistical physics.  
 One example is the scaling limit of the probability that three points lie in the same spin cluster of the $3$-Potts model. 
 The analog of $R^{\rm cluster}$ was studied numerically in~\cite{Delfino:2013pma} while the exact value was left open. In a forthcoming work by Liu, Zhuang, and the second and fourth authors,
 its exact value will be derived by applying the proof method for Theorem~\ref{thm: connectivity} to a variant of CLE.  
 Recently, Camia and Feng~\cite{camia2024} established the existence of the scaling limit of various geometric correlation functions for percolation. 
 Some of them appeared in the CFT in~\cite{Nivesvivat:2023kfp}, while others are new. For example, they considered an observable  whose scaling dimension 
is the backbone exponent recently derived in~\cite{nolin2024backboneexponenttwodimensionalpercolation}. It would be interesting to derive the three-point function for this observable and identify a CFT that contains it as a three-point function.  This would provide a CFT  interpretation of the intriguing transcendental value of the backbone exponent.    
\end{itemize}

{\bf Organization of the paper.}  In Section~\ref{sec:prelim}, we provide preliminaries on LQG and Liouville CFT. In Section~\ref{sec-loop-green-simple}, we use the conformal welding for the SLE loop to prove Theorem~\ref{thm:mag} for $\kappa\in (\frac{8}{3},4)$. We treat the $\kappa\in (4,8)$ case in Section~\ref{sec-loop-green-nonsimple}. In Section~\ref{sec:carpet green}, we prove  Theorem~\ref{thm: connectivity} modulo calculations on the stable L\'evy process,  which are carried out  in Section~\ref{levy}. 
In Section~\ref{sec-welding}, we introduce the quantum pair of pants and compute the joint law of its area and boundary lengths, based on which we prove  Theorem~\ref{thm:nesting} in Section~\ref{sec:3-pt}. For the appendix, we prove Theorem~\ref{thm:dv} in Appendix~\ref{percolation}, treat the $\kappa=4$ case of Theorems~\ref{thm: connectivity} and~\ref{thm:mag}  in Appendix~\ref{kappato4}, and gather useful integration formulas in Appendix~\ref{appendix:integration}.

\medskip

{\bf Acknowledgement}
We thank Guillaume Baverez, Federico Camia, Yu Feng, Jesper Lykke Jacobsen, Rongvoram Nivesvivat and Sylvain Ribault for explaining their works. We thank Nina Holden, Matthis Lehmkuehler, and Shengjing Xu for helpful discussions. 
M.A. was supported by the Simons Foundation as a Junior Fellow at the Simons Society of Fellows and partially supported by NSF grant DMS-1712862. G.C., X.S. and B.W.\ were supported by National Key R\&D Program of China (No.\ 2023YFA1010700). X.S. was also partially supported by the  NSF grant DMS-2027986, the NSF Career grant DMS-2046514, and a start-up grant from the University of Pennsylvania. 

\section{Preliminaries}\label{sec:prelim}
In this section we review the precise definitions of some $\gamma$-LQG surfaces and Liouville fields mentioned in~Section~\ref{subsec:method}.
For more background, we refer to~\cite{ghs-mating-survey,vargas-dozz-notes} and references  therein, as well as the preliminary sections in \cite{AHS-SLE-integrability, ARS-FZZ}.
We will often use probabilistic language in the setting of non-probability measures.
Let $M$ be a $\sigma$-finite measure on a measurable space $(\Omega, \cF)$.
Assume that the $\cF$-measurable function $X:(\Omega,\cF)\rta (E,\cE)$ takes values in $(E,\cE)$.
We call the pushforward measure $M_X = X_*M$ on $(E,\sigma(X))$ the \emph{law} of $X$, and say that $X$ is \emph{sampled} from $M_X$. We also write  $M_X[f] := \int f(x) M_X(dx)$.
The total mass of a finite measure $M$ is expressed as $|M|$, and the probability measure proportional to $M$ is denoted by $M^{\#}=|M|^{-1}M$.

\subsection{Liouville field and the DOZZ formula}\label{subsubsec:GFF}

Let $\cX$ denote either the complex plane $\C$ or the upper half plane $\bbH$. Assume $\cX$ is equipped with a smooth metric $g$ such that the metric completion of $(\cX, g)$ forms a compact Riemannian manifold. For simplicity, we will not distinguish $\cX$ from its compactification.
Define $H^1(\cX)$ as the Sobolev space where the norm is defined by $\Vert f\Vert^2=\int_\cX (|f|^2+|\nabla f|^2) d{\rm Vol}_g$ for $f\in H^1(\cX)$. The dual space of $H^1(\cX)$ is denoted by $H^{-1}(\cX)$.

We now recall the \emph{Gaussian free field} (GFF) on $\C$ or $\bbH$. Let $|z|_+:=\max\{|z|,1\}$, and define
\begin{align}\label{eq:covariance}
G_\bbH(z,w) &= -\log |z-w| - \log|z-\ol w| + 2 \log|z|_+ + 2\log |w|_+  \quad &&z,w\in \bbH,\nonumber\\
G_\C(z,w) &= -\log|z-w| + \log|z|_+ + \log|w|_+  \quad &&z,w\in \C.\nonumber
\end{align}
For $\cX\in\{\C,\bbH\}$, the Gaussian free field $h_\cX$ is a random element in $H^{-1}(\cX)$ having mean zero on $\{z\in \cX: |z|=1 \}$,  such that $(h_\cX,f)$ is a Gaussian with mean zero and variance $\int f(z) G_\cX(z,w) f(w)\,d^2z\,d^2w$ for each  $f\in H^1(\cX)$. We call $h_\C$ the \emph{whole plane GFF} and $h_\bbH$ to be the \emph{free-boundary GFF} on $\bbH$, 
and let $P_\cX$ denote the law of $h_\cX$.

The Liouville fields on $ \C$  and $\bbH$ can be defined using the Gaussian free field. We follow the treatment of \cite[Section 2.2]{AHS-SLE-integrability}.
\begin{definition}
	For $(h, \mathbf c)$ sampled from $P_\C \times [e^{-2Qc}dc]$, we call 
 $\phi =  h(z) -2Q \log |z|_+ +\mathbf c$ the \emph{Liouville field on $\C$}, and denote its law by $\LF_\C$. Similarly, for $(h, \mathbf c)$ sampled from $P_\bbH \times [e^{-Qc}dc]$, we call $\phi =  h(z) -2Q \log |z|_+ +\mathbf c$ the \emph{Liouville field on $\bbH$}, and denote its law by $\LF_\bbH$.
\end{definition}
We also define Liouville fields with   insertions.  
\begin{definition}\label{def-RV-sph}
        Let $m \geq 1$, let  $(\alpha_i, z_i) \in \R \times \C$ for $i = 1, \dots, m$, and set  
        \[C _{  \C}^{(\alpha_i,z_i)_i}=\prod_{i=1}^m |z_i|_+^{-\alpha_i(2Q -\alpha_i)} e^{\sum_{i < j} \alpha_i \alpha_j G_\C(z_i, z_j)}.\] For $(h, \mathbf c)$ sampled from $ C_\C^{(\alpha_i,z_i)_i}  P_\C \times [e^{(\sum_i \alpha_i  - 2Q)c}dc]$, the field 
        \[\phi(z) = h(z) -2Q \log |z|_+  + \sum_{i=1}^m \alpha_i G_\C(z, z_i) + \mathbf c\] is the  \emph{Liouville field on $ \C$ with insertions $(\alpha_i,z_i)_{1\le i\le m}$}. We denote its law by $\LF_{ \C}^{(\alpha_i,z_i)_i}$. 
 
	Similarly, for $(\alpha,z_0) \in \R\times \bbH$ and $(h, \mathbf c)$ sampled from $(2\Im z_0)^{-\alpha^2/2} |z_0|_+^{-2\alpha (Q-\alpha)}P_\bbH\times [e^{(\alpha-Q)c}dc]$, the field  $\phi(z) = h(z) - 2Q \log |z|_+ + \alpha G_\bbH(z, z_0) + \mathbf c$ is the \emph{Liouville field on $\bbH$ with insertion $(\alpha, z_0)$}. We denote its law by $\LF_\bbH^{(\alpha, z_0)}$. 
\end{definition}
The Liouville field with insertions is formally related to the Liouville field via $\LF_{ \C}^{(\alpha_i,z_i)_{i}} = \prod_{i=1}^{m} e^{\alpha_i \phi(z_i)}\LF_{  \C}(d\phi)$ and $\LF_\bbH^{(\alpha, z_0)}=e^{\alpha \phi(z_0)} \LF_\bbH (d\phi)$.
These can be made rigorous via regularization and renormalization, e.g.  $\LF_{ \C}^{(\alpha_i,z_i)_{i}} =\lim_{\varepsilon\to0}  \prod_{i=1}^{m} \varepsilon^{\frac{\alpha_i^2}{2}} e^{\alpha_i \phi_\varepsilon(z_i)}\LF_{  \C}(d\phi)$, where $\phi_\varepsilon$ is the average of $\phi$ on $\{w:|w-z|=\varepsilon\}$. See~\cite[Lemma 2.8]{AHS-SLE-integrability} and ~\cite[Lemma 2.2]{ARS-FZZ}.

Now we recall the quantum area and quantum length measures. Fix $\gamma\in(0,2)$ and $\cX\in \{\C, \bbH\}$, and sample  $h$ from $P_\cX$. For $\eps > 0$ and $z \in  \cX\cup \bdy \cX$, let $h_\eps(z)$ denote the average of $h$ on $\{w \in \cX \: : \: |w-z| = \eps\}$. Then the \emph{quantum area measure} $\mu_h$ on $\cX$ is the weak limit $\lim_{\eps\to0} \eps^{\gamma^2/2} e^{\gamma h_\eps(z)}d^2z$ on $\cX$; see \cite{shef-kpz,shef-wang-lqg-coord}. On $\bbH$, we similarly define the \emph{quantum boundary length measure} $\nu_h$ on $\partial \bbH$ to be $\lim_{\eps \to 0} \eps^{\gamma^2/4}e^{\frac\gamma2 h_\eps(x)} dx$.
The definitions of quantum area and boundary length can be extended straightforwardly to variants of the GFF, for instance  Liouville fields with insertions.

We now review the the 3-point structure constant of LCFT given by the DOZZ formula. We choose $(u_1, u_2, u_3) = (0,1,e^{i\pi/3})$ such that $|u_1-u_2|=|u_2-u_3|=|u_3-u_1|=1$ for convenience. 
The main result in \cite{krv-dozz} can deduce the following density for the quantum area law of $\LF_\C^{(\alpha_j, u_j)_3}$.
\begin{proposition}\label{lem-sph-area-law}\label{prop-DOZZ}
	Suppose $\alpha_1, \alpha_2, \alpha_3$ satisfy the extended Seiberg bounds
 \eqb\label{eq-ext-Seiberg}
\frac1\gamma(2Q - \sum_{j=1}^3 \alpha_j ) < \frac4{\gamma^2} \wedge \min_j \frac2\gamma(Q-\alpha_j).
\eqe
 Then the quantum area law of $\LF_\C^{(\alpha_j, u_j)_3}$ is 
	$\frac{C_\gamma^\mathrm{DOZZ}(\alpha_1,\alpha_2, \alpha_3)}{2  \Gamma(\frac1\gamma(\sum_j \alpha_j - 2Q))} a^{\frac1\gamma(\sum_j \alpha_j - 2Q) - 1} da,$
 where $C^\mathrm{DOZZ}_\gamma(\alpha_1,\alpha_2,\alpha_3)$ is given by~\eqref{eq:DOZZ}. Moreover, under the stronger condition that $\alpha_1, \alpha_2, \alpha_3$ satisfy the \emph{Seiberg bounds}
	\eqb\label{eq-seiberg}
	\sum_{i=1}^3 \alpha_i > 2Q, \qquad \textrm{and}\qquad  \alpha_i < Q\text{ for }i=1,2,3,
	\eqe
 for $\mu>0$, $\LF_\C^{(\alpha_i, z_i)_i}[ e^{-\mu \mu_\phi(\C)}]$ is finite and equals $\frac12 \mu^{-\frac1\gamma(\sum_j \alpha_j - 2Q)} C^\mathrm{DOZZ}_\gamma(\alpha_1,\alpha_2,\alpha_3)$.
\end{proposition}
\begin{proof}
 Note that  our choice of $(u_1,u_2,u_3)$ satisfies $C^{(\alpha_i, u_i)_i}_\C = 1$. Thus, if we sample  $(h, \mathbf c) \sim P_\C\times [e^{(\sum_j \alpha_j - 2Q)c}\, dc]$ and let $\phi_0(z) = h(z) - 2Q \log |z|_+ + \sum_{i=1}^3 \alpha_i G_\C(z, u_i)$, then $\phi_0 + \mathbf c$ is the Liouville field with insertions. By the change of variables $y = e^{\gamma c} \mu_{\phi_0}(\C)$, for $I \subset (0,\infty)$ we have 
	\[\int_\R \E[ 1_{e^{\gamma c} \mu_{\phi_0}(\C) \in I}] e^{(\sum_j \alpha_j - 2Q)c}\, dc = \E\left[\int_I \left(\frac y {\mu_{\phi_0}(\C)}\right)^{\frac1\gamma(\sum_j\alpha_j - 2Q)} \frac1{\gamma y} \, dy\right].\]
	Therefore the quantum area law of $\LF_\C^{(\alpha_j, u_j)_3}$ is $\frac1\gamma \E[\mu_{\phi_0}(\C)^{\frac1\gamma(2Q - \sum_{j=1}^3 \alpha_j)}] a^{\frac1\gamma(\sum_j \alpha_j - 2Q) - 1} da$. The first claim is then immediate from $\E[\mu_{\phi_0}(\C)^{\frac1\gamma(2Q - \sum_{j=1}^3 \alpha_j)}] = \frac{\gamma C_\gamma^\mathrm{DOZZ}(\alpha_1,\alpha_2, \alpha_3)}{2 \Gamma(\frac1\gamma(\sum_j \alpha_j - 2Q))}$ \cite[Theorem 2.4]{krv-dozz}, and the second claim follows from the first using $\int_0^\infty e^{-\mu x}x^{\beta-1}dx=\mu^{-\beta}\Gamma(\beta)$ for $\beta>0$.
\end{proof}

\subsection{Adding insertions to the Liouville field}
We collect two facts on Liouville fields that follow from  the Girsanov theorem. They allow us to understand the resulting field after adding marked points according a certain quantum measure.

\begin{lemma}[{\cite[Lemma 2.31]{AHS-SLE-integrability}}]\label{lem-add-gamma}
 We have $\mu_\phi(d u) \LF_\C^{(\alpha_i, z_i)_i}(d\phi) = \LF_\C^{(\gamma, u), (\alpha_i, z_i)_i}(d\phi) \, d^2u$.
\end{lemma}
Let $d \in [0,2]$.  
Suppose $\sigma(dz)$ is a Radon measure on $\C$ satisfying 
\[ \int_{\C^2} \frac{\sigma(dx) \sigma(dy)}{|x-y|^{d - \eps}} < \infty \quad \text{ for all } \eps >0.\] 
Then for $\alpha < \sqrt{2d}$ and $h$ sampled from $P_\C$, the \emph{Gaussian multiplicative chaos (GMC)} measure 
\eqb\label{eq-gmc}
\mathrm{GMC}_{\sigma, \alpha, h}(du):= \lim_{\eps \to 0} \eps^{\alpha^2/2} e^{\alpha h_\eps(u)}\sigma(du)
\eqe
exists; see e.g.\ \cite{berestycki-gmt-elementary} for details. 
\begin{lemma}\label{lem-add-insertion}
We have $\mathrm{GMC}_{\sigma, \alpha, \phi}(du) \LF_\C^{(\alpha_i, z_i)_i}(d\phi) = \LF_\C^{(\alpha, u), (\alpha_i, z_i)_i}(d\phi) \sigma(du)$. Similarly, for $n \geq 1$, 
   \[(\prod_{j=1}^n \mathrm{GMC}_{\sigma, \alpha, \phi}(du_j))\LF_\C^{(\alpha_i, z_i)_i}(d\phi) = \LF_\C^{(\alpha, u_j)_j, (\alpha_i, z_i)_i}(d\phi) \prod_{j=1}^n \sigma(du_j). \]
\end{lemma}
The proof of Lemma~\ref{lem-add-insertion} is essentially identical to that of \cite[Lemma 2.31]{AHS-SLE-integrability} hence we omit the detail.
We will use it in the proof of Theorem~\ref{thm:mag}, where the GMC measure is  the quantum length of an SLE curve. See Lemmas~\ref{lem-benoist} and~\ref{lem-benoist-ns}. We also use a variant of Lemma~\ref{lem-add-insertion} for  the quantum natural measure of a CLE carpet/gasket; see Lemma~\ref{resampling}. 

\subsection{Quantum sphere}\label{subsub:quantum-surface}
We first recall the definition of quantum surfaces in LQG.
Let $\gamma\in(0,2)$ and $Q=\frac{\gamma}{2}+\frac{2}{\gamma}$, and consider pairs $(D,h)$ where $D\subset\C$ is a planar domain and $h$ is a distribution on $D$. For two pairs $(D,h)$ and $(\wt D,\wt h)$, we say $(D, h) \sim_\gamma (\wt D, \wt h)$ if there exists a conformal map $\psi$ from  $\wt D$ to $D$ such that 
\eqb\label{eq-QS}
\wt h = h \circ \psi + Q \log |\psi'|. 
\eqe
This gives an equivalence relation on the space of pairs $(D,h)$. A \emph{quantum surface} is an equivalence class $(D,h)/{\sim_\gamma}$, and we call a choice of representative an \emph{embedding} of the quantum surface. The quantum area and boundary length measures of a quantum surface do not depend on the choice of embedding. That is, if $h$ is a GFF on $D$, then~\eqref{eq-QS} implies $\psi_* \mu_{\wt h} = \mu_h$ and $\psi_* \nu_{\wt h} = \nu_h$  \cite{shef-kpz,shef-wang-lqg-coord}.

We now define quantum surfaces decorated by  curves and marked points. 
Consider tuples $(D,h,(\eta_i)_{i\in \cI},(z_j)_{j \in \cJ})$ where $D\subset\C$ is a domain, $h$ is a distribution on $D$, $(\eta_i)_{i \in \cI}$ is a collection of curves on $\ol D$, and $\{z_j\}_{j \in \cJ}$ is a collection of points in $\ol D$. We similarly say
\[(D,h,(\eta_i)_{i\in \cI},(z_j)_{j \in \cJ})\sim_\gamma(\wt D,\wt h,(\wt \eta_i)_{i\in \cI},(\wt z_j)_{j \in \cJ})) \]
if there is a conformal map $\psi:\wt D\to D$ satisfying~\eqref{eq-QS} and $\psi\circ\wt \eta_i=\eta_i$ for all $i\in\cI$ and $\psi(\wt z_j)=z_j$ for all $j \in \cJ$. An equivalence class is called a \emph{decorated quantum surface}, and a representative is called an embedding of the decorated quantum surface.

The quantum sphere is the most canonical quantum surface with the sphere topology. Let $\QS_n$ be the law of the quantum sphere with $n$ marked points, and $\QS = \QS_0$. We will need $\QS, \QS_2$ and $\QS_3$.
Originally, $\QS_2$ was  defined first as in \cite{wedges}, and $\QS$ and $\QS_2$ were defined by adding or removing marked points from $\QS_2$. For our purpose it is convenient to use the following definition. The equivalence with the original definition is explained in \cite[Remark 2.30]{AHS-SLE-integrability}.
\begin{definition}\label{def-QS-2}
Let $(u_1, u_2, u_3) = (0, 1, e^{i\pi/3})$,  and sample $\phi$ from $\frac{\pi \gamma}{2(Q-\gamma)^2}\LF_\C^{(\gamma, u_1),(\gamma, u_2),(\gamma, u_3)}$. We call the decorated quantum surface  $(\C, \phi,u_1,u_2,u_3)/{\sim_\gamma}$ the \emph{quantum sphere with three marked points}, and denote its law by $\QS_3$.
\end{definition}
\begin{definition}\label{def-QS}
For $(\C,h,u_1,u_2,u_3)/{\sim_\gamma}$ sampled from $\mu_h(\C)^{-1} \QS_3$,
let $\QS_2$ be the law of the decorated quantum surface $(\C, h,u_1,u_2)/{\sim_\gamma}$. 	
For $(\C,h,u_1,u_2,u_3)/{\sim_\gamma}$ sampled from $\mu_h(\C)^{-3}\QS_3$, let $\QS$ be the law of the quantum surface $(\C,h)/{\sim_\gamma}$.
\end{definition}

We now recall the \emph{uniform embedding} of $\QS$.
Let $\conf(\hat\C)$ be the group of conformal automorphisms of the Riemann sphere $\hat\C = \C \cup\{\infty\}$. 
Let $\mathbf m_{\wh \C}$ be the left and right invariant Haar measure on $\conf(\hat\C)$, which is unique up to multiplicative constant and can be described as follows.
\begin{lemma}[{\cite[Lemma 2.28]{AHS-SLE-integrability}}]\label{Haar-density}
Suppose $\frak f$ is a conformal automorphism on $\C$ sampled from $\mathbf m_{\wh \C}$. Then for any fixed three points $z_1,z_2,z_3\in\hat\C$, the law of $(\frak f(z_1),\frak f(z_2),\frak f(z_3))$ equals $C|(p-q)(q-r)(r-p)|^{-2}\,d^2p\,d^2q\,d^2r$ for some constant $C\in(0,\infty)$. 
\end{lemma}
By the definition of quantum surfaces, $\QS$ can be viewed as an infinite measure on $H^{-1}(\C)/\conf(\hat\C)$. 
For a quantum surface $S\in H^{-1}(\C)/\conf(\hat\C)$, let $h \in H^{-1}(\C)$ be a representative of $S$, and let $\mathbf m_{\wh \C}\ltimes S$ denote the pushforward  of $\mathbf m_{\wh \C}$ under the map sending $\psi \in \conf(\hat\C)$ to $h\circ\psi+Q\log|\psi'| \in H^{-1}(\C)$. Note that $\mathbf m_{\wh \C}\ltimes S$ does not depend on the choice of $h$ since $\mathbf m_{\wh \C}$ is invariant. Finally, set  $\mathbf m_{\wh \C}\ltimes\QS:=\int_{H^{-1}(\C)/\conf(\hat\C)}  [\mathbf m_{\wh \C}\ltimes S] \QS(dS)$.

\begin{theorem}[{\cite[Theorem 1.2, Proposition 2.32]{AHS-SLE-integrability}}]\label{thm:embed-qs}
Let $\mathbf m_{\wh \C}$ be such that the constant $C$ in Lemma \ref{Haar-density} equals $1$. Then $\mathbf m_{\wh \C}\ltimes\QS=\frac{\pi\gamma}{2(Q-\gamma)^2}\LF_{\C}$.
\end{theorem}

\subsection{Quantum disks}
\label{subsec-QD}

The quantum disk is the most canonical quantum surface with the disk topology. Let $\QD_{m,n}$ be the law of the quantum disk with $m$ marked bulk points and $n$ marked boundary points. We will need $\QD$, $\QD_{1,0}$, $\QD_{1,1}$ $\QD_{0,2}$, and $\QD_{0,3}$.
Historically, $\QD_{0,2}$ was defined first in \cite{wedges}, and other $\QD_{m,n}$ were defined in terms of $\QD_{0,2}$ by adding and removing points. For our purpose it is convenient use
the equivalent definition of $\QD_{1,0}$ using the Liouville field \cite[Theorem 3.4]{ARS-FZZ}, then define $\QD$, $\QD_{1,1}$ and $\QD_{0,3}$ in terms of $\QD_{1,0}$.

Let $\{\LF_\bbH^{(\alpha, i)}(\ell): \ell >0 \}$ be the disintegration of $\LF_\bbH^{(\alpha, i)}$  over the quantum boundary length. That is, each measure $\LF_\bbH^{(\alpha, i)}(\ell)$ is supported on the set of fields having quantum boundary length $\ell$, and $\LF_\bbH^{(\alpha, i)} = \int_0^\infty \LF_\bbH^{(\alpha, i)}(\ell)\, d\ell$. 
See \cite[Lemma 4.3]{ARS-FZZ} for an explicit construction of $\LF_\bbH^{(\alpha, i)}(\ell)$.

\begin{definition}\label{def-QD-alpha}
For $\ell>0$,
we let $\cM_{1,0}^\disk(\alpha;\ell )$ be the law of the quantum surface $(\bbH, \phi, i)/{\sim_\gamma}$ 
with $\phi$ sampled from $\LF_\bbH^{(\alpha, i)} (\ell)$. Define $\QD_{1,0}(\ell) := \frac\gamma{2\pi (Q-\gamma)^2}\cM_{1,0}^\disk(\gamma;\ell )$.
\end{definition}
\begin{definition}\label{def-QD}\label{def-QD0203}
Let $\ell>0$. Sample $(D,h,z)/{\sim_\gamma}$ from the weighted measure  $\mu_h(D)^{-1}\QD_{1,0}(\ell)$, and let $\QD$ be the law of $(D,h)/{\sim_\gamma}$. For $(D,h,z)/{\sim_\gamma}$ sampled from $\ell\QD_{1,0}(\ell)$, sample $p\in \bdy D$ from the probability measure proportional to quantum boundary length measure. Let $\QD_{1,1}(\ell)$ be the law of $(D,h,z, p)/{\sim_\gamma}$. Let $n\in\N$. For $(D,h)/{\sim_\gamma}$ sampled from $\ell^n\QD(\ell)$, independently sample $(p_i)_{1\le i\le n}\in\partial D$ from the probability measure proportional to quantum boundary measure. Let $\QD_{0,n}(\ell)$ be the law of $(D,h,(p_i)_{1\le i\le n})/\sim_\gamma$.
\end{definition}

One can fix the embedding of $\cM^{\disk}_{1,0}(\alpha;\ell)$ as follows:
\begin{proposition}[{\cite[Proposition 2.21]{ACSW24a}}]\label{lem:har}
	For $\alpha > \frac\gamma2$ and $\ell > 0$, let $(D,h,z)$ be an embedding of a sample from $\cM^{\disk}_{1,0}(\alpha;\ell)$. Given $(D,h,z)$, let $p$ be a point sampled from the harmonic measure on $\bdy D$ viewed from $z$, then the law of  $(D,h,z,p)/{\sim_\gamma}$ equals that of $(\bbH, X, i,0)/{\sim_\gamma}$ where $X$ is sampled from $\LF_{\bbH}^{(\alpha,i)}(\ell)$.
\end{proposition}
Now we record results on the boundary length and area distribution of quantum disks. 

\begin{proposition}[{\cite{remy-fb-formula}}]\label{prop-remy-U}\label{lem-M1-bdy-law}
For $\alpha > \frac{\gamma}{2}$,  the total mass of $\cM_{1,0}^\disk(\alpha;\ell )$ is 
	\eqb\label{eq:U0-explicit}
	1_{\ell>0} \frac2\gamma 2^{-\frac{\alpha^2}2} \ol U(\alpha)\ell^{\frac2\gamma(\alpha-Q)-1} \, d\ell, \text{ where }\ol U(\alpha) = \left( \frac{2^{-\frac{\gamma\alpha}2} 2\pi}{\Gamma(1-\frac{\gamma^2}4)} \right)^{\frac2\gamma(Q-\alpha)} 
	\Gamma( \frac{\gamma\alpha}2-\frac{\gamma^2}4).
	\eqe
\end{proposition}

Taking $\alpha = \gamma$ in Proposition~\ref{lem-M1-bdy-law}  gives the boundary length distribution for $\QD_{1,0}$.
\begin{lemma} [{\cite[Lemma 3.2]{ARS-FZZ}}] \label{rem-QD-length}\label{lem-QD-bdy-law}
The law of the quantum boundary length of $\QD$ is
$$R_\gamma 1_{\ell>0} \ell^{-\frac4{\gamma^2}-2} \, d\ell,\text{ where } R_\gamma=\frac{(2 \pi)^{\frac{4}{\gamma^2}-1}}{\left(1-\frac{\gamma^2}{4}\right) \Gamma\left(1-\frac{\gamma^2}{4}\right)^{\frac{4}{\gamma^2}}}.$$
\end{lemma}

Let $ K_\nu(x)$ be the modified Bessel function of the second kind, namely
\begin{align}\label{eq-Kv}
    K_\nu(x) := \int_0^\infty e^{-x \cosh t} \cosh(\nu t) \, dt \text{ for } x > 0 \text{ and } \nu \in \R.
\end{align}Define 
\[
\ol K_\nu(x)=\frac{2^{1-\nu}}{\Gamma(\nu)}x^\nu K_\nu(x)\quad \textrm{for }x>0 \quad\textrm{and}\quad \ol K_\nu(0)=\lim_{x\to 0^+}\ol K_\nu(x)=1.
\]

\begin{theorem}[{\cite[Theorem 1.2,  Proposition 4.19]{ARS-FZZ}}]\label{thm-FZZ}\label{prop-fzz}
	For $\alpha \in (\frac\gamma2, Q)$ and $\ell>0$, the law of quantum area $A$ of a sample from $\cM_{1,0}^\disk(\alpha; 1)^\#$ is the inverse gamma distribution with shape $\frac{2}{\gamma}(Q-\alpha)$ and scale $\frac{\ell^2}{4\sin\frac{\pi\gamma^2}{4}}$. 
Furthermore, with  $\ol U(\alpha)$ in \eqref{eq:U0-explicit}, for $\mu>0$ we have
	\[\cM_{1,0}^\disk(\alpha; \ell)[e^{-\mu A}] = \frac2\gamma 2^{-\frac{\alpha^2}2} \ol U(\alpha) \ell^{-1} \frac2{\Gamma(\frac2\gamma(Q-\alpha))} \left(\frac12 \sqrt{\frac{\mu}{\sin(\pi\gamma^2/4)}}  \right)^{\frac2\gamma(Q-\alpha)}K_{\frac2\gamma(Q-\alpha)} \left(\ell\sqrt{\frac{\mu}{\sin(\pi\gamma^2/4)}}  \right). \]
In particular, \[\cM_{1,0}^\disk(\alpha; \ell)^{\#}[e^{-\mu A}] = \ol K_{\frac2\gamma(Q-\alpha)} \left(\ell\sqrt{\frac{\mu}{\sin(\pi\gamma^2/4)}}  \right), \]
 \[\cM_{1,0}^\disk(\alpha; \ell)^{\#}[Ae^{-\mu A}] = \frac2{\mu\Gamma(\frac2\gamma(Q-\alpha))} \left(\frac\ell2 \sqrt{\frac{\mu}{\sin(\pi\gamma^2/4)}}  \right)^{\frac2\gamma(Q-\alpha)+1}K_{\frac2\gamma(Q-\alpha)-1} \left(\ell\sqrt{\frac{\mu}{\sin(\pi\gamma^2/4)}}  \right). \]
\end{theorem}

Putting $\alpha = \gamma$ and using Definitions~\ref{def-QD-alpha} and~\ref{def-QD} yields the area law of $\QD(\ell)^\#$: 
\begin{proposition}\label{prop-QD-area}
	The law of the quantum area of a sample from $\QD(\ell)^\#$ is the inverse gamma distribution with shape $\frac4{\gamma^2}$ and scale $\frac{\ell^2}{4 \sin \frac{\pi\gamma^2}4}$.
 In particular, we have $\mathrm{QD}^{\#}(\ell)\left[e^{-\mu A}\right]=\ol K_{\frac{4}{\gamma^2}}\left(\ell \sqrt{\frac{\mu}{\sin \left(\pi \gamma^2 / 4\right)}}\right)$.
\end{proposition}
\noindent Proposition~\ref{prop-QD-area} also follows from \cite[Theorem 1.2]{ag-disk} and \cite[Theorem 1.3]{ARS-FZZ}.

\section{SLE loop Green function: the simple case}\label{sec-loop-green-simple}

Recall the two-point and three-point Green functions  $G_2^\lp(z_1, z_2)$ and $G_3^\lp(z_1, z_2, z_3)$ of the SLE loop defined in Section~\ref{section:loop-connectivity}.  In this section, we calculate $R^{\rm loop}(\kappa)=\frac{G_3^\lp(z_1,z_2,z_3)}{\sqrt{G_2^\lp(z_1, z_2) G_2^\lp (z_2,z_3) G_2^\lp(z_3,z_1)}}$ for the simple case $\kappa\in (\frac{8}{3},4)$. 
The non-simple case $\kappa\in (4,8)$ will be treated in the next section. 

In Section~\ref{subsec:weld-disk}, we recall that the SLE loop measure can be obtained from the conformal welding of quantum disks.
In Section~\ref{section:G3} we deduce a conformal welding identity where three points lie on the SLE loop, from which $G_3^\lp$ naturally arises. Evaluating an LQG observable for this identity then gives $G_3^\lp$ in terms of the DOZZ formula from Liouville CFT. The case for $G^\lp_2$ is more technical due to the non-integrability of the analogous observable; we treat it in Sections~\ref{section:G2} and~\ref{section:reweight}. The proof of Theorem~\ref{thm:mag} is given at the end of  Section~\ref{section:reweight}.

\subsection{Conformal welding of quantum disks and SLE loop measure}\label{subsec:weld-disk}

In this section we review the conformal welding result established in~\cite[Theorem 1.3]{AHS-SLE-integrability}.
We first recall  the notion of \emph{conformal welding}. 
For concreteness,  suppose $S_1$ and $S_2$ are two oriented Riemann surfaces, each 
conformally equivalent to a planar domain whose boundary consists of finitely many disjoint circles.
For $i=1,2$, suppose $B_i$ is a boundary component of $S_i$ and $\nu_i$ is a finite length measure on $B_i$ with the same total length.  
Given an oriented Riemann surface $S$ and  a simple loop $\eta$ on $S$ with a length measure $\nu$, 
we call $(S,\eta,\nu)$ a \emph{conformal welding} of $(S_1,\nu_1)$ and $(S_2,\nu_2)$ if
the two  connected components of $S\setminus \eta$ with their  orientations inherited from $S$ are  conformally equivalent to $S_1$ and $S_2$, and moreover, 
both $\nu_1$ and $\nu_2$ agree with $\nu$.

We now introduce the notion of uniform conformal welding.
Suppose  $(S_1,\nu_1)$ and $(S_2,\nu_2)$ from the previous paragraph are such that for each $p_1\in B_1$ and $p_2\in B_2$, modulo conformal automorphism 
there exists a unique conformal welding identifying $p_1$ and $p_2$. Now let $\mathbf p_1\in B_1$ and $\mathbf p_2\in B_1$ be independently sampled from the probability measures proportional to $\nu_1$ and  $\nu_2$, respectively. We call the conformal welding of $(S_1,\nu_1)$ and $(S_2,\nu_2)$ with $\mathbf p_1$ identified with $\mathbf p_2$ their  \emph{uniform conformal welding}.

Fix $\kappa\in (0,4)$ and $\gamma=\sqrt{\kappa}\in (0,2)$. 
Recall the quantum sphere and disk  measures $\QS$ and $\QD(\ell)$  from Sections~\ref{subsub:quantum-surface} and~\ref{subsec-QD}. 
For $\ell > 0$, let $\cD_1$ and $\cD_2$ be quantum surfaces sampled from $\QD(\ell) \times \QD(\ell)$.
By Sheffield's work~\cite{shef-zipper}, 
viewed as oriented Riemann surfaces with the quantum  length measure, almost surely 
the conformal welding of $\cD_1$ and $\cD_2$ is  unique after specifying a boundary point on each surface. 
We write  $\mathrm{Weld}(\QD(\ell),\QD(\ell))$ to denote the law of the loop-decorated quantum surface obtained from the uniform conformal welding of  $\cD_1$ and $\cD_2$.

We also recall the SLE loop measure, which is a canonical loop variant of SLE. For $\kappa\in(\frac{8}{3},8)$, suppose $\Gamma$ is a whole-plane $\CLE_\kappa$. Let  $\SLE_\kappa^\lp$ be the law of a loop from $\Gamma$ chosen from counting  measure, i.e., for nonnegative measurable $F$  we define $\int F(\eta)\SLE_\kappa^\lp(d\eta) := \E[ \sum_{\eta \in \Gamma} F(\eta)]$. Up to constants, $\SLE_\kappa^\lp$ agrees with Zhan's SLE loop measure \cite{zhan-loop-measures}; this was implicitly obtained \cite{werner-sphere-cle} for $\kappa \in (8/3, 4]$ (and possibly could be extended to the full range of $\kappa$), and a different proof via LQG was given in \cite[Theorem 1.1]{ACSW24a}. 

According to \cite[Theorem 1.3]{AHS-SLE-integrability}, the conformal welding of two quantum disks gives a quantum sphere decorated with an independent SLE loop. Uniform embedding (as in Theorem~\ref{thm:embed-qs}) then yields the Liouville field decorated by an independent SLE loop, with the multiplicative constant computed in \cite{ACSW24a}. The result is the following.
\begin{proposition}[{\cite[Corollary 9.5]{ACSW24a}}]\label{prop-tabula-rasa}
For $\kappa\in(\frac{8}{3},4)$ and $\gamma = \sqrt\kappa$, we have
	\[\mathbf m_{\wh \C} \ltimes \left( \int_0^\infty \ell \mathrm{Weld}(\QD(\ell), \QD(\ell)) \, d\ell \right) = K \LF_\C \times \SLE_\kappa^\lp\]
 where $ K =\frac{\gamma}{ 8} \frac{\Gamma(\frac{\gamma^2}4) \Gamma(1-\frac{\gamma^2}4)}{(Q-\gamma)^4} \tan(\pi(\frac4{\gamma^2}-1))$. 
\end{proposition}

\subsection{The calculation of \texorpdfstring{$G_3^\lp$}{g}}\label{section:G3}
In this section we compute $G_3^\lp$ (Lemma~\ref{proposition-solve-G3}). A key input is the fact that the quantum length of an SLE curve on an independent LQG field (defined as the quantum boundary length in the domain with the SLE curve removed) is, up to multiplicative constant, the $\frac\gamma2$-GMC measure with respect to Minkowski content as defined in~\eqref{eq-gmc}.

\begin{lemma}[{\cite{benoist-lqg-chaos}}]\label{lem-benoist}
Suppose $\kappa \in (8/3, 4)$ and $\gamma = \sqrt\kappa$. There exists a constant $C = C(\gamma)$ such that the following holds. Sample  $(\phi, \eta) \sim \LF_\C \times \SLE_\kappa^\lp$, then the quantum length measure of $\eta$ with respect to $\phi$ equals $C \cdot  \mathrm{GMC}_{\mathrm{Cont}_\eta, \frac\gamma2, \phi}$. 
\end{lemma}
\begin{proof}
    \cite[Section 3.2]{benoist-lqg-chaos} proves the claim for all $\kappa \in (0,4)$ and $\gamma = \sqrt\kappa$ in the setting where $\phi$ is a variant of the GFF on $\bbH$ and $\eta$ is an independent chordal  SLE$_\kappa$ curve. The version stated here follows by local absolute continuity. 
\end{proof}

Recall $\QD_{0,n}$ from Definition~\ref{def-QD0203}.
 Let $\mathrm{Weld}(\QD(\ell), \QD_{0,n}(\ell))$ be the law of the uniform conformal welding of a sample from $\QD(\ell) \times \QD_{0,n}(\ell)$. Note that $|\QD_{0,n}(\ell)|=\ell^n|\QD(\ell)|$.

\begin{proposition}\label{prop-G3-identity}
	With $K$ and $C$ the constants as in Proposition~\ref{prop-tabula-rasa} and Lemma~\ref{lem-benoist}, we have
\[
	\mathbf m_{\wh \C} \ltimes \left( \int_0^\infty \ell \mathrm{Weld}(\QD(\ell), \QD_{0,3}(\ell)) \, d\ell \right) 
	= K C^3 \LF_\C^{(\frac\gamma2, z_j)_3}(d\phi)  \, \prod_{j=1}^3 \Cont_\eta(d z_j)\,\SLE_\kappa^\lp(d\eta).
\]
\end{proposition}
\begin{proof}
	Take Proposition~\ref{prop-tabula-rasa}, weight by the cubed quantum length of the loop, and sample three points on the loop from the probability measure proportional to quantum length measure. The left hand side of Proposition~\ref{prop-tabula-rasa} becomes the left hand side of our desired identity, and by Lemma~\ref{lem-add-insertion} with $n=3$ applied to the GMC measure $\mathrm{GMC}_{\mathrm{Cont}_\eta, \frac\gamma2, \phi}$ in Lemma~\ref{lem-benoist}, the right hand side of Proposition~\ref{prop-tabula-rasa} becomes the right hand side of our desired identity.
\end{proof}
We will extract the value of $G_3^\lp$ from Proposition~\ref{prop-G3-identity}, by disintegrating on $(z_1,z_2,z_3)$ then taking the expected value of $A e^{-A}$ where $A$ is the quantum area. We need the following integration identity on modified Bessel function from \eqref{eq-Kv} whose proof can be found Appendix \ref{appendix:integration}.

\begin{lemma}\label{lem-int-KK}
	Suppose $c>0$ and real parameters $a,a'$ satisfy $|a-a'|<2$ and $|a+a'| < 2$. Then 
	\[\int_0^\infty \ell K_a(c \ell) K_{a'}(c\ell) \, d\ell = \frac1{2c^2} \frac{\frac\pi2(a-a')}{\sin (\frac\pi2 (a-a'))} \frac{\frac\pi2(a+a')}{\sin (\frac\pi2(a+a'))}. \]
\end{lemma}

\begin{lemma}\label{lem-3-observable}
	Writing $A$ for the random quantum area, we have
	\[\left( \int_0^\infty \ell^4 \mathrm{Weld}(\QD(\ell), \QD(\ell)) \, d\ell \right)[ Ae^{-A}] = \frac{\pi^2R_\gamma^2}{4\Gamma(\frac4{\gamma^2})^2}  (4 \sin \frac{\pi\gamma^2}4)^{\frac12 - \frac4{\gamma^2}} \frac{\frac8{\gamma^2}-1}{\sin (\frac\pi2(\frac8{\gamma^2}-1))}.\]
\end{lemma}
\begin{proof}
	Writing $A_1, A_2$ for the quantum areas of the first and second quantum disks, since $(A_1+A_2)e^{-A_1-A_2} = A_1 e^{-A_1} \cdot e^{-A_2} + e^{-A_1} \cdot A_2 e^{-A_2}$, the desired quantity can be rewritten as 
	\[ \int_0^\infty \ell^4 \left( \QD(\ell)[A_1 e^{-A_1}] \QD(\ell)[e^{-A_2}] + \QD(\ell)[ e^{-A_1}] \QD(\ell)[A_2e^{-A_2}] \right) \, d\ell.\]
	By symmetry and Lemma~\ref{lem-QD-bdy-law} 
	this equals $2R_\gamma^2 \int_0^\infty \ell^{-\frac8{\gamma^2}} \QD(\ell)^\#[A_1 e^{-A_1}] \QD(\ell)^\#[e^{-A_2}] \, d\ell$, and by Proposition~\ref{prop-QD-area}, this is equal to 
	\[2R_\gamma^2\int_0^\infty \ell^{-\frac8{\gamma^2}} \cdot \frac4{\Gamma(\frac4{\gamma^2})^2}\left( \frac{\ell^2}{4 \sin \frac{\pi \gamma^2}4}\right)^{\frac4{\gamma^2}+\frac12}K_{\frac4{\gamma^2}}\left(\frac{\ell}{\sqrt{\sin \frac{\pi\gamma^2}4}}\right) 
	K_{\frac4{\gamma^2} - 1}\left(\frac{\ell}{\sqrt{\sin \frac{\pi\gamma^2}4}}\right)
	\, d\ell.\]
	Combining with Lemma \ref{lem-int-KK} completes the proof. 
\end{proof}

Plugging Lemma \ref{lem-3-observable} into Proposition \ref{prop-tabula-rasa} we can now solve the 3-point Green function for SLE loop measure in the following proposition.
\begin{proposition}\label{proposition-solve-G3}
	With $C$ the constant of Lemma~\ref{lem-benoist}, we have
	\[G_3^\lp(0,1,e^{i\pi/3}) = \frac{2 \pi (\frac2\gamma)^3  (Q-\gamma)^2}{ C^3 C^\mathrm{DOZZ}_\gamma(\frac\gamma2,\frac\gamma2,\frac\gamma2)}  \frac{\Gamma(1-\frac4{\gamma^2})}{\Gamma(\frac4{\gamma^2}) }
	\left( \frac{\pi \Gamma(\frac{\gamma^2}4)}{\Gamma(1-\frac{\gamma^2}4)}\right)^{\frac4{\gamma^2} - \frac32} \times \frac{1}{\Gamma(1-\frac{\gamma^2}{4})^3}
	.\]
\end{proposition}
\begin{proof}
With the choice of $\mathbf m_{\wh \C}$ as in Theorem~\ref{thm:embed-qs}, the density of three marked points under $\mathbf m_{\hat \C}$ is $|(p-q)(q-r)(r-p)|^2 dp \, dq \, dr$. Since $|\QD_{0,3}(\ell)|=\ell^3|\QD(\ell)|$, Proposition~\ref{prop-G3-identity} then gives

	\alb
	&\left( \int_0^\infty \ell^4 \mathrm{Weld}(\QD(\ell), \QD(\ell)) \, d\ell \right)[ Ae^{-A}] |(z_1-z_2)(z_2-z_3)(z_3-z_1)|^2 \prod_{j=1}^3 dz_j \\
	&= KC^3 \LF_\C^{(\frac\gamma2, z_j)_3}[Ae^{-A}] G_3^\lp(z_1,z_2,z_3) \prod_{j=1}^3 dz_j.
	\ale
	Disintegrating and using the triple $(u_1, u_2, u_3) = (0, 1, e^{i\pi/3})$ thus yields 
	\[\left( \int_0^\infty \ell^4 \mathrm{Weld}(\QD(\ell), \QD(\ell)) \, d\ell  \right)[ Ae^{-A}] = 
 KC^3 \LF_\C^{(\frac\gamma2, u_j)_3}[Ae^{-A}] G_3^\lp(u_1,u_2,u_3) . \]
 Lemma~\ref{lem-sph-area-law} gives
$\LF_\C^{(\frac\gamma2, u_j)_3}[Ae^{-A}] = \frac1{2\gamma}(\frac{3\gamma}2 - 2Q){C_\gamma^\mathrm{DOZZ}(\frac\gamma2,\frac\gamma2, \frac\gamma2)}$.
By Lemma~\ref{lem-3-observable} we get
 \alb
 G_3^\lp(0,1,e^{i\pi/3}) &= \frac{\pi^2R_\gamma^2}{4\Gamma(\frac4{\gamma^2})^2}  (4 \sin \frac{\pi\gamma^2}4)^{-(\frac8{\gamma^2}-1)/2} \frac{\frac8{\gamma^2}-1}{\sin (\frac\pi2(\frac8{\gamma^2}-1))}\cdot \frac{2\gamma}{KC^3 (\frac{3\gamma}2 - 2Q){C_\gamma^\mathrm{DOZZ}(\frac\gamma2,\frac\gamma2, \frac\gamma2)}} \\
  &= -\frac{2^{1-\frac8{\gamma^2}}\pi^2}{ C^3 C^\mathrm{DOZZ}_\gamma(\frac\gamma2,\frac\gamma2,\frac\gamma2)}  \frac{(\sin \frac{\pi\gamma^2}4)^{\frac12 - \frac4{\gamma^2}}}{\Gamma(\frac4{\gamma^2})^2 \cos(\pi(\frac4{\gamma^2}-1))} \frac{R_\gamma^2}K.
  \ale
Recall the expressions of $R_\gamma$ and $K$ from Lemma~\ref{lem-QD-bdy-law} and Proposition~\ref{prop-tabula-rasa}, respectively. We have
\begin{equation}\label{eq:k-r}
\frac{R_\gamma^2}{K} = \frac{2^{\frac8{\gamma^2}} (\frac2\gamma)^3 \pi^{\frac8{\gamma^2}-2} (Q-\gamma)^2}{\Gamma(1-\frac{\gamma^2}4)^{\frac8{\gamma^2} +1} \Gamma(\frac{\gamma^2}4)} \cot(\pi(\frac4{\gamma^2}-1)).
\end{equation}
  The result follows by combining the previous two equations and cancelling terms.
\end{proof}

\subsection{The calculation of \texorpdfstring{$G_2^\lp$}{g}}\label{section:G2}

In this section, we compute the 2-point Green function $G_2^\lp$ and conclude the proof of Theorem~\ref{thm:mag} for $\kappa\in (8/3,4)$, modulo an ingredient (Proposition~\ref{lem-G2-alpha}) that we supply in Section~\ref{section:reweight}. Similarly to the 3-point case, using Proposition~\ref{prop-tabula-rasa} we first obtain a conformal welding of quantum disks with two marked boundary points. In order to fix the conformal embedding of the resulting decorated quantum surface, we need a third marked point, which we will sample from the quantum area measure.
\begin{proposition}\label{prop-G2-identity}
With $K$ and $C$ the constants of Proposition~\ref{prop-tabula-rasa} and Lemma~\ref{lem-benoist}, we have
	\[
	\mathbf m_{\wh \C} \ltimes \left( \int_0^\infty 2\ell \mathrm{Weld}(\QD_{1,0}(\ell), \QD_{0,2}(\ell)) \, d\ell \right) 
	= K C^2 \LF_\C^{ (\frac\gamma2, z_j)_2, (\gamma, z_3)}(d\phi)  \, d^2 z_3 \, \prod_{j=1}^2 \Cont_\eta(d z_j)\,\SLE_\kappa^\lp(d\eta).
	\]
\end{proposition}
\begin{proof}
The proof is identical to that of Proposition~\ref{prop-G3-identity} except we add two marked points.
\end{proof}

	Disintegrating Proposition~\ref{prop-G2-identity} on the location of the marked points $(z_1, z_2, z_3)$ gives the following.
	\begin{lemma}\label{lem-G2-gamma}
		Let $(u_1, u_2, u_3) = (0,1,e^{i\pi/3})$. If we embed a sample from $\int_0^\infty 2\ell \mathrm{Weld}(\QD_{1,0}(\ell), \QD_{0,2}(\ell)) \, d\ell$ in $(\C, u_1, u_2, u_3)$, then the law of the (field, curve) pair is 
		\[K C^2 G_2^\lp(u_1,u_2) \LF_\C^{ (\frac\gamma2, u_j)_2, (\gamma, u_3)}(d\phi)\sm^{u_1,u_2}(d\eta),\]
		where $\sm^{u_1,u_2}$ is a probability measure on the set of curves passing through $u_1, u_2$. 
	\end{lemma}

As in the 3-point case, we want to take the expectation of $Ae^{-A}$ where $A$ is the quantum area. However, now the three marked points have weights $(\frac{\gamma}{2},\frac{\gamma}{2},\gamma)$, which do not satisfy the extended Seiberg bound~\eqref{eq-ext-Seiberg}. We will use a variant of Lemma~\ref{lem-G2-gamma} where $\gamma$ has been perturbed (Proposition \ref{lem-G2-alpha}) so the insertions can satisfy the extended Seiberg bound. The proof of Proposition \ref{lem-G2-alpha} is postponed  to Section \ref{section:reweight}.
	
\begin{proposition}\label{lem-G2-alpha}
Let $\alpha \in \R$. If we embed a sample from $\int_0^\infty 2\ell \mathrm{Weld}(\cM_{1,0}^\disk(\alpha; \ell), \QD_{0,2}(\ell)) \, d\ell$ in $(\C, u_1, u_2, u_3)$, then the law of the (field, curve) pair is 
\[\frac{2\pi}\gamma(Q-\gamma)^2 K C^2 G_2^\lp(u_1,u_2) \LF_\C^{ (\frac\gamma2, u_j)_2, (\alpha, u_3)}(d\phi)\sm^{u_1,u_2,u_3}_\alpha(d\eta),\]	
 where $\sm^{u_1,u_2,u_3}_\alpha$ is the measure defined via $\frac{d\sm^{u_1,u_2,u_3}_\alpha}{d\sm^{u_1,u_2}}(\eta) = (\CR(\eta, u_3)/2)^{-\frac{\alpha^2}2 +Q\alpha-2}$, where $\CR(\eta, z)$ is the conformal radius of the connected component of $\C\backslash\eta$ containing $z$, viewed from $z$.
\end{proposition}

To obtain $G_2^\lp(0,1)$, in the setting of Proposition~\ref{lem-G2-alpha} we compute the expected value of $Ae^{-A}$ in two different ways (Lemmas~\ref{lem-alpha-obs} and~\ref{lem-lim-LF}). Recall that $|\QD_{0,2}(\ell)|=\ell^2|\QD(\ell)|$.

\begin{lemma}\label{lem-alpha-obs}
	Let $\alpha \in (\frac\gamma2, \gamma)$, then $\left(\int_0^\infty 2\ell^3 \mathrm{Weld}(\cM_{1,0}^\disk(\alpha; \ell), \QD(\ell)) \, d\ell\right)[Ae^{-A}]$ equals
	\alb
	\frac2\gamma 2^{-\alpha^2/2} \ol U(\alpha) R_\gamma  \frac{(4\sin\frac{\pi\gamma^2}4)^{\frac1\gamma(\alpha-2Q)+1}}{\Gamma(\frac2\gamma(Q-\alpha))\Gamma(\frac4{\gamma^2})}  
	\frac{\pi }{\sin(\pi\frac\alpha\gamma)}
	\frac{\pi (\frac4{\gamma^2} - \frac\alpha\gamma)}{\sin(\pi (\frac4{\gamma^2} - \frac\alpha\gamma))}.
	\ale
\end{lemma}
\begin{proof}
	Lemmas~\ref{lem-M1-bdy-law} and~\ref{lem-QD-bdy-law} give
	\[\int_0^\infty 2\ell^3 \mathrm{Weld}(\cM_{1,0}^\disk(\alpha; \ell), \QD(\ell))\,d\ell =  \frac2\gamma 2^{1-\alpha^2/2} \ol U(\alpha) R_\gamma \int_0^\infty \ell^{\frac2\gamma\alpha -\frac8{\gamma^2} -1} \mathrm{Weld}(\cM_{1,0}^\disk(\alpha; \ell)^\#, \QD(\ell)^\#) \, d\ell. \]
 Writing $A_1$ and $A_2$ to denote the quantum areas of samples from $\cM_{1,0}^\disk(\alpha; \ell)^\#$ and $\QD(\ell)^\#$, we have $Ae^{-A} = A_1 e^{-A_1} \cdot e^{-A_2} + e^{-A_1} \cdot A_2 e^{-A_2}$. For the first summand, we have  
	\alb
&\int_0^\infty \ell^{\frac2\gamma \alpha - \frac{8}{\gamma^2} -1}\cM_{1,0}^\disk(\alpha; \ell)^\# [A_1 e^{-A_1}] \QD(\ell)^\#[e^{-A_2}] \, d\ell  \\
&= \int_0^\infty \ell^{\frac2\gamma \alpha - \frac{8}{\gamma^2} -1}\cdot \frac4{\Gamma(\frac2\gamma(Q-\alpha))\Gamma(\frac4{\gamma^2})} (\frac{\ell^2}{4 \sin \frac{\pi\gamma^2}4})^{(\frac2\gamma(Q-\alpha)+1 + \frac4{\gamma^2})/2} K_{\frac2\gamma(Q-\alpha)-1}(\frac{\ell}{\sqrt{\sin \frac{\pi\gamma^2}4}})  K_{\frac4{\gamma^2}}(\frac{\ell}{\sqrt{\sin \frac{\pi\gamma^2}4}})  \, d\ell  \\
&= \frac{(4\sin\frac{\pi\gamma^2}4)^{\frac1\gamma(\alpha-2Q)+1}}{2\Gamma(\frac2\gamma(Q-\alpha))\Gamma(\frac4{\gamma^2})}  
\frac{\pi\frac\alpha\gamma }{\sin(\pi\frac\alpha\gamma)}
\frac{\pi (\frac4{\gamma^2} - \frac\alpha\gamma)}{\sin(\pi (\frac4{\gamma^2} - \frac\alpha\gamma))}.
	\ale
	Here, the first equality uses Propositions~\ref{prop-fzz} and~\ref{prop-QD-area} to identify the area laws of $\cM_{1,0}^\disk(\alpha; \ell)^\#$ and $\QD(\ell)^\#$ to compute the expectations, and the second equality uses the integral identity  Lemma~\ref{lem-int-KK}. Similarly,
$$ \int_0^\infty \ell^{\frac2\gamma \alpha - \frac{8}{\gamma^2} -1}\cM_{1,0}^\disk(\alpha; \ell)^\# [e^{-A_1}] \QD(\ell)^\#[A_2e^{-A_2}] \, d\ell= \frac{(4\sin\frac{\pi\gamma^2}4)^{\frac1\gamma(\alpha-2Q)+1}}{2\Gamma(\frac2\gamma(Q-\alpha))\Gamma(\frac4{\gamma^2})}  
	\frac{\pi(1-\frac\alpha\gamma) }{\sin(\pi(1-\frac\alpha\gamma))}
	\frac{\pi (\frac4{\gamma^2} - \frac\alpha\gamma)}{\sin(\pi (\frac4{\gamma^2} - \frac\alpha\gamma))}.$$
    
	Summing these gives
	\[\int_0^\infty \ell^{\frac2\gamma\alpha -\frac8{\gamma^2} -1} \mathrm{Weld}(\cM_{1,0}^\disk(\alpha; \ell)^\#, \QD(\ell)^\#) \, d\ell [Ae^{-A}] = \frac{(4\sin\frac{\pi\gamma^2}4)^{\frac1\gamma(\alpha-2Q)+1}}{2\Gamma(\frac2\gamma(Q-\alpha))\Gamma(\frac4{\gamma^2})}  
	\frac{\pi }{\sin(\pi\frac\alpha\gamma)}
	\frac{\pi (\frac4{\gamma^2} - \frac\alpha\gamma)}{\sin(\pi (\frac4{\gamma^2} - \frac\alpha\gamma))},\]
which combined with the first indented equation yields the result. 	
\end{proof}

\begin{lemma}\label{lem-lim-LF}
	We have 
	\[\lim_{\alpha \uparrow \gamma} (\gamma - \alpha)\LF_\C^{(\frac\gamma2, u_1), (\frac\gamma2, u_2), (\alpha, u_3)}[Ae^{-A}] = -\frac8{\gamma^3} (Q-\gamma)\pi^{\frac4{\gamma^2}-1} \frac{\Gamma(1- \frac4{\gamma^2})}{\Gamma(\frac4{\gamma^2})} (\frac{\Gamma(\frac{\gamma^2}4)}{\Gamma(1-\frac{\gamma^2}4)})^{\frac4{\gamma^2}}  .\]
\end{lemma}
\begin{proof}
	By the first claim of Lemma~\ref{lem-sph-area-law} and 
	$\int_0^\infty a^{\frac1\gamma(\gamma +\alpha - 2Q)} e^{-a} \, da = \Gamma(\frac1\gamma(\gamma + \alpha - 2Q) + 1)$,
 we have
	\alb
	\LF_\C^{(\frac\gamma2, u_1), (\frac\gamma2, u_2), (\alpha, u_3)}[Ae^{-A}] &= \frac{1}{2 \gamma}(\gamma +\alpha - 2Q){C_\gamma^\mathrm{DOZZ}(\frac\gamma2,\frac\gamma2, \alpha)} \\
	&=  \frac{1}{2 \gamma}(\gamma +\alpha - 2Q) (\pi  (\frac\gamma2)^{2-\gamma^2/2}\frac{\Gamma(\frac{\gamma^2}4)}{\Gamma(1-\frac{\gamma^2}4)})^{\frac{2Q-\gamma-\alpha}{\gamma}} \frac{\Ug'(0) \Ug(\frac\gamma2)^2\Ug(\alpha)}{\Ug(\frac{\gamma+\alpha-2Q}2) \Ug(\frac{\gamma-\alpha}2) \Ug(\frac\alpha2)^2}.
	\ale
	Note that $\lim_{\alpha \uparrow \gamma} \frac{\gamma - \alpha}{\Ug(\frac{\gamma - \alpha}2)} = \frac2{\Ug'(0)}$, so 
	\[\lim_{\alpha \uparrow \gamma} (\gamma - \alpha)\LF_\C^{(\frac\gamma2, u_1), (\frac\gamma2, u_2), (\alpha, u_3)}[Ae^{-A}]  =\frac{1}{\gamma} 2(\gamma - Q) (\pi  (\frac\gamma2)^{2-\gamma^2/2}\frac{\Gamma(\frac{\gamma^2}4)}{\Gamma(1-\frac{\gamma^2}4)})^{\frac2\gamma(Q-\gamma)} \frac{\Ug(\gamma)}{\Ug(\gamma - Q) }.\]
	The shift equations~\eqref{shiftUpsilon} for $\Ug$ yield the following which conclude the proof:
	\[\frac{\Ug(\gamma)}{\Ug(\frac\gamma2)} = \frac{\Gamma(\frac{\gamma^2}4)}{\Gamma(1-\frac{\gamma^2}4)} \left(\frac\gamma2\right)^{1-\frac{\gamma^2}2}  \quad  \textrm{and}\quad \frac{\Ug(\frac\gamma2)}{\Ug(\gamma-Q)} = \frac{\Gamma(1- \frac4{\gamma^2})}{\Gamma(\frac4{\gamma^2})}\left(\frac\gamma2\right)^{1-\frac8{\gamma^2}}. \qedhere  \]
\end{proof}

Combining Lemma \ref{lem-alpha-obs} and \ref{lem-lim-LF} with Proposition \ref{lem-G2-alpha}, we can finally solve $G_2$.

\begin{proposition}\label{proposition-solve-G2}
	With $C$ the constant in Lemma~\ref{lem-benoist}, we have
	\[G_2^\lp(0,1) =2 \frac{(Q-\gamma)^2 \cot(\pi(\frac4{\gamma^2}-1))}{C^2 \pi \Gamma(\frac{\gamma^2}4) \Gamma(1-\frac{\gamma^2}4)}. \]
\end{proposition}
\begin{proof}
	Let $(u_1, u_2, u_3) = (0,1,e^{i\pi/3})$ and let $\alpha \in (\frac\gamma2, \gamma)$. 
	By Proposition~\ref{lem-G2-alpha} we have, viewing both sides as measures on decorated quantum surfaces, 
	\[ \int_0^\infty 2\ell\mathrm{Weld}(\cM_{1,0}^\disk(\alpha; \ell), \QD_{0,2}(\ell)) \, d\ell = \frac{2\pi}\gamma(Q-\gamma)^2 K C^2 G_2^\lp(u_1,u_2) \LF_\C^{ (\frac\gamma2, u_j)_2, (\alpha, u_3)}(d\phi)\sm^{u_1,u_2,u_3}_\alpha(d\eta).\]
	We evaluate the observable $Ae^{-A}$ on both sides. By Lemma~\ref{lem-alpha-obs} we get
 \begin{gather}\label{eq-solve-G2}
 \begin{aligned}
	&\frac2\gamma 2^{-\alpha^2/2} \ol U(\alpha) R_\gamma \cdot \frac{(4\sin\frac{\pi\gamma^2}4)^{\frac1\gamma(\alpha-2Q)+1}}{\Gamma(\frac2\gamma(Q-\alpha))\Gamma(\frac4{\gamma^2})}  
	\frac{\pi }{\sin(\pi\frac\alpha\gamma)}
	\frac{\pi (\frac4{\gamma^2} - \frac\alpha\gamma)}{\sin(\pi (\frac4{\gamma^2} - \frac\alpha\gamma))} \\
	=& \frac{2\pi}\gamma(Q-\gamma)^2 K C^2 G_2^\lp(u_1,u_2) \LF_\C^{ (\frac\gamma2, u_j)_2, (\alpha, u_3)}(d\phi)[Ae^{-A}] |\sm^{u_1,u_2,u_3}_\alpha| .
 \end{aligned}
 \end{gather}
Note that this implies $|\sm^{u_1,u_2,u_3}_{\alpha}|  = \mathsf m^{u_1,u_2}[ (\frac12\CR(\eta, u_3))^{-\frac{\alpha^2}2 + Q\alpha -2}]<\infty$.  The exponent $-\frac{\alpha^2}2 + Q\alpha -2 < 0$ is negative and increasing on $\alpha \in (\frac\gamma2, \gamma)$, so for any $\frac\gamma2 < \alpha_0 < \alpha < \gamma$ and $x > 0$ we have $x^{-\frac{\alpha^2}2 + Q\alpha -2} < \max(1, x^{-\frac{\alpha_0^2}2 + Q\alpha_0 -2})$. The dominated convergence theorem thus implies
\[\lim_{\alpha \uparrow \gamma} |\sm^{u_1,u_2,u_3}_\alpha| = \lim_{\alpha \uparrow \gamma}  \mathsf m^{u_1,u_2} [(\frac12\CR(\eta, u_3))^{-\frac{\alpha^2}2 + Q\alpha -2}] =   \mathsf m^{u_1,u_2} [\lim_{\alpha \uparrow \gamma} (\frac12\CR(\eta, u_3))^{-\frac{\alpha^2}2 + Q\alpha -2}] = 1,\]
where the dominating function is $\max(1, (\frac12\CR(\eta, u_3))^{-\frac{\alpha_0^2}2 + Q\alpha_0 -2})$, and we use  $-\frac{\gamma^2}2 + Q\gamma - 2 = 0$.

	Multiply both sides of~\eqref{eq-solve-G2} by $(\gamma - \alpha)$ and take the limit as $\alpha \uparrow \gamma$. By Lemma~\ref{lem-lim-LF} and $\lim_{\alpha \uparrow \gamma} |\sm^{u_1,u_2,u_3}_\alpha| = 1$, this gives
	
	\alb
	&\frac2\gamma 2^{-\gamma^2/2} \ol U(\gamma) R_\gamma \cdot \frac{(4\sin\frac{\pi\gamma^2}4)^{1-\frac4{\gamma^2}}}{\Gamma(\frac4{\gamma^2}-1)\Gamma(\frac4{\gamma^2})}  
	\cdot \gamma\cdot
	\frac{\pi (\frac4{\gamma^2} - 1)}{\sin(\pi (\frac4{\gamma^2} - 1))} \\
	=& -\frac{2\pi}\gamma(Q-\gamma)^2 K C^2 G_2^\lp(u_1,u_2)\frac8{\gamma^3} (Q-\gamma)\pi^{\frac4{\gamma^2}-1} \frac{\Gamma(1- \frac4{\gamma^2})}{\Gamma(\frac4{\gamma^2})} (\frac{\Gamma(\frac{\gamma^2}4)}{\Gamma(1-\frac{\gamma^2}4)})^{\frac4{\gamma^2}} . \ale
 The result follows by expanding  $\ol U(\gamma)$, $R_\gamma$ and $K$ (recall Proposition \ref{prop-remy-U}), Lemma \ref{rem-QD-length} and Proposition~\ref{prop-tabula-rasa} and simplifying.
\end{proof}

\begin{proof}[{Proof of Theorem~\ref{thm:mag}} for $\kappa \in (\frac83, 4{]}$] For $\kappa\in(\frac{8}{3},4)$, Proposition~\ref{proposition-solve-G3} gives a formula for $G_3^\lp(0,1,e^{i\pi/3})$, and Proposition~\ref{proposition-solve-G2} gives a formula for $G_2^\lp(0,1)$.
By plugging in these formulas, we get
\begin{equation}\label{eq:mag-exp}
\frac{G_3^\lp(0,1,e^{i\pi/3})}{G_2^\mathrm{loop}(0,1)^{3/2}} = \frac{2^{5/2}\pi \gamma^{-3}}{C_\gamma^\mathrm{DOZZ}(\frac\gamma2, \frac\gamma2, \frac\gamma2)}\frac{\Gamma(1-\frac4{\gamma^2})}{\Gamma(\frac4{\gamma^2})}\frac{\tan(\pi(\frac4{\gamma^2}-1))^{3/2}}{Q-\gamma} \left(\frac{\pi \Gamma(\frac{\gamma^2}4)}{\Gamma(1-\frac{\gamma^2}4)}\right)^{\frac4{\gamma^2}}.
\end{equation}
By shift equation~\eqref{eq:shift double gamma}, the above formula equals $\frac{1}{\sqrt{-2\cos(\pi\beta^2)}}\omega^{(\beta)}_{(1,0),(1,0),(1,0)}$. By covariance of Green function this equals $R^\lp(\kappa)$. 
The case $\kappa=4$ follows from taking the limit as $\kappa \nearrow 4$; see Lemmas~\ref{lem:Green-cont loop} and~\ref{lem:cont4} in Appendix~\ref{kappato4}.
\end{proof}

\subsection{Proof of Proposition \ref{lem-G2-alpha}: the reweighting argument}\label{section:reweight}
Our proof follows the strategy from \cite{AHS-SLE-integrability}. Similar arguments will be also used in Section~\ref{sec:3-pt} for the proof of Theorem~\ref{thm:nesting}. We give a detailed proof here and will be brief there. We start with the $\alpha = \gamma$ case and then perform a reweighting argument.
\begin{lemma}\label{lem:base}
Proposition \ref{lem-G2-alpha} holds for the case $\alpha=\gamma$.
\end{lemma}
\begin{proof}
This is Lemma~\ref{lem-G2-gamma} after identifying $\cM_{1,0}^\disk(\gamma; \ell) = \frac{2\pi}\gamma(Q-\gamma)^2 \QD_{1,0}(\ell)$ (Definition~\ref{def-QD-alpha}).
\end{proof}
We record two ``reweighting'' lemmas that change the $\gamma$-insertion of a Liouville field to a generic $\alpha$-insertion. Applying them to the setting of Lemma~\ref{lem:base} yields Proposition \ref{lem-G2-alpha} for general $\alpha$.
\begin{lemma}[{\cite[Lemma 4.6]{ARS-FZZ}}]\label{lem-disk-reweight}
For any $\ell>0,\eps\in (0,1)$ and for any nonnegative  measurable function $f$ of $X$ that depends only on $X|_{\bbH\setminus B_\eps(i)}$,  we have 
	\begin{equation*}
		\int f(X|_{\bbH\setminus B_\eps(i)}) \times   \eps^{\frac12(\alpha^2 - \gamma^2)} e^{(\alpha - \gamma)X_\eps(i)}  \,d   {\LF_\bbH^{(\gamma, i)}(\ell)}= \int f(X|_{\bbH\setminus B_\eps(i)}) \, d\LF_{\bbH}^{(\alpha, i)}(\ell),
	\end{equation*}
	where $X$ is a sample in $H^{-1}(\bbH)$ and $X_\eps(i)$ means the average of $X$ on the boundary of the ball $B_\eps(i)=\{z: |z-i|< \eps\}$. 
\end{lemma}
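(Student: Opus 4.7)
The plan is to derive the identity as a Cameron--Martin--Girsanov shift on $P_\bbH$, starting from the common $P_\bbH$-parametrization of $\LF_\bbH^{(\alpha,i)}(\ell)$ and $\LF_\bbH^{(\gamma,i)}(\ell)$ supplied by Lemma~\ref{lem:field-disk}. First I would couple the two measures by a single sample $h\sim P_\bbH$: writing $\hat h_\beta(z):=h(z)-2Q\log|z|_++\beta G_\bbH(z,i)$, $c_\beta(h):=\tfrac{2}{\gamma}\log(\ell/\nu_{\hat h_\beta}(\R))$, and $w_\beta(h):=2^{-\beta^2/2}\tfrac{2}{\gamma}\ell^{-1}(\ell/\nu_{\hat h_\beta}(\R))^{\frac{2}{\gamma}(\beta-Q)}$ for $\beta\in\{\alpha,\gamma\}$, the field $X^\beta:=\hat h_\beta+c_\beta(h)$ has law $\LF_\bbH^{(\beta,i)}(\ell)$ under the reweighting $w_\beta(h)\,dP_\bbH(h)$. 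On $\bbH\setminus B_\eps(i)$ the two coupled fields differ only by the deterministic smooth shift $(\alpha-\gamma)G_\bbH(\cdot,i)$ together with an $X^\beta|_{\bbH\setminus B_\eps(i)}$-measurable constant depending on the boundary quantity $\nu_{\hat h_\beta}(\R)$.

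Next I would apply Girsanov's theorem to implement the deterministic shift. Under $P_\bbH$ the circle average $h_\eps(i)$ is centered Gaussian with variance $\Sigma_\eps$ and covariance $\rho_\eps(z):=(G_\bbH(z,\cdot))_\eps$ with $h(z)$. For $z\in\bbH\setminus B_\eps(i)$, the harmonicity in $w$ of $-\log|z-w|$ and $-\log|z-\bar w|$ on $B_\eps(i)$ yields $\rho_\eps(z)=G_\bbH(z,i)$, modulo a $z$-independent constant coming from the non-harmonic term $2\log|w|_+$. Girsanov then gives, for any $F$ depending only on $h|_{\bbH\setminus B_\eps(i)}$,
\begin{equation*}
\int F(h)\,e^{(\alpha-\gamma)h_\eps(i)}\,dP_\bbH(h)=e^{\tfrac12(\alpha-\gamma)^2\Sigma_\eps}\int F\bigl(h+(\alpha-\gamma)\rho_\eps\bigr)\,dP_\bbH(h).
\end{equation*}
Applied with $F(h)=f(X^\gamma|_{\bbH\setminus B_\eps(i)})\,w_\gamma(h)\,e^{(\alpha-\gamma)[(f_\gamma)_\eps(i)+c_\gamma(h)]}$ and using $\hat h_\gamma+(\alpha-\gamma)G_\bbH(\cdot,i)=\hat h_\alpha$, the shift $h\mapsto h+(\alpha-\gamma)\rho_\eps$ turns $X^\gamma|_{\bbH\setminus B_\eps(i)}$ into $X^\alpha|_{\bbH\setminus B_\eps(i)}$, and simultaneously replaces $\nu_{\hat h_\gamma}(\R)$ by $\nu_{\hat h_\alpha}(\R)$ inside $w_\gamma$ and $c_\gamma$.

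The final step is to match the scalar prefactors. The Girsanov determinant contributes $\eps^{-\frac12(\alpha-\gamma)^2}$ (via $\Sigma_\eps\sim-\log(2\eps)$), the $\gamma(G_\bbH(\cdot,i))_\eps(i)\sim-\gamma\log(2\eps)$ term in $(f_\gamma)_\eps(i)$ gives $\eps^{-\gamma(\alpha-\gamma)}$, and the explicit $\eps^{\frac12(\alpha^2-\gamma^2)}$ in the statement cancels both via the algebraic identity
\begin{equation*}
\tfrac12(\alpha^2-\gamma^2)-\gamma(\alpha-\gamma)-\tfrac12(\alpha-\gamma)^2=0.
\end{equation*}
The $\eps$-independent powers of $2$ coming from the $-\log 2$ in $\Sigma_\eps$ and $(G_\bbH(\cdot,i))_\eps(i)$ match the $2^{(\alpha^2-\gamma^2)/2}$ discrepancy between $w_\alpha$ and $w_\gamma$; and the residual factor $e^{(\alpha-\gamma)c_\alpha(h)}=(\ell/\nu_{\hat h_\alpha})^{\frac{2}{\gamma}(\alpha-\gamma)}$, combined with the $\nu$-exponent surviving from $w_\gamma$, produces the correct $(\ell/\nu_{\hat h_\alpha})^{\frac{2}{\gamma}(\alpha-Q)}$ and thus $w_\alpha(h)$.

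The main obstacle is that the identity is claimed at each fixed $\eps>0$, not merely in the limit $\eps\to 0$, so all $\eps$-dependent constants must cancel \emph{exactly}. The delicate bookkeeping is the non-harmonic contribution of $2\log|w|_+$ to $\rho_\eps(z)$, $\Sigma_\eps$, and $(G_\bbH(\cdot,i))_\eps(i)$, which is nonzero precisely because the insertion point $i$ lies on the normalization circle $\{|w|=1\}$ where $|w|_+$ transitions between $|w|$ and $1$. These corrections must be shown to cancel against one another using the compatibility of the $(2\Im z_0)^{-\alpha^2/2}|z_0|_+^{-2\alpha(Q-\alpha)}$ normalization of Definition~\ref{def-1pt-H} (at $z_0=i$) with the circle-average regularization; the algebraic identity above then expresses the fact that the lemma is the regularized form of the multiplicative identity ``$\LF_\bbH^{(\alpha,i)}=e^{(\alpha-\gamma)\phi(i)}\LF_\bbH^{(\gamma,i)}$''.
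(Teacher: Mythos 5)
Your proposal is correct and takes essentially the same route as the source: the paper quotes this statement from [ARS-FZZ, Lemma 4.6] without reproving it, but both that proof and the paper's own sphere analogue (the proof of Lemma~\ref{lem:reweight}) are exactly the Cameron--Martin/Girsanov computation you outline, built on the concrete disintegration of Lemma~\ref{lem:field-disk}. One small correction to your last paragraph: the $\log|\cdot|_+$ circle-average corrections you flag (nonzero because $i$ lies on $\{|w|=1\}$) already cancel exactly within your own bookkeeping---between the Girsanov variance $\Sigma_\eps$, the circle average of $G_\bbH(\cdot,i)$, the induced shift of the length-pinning constant $\frac2\gamma\log(\ell/\nu_{\hat h})$, and the $\nu$-exponent in the weight $w_\gamma$---and not via the normalization constants of Definition~\ref{def-1pt-H}, which only account for the $\eps$-independent powers of $2$.
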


\begin{lemma}\label{lem:reweight}
 Let $n \geq 0$, let   $(\alpha_i, z_i) \in \R \times \C$ for $i = 1, \dots, n$ and $(\alpha, z) \in \R \times \C$ satisfy that $z,z_1, \dots, z_n$ are pairwise distinct. Let $\eta$ be a simple loop not passing through $z$, and let $D_\eta$ be the connected component of $\wh \C \backslash \eta$ containing $z$. Let $p$ be a point on $\eta$ and let $\psi: \bbH \to D_\eta$ be the conformal map with $\psi(i) = z$ and $\psi(0) = p$. For $\eps \in (0,\frac14)$, let $\C_{\eta, p, \eps} = \C \backslash \psi(B_\eps(i))$. For $\phi$ sampled from $\LF_\C^{(\alpha_i, z_i)_i, (\gamma, z)}$, let $X = \phi \circ \psi + Q \log |\psi'|$, so that $(\bbH, X, i, 0)/{\sim_\gamma} = (D_\eta, \phi, z, p)/{\sim_\gamma}$. Then for any nonnegative measurable function $f$ of $\phi$ that depends only on $\phi|_{\C_{\eta, p, \eps}}$, we have 
 \[\int f(\phi)\times \eps^{\frac12(\alpha^2-\gamma^2)} e^{(\alpha - \gamma)X_\eps(i)} d\LF_\C^{(\alpha_i, z_i)_i, (\gamma, z)} = \int f(\phi) \left( \frac12\CR(\eta, z)\right)^{-\frac{\alpha^2}2 + Q\alpha-2} d\LF_\C^{(\alpha_i, z_i)_i, (\alpha, z)}, \]
 where $\CR(\eta, z)$ is the conformal radius of the connected component of $\C\backslash\eta$ containing $z$, viewed from $z$.
\end{lemma}
\begin{proof}
The case of $i=1$ with $(\alpha_i,z_i)=(\gamma,1)$ and $z=0$ is proved in \cite[Lemma 8.9]{ACSW24a}. The same proof also works for the general case. \qedhere 	 
\end{proof}

\begin{proof}[Proof of Proposition \ref{lem-G2-alpha}]
By Lemma~\ref{lem:base}, when embedding a sample from $\int_0^\infty 2\ell \mathrm{Weld}(\cM_{1,0}^\disk(\gamma; \ell), \QD_{0,2}(\ell)) \, d\ell$ in $(\C, u_1, u_2, u_3)$, the resulting law of the (field, curve) pair is 
\[\frac{2\pi}\gamma(Q-\gamma)^2 K C^2 G_2^\lp(u_1,u_2) \LF_\C^{ (\frac\gamma2, u_j)_2, (\alpha, u_3)}(d\phi)\sm^{u_1,u_2}(d\eta),\]	
 where $\sm^{u_1,u_2}$ is a probability measure on simple loops passing through $u_1,u_2$, defined in Lemma~\ref{lem-G2-gamma}.
	Recall the notation in Lemma~\ref{lem:reweight}.
	Then for $\eps\in (0,\frac14)$, for any nonnegative  measurable function $f$ of $\phi|_{\C_{\eta,u_1,\eps}}$, and any nonnegative measure function $g$ of $\eta$, we have
	\begin{align}
		&\int f(\phi|_{\C_{\eta,u_1,\eps}})g(\eta) \eps^{\frac12(\alpha^2 - \gamma^2)} e^{(\alpha - \gamma)X_\eps(i)} \LF_\C^{ (\frac\gamma2, u_j)_2, (\gamma, u_3)}(d\phi)\sm^{u_1,u_2}(d\eta)\nonumber\\
		&= \int_0^\infty \left(\int f(\phi|_{\C_{\eta,u_1,\eps}})g(\eta) \eps^{\frac12(\alpha^2 - \gamma^2)} e^{(\alpha - \gamma)X_\eps(i)}  \ell\LF_\bbH^{(\gamma,i)} (\ell) \times\QD_2(\ell)\right)d\ell. \label{eq:G2-gamma-eps}
	\end{align} 
	Recall that \(\sm^{u_1,u_2,u_3}_{\alpha} =   \left(\frac12\CR(\eta, u_3)\right)^{-\frac{\alpha^2}2 + Q \alpha  - 2}  \sm\). By Lemma~\ref{lem:reweight}, the left side of~\eqref{eq:G2-gamma-eps} equals 
	\begin{equation}\label{eq:G2-compare1}
		\int f g  \left(\frac12\CR(\eta,u_3))\right)^{-\frac{\alpha^2}2 + Q \alpha  - 2} \LF_\C^{ (\frac\gamma2, u_j)_2, (\gamma, u_3)}(d\phi)\sm^{u_1,u_2}(d\eta)\nonumber
		=  \int f g \LF_\C^{(\frac\gamma2, u_j)_2, (\alpha, u_3)} (d\phi)\sm^{u_1,u_2,u_3}_{\alpha}\,(d\eta). 
	\end{equation}
	Here we write $f=f(\phi|_{\C_{\eta,u_1,\eps}})$ and $g=g(\eta)$ to ease the notation. 
	
	By Lemma~\ref{lem-disk-reweight}, the right side of~\eqref{eq:G2-gamma-eps} equals 
\begin{equation}\label{eq:G2-compare2}
	\int_0^\infty \left(\int fg \ell\LF_\bbH^{(\alpha,i)} (\ell) \times \QD_2(\ell)\right)d\ell.
\end{equation}
 
Since $\eps,f,g$  are arbitrary, comparing~\eqref{eq:G2-compare1} and~\eqref{eq:G2-compare2}  we obtain that embedding a sample from $\int_0^\infty 2\ell \mathrm{Weld}(\cM_{1,0}^\disk(\alpha; \ell), \QD_{0,2}(\ell)) \, d\ell$ in $(\C, u_1, u_2, u_3)$ gives a (field, curve) pair with law 
\[\frac{2\pi}\gamma(Q-\gamma)^2 K C^2 G_2^\lp(u_1,u_2) \LF_\C^{ (\frac\gamma2, u_j)_2, (\alpha, u_3)}(d\phi)\sm^{u_1,u_2,u_3}_\alpha(d\eta).\]
The proposition then follows.
\end{proof}

\section{SLE loop Green function: the non-simple case}\label{sec-loop-green-nonsimple}

In this section, we prove the non-simple case of Theorem~\ref{thm:mag}. The main new ingredient is the notion of generalized quantum disk, which we  review in Section \ref{section:GQS}. This notion is necessary whenever we deal with non-simple CLE. Other types of generalized surfaces will be used in the proof of Theorem~\ref{thm:nesting}.
In Section~\ref{section:ns-calculation} we obtain $G_2^\lp$ and $G_3^\lp$ using arguments parallel to those in Section~\ref{sec-loop-green-simple}.  Throughout this section we set $\kappa'\in(4,8)$ and $\gamma=4/\sqrt{\kappa'}\in(\sqrt2,2)$.

\subsection{Generalized quantum disk}\label{section:GQS}

In this section, we review the notion of generalized quantum surface and basic properties of generalized quantum disks; see \cite{wedges,msw-non-simple,holden2022liouville,nonsimple-welding} for further background. 

Consider a pair $(D,h)$ where $D$ is a closed set (not necessarily homeomorphic to a closed disk) such that each component of its interior together with its prime-end boundary is homeomorphic to the closed disk, and $h$ is only defined as a distribution on each of these components. We define the equivalence relation $\sim_\gamma$ such that $(D,h)\sim_\gamma(D',h')$ if there is a homeomorphism $g:D\to D'$ that is conformal on each component of the interior of $D$, and $h'=h\circ g^{-1}+Q\log|(g^{-1})'|$. A {\it generalized quantum surface} $S$ is an equivalence class of pairs $(D,h)$ under $\sim_\gamma$, and we call a choice of representative $(D, h)$ an embedding of a generalized quantum surface. Generalized quantum surfaces decorated by curves and marked points can be defined analogously.

We first introduce the notion of {\it loop-tree}. Fix $\gamma\in (\sqrt{2},2)$, and let $(X_s)_{s\geq 0}$ be a  c\`adl\`ag function with only positive jumps. Then on the graph $\{(s,X_s)\}$ we define $(s,X_s)\sim (t,X_t)$ when $X_s=X_t$ and the horizontal segment connecting $(s,X_s)$ and $(t,X_t)$ is below the graph of $X|_{[s,t]}$. We also set $(t,X_t)\sim(t,X_{t-})$ if $t$ is a jump time of $X$. We call the quotient $\mathcal{T}$ of the graph $\{(s,X_s)\}$ under the above equivalence relation $\gamma$ to be the \emph{loop-tree} corresponding to $X$.

The \emph{forested line} is defined by independently gluing quantum disks on each loop of the loop-tree.
\begin{definition}\label{def:forested-line}
    For $\gamma\in (\sqrt{2},2)$, let $(X_s)_{s\geq 0}$ be a stable L\'evy process of index $\frac{4}{\gamma^2}$ with non-negative jumps starting from $0$. For $t>0$, let $Y_t=\inf\{s>0:X_s\leq -t\}$. Fix the root $o=(0,0)$, and define the forested line of length $t$ as follows: on each loop of length $L$ of the loop-tree corresponding to $X$, independently sample a quantum disk from $\QD(L)^\#$ and then topologically identify its boundary with the loop (with a uniformly-chosen rotation). For two points on a forested line, we define the {\rm generalized quantum length} between them to be the length of the corresponding time interval for $(X_s)$.
\end{definition}

We can also {\it foresting} a topological loop with length $t$ as follows. Suppose $\mathcal{L}_t$ is a forested line of length $t$, and $p$ is chosen uniformly by the probability measure $\frac{1}{L}1_{[0,L]}dx$. Then glue $\mathcal{L}_t$ on the loop, with two endpoints identified with $p$, and forget $p$. The forested disks can then be obtained by foresting the boundaries of quantum disks.

\begin{definition}\label{def:forested-disk}
For $\alpha > \frac\gamma2$, the forested disk with one bulk insertion $\cM_{1,0}^\mathrm{f.d.}(\alpha)$ is defined as the generalized quantum surface obtained by foresting the boundary of a sample from $\cM_{1,0}^\mathrm{disk}(\alpha)$. In particular, for the case $\alpha=\gamma$, we define the generalized quantum disk with one bulk marked point as $\GQD_{1,0}:=\frac{\gamma}{2\pi(Q-\gamma)^2}\cM_{1,0}^\mathrm{f.d.}(\gamma)$.
\end{definition}

We define $\cM_{1,0}^\mathrm{f.d.}(\alpha;\ell)$ (resp. $\GQD_{1,0}(\ell)$) to be the disintegration of the generalized boundary lengths such that $\cM_{1,0}^\mathrm{f.d.}(\alpha)=\int_0^\infty \cM_{1,0}^\mathrm{f.d.}(\alpha;\ell)d\ell$ (resp. $\GQD_{1,0}=\int_0^\infty\GQD_{1,0}(\ell)d\ell$).
We also define the law $\GQD$ by weighting the inverse of its quantum area on the law $\GQD_{1,0}$ and forgetting the bulk insertion point, and $\GQD_{0,1}$ by weighting the boundary length on the law $\GQD$ and sampling a boundary point according to the probability measure proportional to the generalized quantum length; similarly for $\GQD(\ell)$ and $\GQD_{0,1}(\ell)$.
\begin{remark}
The definition of $\GQD$ often has different expressions in different literature, e.g. \cite{msw-non-simple, holden2022liouville, nonsimple-welding}. As explained in \cite[Remark 3.12]{nonsimple-welding} and \cite[Proposition 2.34]{ACSW24a}, these definitions are all equivalent to our definition of $\GQD$ up to a multiplicative constant. The choice of multiplicative constant will not affect our results, since it will cancel out in our computations.
\end{remark}

The following lemmas give the law of the generalized boundary length under $\GQD$, and  the law of the quantum area of a generalized quantum disk 
with given generalized boundary length.
\begin{lemma}[{\cite[Lemma 2.28]{ACSW24a}}]\label{lem:gqd-bdy-length-law}
$|\GQD(\ell)|=R_\gamma'\cdot \ell^{-\frac{\gamma^2}{4}-2}d\ell$ for some constant $R_\gamma'\in(0,\infty)$.
\end{lemma}
\begin{prop}[{\cite[Theorem 1.8]{holden2022liouville}}]\label{prop:ns-area-law}
Recall $K$ and $\ol K$ from \eqref{eq-Kv}. Denote the quantum area to be $A$ and let $M'=2\left(\frac{\mu}{4\sin\frac{\pi\gamma^2}{4}}\right)^{\frac{\kappa'}{8}}$ for $\mu>0$.
Under $\GQD(\ell)^\#$ we have
\[ \GQD(\ell)^\#[e^{-\mu A}]=\ol K_{4/\kappa'}(M'\ell)\quad \textrm{and} \quad \GQD(\ell)^\#[Ae^{-\mu A}]=\frac{\kappa'}{2\mu\Gamma(\frac{4}{\kappa'})}\left(\frac{M'l}{2}\right)^{4/\kappa'+1}K_{1-\frac{4}{\kappa'}}(M'\ell).\]
In particular, we have
\[\GQD(\ell)^\#[A]=\frac{\kappa'}{16\sin\frac{\pi\gamma^2}{4}}\ell^{8/\kappa'}\frac{\Gamma(1-\frac{4}{\kappa'})}{\Gamma(\frac{4}{\kappa'})}\quad \textrm{and} \quad  |\GQD_{1,0}(\ell)|=R_\gamma'\frac{\kappa'}{16\sin\frac{\pi\gamma^2}{4}}\frac{\Gamma(1-\frac{4}{\kappa'})}{\Gamma(\frac{4}{\kappa'})}\ell^{\frac{4}{\kappa'}-2},\]
and $\GQD_{1,0}(\ell)^\#[e^{-\mu A}]=\ol K_{1-4/\kappa'}(M'\ell)$.
\end{prop}

 For two independent generalized quantum surfaces $\mathcal{D}_1,\mathcal{D}_2$ which have distinguished boundary arcs of the same generalized boundary length, by \cite{wedges}, there exists a conformal welding measurable with respect to $(\mathcal{D}_1,\mathcal{D}_2)$ where the welding interface is locally absolutely continuous with respect to $\SLE_{\kappa'}$. In this paper, we always consider this notion of conformal welding for generalized quantum surfaces, and the notion of uniform conformal welding  $\Weld(\cdot,\cdot)$ can be naturally extended to this setting. We now recall the counterpart of Proposition \ref{prop-tabula-rasa}, which says the (uniform) conformal welding of two independent generalized quantum disks, under the uniform embedding, gives the whole-plane Liouville field decorated by a $\SLE_{\kappa'}$ loop (see Section~\ref{subsec:weld-disk} for its definition). 
\begin{prop}[{\cite[Corollary 9.8]{ACSW24a}}]\label{weldgqd}
For $\kappa'\in(4,8)$, we have
\begin{equation*}
\mathbf m_{\wh \C}\ltimes\left(\int_0^\infty \ell\Weld(\GQD(\ell),\GQD(\ell))d\ell\right)=K'\LF_\C\times\SLE_{\kappa'}^{\mathrm{loop}}
\end{equation*}
where $K'$ is expressed in terms of $R_\gamma'$ from Lemma~\ref{lem:gqd-bdy-length-law}  by
\begin{equation}\label{wc}
\frac{K'}{{R_\gamma'}^2}=-\left(\frac{2}{\gamma}\right)^5\pi^{-\frac{8}{\gamma^2}+2}2^{-\frac{8}{\gamma^2}}\frac{\Gamma(\frac{\kappa'}{4}-1)}{\Gamma(2-\frac{\kappa'}{4})}\Gamma(1-\frac{\gamma^2}{4})^{\frac{8}{\gamma^2}+2}\tan\pi\left(\frac{4}{\kappa'}-1\right).
\end{equation}
\end{prop}

\subsection{Proof of Theorem~\ref{thm:mag} in non-simple case}\label{section:ns-calculation}
In this section, we compute $G_2^\lp$ and $G_3^\lp$ for non-simple CLE, and hence prove Theorem \ref{thm:mag} for the non-simple case. The arguments are straightforward modifications of those in Section~\ref{sec-loop-green-simple} but with  quantum disks replaced by generalized quantum disks (and quantum length replaced by generalized boundary length), so we will be brief overall and highlight the difference.
We first solve for $G_3^\lp$, following the argument in Section \ref{section:G3}. The following lemma shows that the $\frac{2}{\gamma}$-GMC measure over the $(1+\frac{\kappa'}{8})$-dimensional Minkowski content of a $\SLE_{\kappa'}$-type curve equals its generalized quantum length up to a multiplicative constant, which is the non-simple counterpart of Lemma \ref{lem-benoist}.

\begin{lemma}\label{lem-benoist-ns}
Suppose $\kappa' \in (4,8)$. There exists a constant $C' = C'(\gamma)$ such that the following holds. Sample  $(\phi, \eta) \sim \LF_\C \times \SLE_{\kappa'}^\lp$, then the generalized quantum length measure of $\eta$ with respect to $\phi$ equals $C' \cdot  \mathrm{GMC}_{\mathrm{Cont}_\eta, \frac2\gamma, \phi}$. 
\end{lemma}

\begin{proof}
This is stated in Section 5 of \cite{benoist-lqg-chaos}. Its proof is same as the $\kappa \in (0,4)$ case, via the ergodicity argument used in \cite{benoist-lqg-chaos}.
\end{proof}

Now we state the counterpart of Proposition \ref{prop-G3-identity}. Let $\GQD_{0,n}(\ell)$ be the law of a sample from $\ell^n\GQD(\ell)$ with $n$ points independently sampled from the probability measure proportionate to the generalized boundary length measure. Let $\mathrm{Weld}(\GQD(\ell), \GQD_{0,n}(\ell))$ denote the law of the uniform conformal welding of a sample from $\GQD(\ell) \times \GQD_{0,n}(\ell)$.

\begin{proposition}\label{prop-G3-identity-ns}
	With $K'$ and $C'$ the constants of Proposition~\ref{weldgqd} and Lemma~\ref{lem-benoist-ns}, we have
\[
	\mathbf m_{\wh \C} \ltimes \left( \int_0^\infty \ell \mathrm{Weld}(\GQD(\ell), \GQD_{0,3}(\ell)) \, d\ell \right) 
	= K' C'^3 \LF_\C^{(\frac2\gamma, z_j)_3}(d\phi)  \, \prod_{j=1}^3 \Cont_\eta(d z_j)\,\SLE_{\kappa'}^\lp(d\eta).
\]
\end{proposition}
\begin{proof}
    The proof is identical to that of Lemma~\ref{prop-G3-identity-ns}, with Proposition~\ref{prop-tabula-rasa} replaced by Proposition~\ref{weldgqd} and Lemma~\ref{lem-benoist} replaced by Lemma~\ref{lem-benoist-ns}. 
\end{proof}

Following the argument of Proposition \ref{proposition-solve-G3}, we compute $G_3^\lp$ by disintegrating over $z_1,z_2,z_3$ and taking the expectation of $Ae^{-A}$ where $A$ is the quantum area. 
\begin{proposition}\label{prop:G3-solve}
     Let $R_\gamma'$ be  the constant in Lemma \ref{lem:gqd-bdy-length-law} and $M'=2\left(\frac{1}{4\sin\frac{\pi\gamma^2}{4}}\right)^{\frac{\kappa'}{8}}$. Then  $$G_3^{\rm loop}\left(0,1, e^{i \pi / 3}\right)=\frac{R_\gamma'^2}{K'C'^3}M'^{\frac{\gamma^2}{2}-1}2^{-1-\frac{\gamma^2}{2}}\frac{\kappa'}{\Gamma(\frac{4}{\kappa'})^2}\frac{\frac{\pi^2}{4}(\frac{\gamma^2}{2}-1)}{\sin(\frac{\pi}{2}(\frac{\gamma^2}{2}-1))}\frac{1}{(\frac{1}{\gamma^2}-\frac{1}{2})C_\gamma^{\mathrm {DOZZ}}\left(\frac{2}{\gamma}, \frac{2}{\gamma}, \frac{2}{\gamma}\right)}.$$
\end{proposition}
\begin{proof}
Since $|\GQD_{0,3}(\ell)|=\ell^3|\GQD(\ell)|$, by Proposition \ref{prop:ns-area-law} and using Lemma \ref{lem-int-KK}, we can compute
\begin{equation*}
\left(\int_0^\infty \ell^4 \mathrm{Weld}(\GQD(\ell), \GQD(\ell))d\ell\right)[Ae^{-A}]=R_\gamma'^2M'^{\frac{\gamma^2}{2}-1}2^{-1-\frac{\gamma^2}{2}}\frac{\kappa'}{\Gamma(\frac{4}{\kappa'})^2}\frac{\frac{\pi}{2}(\frac{\gamma^2}{2}-1)}{\sin(\frac{\pi}{2}(\frac{\gamma^2}{2}-1))}\frac{\pi}{2}.
\end{equation*}
Let $(u_1, u_2, u_3) = (0,1,e^{i\pi/3})$. Disintegrating over $z_1,z_2,z_3$ as in the proof of Proposition \ref{proposition-solve-G3} gives
\begin{equation*}
\left(\int_0^\infty \ell^4 \mathrm{Weld}(\GQD(\ell), \GQD(\ell))d\ell\right)[Ae^{-A}]=K' C'^3 \LF_\C^{(\frac2\gamma, u_j)_3}[Ae^{-A}]G_3^{\lp}(u_1,u_2,u_3).
\end{equation*}
By Lemma \ref{lem-sph-area-law} we know $\LF_\C^{(\frac2\gamma, u_j)_3}[Ae^{-A}]=(\frac{1}{\gamma^2}-\frac{1}{2})C_\gamma^{\text {DOZZ }}\left(\frac{2}{\gamma}, \frac{2}{\gamma}, \frac{2}{\gamma}\right)$. The result follows.
\end{proof}

Next, we solve for $G_2^\lp$. As before, we start with a conformal welding result.
\begin{proposition}\label{prop: weld-gqd}
    If we embed a sample from $\int_{0}^\infty 2\ell\Weld(\GQD_1(\ell),\GQD_{0,2}(\ell))d\ell$ in $(\C,u_1,u_2,u_3)$, then its law is given by
    $$ K'C'^2 G_2^{\rm loop}(u_1,u_2)\LF_{\C}^{(\frac{2}{\gamma},u_j)_2,(\gamma,u_3)}(d\phi)\sm^{u_1,u_2}(d\eta)  $$
    where $\sm^{u_1,u_2}$ is a probability measure on the space of loops passing through $u_1$ and $u_2$.
\end{proposition}
\begin{proof}
    The proof is identical to that of Proposition~\ref{prop-G2-identity}, with Proposition~\ref{prop-tabula-rasa} replaced by Proposition~\ref{weldgqd} and Lemma~\ref{lem-benoist} replaced by Lemma~\ref{lem-benoist-ns}. 
\end{proof}

Fortunately, the insertions $(\frac2\gamma, \frac2\gamma, \gamma)$ satisfy the extended Seiberg bounds~\eqref{eq-ext-Seiberg}, so to compute  $G_2^\lp$ we will not need to employ a reweighting argument as in Section~\ref{section:G2}. 

\begin{proposition}\label{prop:G2-solve}
\begin{equation}\label{eq:G2-solve}
        G_2^{\rm loop }\left(0,1 \right)=\frac{{R_\gamma'}^2}{K'C'^2}\frac{4\pi}{\gamma^3\sin(\frac{\pi\gamma^2}{4})^2\Gamma(\frac{\gamma^2}{4})^2}.
    \end{equation}
\end{proposition}
\begin{proof}
The result also follows by choosing the observable $Ae^{-A}$ in Proposition~\ref{prop: weld-gqd}, where $A$ is the quantum area. Since $|\GQD_{0,2}(\ell)|=\ell^2|\GQD(\ell)|$, by Proposition \ref{prop:ns-area-law} and Lemma \ref{lem-int-KK} we have
    \begin{align*}
        \Big(\int_{0}^\infty 2\ell^3\Weld(\GQD_{1,0}(\ell),\GQD(\ell))d\ell [Ae^{-A}]\Big)= \frac{\kappa'^2}{64\sin\frac{\pi\gamma^2}{4}}\frac{\Gamma(1-\frac{4}{\kappa'})}{\Gamma(\frac{4}{\kappa'})}{R_\gamma'}^2.
    \end{align*}
We also need to evaluate $\LF_{\C}^{\left(\frac{2}{\gamma}, u_1\right),\left(\frac{2}{\gamma}, u_2\right),\left(\gamma, u_3\right)}\left[A e^{-A}\right]$. Note that for $\alpha\in (0,\frac{4}{\gamma})\backslash\{\gamma\}$,
\begin{equation*}
\LF_{\C}^{\left(\frac{2}{\gamma}, u_1\right),\left(\frac{2}{\gamma}, u_2\right),\left(\alpha, u_3\right)}\left[A e^{-A}\right]=\frac{\alpha-\gamma}{2\gamma}C_\gamma^{\text {DOZZ }}\left(\frac{2}{\gamma}, \frac{2}{\gamma}, \alpha\right)
\end{equation*}
Then by continuity, we have
\begin{equation*}
\LF_{\C}^{\left(\frac{2}{\gamma}, u_1\right),\left(\frac{2}{\gamma}, u_2\right),\left(\gamma, u_3\right)}\left[A e^{-A}\right]=\lim _{\alpha \uparrow \gamma}\frac{\alpha-\gamma}{2\gamma}C_\gamma^{\rm DOZZ }\left(\frac{2}{\gamma}, \frac{2}{\gamma}, \alpha\right)=\frac{1}{\gamma},
\end{equation*}
where we use the explicit form of $C_\gamma^{\text {DOZZ }}$ \eqref{eq:DOZZ} and the fact $\Upsilon_{\frac{\gamma}{2}}(z)=\Upsilon_{\frac{\gamma}{2}}(Q-z)$ in the second step. Combining the above two expressions the result follows. 
\end{proof}

\begin{proof}[Proof of Theorem~\ref{thm:mag}, the non-simple case]
Combing Proposition \ref{prop:G3-solve} and \ref{prop:G2-solve} with the expression of $\frac{K'}{R_\gamma'^2}$ in Proposition \ref{weldgqd} yields the result as in the simple case.
\end{proof}

\begin{remark}\label{cor:mix3point}
    As an intermediate step in the computation of $G_2^{\rm loop}$, Proposition~\ref{lem-G2-alpha} and its generalization in the non-simple cases imply the following formula~\eqref{eq:mix}.  For $\kappa\in (\frac{8}{3},8)$, $b=2/\sqrt{\kappa}$ and $\alpha=Q-2P\in (Q-\frac{\sqrt{\kappa}}{4},Q)$ 
\begin{equation}\label{eq:mix}
\frac{2^{2\Delta_{\alpha}-2}|\sm^{0,1,e^{\frac{\pi i}{3}}}_\alpha|\E[\CR(\eta_0,0)^{2\Delta_{\alpha}-2}]}{\sqrt{C^{\rm CLE}_\kappa(2\Delta_\alpha-2,2\Delta_\alpha-2,0)}}=\omega^{(b)}_{(1,0),(1,0),(0,2\beta P)}
\end{equation}
where $\eta_0$ is the outmost $\CLE_\kappa$ loop in unit disk surrounding the origin and $C^{\rm CLE}_\kappa$ is the joint moment of conformal radius, see Theorem~\ref{thm:conformal radius} for its definition. This formula can be interpreted as the correlation function of two magnetic fields and one electric field mentioned in Section~\ref{section:loop-connectivity}.
\end{remark}

\section{Green function for the CLE cluster}\label{sec:carpet green}

In this section we prove Theorem \ref{thm: connectivity} using the coupling of CLE and LQG. The high level strategy is similar to the proof of Theorem~\ref{thm:mag} in Sections \ref{sec-loop-green-simple} and \ref{sec-loop-green-nonsimple}. The key step is to find a  proper setting where the Green functions for the CLE cluster naturally arise and are solvable. This is done in Proposition~\ref{prop:YA} in Section~\ref{sec:reduction}, which expresses $G_3^{\rm cluster}$ and $ G_2^{\rm cluster}$ in terms of the DOZZ formula and the joint moments of the total quantum area of the quantum disk and the total quantum measure of the CLE cluster. 
We provide background on the CLE/LQG coupling in Section \ref{couple} and recall the relation between the natural measure on the CLE cluster and its quantum analog in Section~\ref{sec:natural-measures}.  In Section~\ref{sec:solving-connectivity}, we finish the proof of Theorem \ref{thm: connectivity} based on Proposition~\ref{prop:YA}, assuming the formulae for the aforementioned joint moments
whose proofs are postponed to Section~\ref{levy}.

\subsection{The independent coupling of CLE and LQG}\label{couple}\label{subsec:MSW}
\newcommand{\outleng}{a}

We begin with a recent result on CLE$_\kappa$ coupled with $\gamma$-LQG disks where $\kappa\in (\frac83,4)$ and $\gamma=\sqrt{\kappa}$. For $a>0$,
suppose $(D,h)$ is an embedding of a sample from $\QD(a)^\#$. Let $\Gamma$ be  a $\CLE_\kappa$ on $D$ which is  independent of $h$.
Then we call  the decorated quantum surface	$(D,h,\Gamma)/{\sim_\gamma}$ a $\CLE_\kappa$-decorated quantum disk and 
denote its law  by $\QD(a)^\#\otimes \CLE_\kappa$.   By the conformal invariance of $\CLE_\kappa$,  the measure $\QD(a)^\#\otimes \CLE_\kappa$
does not depend on the choice of the embedding of  $(D,h)/{\sim_\gamma}$.

Let $(D,h,\Gamma)$ be an embedding of a sample from $\QD(\outleng)^{\#}\otimes \CLE_\kappa$ where $D$ is a bounded domain. We say a loop $\eta$ \emph{surrounds} a point $z$ if $z \not \in \eta$ and $\eta$ has a nonzero winding number with respect to $z$, and we call a loop $\eta \in \Gamma$  \emph{outermost} if no point of $\eta$ is surrounded by any loop in $\Gamma$.
Let  $(\ell_i)_{i\ge 1}$ be the collection of the quantum lengths of the outermost loops of $\Gamma$ listed in decreasing order. 
The crucial inputs to our proofs is the  law of  $(\ell_i)_{i\ge 1}$: 

\begin{proposition}[{\cite{msw-cle-lqg,bbck-growth-frag,ccm-perimeter-cascade}}]\label{prop-ccm}
For $\beta\in (1,2)$, let $(\zeta_t)_{t\geq 0}$ be a $\lexp$-stable L\'evy process  
	whose L\'evy measure is $1_{x>0} x^{-\beta-1} \, dx$, so $\zeta$ has no downward jumps. We denote its law by $\P^\beta$.
	Let $\tau_{-\outleng}=\inf\{ t: \zeta_t=-\outleng  \}$. Let  $ (x_i)_{i \geq 1}$ be the sequence of the sizes of the upward jumps of $\zeta$ on $[0,\tau_{-\outleng}]$ sorted in decreasing order.  Then for $\kappa\in (\frac{8}{3},4)$ and  $\lexp= \frac4{\kappa} + \frac12\in (\frac32,2)$, the law of $(\ell_i)_{i \geq 1}$ defined right above equals that of $ (x_i)_{i \geq 1}$ under the weighted probability measure $\frac{\tau_{-\outleng}^{-1}\P^\beta}{\E[\tau_{-\outleng}^{-1}]}$.
\end{proposition}
\begin{proof}
    This was first stated at the end of \cite[Section 1]{ccm-perimeter-cascade} as a consequence of \cite{msw-cle-lqg,bbck-growth-frag,ccm-perimeter-cascade}. See \cite[Proposition 4.1]{ACSW24a} for further proof details.
\end{proof}

Now we review the independent coupling of CLE$_{\kappa'}$ and $\gamma$-generalized quantum disks, where $\kappa'\in (4,8)$ and $\gamma=4/\sqrt{\kappa'}$. One can similarly define $\GQD(a)^\#\otimes\CLE_{\kappa'}$ by sampling independent $\CLE_{\kappa'}$ in each connected component of a sample from $\GQD(a)^\#$.
Let $(D,h,\Gamma)$ be an embedding of a sample from $\GQD(\outleng)^{\#}\otimes \CLE_{\kappa'}$ where $D$ is bounded. 
Define as above the notion of outermost loop,
and let $(\ell_i)_{i\ge 1}$ be the collection of the generalized quantum lengths of the outermost loops of $\Gamma$ listed in decreasing order. We have the following analogue  of Propositions~\ref{prop-ccm}.

\begin{proposition}[{\cite{msw-non-simple,bbck-growth-frag,ccm-perimeter-cascade}}]\label{prop-ccm-ns}
 For $\kappa'\in (4,8)$ and $\beta' := \frac4{\kappa'} + \frac12\in (1,\frac32)$, the law of $(\ell_i)_{i \geq 1}$ defined right above agrees with that of $ (x_i)_{i \geq 1}$ under the weighted probability measure $\frac{\tau_{-\outleng}^{-1}\P^{\beta'}}{\E[\tau_{-\outleng}^{-1}]}$, where $\P^{\beta'}$, $\tau_{-\outleng}$, and $ (x_i)_{i \geq 1}$ are as defined in Proposition~\ref{prop-ccm}. 
\end{proposition}
\begin{proof}
    Similarly to Proposition~\ref{prop-ccm}, we refer to \cite[Proposition 4.3]{ACSW24a} for proof details.
\end{proof}

\subsection{Natural measures on the CLE cluster}\label{sec:natural-measures}

For CLE on the disk, the outermost cluster is the set of points not surrounded by any loop. It is also called the CLE carpet (for $\kappa \in (8/3, 4]$) or gasket (for $\kappa \in (4,8)$).
We first recall the construction from~\cite{msw-cle-lqg} and \cite{msw-non-simple} of the \textit{quantum natural measures} on the outermost cluster of the CLE on a disk.
Fix $\kappa\in(\frac{8}{3},4)\cup(4,8)$ and let $\beta=\frac{4}{\kappa}+\frac{1}{2}$.   Let $(D,h,\Gamma)$ be an embedding of $\QD(1)^\#\otimes\CLE_\kappa$. For $\varepsilon>0$ and an open subset $U\subset D$, let $N_{\varepsilon,h,\Gamma}(U)$ be the number of outermost loops of $\Gamma$ contained in $U$ with quantum length for $\kappa\in(\frac{8}{3},4)$ (resp. generalized quantum length for $\kappa\in(4,8)$) no less than $\varepsilon$. Then by Theorem 1.3 in \cite{msw-cle-lqg} (resp. Proposition 5.3 in \cite{msw-non-simple}),
there is a measure $M_{h,\Gamma}$ (depending on $h$) supported on the outermost cluster of $\Gamma$, such that for any given open subset $U\subset D$, $M_{h,\Gamma}(U)=\lim_{\varepsilon\to0}\varepsilon^\beta N_{\varepsilon,h,\Gamma}(U)$. We call $M_{h,\Gamma}$ the {\it quantum natural measure} on the outermost cluster  of $\Gamma$.

The Miller-Schoug measure~\cite{miller2023existence}  on the outermost cluster of $\Gamma$ is defined via taking the GFF expectation over a modification of $M_{h,\Gamma}$.  Let $h_0$ be the Dirichlet GFF on $D$, and $M_{h_0,\Gamma}$ be the quantum natural measure on the outermost cluster with respect to  $h_0$; this is well-defined by local absolute continuity between the laws of $h_0$ and $h$. Let $r_D(z)$ be the conformal radius of $D$ seen from $z$, then
\begin{equation}\label{eq:dfms}
\mu_\Gamma(dz)=r_D^{-\frac{1}{2}+\frac{2}{\kappa}+\frac{\kappa}{32}}\E\left[M_{h_0,\Gamma}(dz)\big|\Gamma\right],\quad \kappa\in\left(\frac{8}{3},4\right)\cup(4,8)
\end{equation}
is called the {\it Miller-Schoug measure} on the outermost cluster of $\Gamma$. According to \cite{miller2023existence}, this measure is characterized by the axioms of conformal covariance, domain Markov property and finiteness (also see \cite{cai2022natural} for slightly different axioms). The Miller-Schoug measure can be extended to the case $\kappa=4$ by a limiting procedure; see Appendix~\ref{kappato4}. In this section we exclude this case. Although the Miller-Schoug measure should agree with the Minkowski content measure and the quantum natural measure should be a GMC over it, neither of these statements has been proved. However, the following lemma can serve as an effective alternative to Lemma~\ref{lem-add-insertion}, which
relates $M_{h,\Gamma}$ and $\mu_\Gamma$ by the Girsanov theorem.
\begin{lemma}[{\cite[Section 3, (1)]{cai2022natural}}]\label{resampling}
Fix $\kappa\in(\frac{8}{3},4)\cup(4,8)$ and let $\alpha=Q-\frac{\sqrt{\kappa}}{4}$. Suppose $h$ is locally mutually absolutely continuous with respect to the GFF (possibly with finite insertion points); denote its law by ${\rm Law}(dh)$. Then for $M_{h,\Gamma}$ and $\mu_\Gamma$ defined above, we have
\begin{equation*}
M_{h,\Gamma}(dz){\rm Law}(dh)\CLE_\kappa(d\Gamma)=e^{\alpha h(z)}{\rm Law}(dh)\mu_\Gamma(dz)\CLE_\kappa(d\Gamma)
\end{equation*}
where $e^{\alpha h(z)}{\rm Law}(dh) := \lim_{\varepsilon\to0}\varepsilon^{\frac{\alpha^2}{2}}e^{\alpha h_\varepsilon(z)}{\rm Law}(dh)$ in the topology of vague convergence of measures. In particular, $e^{\alpha h(z)}\LF_\C^{(\alpha_i,z_i)_i}(dh)=\LF_\C^{(\alpha_i,z_i)_i,(\alpha,z)}$ (see the paragraph below Definition \ref{def-RV-sph}).
\end{lemma}

The definition of the Miller-Schoug measure straightforwardly extends to the full-plane case.  Let $\Gamma$ be the full-plane $\CLE_\kappa$. For each loop $\eta$,
let $K(\eta)$ be the cluster surrounded by $\eta$, i.e., the set of points surrounded by $\eta$ which are not surrounded by any loop surrounded by $\eta$.
Denote the Miller-Schoug measure on $K(\eta)$ by $\mu(\cdot;K(\eta))$. The Green function  $G^{\rm cluster}_n$ from~\eqref{eq:greendef} can be written as 
\begin{equation}\label{eq:G-carpet}
G^{\rm cluster}_n(z_1,z_2,...,z_n)\prod_{k=1}^ndz_k=\int \prod_{k=1}^n\mu(dz_k;K(\eta)){\rm Count}_\Gamma(d\eta){\rm CLE}_\kappa^\C(d\Gamma)
\end{equation}
where ${\rm Count}_\Gamma(d\eta)$ is the counting measure on  $\Gamma$, and the integral is taken over $\eta$ and $\Gamma$. We now explain the conformal covariance of the Green function from Lemma~\ref{covariance_cluster}. 
\begin{proof}[Proof of Lemma~\ref{covariance_cluster}]
Let $f$ be a conformal automorphism on $\wh\C$. View both sides of \eqref{eq:G-carpet} as measures on $\C^n$, and consider the pushforwards of them under $f^{-1}$. The image measure of the left side is $G^{\rm cluster}_n(f(z_1),...,f(z_n))\prod_{k=1}^n |f'(z_k)|^{-2} dz_k$. By the conformal invariance of full-plane CLE and the conformal covariance of $\mu$ (see e.g. \cite[Definition 1.1,(i)]{miller2023existence}), the image of the right side is $\int \prod_{k=1}^n |f'(z_k)|^{-d}\mu(dz_k;K(\eta)){\rm Count}_\Gamma(d\eta){\rm CLE}_\kappa^\C(d\Gamma)$, where $d=2-\frac{(3\kappa-8)(8-\kappa)}{32\kappa}$ is the Hausdorff dimension of $\CLE_\kappa$ cluster. Then $G^{\rm cluster}_n(f(z_1),...,f(z_n))=\prod_{i=1}^n |f'(z_i)|^{2-d}G^{\rm cluster}_n(z_1,...,z_n)$ for any $f\in\conf(\wh\C)$, and the lemma follows.
\end{proof}

In our definition of $K(\eta)$, the point $\infty$ plays a special role, since a loop surrounds a point if and only if it separates the point from $\infty$. In our subsequent argument, we will replace $\infty$ by a generic point $z \in \C$. To that end, let $\wh K(\eta) := K(\eta')$ where $\eta'$ is the innermost CLE loop surrounding $\eta$, so $K(\eta)$ and $\wh K(\eta)$ are the two clusters adjacent to $\eta$. Let $K(\eta; z) = K(\eta)$ if $\eta$ does not surround $z$, and let $K(\eta ; z) = \wh K(\eta)$ otherwise;
 see Figure~\ref{fig:component} (left). In words, $K(\eta; z)$ is the cluster adjacent to $\eta$ such that $\eta$ separates $K(\eta; z)$ and $z$. With this definition, we have the following variant of~\eqref{eq:G-carpet}.

\begin{figure}[htb]
\centering
\includegraphics[width=0.4\linewidth]{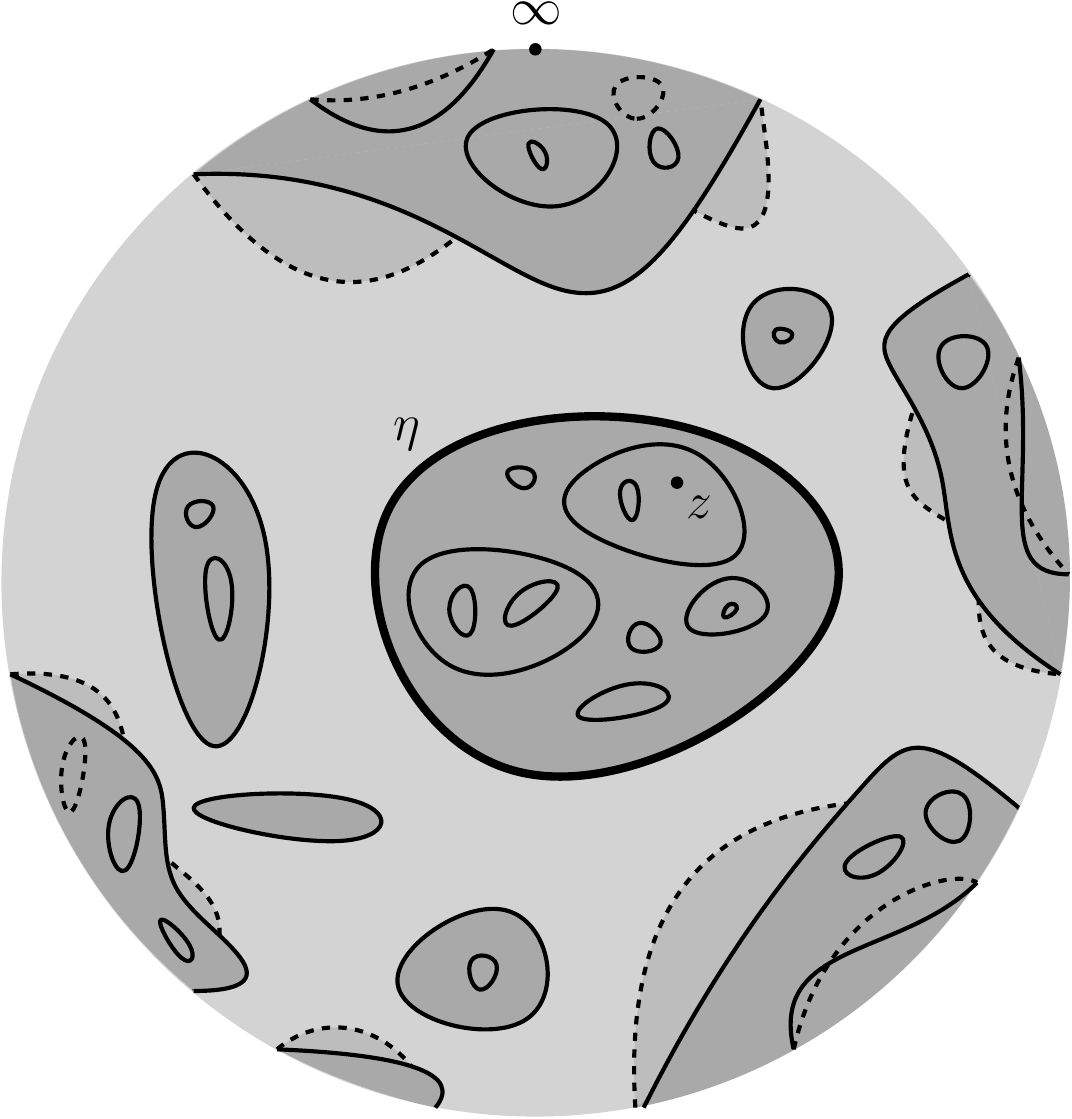}
\hspace{.6in}
\includegraphics[width=0.3\linewidth]{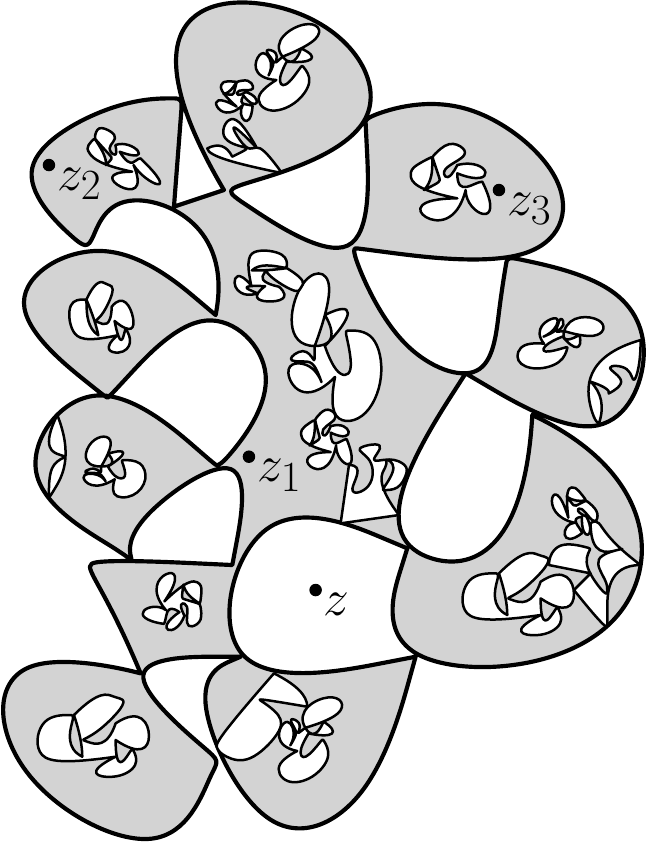}
\caption{\textbf{Left}: Illustration of the case where $\kappa\in(\frac{8}{3},4)$, and $\eta$ surrounds $z$. The set  $K(\eta,z)=\wh K(\eta)$ is colored in light grey. \textbf{Right}: Three points $z_1,z_2,z_3$ in the cluster $K(\eta,z)$ as in the proof of Proposition~\ref{prop:G-car-gsk}. Here $\kappa\in(4,8)$.}
\label{fig:component}
\end{figure}

\begin{proposition}\label{prop:G-car-gsk}
For $\kappa\in(\frac{8}{3},8)$, we have
\begin{equation}\label{eq:gf-0}
G^{\rm cluster}_n(z_1,z_2,...,z_n)\prod_{k=1}^ndz_k\cdot dz=\int\prod_{k=1}^n\mu(dz_k; K(\eta,z)){\rm Count}_\Gamma(d\eta){\rm CLE}_\kappa^\C(d\Gamma)dz
\end{equation}
where the integral on the right hand side is taken over ${\rm Count}_\Gamma(d\eta){\rm CLE}_\kappa^\C(d\Gamma)$.
\end{proposition}
\begin{proof}
Up to a rearrangement, $\{\mu(dz;K(\eta,z))|\eta\in\Gamma\}$ and $\{\mu(dz;K(\eta))|\eta\in\Gamma\}$ give the same collection of measures on $\C$. 
Therefore they share the same Green function. Namely, $$\int\prod_{k=1}^n\mu(dz_k;K(\eta,z)){\rm Count}_\Gamma(d\eta){\rm CLE}_\kappa^\C(d\Gamma)=G^{\rm cluster}_n(z_1,z_2,...,z_n)\prod_{k=1}^ndz_k \textrm{ for all }z\in \C.$$Integrating over $z\in \C$ we conclude.
\end{proof}

\subsection{Reduction to solvable observables of the CLE-decorated quantum disk}\label{sec:reduction}
In this section, we  express the Green functions $G^{\rm cluster}_3$ and $G^{\rm cluster}_2$ in terms of 
 observables of the CLE-decorated quantum disk which can be solved by analyzing a L\'evy process. 
As the following proposition shows, the key quantities to compute are
\begin{equation*}
\QD(\ell)^\#\otimes\CLE_\kappa[Y^ke^{-\mu A}]\ \text{and}\ \GQD(\ell)^\#\otimes\CLE_{\kappa'}[Y^ke^{-\mu A}] \quad \textrm{for } k=2,3,
\end{equation*}
 where $A$ is the total quantum area of the quantum disk, and $Y$ is the total mass of the quantum natural measure of the CLE cluster on the quantum disk.
\begin{proposition}\label{prop:YA}
Fix $(u_1,u_2,u_3)=(0,1,e^{i\pi/3})$ and $\mu>0$. Let $K = K(\gamma)$ be the constant from  Proposition~\ref{prop-tabula-rasa}. 
For  $\kappa\in (\frac{8}{3}, 4)$,  we have
\begin{equation}\label{weld3}
\begin{aligned}
K G^{\rm cluster}_3\!(u_1,\!u_2,\!u_3)\partial_\mu   \LF_\C^{(Q-\frac{\gamma}{4},u_i)_3}[e^{-\mu A}]=\!-\!\int_0^\infty\!\! \!2\ell |\QD(\ell)|^2 \QD(\ell)^\#\!\otimes\!\CLE_\kappa[Y^3 e^{-\mu A}]\QD(\ell)^\#[Ae^{-\mu A}]d\ell,
\end{aligned}
\end{equation}
\begin{equation}\label{weld2}
\begin{aligned}
K G^{\rm cluster}_2\!(u_1,\!u_2) 
\LF_\C^{(Q-\frac{\gamma}{4},u_i)_2,(\gamma,u_3)}[e^{-\mu A}]=\!\int_0^\infty \!\!\!2\ell |\QD(\ell)|^2 \QD(\ell)^\#\!\otimes\!\CLE_\kappa[Y^2 e^{-\mu A}]\QD(\ell)^\#[Ae^{-\mu A}]d\ell.
\end{aligned}
\end{equation}
Similarly, let $K' = K'(\gamma)$ be the constant from Proposition~\ref{weldgqd}. For  $\kappa'\in (4,8)$, we have
\begin{equation}\label{weld3dense}
\begin{aligned}
K' G^{\rm cluster}_3\!(u_1,\!u_2,\!u_3)\partial_\mu\LF_\C^{(Q-\frac{1}\gamma,u_i)_3}[e^{-\mu A}]\!=\!-\!\!\int_0^\infty \!\!\!\!2\ell |\GQD(\ell)|^2 \GQD^\#\!(\ell)\!\otimes\!\CLE_{\kappa'}[Y^3 e^{-\mu A}]\GQD^\#\!(\ell)[Ae^{-\mu A}]d\ell,
\end{aligned}
\end{equation}
\begin{equation}\label{weld2dense}
\begin{aligned}
K' G^{\rm cluster}_2\!(u_1,\!u_2) \LF_\C^{(Q-\frac{1}{\gamma},u_i)_2,(\gamma,u_3)}\![e^{-\mu A}]\!=\!\!\int_0^\infty \!\!\!\!2\ell |\GQD(\ell)|^2 \GQD^\#\!(\ell)\!\otimes\!\CLE_{\kappa'}[Y^2 e^{-\mu A}]\GQD^\#\!(\ell)[Ae^{-\mu A}]d\ell.
\end{aligned}
\end{equation}
\end{proposition} 
\begin{proof}
We focus on the case $\kappa\in(\frac{8}{3},4)$ first; the  $\kappa'\in (4,8)$ case is similar.
By Proposition \ref{prop-tabula-rasa}, we have
$\mathbf m_{\wh \C} \ltimes \left( \int_0^\infty \ell \mathrm{Weld}(\QD(\ell), \QD(\ell)) \, d\ell \right) = K \LF_\C \times \SLE_\kappa^\lp$.
Sampling one point from the quantum area measure on each side (and thus inducing a weighting by total quantum area), we get 
\begin{equation*}
\mathbf m_{\wh \C}\ltimes\left(\int_0^\infty 2\ell \Weld(\QD(\ell),\QD_1(\ell))d\ell \right)=K\LF_\C^{(\gamma,z)} \SLE_\kappa^{\text{loop}}(d\eta)\, dz
\end{equation*}
where the above factor 2 on the left side arises  from choosing which disk receives the marked bulk point. Let $D(\eta,z)$ be the connected component of $\hat \C \backslash \eta$ not containing $z$.
Sampling an $\CLE_\kappa$ configuration on the unmarked disk corresponding to $D(\eta,z)$ gives 
\begin{equation}\label{eq:loop}
\mathbf m_{\wh \C}\ltimes \left(\int_0^\infty 2\ell \Weld(\QD(\ell)\otimes\CLE_\kappa(d\Gamma),\QD_1(\ell))d\ell \right)=K\LF_\C^{(\gamma,z)}\CLE_\kappa^{D(\eta,z)}(d\Gamma) \SLE_\kappa^{\text{loop}}(d\eta)dz.
\end{equation}
Here $\CLE_\kappa^{D(\eta,z)}(d\Gamma)$ denotes the law of CLE$_\kappa$ in $D(\eta,z)$.  On both sides of~\eqref{eq:loop}, we weight the $k$-th power of total mass of the quantum natural measure 
on the CLE carpet of $\Gamma$ and sample $k$ points on $\Gamma$ according to the probability measure proportional to the quantum natural measure on the CLE cluster. Recall the Miller-Schoug measure $\mu_\Gamma(dz)$ defined in \eqref{eq:dfms}. By applying Lemma~\ref{resampling} $k$ times, the right hand side of~\eqref{eq:loop} becomes
\begin{equation}\label{eq:right hand side}
  K\LF_\C^{(Q-\frac{\gamma}{4},z_i)_k,(\gamma,z)}(d\phi)\prod_{i=1}^k\mu_\Gamma(dz_i)\CLE_\kappa^{D(\eta,z)}(d\Gamma) \SLE_\kappa^{\mathrm{loop}}(d\eta)dz.  
\end{equation}
Since $\SLE_\kappa^{\mathrm{loop}}$ is chosen from ${\rm Count}_\Gamma(d\eta){\rm CLE}_\kappa^\C(d\Gamma)$, by Proposition \ref{prop:G-car-gsk}, 
\[
\left(\int\prod_{i=1}^k\mu_\Gamma(dz_i)\CLE_\kappa^{D(\eta,z)}(d\Gamma) \SLE_\kappa^{\mathrm{loop}}(d\eta)\right)dz=
G_3^{\rm cluster}(z_1,z_2,z_3)\prod_{i=1}^k dz_i dz.
\]
Therefore integrating $e^{-\mu A}$ against \eqref{eq:right hand side} with respect to $(\Gamma, \eta, \phi)$ gives
\begin{equation}\label{eq:car-right hand side}
\begin{aligned}
&K\LF_\C^{(Q-\frac{\gamma}{4},z_i)_3,(\gamma,z)}[e^{-\mu A}]\left(\int\prod_{i=1}^k\mu_\Gamma(dz_i)\CLE_\kappa^{D(\eta,z)}(d\Gamma) \SLE_\kappa^{\mathrm{loop}}(d\eta)\right)dz\\
&=K\LF_\C^{(Q-\frac{\gamma}{4},z_i)_k,(\gamma,z)}[e^{-\mu A}]G_3^{\rm cluster}(z_1,z_2,z_3)\prod_{i=1}^k dz_i dz.
\end{aligned}
\end{equation}
 By Lemma~\ref{lem-add-gamma} we have $\int \LF_\C^{(Q - \frac\gamma4,z_i)_3, (\gamma, z)} [e^{-\mu A}] dz =  \LF_\C^{(Q - \frac\gamma4,z_i)_3} [A e^{-\mu A}] = -\partial_\mu   \LF_\C^{(Q-\frac{\gamma}{4},z_i)_3}[e^{-\mu A}]$. Setting $k=3$  and integrating~\eqref{eq:car-right hand side} over $z$ gives
\begin{equation} \label{eq:rhs}
K G^{\rm cluster}_3(z_1,z_2,z_3)\left(\int \LF_\C^{(Q-\frac{\gamma}{4},z_i)_3,(\gamma,z)}[e^{-\mu A}] dz\right)\prod_{i=1}^3 dz_i=-K G^{\rm cluster}_3(z_1,z_2,z_3)\partial_\mu   \LF_\C^{(Q-\frac{\gamma}{4},z_i)_3}[e^{-\mu A}]\prod_{i=1}^3 dz_i. 
\end{equation}

It remains to deal with the left hand side of~\eqref{eq:loop}. For $k=3$, let $(\widehat \C, \Gamma_0, \eta_0, h_0,\mathbf p_1,\mathbf p_2,\mathbf p_3,\mathbf p)$ be an embedding of the quantum  surface sampled from  $\left(\int_0^\infty 2\ell \Weld(\QD(\ell)\otimes\CLE_\kappa(d\Gamma),\QD_1(\ell))d\ell \right)$ after weighting by the third power of the total quantum natural measure of the outermost cluster  of $\Gamma$ and adding three points $\mathbf p_1,\mathbf p_2,\mathbf p_3$ according to the quantum natural measure on  this cluster, and $\mathbf p$ corresponds  to the marked point from $\QD_1(\ell)$. There are infinitely many possible embeddings of the quantum surface, and one can uniquely specify an embedding by fixing the locations of $\mathbf p_1, \mathbf p_2, \mathbf p_3$. For three distinct points $z_1,z_2,z_3$ on $\widehat \C$,  let $M_{z_1,z_2,z_3}(dh,d\eta,d\Gamma,dz)$ be the law of $(\Gamma_0, \eta_0, h_0,\mathbf p)$ when the embedding is fixed by $(\mathbf p_1,\mathbf p_2,\mathbf p_3)=(z_1,z_2,z_3)$. Then 
 \begin{equation*} 
M_{z_1,z_2,z_3}[e^{-\mu A}]= \int_0^\infty \!\!2\ell \! \Weld(\QD(\ell)\!\otimes\!\CLE_\kappa(d\Gamma),\!\QD_1(\ell))[Y^3 e^{-\mu A}]d\ell.
\end{equation*}where $Y$ is the total mass of the quantum natural measure on the CLE cluster.
Let $(\Gamma, \eta, h,\mathbf z_1,\mathbf z_2,\mathbf z_3,\mathbf z)$ be the image of $(\Gamma_0, \eta_0, h_0,\mathbf p_1,\mathbf p_2,\mathbf p_3,\mathbf p)$ under the uniform embedding from $\mathbf{m}_{\wh \C}$.  By Lemma~\ref{Haar-density} and Theorem~\ref{thm:embed-qs}, the law of this tuple is
$M_{z_1,z_2,z_3}(dh,d\eta,d\Gamma,dz)  |(z_1-z_2)(z_2-z_3)(z_3-z_1)|^2  dz_1dz_2dz_3$.
Therefore, integrating $e^{-\mu A}$ with respect to $(\Gamma, \eta, \phi,z)$, we get 
\begin{equation}\label{eq:car-left hand side}
\int_0^\infty \!\!2\ell \! \Weld(\QD(\ell)\!\otimes\!\CLE_\kappa,\!\QD_1(\ell))[Y^3 e^{-\mu A}]d\ell\cdot  |(z_1\!-\!z_2)(z_2\!-\!z_3)(z_3\!-\!z_1)|^2  dz_1dz_2dz_3,
\end{equation}
Comparing  \eqref{eq:rhs} and \eqref{eq:car-left hand side}, and restricting
$(z_1,z_2,z_3)$ to $(u_1,u_2,u_3)=(0,1,e^{i\pi/3})$ by disintegration, we get~\eqref{weld3}.

For the case $k=2$, evaluating $e^{-\mu A}$ on the left hand side of \eqref{eq:loop} after the reweighting gives
\begin{equation*}
\int_0^\infty 2\ell  \Weld(\QD(\ell)\otimes\CLE_\kappa,\QD_1(\ell))[Y^2 e^{-\mu A}]d\ell\cdot  |(z_1-z_2)(z_2-z)(z-z_1)|^2  dz_1dz_2dz.
\end{equation*}
Comparing this with \eqref{eq:car-right hand side} and choosing $(z_1,z_2,z)=(u_1,u_2,u_3)=(0,1,e^{i\pi/3})$ gives \eqref{weld2}.

For $\kappa'\in (4,8)$, from Proposition \ref{weldgqd},  the equation~\eqref{eq:loop} holds with $K'$ in place of $K$ and $\GQD$ in place of $\QD$.
The rest of the argument is the same as in the $\kappa\in (\frac{8}{3},4)$ case. 
\end{proof}

\subsection{Proof of Theorems \ref{thm: connectivity}: the final calculation}\label{sec:solving-connectivity}
In this section, we prove  Theorems \ref{thm: connectivity}  based on Proposition~\ref{prop:YA}. 
We need the following special case of the DOZZ formula to 
compute $\LF_\C^{(Q-\frac{\gamma}{4},u_i)_2,(\gamma,u_3)}[e^{-\mu A}]$ and $\LF_\C^{(Q-\frac{\gamma}{4},u_i)_3}[e^{-\mu A}]$:
\begin{lemma}\label{lem:dozz}
     For $2\alpha+\gamma>2Q$,$$C_\gamma^{\rm DOZZ}\left(\alpha,\alpha,\gamma\right)=\frac{2}{\pi\gamma}\cdot(\pi \frac{\Gamma(\frac{\gamma^2}{4})}{\Gamma(1-\frac{\gamma^2}{4})})^{\frac{2Q-2\alpha}{\gamma}}\frac{\Gamma(\frac{2}{\gamma}(\alpha-\frac{2}{\gamma}))}{\Gamma(1-\frac{2}{\gamma}(\alpha-\frac{2}{\gamma}))}\frac{\Gamma(\frac{\gamma}{2}(\alpha-\frac{\gamma}{2}))}{\Gamma(1-\frac{\gamma}{2}(\alpha-\frac{\gamma}{2}))}$$
\end{lemma}
\begin{proof}
    By definition, 
    \begin{align*}
C^\mathrm{DOZZ}_\gamma(\alpha, \alpha, \gamma) = \left( \pi(\frac\gamma2)^{2-\gamma^2/2} \ell(\frac{\gamma^2}{4}) \right)^{\frac{2Q-2\alpha-\gamma}{\gamma}}\frac{\Ug'(0) \Ug(\alpha)\Ug(\alpha)\Ug(\gamma)}{\Ug(\alpha - \frac{2}{\gamma}) \Ug(\frac{\gamma}{2})\Ug(\frac{\gamma}{2})\Ug(\alpha-\frac{\gamma}{2})}.
\end{align*}
To simplify, we use the shift equations for $\Upsilon_{\frac\gamma2}$ which follow from~\eqref{eq:shift double gamma}  and $\Ug(z)=\frac{1}{\Gamma_{\frac{\gamma}{2}}(z)\Gamma_{\frac{\gamma}{2}}(Q-z)}$:

\begin{equation}\label{shiftUpsilon}
    \Upsilon_{\frac\gamma2}(z + \frac\gamma2) = \frac{\Gamma(\frac\gamma2 z)}{\Gamma(1-\frac\gamma2z)} (\frac\gamma2)^{1-\gamma z} \Upsilon_{\frac\gamma2}(z) \quad\text{and} \quad \Upsilon_{\frac\gamma2}(z + \frac2\gamma) = \frac{\Gamma(\frac2\gamma z)}{\Gamma(1 - \frac2\gamma z)} (\frac\gamma2)^{\frac4\gamma z - 1} \Upsilon_{\frac\gamma2}(z).
\end{equation}
Thus,
$\Ug(\frac{\gamma}{2})=\Ug'(0)$, $\Ug(\gamma) = \frac{\Gamma(\frac{\gamma^2}{4})}{\Gamma(1-\frac{\gamma^2}{4})} (\frac\gamma2)^{1-\gamma^2/2} \Ug(\frac{\gamma}{2})$ and $\Upsilon_{\frac\gamma2}(\alpha) = \frac{\Gamma(\frac\gamma2 (\alpha-\frac{\gamma}{2}))}{\Gamma(1-\frac\gamma2(\alpha-\frac{\gamma}{2}))} (\frac\gamma2)^{1-\gamma (\alpha-\frac{\gamma}{2})} \Upsilon_{\frac\gamma2}(\alpha-\frac{\gamma}{2})$, $\Upsilon_{\frac\gamma2}(\alpha) = \frac{\Gamma(\frac2\gamma (\alpha-\frac{2}{\gamma}))}{\Gamma(1 - \frac2\gamma (\alpha-\frac{2}{\gamma}))} (\frac\gamma2)^{\frac4\gamma (\alpha-\frac{2}{\gamma}) - 1} \Upsilon_{\frac\gamma2}(\alpha-\frac{2}{\gamma})$. Combining gives the conclusion.
\end{proof}

The following two propositions, whose proofs are postponed  to Section \ref{levy}, give expressions for $\QD(\ell)^\#\otimes\CLE_\kappa[Y^ke^{-\mu A}]$ and $ \GQD(\ell)^\#\otimes\CLE_{\kappa'}[Y^ke^{-\mu A}]$ with $k=2,3$. 
\begin{prop}\label{levy1}
For $\kappa\in(\frac{8}{3},4)$, let $\beta=\frac{4}{\kappa}+\frac{1}{2}$. For $\mu>0$, let $M=\sqrt{\frac{\mu}{\sin\frac{\kappa\pi}{4}}}$ and $L=-\frac{2^{1-\frac{4}{\kappa}}M^{\frac{4}{\kappa}+\frac{1}{2}}}{\Gamma(\frac{4}{\kappa})}\sqrt{2\pi}\frac{\pi}{\cos\frac{4\pi}{\kappa}}$. Then
\begin{align*}
\QD(\ell)^\#\otimes\CLE_{\kappa}[Y^3e^{-\mu A}]&=\frac{\sin(-\pi\beta)}{\beta^3\pi}e^{-M\ell}\left[\ell^{2+\beta}\frac{4M^2}{L^2}+\ell^{1+\beta}\frac{M}{2L^2}\left(\frac{64}{\kappa^2}-1\right)\right]; \\
\QD(\ell)^\#\otimes\CLE_{\kappa}[Y^2e^{-\mu A}]&=\frac{\sin(-\pi\beta)}{\beta^2\pi}\ell^{1+\beta}e^{-M\ell}\frac{2M}{L}.
\end{align*}
\end{prop}
\begin{prop}\label{levy2}
For $\kappa'\in(4,8)$, let $\beta'=\frac{4}{\kappa'}+\frac{1}{2}$. For $\mu>0$, let  $M'=2\left(\frac{\mu}{4\sin\frac{4\pi}{\kappa'}}\right)^{\frac{\kappa'}{8}}$ and $L'=-\frac{2^{1-\frac{4}{\kappa'}}M'^{\frac{4}{\kappa'}+\frac{1}{2}}}{\Gamma(\frac{4}{\kappa'})}\sqrt{2\pi}\frac{\pi}{\cos\frac{4\pi}{\kappa'}}$. Then
\begin{align*}
\GQD(\ell)^\#\otimes\CLE_{\kappa'}[Y^3e^{-\mu A}]&=\frac{\sin(-\pi\beta')}{\beta'^3\pi}e^{-M\ell}\left[\ell^{2+\beta'}\frac{4M'^2}{L'^2}+\ell^{1+\beta'}\frac{M'}{2L'^2}\left(\frac{64}{\kappa'^2}-1\right)\right];\\
\GQD(\ell)^\#\otimes\CLE_{\kappa'}[Y^2e^{-\mu A}]&=\frac{\sin(-\pi\beta')}{\beta'^2\pi}\ell^{1+\beta'}e^{-M'\ell}\frac{2M'}{L'}.
\end{align*}
\end{prop}

\begin{proof}[Proof of Theorem \ref{thm: connectivity} for $\kappa\in (\frac{8}{3},4{]}$ assuming Proposition~\ref{levy1}]
For $\kappa\in(\frac{8}{3},4)$, let $\kappa=\gamma^2$.
We first calculate $G^{\rm cluster}_3$ using~\eqref{weld3}. By Proposition~\ref{prop-DOZZ}, the left hand side of~\eqref{weld3} equals
\begin{equation*}
KC^3 G^{\rm cluster}_3(u_1,u_2,u_3) \frac{1}{2\gamma}(Q-\frac{3\gamma}{4})C_\gamma^{\rm DOZZ}\left(Q-\frac{\gamma}{4},Q-\frac{\gamma}{4},Q-\frac{\gamma}{4}\right)\mu^{-\frac{Q-3\gamma/4}{\gamma}-1}.
\end{equation*}
where $K$ is from Proposition~\ref{prop-tabula-rasa}. 
By Proposition \ref{levy1}, the right hand side of~\eqref{weld3} equals
\begin{align*}
&\int_0^\infty 2\ell |\mathrm{QD}(\ell)|^2 \mathrm{QD}^{\#}(\ell)\left[Ae^{-\mu A}\right]\cdot \frac{\sin(-\pi\beta)}{\beta^3\pi}e^{-M\ell}\left[\ell^{2+\beta}\frac{4M^2}{L^2}+\ell^{1+\beta}\frac{M}{2\ell^2}\left(\frac{64}{\kappa^2}-1\right)\right] d\ell
\\&=R_\gamma^2 2^{4/\gamma^2}\Gamma\left(\frac{4}{\gamma^2}\right)\frac{1}{\beta^3\pi^3}\left(\sin\frac{\pi\gamma^2}{4}\right)^{2/\gamma^2-1/4}\cos^2\frac{4\pi}{\gamma^2}\times \sqrt{\frac{\pi}{2}}\left[\frac{4}{\gamma^2}-\frac{1}{2}\right]\mu^{-\frac{Q-3\gamma/4}{\gamma}-1},
\end{align*}
where $R_\gamma$ is given in Proposition \ref{rem-QD-length}.  Hence
\begin{align*}
G^{\rm cluster}_3(u_1,u_2,u_3)=&\frac{4}{KC^3} C_\gamma^{\rm DOZZ}\left(Q-\frac{\gamma}{4},Q-\frac{\gamma}{4},Q-\frac{\gamma}{4}\right)^{-1}R_\gamma^2 2^{4/\gamma^2}\Gamma\left(\frac{4}{\gamma^2}\right)\\
&\frac{1}{\beta^3\pi^3}\left(\sin\frac{\pi\gamma^2}{4}\right)^{2/\gamma^2-1/4}\cos^2\frac{4\pi}{\gamma^2} \sqrt{\frac{\pi}{2}}.
\end{align*}

For $G^{\rm cluster}_2$, the left hand side of~\eqref{weld2} equals
$KC^2 G^{\rm cluster}_2(u_1,u_2) \frac{1}{2}C_\gamma^{\rm DOZZ}(Q-\frac{\gamma}{4},Q-\frac{\gamma}{4},\gamma)\mu^{-1/2}$ by Proposition~\ref{prop-DOZZ}, whereas by Proposition \ref{levy1} the right hand side of~\eqref{weld2} equals
\begin{align*}
\int_0^\infty 2\ell |\mathrm{QD}(\ell)|^2 \mathrm{QD}^{\#}(\ell)\left[Ae^{-\mu A}\right]\cdot \frac{\sin(-\pi\beta)}{\beta^2\pi}\ell^{1+\beta}e^{-M\ell}\frac{2M}{L} d\ell=-\frac{1}{\pi\beta^2}R_\gamma^2\frac{\cos \frac{4\pi}{\gamma^2}}{\sqrt{\sin\frac{\pi\gamma^2}{4}}}\mu^{-1/2}.
\end{align*}
Hence $G^{\rm cluster}_2(u_1,u_2)=-\frac{2}{KC^2}C_\gamma^{\rm DOZZ}\left(Q-\frac{\gamma}{4},Q-\frac{\gamma}{4},\gamma\right)^{-1}\frac{1}{\pi\beta^2}R_\gamma^2\frac{\cos \frac{4\pi}{\gamma^2}}{\sqrt{\sin\frac{\pi\gamma^2}{4}}}.$
Therefore
\begin{equation*} 
\frac{G^{\rm cluster}_3(u_1,u_2,u_3)}{G^{\rm cluster}_2(u_1,u_2)^{3/2}}=\frac{K^{1/2}}{\pi R_\gamma}2^{4/\gamma^2}\Gamma\left(\frac{4}{\gamma^2}\right)\left(\sin\frac{\pi\gamma^2}{4}\right)^{2/\gamma^2+1/2}(-\cos\frac{4\pi}{\gamma^2})^{1/2}\times\frac{C_\gamma^{\rm DOZZ}\left(Q-\frac{\gamma}{4},Q-\frac{\gamma}{4},\gamma\right)^{3/2}}{C_\gamma^{\rm DOZZ}\left(Q-\frac{\gamma}{4},Q-\frac{\gamma}{4},Q-\frac{\gamma}{4}\right)}.
\end{equation*}
Plugging in the expressions of $C_\gamma^{\rm DOZZ}$  from Lemma~\ref{lem:dozz}  and $\frac{K}{R_\gamma^2}$ from \eqref{eq:k-r}, we get the $\kappa\in (\frac{8}{3},4)$ case for Theorem \ref{thm: connectivity}  after simplifying the expression.
The case $\kappa=4$ follows from taking the limit as $\kappa \nearrow 4$; see Lemmas~\ref{lem:Green-cont} and~\ref{lem:cont4} in Appendix~\ref{kappato4}.
\end{proof}

\begin{proof}[Proof of Theorem \ref{thm: connectivity} for $\kappa'\in (4,8)$ assuming Proposition~\ref{levy2}.]
Let $\gamma^2=\frac{16}{\kappa'}$.
Similar to the simple case, we evaluate both sides of~\eqref{weld3dense} and~\eqref{weld2dense}.
By Proposition~\ref{prop-DOZZ}, the left hand side of~\eqref{weld3dense} equals
\begin{equation*}
K'{C'}^3 G^{\rm cluster}_3(u_1,u_2,u_3) \frac{1}{2\gamma}(Q-\frac{3}{\gamma})C_\gamma^{\rm DOZZ}\left(Q-\frac{1}{\gamma},Q-\frac{1}{\gamma},Q-\frac{1}{\gamma}\right)\mu^{-\frac{Q-3/\gamma}{\gamma}-1}
\end{equation*}
where $K'$ is as in~\eqref{wc}, and by Proposition~\ref{levy2} the right hand side of~\eqref{weld3dense} equals
\begin{equation*}
2^{4/\kappa'}{M'}^{1-\beta'}{R_\gamma'}^2\frac{1}{\pi^3\beta'^3}\sqrt\frac{\pi}{2}\cos^2\frac{4\pi}{\kappa'}\frac{\kappa'}{4}\Gamma\left(\frac{4}{\kappa'}\right)\left[\frac{4}{\kappa'}-\frac{1}{2}\right]\frac{1}{\mu},
\end{equation*}
where $R_\gamma'$ is as in Lemma~\ref{lem:gqd-bdy-length-law}, and $M'$ is as in Proposition~\ref{levy2}.
Combining these two equations gives \begin{align*}
G^{\rm cluster}_3(u_1,u_2,u_3)=\frac{4}{K'{C'}^3} C_\gamma^{\rm DOZZ}\left(Q-\frac{1}{\gamma},Q-\frac{1}{\gamma},Q-\frac{1}{\gamma}\right)^{-1}{R_\gamma'}^2 2^{4/\kappa'}\Gamma\left(\frac{4}{\kappa'}\right)\frac{1}{\beta'^3\pi^3}{M'}^{1-\beta'}\cos^2\frac{4\pi}{\kappa'} \sqrt{\frac{\pi}{2}}.
\end{align*}

For $G^{\rm cluster}_2$, the left hand side of~\eqref{weld2dense} equals
$K'{C'}^2 G^{\rm Gar}_2(u_1,u_2) \frac{1}{2}C_\gamma^{\rm DOZZ}(Q-\frac{1}{\gamma},Q-\frac{1}{\gamma},\gamma)\mu^{-1+\frac{\kappa'}{8}}$ by Proposition~\ref{prop-DOZZ}, whereas by Proposition~\ref{levy2} the right hand side of~\eqref{weld2dense} equals
$-\frac{1}{\pi\beta^2}{R_\gamma'}^2 M'\frac{\kappa'}{4\mu}\cos\frac{4\pi}{\kappa'}$. 
Hence $G^{\rm cluster}_2(u_1,u_2)=-\frac{2}{K'{C'}^2}C_\gamma^{\rm DOZZ}\left(Q-\frac{1}{\gamma},Q-\frac{1}{\gamma},\gamma\right)^{-1}\frac{1}{\pi\beta'^2}{R_\gamma'}^2 M'\frac{\kappa'}{4}\cos\frac{4\pi}{\kappa'}.$
Therefore
\begin{equation*} 
\frac{G^{\rm cluster}_3(u_1,u_2,u_3)}{G^{\rm cluster}_2(u_1,u_2)^{3/2}}=\frac{{K'}^{1/2}}{\pi R_\gamma'}2^{4/\kappa'}\left(\frac{4}{\kappa'}\right)^{3/2}\Gamma\left(\frac{4}{\kappa'}\right){M'}^{-\beta'-1/2}(-\cos\frac{4\pi}{\kappa'})^{1/2}\frac{C_\gamma^{\rm DOZZ}\left(Q-\frac{1}{\gamma},Q-\frac{1}{\gamma},\gamma\right)^{3/2}}{C_\gamma^{\rm DOZZ}\left(Q-\frac{1}{\gamma},Q-\frac{1}{\gamma},Q-\frac{1}{\gamma}\right)},
\end{equation*}
where $K'/{R_\gamma'}^2$ is given by~\eqref{wc}. Applying Lemma~\ref{lem:dozz} and simplifying the above expression gives the final result.
\end{proof}

\section{Proof of Propositions \ref{levy1} and \ref{levy2} via L\'evy process}\label{levy}
Recall  the  L\'evy process from Propositions~\ref{prop-ccm} and \ref{prop-ccm-ns}. Namely, for $\beta\in (1,2)$, let $(\zeta_t)_{t\geq 0}$ be a $\lexp$-stable L\'evy process whose L\'evy measure is $1_{x>0} x^{-\beta-1} \, dx$, and denote its law by $\P^\beta$.
Let $\tau_{-\outleng}=\inf\{ t: \zeta_t=-\outleng  \}$. Let  $ (x_i)_{i \geq 1}$ be the sequence of the sizes of the upward jumps of $\zeta$ on $[0,\tau_{-\outleng}]$ sorted in decreasing order. The starting point of our proof of Propositions \ref{levy1} and \ref{levy2} is the following proposition. 
\begin{proposition}\label{cm}
Recall the setting of Propositions \ref{levy1} and \ref{levy2}, where for $\ell>0$ we  consider $\QD(\ell)^\#\otimes\CLE_\kappa$ when $\kappa\in (\frac{8}{3},4)$
and $\GQD(\ell)^\#\otimes\CLE_{\kappa'}$ when $\kappa'\in (4,8)$. Moreover,  $Y$ is the total quantum natural measure  on the CLE cluster, and $A$ is the total quantum area of the quantum disk.
Then for positive integer $k$,
\begin{equation}\label{eq:simpleYA}
\QD(\ell)^\#\otimes\CLE_\kappa[Y^ke^{-\mu A}]=\frac{1}{\beta^k}\frac{\E[\tau_{-\ell}^{k-1} e^{-\mu\sum x_i^2 A_i}]}{\E \tau_{-\ell}^{-1}},
\end{equation}
where $A_i$ denotes  independent copies of the quantum area of a sample from $\QD(1)^\#$ and the expectation  is taken with respect to $\P^\beta$ with $\beta=\frac{4}{\kappa}+\frac{1}{2}$. 
Similarly, \begin{equation}\label{eq:nonsimpleYA}
\GQD(\ell)^\#\otimes\CLE_{\kappa'}[Y^ke^{-\mu A}]=\frac{1}{\beta'^k}\frac{\E[\tau_{-\ell}^{k-1} e^{-\mu\sum x_i^{8/\kappa'} A_i}]}{\E \tau_{-\ell}^{-1}},
\end{equation}
where $A_i$ denotes  independent copies of the quantum area of a sample from $\GQD(1)^\#$ and the expectation is taken with respect to $\P^{\beta'}$ with $\beta'=\frac{4}{\kappa'}+\frac{1}{2}$.
\end{proposition}
\begin{proof}
To prove~\eqref{eq:simpleYA},
for each $T>0$, let $N_\eps(T)$ be the number of jumps of $\zeta$ of size greater than $\eps$ in the interval $[0,T]$. 
Since the Levy measure of $\zeta$ is  $1_{x>0} x^{-\beta-1} \, dx$,   for each fixed $T$ we have that  $\P^\beta$-a.s.  
\[
\lim_{\eps \to 0} \eps^\beta N_\eps(T) =\lim_{\eps \to 0} \eps^\beta \E[N_\eps(T)] = \lim_{\eps \to 0} \eps^\beta \int_0^T\int_{\eps}^\infty x^{-\beta-1} \, dx\, dt = \frac1\beta T.
\]	
Therefore $\P^\beta$-a.s.\ $\lim_{\eps \to 0} \eps^\beta N_\eps(\tau_{-L}) =  \frac1\beta \tau_{-L}$, hence the same holds under $\frac{\tau_{-L}^{-1}\P^\beta}{\E[\tau_{-L}^{-1}]}$.
In the coupling of $\frac{\tau_{-L}^{-1}\P^\beta}{\E[\tau_{-L}^{-1}]}$ and $\QD(\ell)^\# \otimes \CLE_\kappa$ described in Proposition~\ref{prop-ccm},  
the number of outermost  loops in the $\CLE_\kappa$ with quantum length greater than $\eps$ is $N_\eps(\tau_{-L})$. By the definition of the quantum natural measure on the CLE carpet we have  $Y= \lim_{\eps\to 0} \eps^\lexp N_\eps(\tau_{-L}) = \frac1\beta \tau_{-L}$ a.s. This gives~\eqref{eq:simpleYA}.

Equation~\eqref{eq:nonsimpleYA} follows from Proposition~\ref{prop-ccm-ns} via the same argument with  $x_i^{8/\kappa'}$ in place  of $x_i^2$ because of Proposition~\ref{prop:ns-area-law}.
\end{proof}

The key to the proofs of Propositions~\ref{levy1} and~\ref{levy2} is to express the quantities on the right hand sides of \eqref{eq:simpleYA} and \eqref{eq:nonsimpleYA}  into  certain integrals over  the L\'evy excursion measure. This is done in Section \ref{sec:reduce}, after we review the necessary background on L\'evy excursions in  Section~\ref{sec:excursion}. Then  in Section \ref{sec:solve-levy1}, we evaluate these integrals and prove Propositions \ref{levy1} and \ref{levy2}.

\subsection{Background on the L\'evy process and L\'evy excursion}\label{sec:excursion}
For $\beta\in(1,2)$,
let $(\zeta_t)_{t\geq 0}$ be a $\beta$-stable L\'evy process sampled from $\P^{\beta}$. In this section we recall some basic facts and constructions concerning  $(\zeta_t)_{t\geq 0}$. For further background, see e.g.~\cite[Section 1]{duquesne-legall-levy-trees},  \cite[Section 3.1.1]{curien-kortchemski-looptree-def}, and \cite[Section 1]{duquesne2002random}.
Let $J:=\{(x,t): t\ge 0 \textrm{ and }\zeta_t-\zeta_{t^-}=x>0 \}$ be the set of jumps, where $\zeta_{t-}:=\lim_{s\to t-}\zeta(s)$. 
For each $a>0$, we denote $\tau_{-a}=\inf\{ t: \zeta_t=-a\}$ for the hitting time of $-a$ and $J_a=\{(x,t)\in J: t\le \tau_{-a}\}$ for the jumps before hitting $-a$.

\begin{lemma}\label{tau}
    $\E \tau_{-\ell}^{-1}=\E[\tau_{-1}^{-1}]\ell^{-\beta}=\frac{\pi}{\sin(-\pi\beta)}\ell^{-\beta}.$
\end{lemma}
\begin{proof}
    Since $\E[e^{-\lambda\zeta_t}]=e^{\Gamma(-\beta)t\lambda^\beta}$, by \cite[Theorem 7.1]{bertoin-book}, we have $\mathbb{E}\left[e^{-\lambda \tau_{-\ell}}\right]=e^{\ell \Gamma(-\beta)^{-1/\beta}\lambda^{1 /\beta}}$. Then $\mathbb{E}\left[\tau_{-\ell}^{-1}\right]=\int_0^{\infty} \mathbb{E}\left[e^{-\lambda \tau_{-\ell}}\right] d \lambda=\ell^{-\beta}\Gamma(-\beta)\Gamma(1+\beta)=\frac{\pi}{\sin (-\pi \beta)}\ell^{-\beta}$.
\end{proof}

Let $I_t= \inf\{\zeta_t: s\in [0,t] \}$. Then $(\zeta_t-I_t)_{t\ge 0}$ is a Markov process and we can consider its excursions away from 0. Write $\ul N$ for the excursion measure, which is a measure on the space of non-negative functions of the form  $e:[0,T] \rta [0,\infty)$ with $e(0)=e(T)=0$; we say $T$ is the duration of the excursion.
Let $\{x_i: i\ge 1\}$ be the set of jumps in $\zeta$ before $\tau_{-L}$. Each excursion of $(\zeta_t - I_t)_{t \in [0,\tau_{-L}]}$ corresponds to a pair $(e, s)$ as follows: let $\sigma$ be the starting time of the excursion, then $e$ is the excursion started at time $0$ instead of time $\sigma$, and $s = -I_\sigma$. Then the set $\mathrm{Exc}_{L}=\{ (e,s): s\le L \}$ has the law of a Poisson point process with intensity measure $\ul N\times 1_{s\in[0,L]}  ds$. Let $T(e)$ denote the duration $T$ of $e$;	note that $\tau_{-L}=\sum_{ (e,s)\in \mathrm{Exc}_{L}}  T(e)$ and 
 \(	\tau_{-\ell} = \sum_{ (e,s)\in \mathrm{Exc}_{L}}  T(e)1_{\{s\le \ell\}}\).

We will use the following lemma to compute the terms $\E[\tau_{-\ell}^{k-1} e^{-\mu\sum x_i^2 A_i}]$  and 
$\E[\tau_{-\ell}^{k-1} e^{-\mu\sum x_i^{8/\kappa'} A_i}]$ in Propositions~\ref{levy1} and \ref{levy2}.
\begin{lemma}\label{palm}
Let $\ell > 0$ and let  $(x_i)$ be the set of jumps of $\zeta$ up until time  $\tau_{-\ell}$. We have
\begin{equation*}
\E[\tau_{-\ell}\prod_{i\ge 1}g(x_i)]=\ell\E[\prod_{i\ge 1}g(x_i)]\int T(e)\prod_{x\in J_e}g(x) \underline N(de)
\end{equation*}
and
\begin{equation*}
\E[\tau_{-\ell}^2\prod_{i\ge 1}g(x_i)]=\ell^2\E[\prod_{i\ge 1}g(x_i)]\left(\int T(e)\prod_{x\in J_e}g(x) \underline N(de)\right)^2+\ell\E[\prod_{i\ge 1}g(x_i)]\int T(e)^2\prod_{x\in J_e}g(x) \underline N(de).
\end{equation*}
where $J_e$ stands for the set of jumps in $e$.
\end{lemma}
\begin{proof}
Suppose $F$ is a function on the space of excursions. Since $\tau_{-\ell}=\sum_{e\in \mathrm{Exc}_{\ell}}  T(e)$, we have $\E[\tau_{-\ell}\prod_{e\in \mathrm{Exc}_{\ell}}F(e)]=\E\left[\left(\sum_{e\in \mathrm{Exc}_{\ell}}  T(e)\right)\prod_{e\in \mathrm{Exc}_{\ell}}F(e)\right]$.
By Palm's theorem (see e.g.\ \cite[Page 5]{kallenberg2017random}) for Poisson point process, for $(e,s)$ chosen from the counting measure on ${\rm Exc}_\ell$, the law of $(e,s)$ is precisely the intensity measure $\underline N(de)\times 1_{s\in[0,\ell]}ds$, and conditioned on $e$, the conditional law of ${\rm Exc}_\ell\backslash\{(e,s)\}$ is the original Poisson point process with intensity measure $\underline N(de)\times 1_{s\in[0,\ell]}ds$. Hence
\begin{equation}\label{eq:palm0}
\E[\tau_{-\ell}\prod_{e\in \mathrm{Exc}_{\ell}}F(e)]=\E\left[\left(\sum_{e\in \mathrm{Exc}_{\ell}}  T(e)\right)\prod_{e\in \mathrm{Exc}_{\ell}}F(e)\right]=\ell\E[\prod_{e\in \mathrm{Exc}_{\ell}}F(e)]\int T(e)F(e) \underline N(de).
\end{equation}
Setting $F(e)=\prod_{x\in J_e} g(x)$ gives the first result. For the second result, we choose $F(e)=e^{\lambda T(e)}\prod_{x\in J_e} g(x)$; the result follows by taking the derivative of $\lambda$ at $\lambda=0$.
\end{proof}

\subsection{Reduction to L\'evy excursion functionals}\label{sec:reduce}

To prove Propositions \ref{levy1} and \ref{levy2}, in Lemma \ref{palm}, we choose $g(x)=\QD(1)^\#[e^{-x^2\mu A}]$ for the simple $\CLE_\kappa$ and $g(x)=\GQD(1)^\#[e^{-x^{8/\kappa'}\mu A}]$ for the non-simple $\CLE_{\kappa'}$. Then  it remains to calculate $\E[\prod_{i\ge 1}g(x_i)]$ and the following integrals:
\begin{equation}\label{eq:remain}
\int T(e)\prod_{x\in J_e}g(x) \underline N(de)\quad \text{and}\quad\int T(e)^2\prod_{x\in J_e}g(x) \underline N(de).
\end{equation}
We will use results in \cite{holden2022liouville} to compute~\eqref{eq:remain}. To be consistent with the normalization in  \cite{holden2022liouville},
let $(\wt\zeta_t)$ be the $\beta$-stable process
 with L\'evy measure $\frac{1}{\Gamma(-\beta)}1_{\{x>0\}}x^{-\beta-1}dx$, and denote its law by $\wt \P^\beta$. Note that $(\wt\zeta_{\Gamma(-\beta)t})\overset{d}{=}(\zeta_t)$ for $(\zeta_t)\sim \P^\beta$ from Section \ref{sec:excursion}.
 For this process let  $\wt N_\ell$ be the probability measure corresponding to duration $\ell$ excursions, and let $\wt {\ul N}$ be the excursion measure. Then 
\begin{equation}\label{excursion1}
\wt{\ul N} = \frac{1}{\beta\cdot \Gamma(1-1/\beta)}\cdot \int_0^\infty d\ell\ \ell^{-1-1/\beta} \wt N_\ell.
\end{equation}
See e.g.\ \cite[Section 3.1.1]{curien-kortchemski-looptree-def}. Through the time change $(\wt\zeta_{\Gamma(-\beta)t})\overset{d}{=}(\zeta_t)$, the excursion with duration $\Gamma(-\beta)\ell$ in $\ul N$ has a one-to-one correspondence to the excursion with duration $\ell$ in $\wt{\ul N}$. 

\begin{proposition}[{\cite{holden2022liouville}}]\label{lem:palm-levy-excursion}
Fix $\beta\in (1,2)$. Suppose $G\colon [0,\infty)\to [0,\infty)$ is twice continuously differentiable with $G(0)=G'(0)=0$. Then  for $\lambda\ge0$ we have
\begin{equation}\label{eq：Gpositive}
 \log \E\left( e^{-\lambda \tau_{-1}-\sum_{t< \tau_{-1},\Delta B_t\neq0} G(\Delta B_t)} \right)
		= \frac{1}{\beta \Gamma(1-1/\beta)} \int_0^\infty \frac{d\ell}{\ell^{1+1/\beta}}\left( e^{-\lambda \ell}\,\E\left( e^{-\sum_{t\le 1,\Delta b_t\neq0} G(\ell^{1/\beta}\Delta b_t) } \right) - 1\right),
\end{equation}
where  the first (resp.\ second) expectation is taken with respect to $(B_t)$ sampled from $\wt\P^\beta$ (resp.\ $(b_t)$ sampled from $\wt N_1$), and 
 $\Delta B_t:= B_t - B_{t-}$  (resp. $\Delta b_t:= b_t - b_{t-}$).
Furthermore,  suppose $\lambda>0$ and $\rho(\lambda)>0$ are such that $\lambda = \frac{1}{\Gamma(-\beta)} \int_0^\infty \frac{dy}{y^{1+\beta}}\,(e^{-\rho(\lambda) y - G(y)}-1+\rho(\lambda) y)$. Then
	\begin{align}
		\label{eq44}
		-\rho(\lambda) = \log \E\left( e^{-\lambda \tau_{-1}-\sum_{t< \tau_{-1},\Delta B_t\neq0} G(\Delta B_t)} \right). 
	\end{align}
\end{proposition}
\begin{proof}
 Eq.~\eqref{eq44} is exactly \cite[Lemma 4.2]{holden2022liouville}. Eq.~\eqref{eq：Gpositive} is an intermediate step in the proof of that lemma; see the indented equation right below  \cite[Eq. (4-6)]{holden2022liouville}.
\end{proof}

\subsection{Proof of Propositions \ref{levy1} and~\ref{levy2}}\label{sec:solve-levy1}
We first treat the case $\kappa\in (\frac{8}{3},4)$, where the relevant L\'evy process has parameter $\beta=\frac{4}{\kappa}+\frac{1}{2}$.  The key to our proof is the following proposition, which comes from an involved calculation.
\begin{proposition}\label{cal1} 
For $\mu>0$, let $g(x):=\QD(1)^\#[e^{-x^2\mu A}]$. Recall $\underline{\wt N}$ from~\eqref{excursion1}. Set
\begin{equation}\label{eq:defu}
u(s):=\int \Big(e^{-\frac{T(e)}{\Gamma(-\beta)}s} \prod_{x\in J_e} g(x)-1\Big) \wt{\ul N} (d e) \quad \textrm{for } s\ge 0.
\end{equation}
Then $u(s)$ is finite for all $s\ge0$. Furthermore, we have
\begin{equation*}
u(0)=-M,\quad u'(0)=-\frac{2M}{L},\quad  \textrm{and}\quad  u''(0)=\frac{M}{2L^2}\left(\frac{64}{\kappa^2}-1\right),
\end{equation*}
where  $M=\sqrt{\frac{\mu}{\sin\frac{\gamma^2\pi}{4}}}$ and $L=-\frac{2^{1-\frac{4}{\kappa}}M^{\frac{4}{\kappa}+\frac{1}{2}}}{\Gamma(\frac{4}{\kappa})}\sqrt{2\pi}\frac{\pi}{\cos\frac{4\pi}{\kappa}}$.
\end{proposition}

To prove Proposition~\ref{cal1} we need the following crucial lemma.

\begin{lemma}\label{lem-6}
Let 
\begin{equation}\label{4}
\lambda(\sigma) = \int_0^\infty \frac{dy}{y^{1+\beta}}\,(e^{-\sigma y}g(y)-1+\sigma y) \quad \textrm{for }\sigma>0.
\end{equation}
Let $_2F_1(\cdot,\cdot;\cdot;\cdot)$ be the hypergeometric function; see Appendix~\ref{appendix:integration}.
Then\begin{equation}\label{6'}
\lambda(\sigma)=L\left(\frac{2M}{M+\sigma}\right)^{\frac{4}{\kappa}-\frac{1}{2}}\frac{\sigma-M}{\sigma+M}\,_2F_1\left(\frac{1}{2}+\frac{4}{\kappa},\frac{4}{\kappa}+\frac{3}{2};2;\frac{\sigma-M}{\sigma+M}\right).
\end{equation}
\end{lemma}
\begin{proof}
We prove Lemma~\ref{lem-6} based on Lemma~\ref{lem:bigint} via an analytic continuation of $\int_0^\infty \frac{dy}{y^{1+\beta}} e^{-\sigma y} g(y) y^{\frac12+\nu}$.
For $\nu \in \{z \in \C\: : \:  \Re z > \beta - \frac52\} \backslash \{ \beta - \frac 12, \beta - \frac32\}$ and $\sigma>0$, define 
\begin{equation*}
\Lambda_\sigma(\nu) = \int_0^1 \frac{dy}{y^{1+\beta}}\,(e^{-\sigma y}g(y)-1+\sigma y)y^{\frac{1}{2}+\nu}+\int_1^\infty \frac{dy}{y^{1+\beta}}e^{-\sigma y}g(y)y^{\frac{1}{2}+\nu}-\frac{1}{-\frac{1}{2}+\beta-\nu}+\frac{\sigma}{\beta-\frac{3}{2}-\nu}.
\end{equation*}
We evaluate $\Lambda_\sigma(-\frac12)$ in two ways.
On the one hand, when $\Re \nu \in (\beta - \frac52, \beta - \frac32)$, this can be simplified as 
$\Lambda_\sigma (\nu) = \int_0^\infty \frac{dy}{y^{1+\beta}}\,(e^{-\sigma y}g(y)-1+\sigma y)y^{\frac{1}{2}+\nu}$, hence $\Lambda_\sigma(-\frac12)=\lambda(\sigma)$.
On the other hand, when $\Re \nu > \beta -\frac12$ we have $\Lambda_\sigma(\nu) = \int_0^\infty \frac{dy}{y^{1+\beta}} e^{-\sigma y} g(y) y^{\frac12+\nu}$. By  Proposition~\ref{prop-QD-area} we have $g(x)=\QD(1)^\#[e^{-x^2\mu A}]=\ol K_{\frac{4}{\gamma^2}}(Mx)$, hence by Lemma~\ref{lem:bigint} for $\Re \nu > \beta -\frac12$ we have
\begin{equation}\label{5'}
\Lambda_\sigma(\nu)=\frac{2^{1-\frac{4}{\kappa}}M^{\frac{4}{\kappa}}}{\Gamma(\frac{4}{\kappa})}\frac{\sqrt{\pi}(2M)^{\frac{4}{\kappa}}}{(M+\sigma)^{\frac{4}{\kappa}+\nu}}\frac{\Gamma(\nu+\frac{4}{\kappa})\Gamma(\nu-\frac{4}{\kappa})}{\Gamma(\nu+\frac{1}{2})}\,_2F_1\left(\nu+\frac{4}{\kappa},\frac{4}{\kappa}+\frac{1}{2};\nu+\frac{1}{2};\frac{\sigma-M}{\sigma+M}\right)
\end{equation}
Since the right hand side of~\eqref{5'} is meromorphic,~\eqref{5'} extends to the domain $\{z \in \C :  \Re z > \beta - \frac52\} \backslash \{ \beta - \frac 12, \beta - \frac32\}$. Using  $\lim_{c\to0}\frac{_2F_1(a,b;c;z)}{\Gamma(c)}=abz\,_2F_1(a+1,b+1;2;z)$ and simplifying, we see that $\Lambda_\sigma(-\frac12)$ agrees with the right hand side of~\eqref{6'}, as desired.
\end{proof}

\begin{proof}[Proof of Proposition~\ref{cal1}]
Set $G(x)=-\log g(x)$; note that $G\colon [0,\infty)\to [0,\infty)$ is twice continuously differentiable with $G(0)=G'(0)=0$. For $s\ge0$ we have 
\begin{align}
u(s)&=\frac{1}{\beta \Gamma(1-1/\beta)} \int_0^\infty \frac{d\ell}{\ell^{1+1/\beta}}\left( e^{-\frac{s}{\Gamma(-\beta)} \ell}\,\E\left(\prod_{t<1,\Delta b_t\neq0} g(\ell^{1/\beta}\Delta b_t)) \right) - 1\right)\label{equ10}\\ 
    &= \log \E\left( e^{-\frac{s}{\Gamma(-\beta)} \tau_{-1}-\sum_{t< \tau_{-1},\Delta B_t\neq0} G(\Delta B_t)} \right),  \label{equ10'}  
\end{align}
where~\eqref{equ10} is by \eqref{eq:defu}, and \eqref{excursion1} and~\eqref{equ10'} is by~\eqref{eq：Gpositive}.
From~\eqref{equ10'} we see that $u$ is  finite and continuous on $[0,\infty)$.

Now we use Lemma~\ref{lem-6} to compute $u(0)$, $u'(0)$ and $u''(0)$. For $\lambda=\lambda(\sigma)$ defined in \eqref{4},  since $_2F_1(a,b;c;0)=1$, by~\eqref{6'} in Lemma~\ref{lem-6} we have $\lambda (M) = 0$ and $\lambda'(M)> 0$.
Therefore, for some $\lambda_0>0$, there is a continuous inverse function $\sigma: [0,\lambda_0) \to [M,+\infty)$ of $\lambda(\cdot)$ near $M$.  Namely,  $\sigma(0) = M$, and for each $\lambda \in [0,\lambda_0)$, we see that \eqref{4} holds with $\sigma = \sigma(\lambda)\in[M,+\infty)$. 

Fix $\lambda\in (0,\lambda_0)$. Since $\sigma=\sigma(\lambda)$ satisfies \eqref{4}, by~\eqref{eq44} in Proposition \ref{lem:palm-levy-excursion} we have
	\begin{align}\label{equ9}
		-\sigma(\lambda) &= \log \E\left( e^{-\frac{\lambda}{\Gamma(-\beta)} \tau_{-1}-\sum_{t< \tau_{-1},\Delta B_t\neq0} G(\Delta B_t)} \right)=u(\lambda),
	\end{align}
where the last equality is due to \eqref{equ10}. By continuity of $u$ and $\sigma$ at $0$, we have $u(0)=-\sigma(0)$, which equals $-M$.
We compute $\sigma'(0)$ and $\sigma''(0)$ using \eqref{6'}. For $\sigma'(0)$,
we have $\lambda'(M)\sigma'(0)=1$ hence $ \sigma'(0)=\frac{2M}{L}$.
For $\sigma''(0)$, using $\frac{d}{dz}\,_2F_1(a,b;c;z)\big|_{z=0}=\frac{ab}{c}$, we have 
\begin{equation*}
0=2\cdot \frac{\sigma'(0)}{2M}\cdot \frac{1}{2}\left(\frac{1}{2}+\frac{4}{\kappa}\right)\left(\frac{3}{2}+\frac{4}{\kappa}\right)\frac{\sigma'(0)}{2M}+\frac{\sigma''(0)}{2M}-\frac{\sigma'(0)^2}{2M^2}\left(\frac{1}{2}+\frac{4}{\kappa}\right).
\end{equation*}
Therefore $\sigma''(0)=\frac{\sigma'(0)^2}{2M}\left(\frac{1}{2}+\frac{4}{\kappa}\right)\left(\frac{1}{2}-\frac{4}{\kappa}\right)=-\frac{M}{2L^2}\left(\frac{64}{\kappa^2}-1\right)$.
Since $u'(0)=-\sigma'(0)$ and $u''(0)=-\sigma''(0)$ we conclude the proof.
\end{proof}

As a byproduct of the proof of Proposition~\ref{cal1}, we get the following proposition, which serves as a basic input for the calculations in Sections~\ref{sec-welding} and~\ref{sec:3-pt}.  
\begin{proposition}\label{l}
 Let $\kappa\in (\frac{8}3,4)$, $\gamma=\sqrt{\kappa}$ and $\beta=\frac{4}{\kappa}+\frac{1}{2}$.
 Fix $\ell>0$. Suppose $\zeta$ is a $\beta$-stable L\'evy process.	
 Let  $ (x_i)_{i \geq 1}$ be the collection of sizes of the jumps of $\zeta$ on $[0,\tau_{-\ell}]$ sorted in decreasing order.
  	Let $(A_i)_{i\ge 1}$ be i.i.d. random variables  independent of $(x_i)_{i\ge 1}$, whose law is given by the quantum area of a sample from $\QD(1)^\#$. Then
    \begin{equation}\label{eq:area-annulus-imp}
 		\E[e^{-\mu \sum x_i^2 A_i}] = e^{-\ell\sqrt{\mu/\sin(\pi\gamma^2/4)}} \quad \text{for }\mu\geq0.
 	\end{equation}
\end{proposition}
\begin{proof}
In the proof of Proposition~\ref{cal1}, setting $s=0$ in~\eqref{equ10} gives $u(0) = \log \E \left[\prod_{t< \tau_{-1},\Delta B_t\neq0} g(\Delta B_t) \right]$.
Since  $u(0)=-M=-\sqrt{\frac{\mu}{\sin\frac{\gamma^2\pi}{4}}}$ by Proposition~\ref{cal1}, we get the $\ell=1$ case of $\E[ e^{-\mu\sum x_i^2 A_i}]=e^{-\ell\sqrt{\mu/\sin(\pi\gamma^2/4)}}$. The claim for general $\ell$ follows from scaling.
\end{proof}

\noindent  The next lemma ensures that $u^{(k)}(0)= \int (-1)^k\frac{T(e)^k}{\Gamma(-\beta)^k} \prod_{x\in J_e} g(x) \wt{\ul N} (d e)$ which gives Proposition~\ref{levy1}.
\begin{lemma}\label{lem:change}
Suppose $F(x):[0,\infty)\to[0,\infty)$, $\alpha\in(1,2)$ and $u(\lambda):=\int_0^\infty x^{-\alpha} (e^{-\lambda x}F(x)-1)dx$ exists for $\lambda\ge0$. If $u:[0, \infty) \to \R$ is twice-differentiable at $0$,
then $\int_0^\infty x^{-\alpha+1} F(x)dx$ and $\int_0^\infty x^{-\alpha+2}  F(x)dx$ are finite and agree with $-u'(0)$ and $u''(0)$ respectively.
\end{lemma}
\begin{proof}
Note that $\frac{u(0)-u(\lambda)}{\lambda}=\int_0^\infty x^{-\alpha} F(x)\frac{1-e^{-\lambda x}}{\lambda}dx$.
Taking the $\lambda\to0$ liminf and using the Fatou's lemma we find
$-u'(0)\ge\int_0^\infty x^{-\alpha+1} F(x)dx$, hence $\int_0^\infty x^{-\alpha+1}  F(x)dx$ is finite. Now since $\frac{1-e^{-\lambda x}}{\lambda}<x$, the finiteness of $\int_0^\infty x^{-\alpha+1} F(x)dx$ and the dominated convergence theorem then implies $-u'(0)=\int_0^\infty x^{-\alpha+1} F(x)dx$.

Similarly consider $\frac{4}{\lambda^2}(u(0)+u(\lambda)-2u(\frac{\lambda}{2}))$, we get the finiteness of $\int_0^\infty x^{-\alpha+2}  F(x)dx$. Then $\frac{4}{\lambda^2}(1+e^{-\lambda x}-2e^{\frac{1}{2}\lambda x})<x^2$ and the dominated convergence theorem gives $u''(0)=\int_0^\infty x^{-\alpha+2} F(x)dx$.
\end{proof}

\begin{proof}[Proof of Proposition \ref{levy1}]
By Proposition \ref{cal1}, the function $u(x):[0,\infty)\to\R$ defined in \eqref{eq:defu} is twice differentiable at $0$. For $k=1,2$, by \eqref{equ10} and choosing $F(x)=\E\left(\prod_{t<1,\Delta b_t\neq0} g(x^{1/\beta}\Delta b_t)) \right)$ in Lemma~\ref{lem:change}, we have $u^{(k)}(0)= \int (-1)^k\frac{T(e)^k}{\Gamma(-\beta)^k} \prod_{x\in J_e} g(x) \wt{\ul N} (d e)$. The scaling relation between $\ul N$ and $\wt{\ul N}$ yields $\int (-1)^kT(e)^k \prod_{x\in J_e} g(x) \ul N (d e)=u^{(k)}(0)$, whose value is given in Proposition~\ref{cal1}.
Now by  Lemma~\ref{palm}  and Proposition~\ref{l} we get
$\E[\tau_{-\ell}  e^{-\mu\sum x_i^2 A_i}]=-\ell \E[e^{-\mu\sum x_i^2 A_i}] u'(0)=\ell \frac{2M}{L}e^{-M\ell}$. Furthermore, $\E[\tau_{-\ell}^{2} e^{-\mu\sum x_i^2 A_i}]=\ell^2\frac{4M^2}{L^2}e^{-M\ell}+\ell\frac{M}{2L^2}(\frac{64}{\kappa^2}-1)e^{-M\ell}$.
Since $\E \tau_{-\ell}^{-1}=\frac{\pi}{\sin(-\pi\beta)}\ell^{-\beta}$ by Lemma \ref{tau},
Proposition \ref{levy1} follows from Proposition~\ref{cm}.
\end{proof}

The proof of the non-simple case, i.e. Proposition \ref{levy2}, is similar to that of Proposition \ref{levy1}. Now the relevant L\'evy process has parameter $\beta'=\frac{4}{\kappa'}+\frac{1}{2}$. 
The following proposition is  the counterpart of Proposition~\ref{cal1}.

\begin{proposition}\label{cal2}
For $\mu>0$, let $g(x):=\GQD(1)^\#[e^{-x^{8/\kappa'}\mu A}]$.
Let \begin{equation*}
u(s):=\int \Big(e^{-\frac{T(e)}{\Gamma(-\beta)}s} \prod_{x\in J_e} g(x)-1\Big) \wt{ \ul N }(d e)\quad for\ s\ge0.
\end{equation*}
The $u(s)$ defined above is finite for $s\ge0$. Furthermore, we have
\begin{equation*}
u(0)=-M',\quad  u'(0)=-\frac{2M'}{L'},\quad \textrm{and}\quad  u''(0)=\frac{M'}{2L'^2}\left(\frac{64}{\kappa'^2}-1\right)
\end{equation*}
for $M'=2\left(\frac{\mu}{4\sin\frac{\gamma^2\pi}{4}}\right)^{\kappa'/8}$ and $L'=-\frac{2^{1-\frac{4}{\kappa}}M'^{\frac{4}{\kappa'}+\frac{1}{2}}}{\Gamma(\frac{4}{\kappa'})}\sqrt{2\pi}\frac{\pi}{\cos\frac{4\pi}{\kappa'}}$.
\end{proposition}
\begin{proof}
This follows from the calculations in the proof of Proposition \ref{cal1}  with $\kappa$ replaced by $\kappa'$. 
The counterpart of the key Lemma~\ref{lem-6} is
\begin{lemma}\label{lem:7}
For $g(x)$ as above, let $\lambda(\sigma) = \int_0^\infty \frac{dy}{y^{1+\beta'}}\,(e^{-\sigma y}g(y)-1+\sigma y)$ for $\sigma>0$.
Then\begin{equation*}
\lambda(\sigma)=L'\left(\frac{2M'}{M'+\sigma}\right)^{\frac{4}{\kappa'}-\frac{1}{2}}\frac{\sigma-M'}{\sigma+M'}\,_2F_1\left(\frac{1}{2}+\frac{4}{\kappa'},\frac{4}{\kappa'}+\frac{3}{2};2;\frac{\sigma-M'}{\sigma+M'}\right).
\end{equation*}
\end{lemma}
Since $g(x)=\ol K_{4/\kappa'}(M'x)$ from Propostion~\ref{prop:ns-area-law},
Lemma~\ref{lem:7} then follows from Lemma~\ref{lem:bigint} and the same analytic continuation argument as in Lemma~\ref{lem-6}.
\end{proof}

The exact same proof for Proposition~\ref{l} yields its non-simple counterpart. 
\begin{proposition}\label{l-nonsimple}
Let $\kappa'\in (4,8)$, $\gamma=\sqrt{16/\kappa'}$ and $\beta=\frac{4}{\kappa'}+\frac{1}{2}$. Let  $ (x_i)_{i \geq 1}$ and
$(A_i)_{i\ge 1}$ be defined in the same way as in Proposition~\ref{l} except $\QD(1)^\#$ is replaced by $\GQD(1)^\#$.  Then
\begin{equation}\label{eq:area-annulus-imp-2}
 		\E[e^{-\mu \sum x_i^2 A_i}] = \exp\{-2\ell\left(\frac{\mu}{4\sin(\pi\gamma^2/4)}\right)^{\kappa'/8}\} \quad \text{for }\mu\geq0.
 	\end{equation}
\end{proposition}

\begin{proof}[Proof of Proposition \ref{levy2}]
Similarly to the simple case, by  Lemma~\ref{lem:change}, for $k=1$ and 2, we have $\int T(e)^k \prod_{x\in J_e} g(x) \ul N (d e)=(-1)^ku^{(k)}(0)$, which is computed in Proposition~\ref{cal2}.
The rest of the argument is the same by combining Proposition \ref{cm}, Lemma~\ref{palm} and Proposition~\ref{l-nonsimple}.
\end{proof}

\section{Quantum annulus and quantum pair of pants}\label{sec-welding}
In this section, we introduce and study the quantum annulus and the quantum pair of pants. 
The key outputs results are a conformal welding result for the quantum pair of pants (Theorem~\ref{thm-weld-QA-3}) and the joint law of the area and boundary lengths of the quantum pair of pants (Theorem~\ref{thm-QA3-laplace}).   
These results will be used in Section~\ref{sec:3-pt} to prove Theorems \ref{thm:nesting}  for $\kappa \in (8/3, 4]$.
In Section~\ref{subsec:def-QA-QP} we define the quantum annulus and quantum pair of pants. In Section~\ref{subsec:3-pt-welding} we prove several conformal welding results including Theorem~\ref{thm-weld-QA-3}. 
In Section~\ref{subsec-QA-law} we derive the law of the area and boundary lengths of the quantum annulus.
In Section~\ref{subsec-QP-law} we prove Theorem~\ref{thm-QA3-laplace}.
Throughout this section we assume $\kappa\in (\frac83, 4)$ and $\gamma=\sqrt{\kappa}$.

\subsection{Definitions}\label{subsec:def-QA-QP}
We first review the definition of the quantum annulus given in \cite{ACSW24a}. Fix $a>0$. 
Suppose $(D, h, z)$ is an embedding of a sample from $\QD_{1,0}(a)$. Let $\Gamma$ be a $\CLE_\kappa$ on $D$ independent of $h$. 
Recall from Section~\ref{subsec:MSW} that $\QD_{1,0}\otimes \CLE_\kappa$ is the law of the decorated quantum surface  $(D, h,  \Gamma, z)/{\sim_\gamma}$. 
Let  $\eta$ be the outermost loop of $\Gamma$ surrounding $z$ and let $\frk l_\eta$ be the quantum length of  $\eta$. 
According to \cite[Lemma 5.1]{ACSW24a},   the disintegration of $\QD_{1,0}(a)\otimes\CLE_\kappa$ over $\frk l_\eta$ exists, which we denote by   
\(\{\QD_{1,0}(a)\otimes \CLE_\kappa(\ell): \ell\in (0,\infty)\}\).
\begin{definition}[Quantum annulus]\label{def-QA}
	Given $a>0$ and $(D,h,\Gamma, \eta, z)$ defined right above, let $A_{\eta}$ be the non-simply-connected component of $D\setminus \eta$. 
	For $b>0$, 	let $\wt \QA(a,b)$ be the law of the quantum surface  $(A_\eta, h)/{\sim_\gamma}$ under the measure $\QD_{1,0}(a)\otimes \CLE_\kappa(b)$. 
	Let $\QA(a,b)$ be such that 
	\begin{equation}\label{eq:QA}
		b|\QD_{1,0}(b)|\QA(a,b)= \wt \QA(a,b).
	\end{equation}
	We call a sample from $\QA(a,b)$ a \emph{quantum annulus}.
\end{definition}

As discussed in \cite[Remark 5.3]{ACSW24a}, Definition~\ref{def-QA} defines $\QA(a,b)$ only for Lebesgue-almost-every $b>0$, but an equivalent definition is given in \cite[Theorem 1.4]{ars-annuli} for \emph{all} $a,b>0$.

We now introduce the quantum pair of pants.
Suppose $(\C, h, z_1,z_2,z_3)$ is an embedding of a sample from $\QS_3$. Let $\Gamma$ be the full-plane $\CLE_\kappa$ on $\C$ independent of $h$. Let 
$\QS_3\otimes \CLE_\kappa$ denote the law of the decorated quantum surface 
$(\C, h,  \Gamma,  z_1,z_2,z_3)/{\sim_\gamma}$. For $1\le i\le 3$, let  $\eta_i$ be the outermost loop in $\Gamma$ separating $z_i$ from the other two points. Namely,
for each loop $\eta$ separating $z_1$ and $\{z_2,z_3\}$, let $D_\eta$ be the simply connected component of $\hat \C\setminus \eta$
containing $z_1$. (The domain $D_\eta$ could contain $\infty$.) There exists a loop $\eta_1$ with the largest $D_\eta$, which we call the \emph{outermost loop separating} $z_1$ and $z_2,z_3$. For $i=2,3$, we similarly let $\eta_i$ be the outermost loop in $\Gamma$ separating $z_i$ and $\{z_1,z_2,z_3\}\setminus \{z_i \}$.

Let $\frk l_i$ be the quantum boundary length of  $\eta_i$. In light of \cite[Lemma 5.1]{ACSW24a}, we similarly conclude that $\QS_3\otimes \CLE_\kappa[(\frk l_1,\frk l_2,\frk l_3)\in E]=0$ for any Borel set $E\subset \R^3$ of zero Lebesgue measure. This ensures the existence of the disintegration over $\frk l_1,\frk l_2,\frk l_3$,  see \cite[Lemma 2.2]{ACSW24a} for details.
Then we define the quantum pair of pants in the same way as for the quantum annulus.
\begin{definition}[Quantum pair of pants]\label{def-QP}
	Let $\{\QS_3\otimes \CLE_\kappa(\ell_1,\ell_2,\ell_3):(\ell_1,\ell_2,\ell_3)\in (0,\infty)^3 \}$ 
	be the disintegration of $\QS_3\otimes\CLE_\kappa$ over $(\frk l_1,\frk l_2,\frk l_3)$. 
	Let $P_{\eta_1,\eta_2,\eta_3}$ be the connected component of $\hat\C\setminus (\cup_{i=1}^3 \eta_i)$ which is not simply-connected.  For $\ell_1,\ell_2,\ell_3>0$, 
	let $\wt \QP(\ell_1,\ell_2,\ell_3)$ be the law of the quantum surface  $(P_{\eta_1,\eta_2,\eta_3}, h)/{\sim_\gamma}$ under  $\QS_3\otimes \CLE_\kappa(\ell_1,\ell_2,\ell_3)$.  Let 
	$$\QP(\ell_1,\ell_2,\ell_3)    = \left(\prod_{i=1}^3 \ell_i |\QD_{1,0}(\ell_i)| \right)^{-1} \wt \QP(\ell_1,\ell_2,\ell_3).$$
	We call a sample from $\QP(\ell_1,\ell_2,\ell_3)$ a \emph{quantum pair of pants}.
\end{definition}

Similarly to $\QA(a,b)$, Definition~\ref{def-QP} only defines  $\QP(\ell_1,\ell_2,\ell_3)$ for Lebesgue-almost-every triple $(\ell_1, \ell_2, \ell_3) \in (0,\infty)^3$. In practice we will always integrate over all triples in $(0,\infty)^3$, so Definition~\ref{def-QP} is sufficient for our purposes, and we will omit the clause ``for almost every $\ell_1,\ell_2,\ell_3$''.

\subsection{Conformal weldings for \texorpdfstring{$\QA$}{ } and \texorpdfstring{$\QP$}{g}}\label{subsec:3-pt-welding}
We now give three conformal welding results (Proposition~\ref{prop-QA2-pt}, Theorem~\ref{thm-weld-QA-3}, Lemma~\ref{lem-QP-weld-disk})  involving the quantum annulus and quantum pair of pants. 
Recall the notion of uniform conformal welding from Section~\ref{subsec:weld-disk}.  For a sample from $\QA(a,b)\times\QD_{1,0}(b)$, the uniform conformal welding of the boundary components having quantum length $b$ gives a loop-decorated quantum surface with one marked point; denote its law by  $\mathrm{Weld}(\QA(a,b),\QD_{1,0}(b))$.

\begin{proposition}\label{prop-QA2-pt}
	For $a>0$, let $(D,h,\Gamma,z)$ be an embedding of a sample from $\QD_{1,0}(a)\otimes\CLE_\kappa$. Let $\eta$ be the outermost loop of $\Gamma$ surrounding $z$.
	Then the law of the decorated quantum surface $(D,h,\eta,z)/{\sim_\gamma}$ equals 
	\(\int_0^\infty  b \mathrm{Weld}(\QA(a,b),\QD_{1,0}(b)) \, db\).
\end{proposition}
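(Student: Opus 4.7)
My plan is to disintegrate $\QD_{1,0}(a)\otimes\CLE_\kappa$ over $b=\frk l_\eta$ and identify the conditional law as independent samples from $\QA(a,b)^\#$ and $\QD_{1,0}(b)^\#$ glued by uniform conformal welding. The starting point is Definition~\ref{def-QD}, which yields the measure-theoretic identity $\QD_{1,0}(\ell)=\QD(\ell)\cdot\mu_h$ on triples $(D,h,z)$: a sample from $\QD_{1,0}(a)\otimes\CLE_\kappa$ can be produced by first drawing $(D,h,\Gamma)\sim\QD(a)\otimes\CLE_\kappa$ and then weighting a bulk mark $z$ against the quantum area measure $\mu_h$. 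Since the $\CLE_\kappa$ gasket has zero quantum area almost surely, $z$ lies in the interior of a unique outermost loop, which must then be $\eta$.

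The main step is an application of Proposition~\ref{prop-msw}. For a factorized test function $f=W(A_\eta,h,\eta)\cdot G(D_\eta,h,z)$, summing over the possible choices of outermost loop containing $z$ gives
\[
\QD_{1,0}(a)\otimes\CLE_\kappa[f] = \QD(a)\otimes\CLE_\kappa\Bigl[\sum_{\eta' \text{ outermost}} W^{(\eta')} \int_{D_{\eta'}} G\, d\mu_h\Bigr],
\]
where $W^{(\eta')}$ is $W$ evaluated with $\eta$ replaced by $\eta'$. Conditioning on the outermost loop lengths $(\ell_j)$ and the exterior configuration, Proposition~\ref{prop-msw} identifies each inside surface as an independent $\QD(\ell_j)^\#$, whence $\QD_{1,0}(\ell)=\QD(\ell)\cdot\mu_h$ gives $\QD(\ell)^\#\bigl[\int G\,d\mu_h\bigr]=\frac{|\QD_{1,0}(\ell)|}{|\QD(\ell)|}\QD_{1,0}(\ell)^\#[G]$. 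The size-biasing by $|\QD_{1,0}(\ell)|/|\QD(\ell)|$ gets absorbed into the distribution of $\frk l_\eta$, reflecting that $\eta$ is selected proportionally to the quantum area of its interior, while the remaining factor identifies the conditional law of $(D_\eta,h,z)/{\sim_\gamma}$ given $\frk l_\eta=b$ and the outside data as $\QD_{1,0}(b)^\#$, independent of the outside. Combined with Definition~\ref{def-QA}, which by construction identifies $\wt\QA(a,b)=b|\QD_{1,0}(b)|\QA(a,b)$ as the disintegration of the annulus marginal, this yields
\[
\QD_{1,0}(a)\otimes\CLE_\kappa = \int_0^\infty b\,|\QD_{1,0}(b)|\,|\QA(a,b)|\,\bigl(\QA(a,b)^\#\otimes\QD_{1,0}(b)^\#\bigr)^{\text{welded}}\,db,
\]
which rearranges to $\int_0^\infty b\cdot\mathrm{Weld}(\QA(a,b),\QD_{1,0}(b))\,db$ provided the welding is \emph{uniform} in the sense of Section~\ref{subsec:weld-disk}.

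The uniform-welding verification is the main obstacle I anticipate. Neither $\QA(a,b)$ nor $\QD_{1,0}(b)$ carries a distinguished boundary point and $\eta$ has no intrinsic starting point, so reconstructing $(D,h,\eta,z)/{\sim_\gamma}$ from the two cut pieces is a priori defined only up to a boundary rotation; one must show that this rotation is uniform with respect to quantum length, as prescribed in Section~\ref{subsec:weld-disk}. My plan is to enrich the configuration by an auxiliary boundary point $p\in\eta$ sampled from $\nu_h|_\eta$, push the argument above through with $p$ tracked on both sides of the cut, and verify that conditionally on $\frk l_\eta=b$ the two rooted pieces $(A_\eta,h,p)/{\sim_\gamma}$ and $(D_\eta,h,z,p)/{\sim_\gamma}$ are independent with marginals equal to $\QA(a,b)^\#$ and $\QD_{1,0}(b)^\#$ decorated by independent uniform boundary samples from quantum length. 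Integrating $p$ out on both sides then recovers exactly the uniform welding measure and completes the proof.
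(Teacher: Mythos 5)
Your bookkeeping for the marked point is fine: adding an area-typical point to $\QD(a)$, noting the point falls inside a unique outermost loop, and converting the area-weighting of the inside piece into the size-biasing $|\QD_{1,0}(\ell)|/|\QD(\ell)|$ is a correct way to see why the inside surface should be $\QD_{1,0}(b)^\#$ and why the factor $b|\QD_{1,0}(b)|$ appears. But the argument has a genuine gap exactly at the point you flag as "the main obstacle," and the plan you sketch there does not close it. Proposition~\ref{prop-msw} only gives the conditional law, given the lengths $(\ell_i)$, of the quantum surfaces \emph{encircled} by the outermost loops; it says nothing about (a) conditional independence of those interiors from the annular region $(A_\eta,h)$, and, more importantly, (b) how the two pieces are glued together — i.e.\ that the decorated surface $(D,h,\eta,z)/{\sim_\gamma}$ is recovered from $(A_\eta,h)/{\sim_\gamma}$ and $(D_\eta,h,z)/{\sim_\gamma}$ by identifying their boundaries according to quantum length with a uniformly random rotation, and that this welding determines the interface. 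Your final paragraph ("verify that conditionally on $\frk l_\eta=b$ the two rooted pieces are independent with marginals $\QA(a,b)^\#$ and $\QD_{1,0}(b)^\#$ decorated by independent uniform boundary samples") is a restatement of the proposition, not a mechanism for proving it; nothing in Propositions~\ref{prop-ccm} or~\ref{prop-msw} (which only concern lengths and interiors) can produce this welding statement, which is of quantum-zipper type and requires conformal removability of the loop plus a computation of the conditional law of the gluing point.

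This is precisely why the paper does not argue directly from the MSW input. Its proof imports the welding structure from the SLE loop measure: Proposition~\ref{prop-loop-zipper} (built on Theorem~\ref{thm-loop2}, i.e.\ \cite[Theorem 1.3]{AHS-SLE-integrability}) gives the uniform conformal welding of two marked quantum disks along an $\SLE_\kappa^{\mathrm{loop}}$ loop, Theorem~\ref{cor-loop-equiv} transfers this to the CLE loop intensity measure, and Proposition~\ref{prop-CLE-markov} supplies the conditional CLEs inside and outside the distinguished loop. From these the paper first proves the pair-of-pants welding Theorem~\ref{thm-weld-QA-3} (via Proposition~\ref{thm-loop3}, Lemma~\ref{lem:QS3-CLE} and Proposition~\ref{prop-M2}) and then deduces Proposition~\ref{prop-QA2-pt} by welding a disk onto one boundary of the pair of pants and matching disintegrations, together with the definition of $\QA(a,b)$. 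To repair your argument you would need to invoke (some version of) this welding input rather than Proposition~\ref{prop-msw}; as written, the independence-from-the-exterior and the uniformity of the boundary identification are assumed, not proved.
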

\begin{proof}
This result was implicitly obtained in \cite{msw-cle-lqg}; see \cite[Proposition 5.4]{ACSW24a} for details.
\end{proof}

 We now give the analogous statement for the quantum pair of pants. For distinct $\ell_1,\ell_2,\ell_3$,
given quantum surfaces sampled from $\QD_{1,0}(\ell_1) \times \QD_{1,0}(\ell_2)\times \QD_{1,0}(\ell_3) \times \QP(\ell_1,\ell_2,\ell_3)$,
we can use uniform conformal welding to identify boundary components with the same quantum length, producing a quantum surface decorated with three loops and three points.  We denote its law by  
\begin{equation}\label{eq:weld-QP}
	\mathrm{Weld}\Big(\QD_{1,0}(\ell_1) , \QD_{1,0}(\ell_2), \QD_{1,0}(\ell_3), \QP(\ell_1, \ell_2, \ell_3) \Big).
\end{equation}

\begin{theorem}\label{thm-weld-QA-3}
	Sample $(\C, h, \Gamma, z_1, z_2, z_3)/{\sim_\gamma}$ from $\QS_3 \otimes \CLE_\kappa$. For $i=1,2,3$,  let $\eta_i$ be the outermost loop of $\Gamma$ around $z_i$ separating it from the other two points. Then the law of the decorated quantum surface $(\C, h, \eta_1, \eta_2, \eta_3,  z_1,z_2,z_3)/{\sim_\gamma}$ is	
	\eqb \label{eq-pants-welding}
	\iiint_{\R^3_+} \ell_1\ell_2\ell_3\mathrm{Weld}\Big( \QD_{1,0}(\ell_1), \QD_{1,0}(\ell_2), 
	\QD_{1,0}(\ell_3), \QP(\ell_1, \ell_2, \ell_3) \Big) \, d\ell_1 \, d\ell_2 \, d\ell_3.
	\eqe
\end{theorem}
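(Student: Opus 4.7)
The plan is to prove Theorem~\ref{thm-weld-QA-3} by cutting the three-marked quantum sphere along $\eta_1,\eta_2,\eta_3$ in sequence, reducing at each stage to the already established one-loop welding results (Theorem~\ref{thm-loop2} and Proposition~\ref{prop-QA2-pt}) combined with the conditional-independence structure for CLE decorations given by Proposition~\ref{prop-msw}. The disintegration defining $\QP(\ell_1,\ell_2,\ell_3)$, whose existence is secured by Lemma~\ref{lem:3-pt-length} and Lemma~\ref{lem:disint}, already packages the pair-of-pants piece; what remains to show is that, conditionally on $(\frk l_1,\frk l_2,\frk l_3)=(\ell_1,\ell_2,\ell_3)$, the three simply connected pieces $(D_i,h,z_i)/{\sim_\gamma}$ are independent samples from $\QD_{1,0}(\ell_i)^\#$, are jointly independent of $(P,h)/{\sim_\gamma}$, and that the identifications along the three common cycles are uniform, producing the Lebesgue factors $\ell_1\ell_2\ell_3$.

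For the first cut I would rewrite $\QS_3\otimes\CLE_\kappa$ as a sample from $\QS\otimes\CLE_\kappa$ augmented by three quantum-typical marked points drawn from $\mu_h^{\otimes 3}$, and use Theorem~\ref{thm-loop2} (combined with Theorem~\ref{cor-loop-equiv} to pass between $\wt\SLE_\kappa^{\mathrm{loop}}$ and $\SLE_\kappa^{\mathrm{loop}}$) to realize the loop $\eta_1$ of $\Gamma$ as the interface of a two-disk welding. After restricting to the event that the sampled loop contains $z_1$ but not $z_2,z_3$ and is outermost with this property, and then using the CLE domain Markov property across this loop, the cut along $\eta_1$ yields, conditionally on $\frk l_1=\ell_1$, a one-marked quantum disk $(D_1,h,z_1)\sim\QD_{1,0}(\ell_1)^\#$ welded uniformly (with a factor $\ell_1$) to a two-marked CLE-decorated quantum disk $(D_1',h,z_2,z_3)$, the CLE decorations on the two sides being conditionally independent.

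For the second and third cuts I would decompose this two-marked CLE-decorated disk by induction on the depth of nesting of the outermost loops around $z_2$ and $z_3$. At each level, Proposition~\ref{prop-msw} describes the CLE-decorated disk in terms of the collection of its outermost loops: the enclosed surfaces are conditionally independent samples from $\QD(\ell)^\#$ given the loop lengths, and by the quantum-typicality of a marked interior point such a disk becomes a sample from $\QD_{1,0}(\ell)^\#$ after sampling the point from $\mu_h^\#$. Either the outermost loop containing $z_2$ is distinct from the outermost loop containing $z_3$, in which case these two loops are $\eta_2$ and $\eta_3$ and the complementary region inherits the pair-of-pants structure, or a single outermost loop encloses both points, in which case we recurse into the enclosed two-marked CLE-decorated quantum disk. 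The recursion terminates almost surely in finite depth, and at the terminating level Proposition~\ref{prop-QA2-pt} applied to the two inner disks produces the uniform-welding factors $\ell_2$ and $\ell_3$ attached to $\QD_{1,0}(\ell_2)^\#$ and $\QD_{1,0}(\ell_3)^\#$.

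The main obstacle is the bookkeeping in this inductive second step: the outermost loops around $z_2$ and $z_3$ may coincide at many nesting levels before separating into $\eta_2$ and $\eta_3$, so the argument must sum over possible nesting depths while tracking the welding data and conditional independence at each level, using Proposition~\ref{prop-msw} to glue the annular gasket regions with the interior disks. Once this inductive decomposition is combined with the first cut, the resulting measure on decorated surfaces matches the right-hand side of~\eqref{eq-pants-welding} by unwinding the definitions of $\QP(\ell_1,\ell_2,\ell_3)$, $\QD_{1,0}(\ell_i)$, and the uniform conformal welding.
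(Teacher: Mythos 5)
There are two genuine gaps in your plan. The first is circularity at the crucial point: in this paper Proposition~\ref{prop-QA2-pt} is \emph{deduced from} Theorem~\ref{thm-weld-QA-3} (see the opening of Section~\ref{subsec:3-pt-welding}), so you cannot invoke it to produce the uniform-welding factors $\ell_2$ and $\ell_3$ for the cuts along $\eta_2,\eta_3$ without an independent proof. More importantly, the input you would substitute for it, Proposition~\ref{prop-msw}, only gives the conditional law of the surfaces \emph{encircled} by outermost loops given their lengths; it says nothing about the boundary identification being uniform, and that uniformity is exactly the hard content of the theorem. The only available source of uniformity is the sphere-level welding of Theorem~\ref{thm-loop2}/Proposition~\ref{prop-loop-zipper}, and your plan uses it only once (for $\eta_1$); the disk-internal decomposition you propose for $\eta_2,\eta_3$ has no welding input at all. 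The paper resolves this by running the two-point sphere welding three times, once per marked point: Lemma~\ref{lem:QS3-CLE} samples the third point in the annulus between the separating loop and the next loop inside, which forces the counting-measure loop to equal $\eta_3$ while keeping the conditioning event measurable with respect to the \emph{other} side of the loop, so that Proposition~\ref{prop-M2} gives the clean conditional law $\QD_{1,1}(\mathfrak l_3)^\#$ for the cut-out disk; applying this symmetrically at $z_1,z_2,z_3$ yields conditional independence and uniform welding for all three interfaces at once.

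The second gap is a bookkeeping inconsistency in your first cut. The event that the counting-measure loop ``is outermost'' around $z_1$ is measurable with respect to the side containing $z_2,z_3$: it is precisely the event that no loop of $\Gamma|_{D_{\eta_1}^c}$ encloses both $z_2$ and $z_3$. So after your restriction, the complementary piece is \emph{not} an unrestricted two-marked CLE-decorated quantum disk but that disk restricted to this event — and on this event the outermost loops around $z_2$ and $z_3$ are automatically distinct (and equal to $\eta_2,\eta_3$), so the branch ``a single outermost loop encloses both points, in which case we recurse'' can never occur. Conversely, if you drop the outermost-ness restriction so that the recursion is meaningful, the welding interface is no longer $\eta_1$ and the identification of the $z_1$-side with $\QD_{1,0}(\ell_1)^\#$ cut along $\eta_1$ fails. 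Either way the induction over nesting depth, which you flag as the main technical step, rests on an incorrect description of the conditional structure, and the final ``unwinding of the definition of $\QP$'' (Definition~\ref{def-QP}, which is a disintegration of $\QS_3\otimes\CLE_\kappa$) would require exactly the conditional-independence-plus-uniform-welding statement that has not been established for $\eta_2$ and $\eta_3$.
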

Theorem~\ref{thm-weld-QA-3} follows quickly from Proposition~\ref{prop-M2} below. For $i=1,2,3$,
let $D_{\eta_i}$ and $D^c_{\eta_1}$ be the two connected components of $\C\setminus \eta_i$, where $D_{\eta_i}$ is the component containing $z_i$. 

\begin{proposition}\label{prop-M2}
In the setting of Theorem~\ref{thm-weld-QA-3}, given $(h,\eta_1,\eta_2,\eta_3)$ let $p_3$ be a point sampled from the probability measure proportional to the quantum length measure of $\eta_3$. Then conditioned on $(D^c_{\eta_3}, h,\eta_{1}, \eta_{2}, z_1, z_2,p_3)/{\sim_\gamma}$, the conditional law of $(D_{\eta_3} , h,z_3, p_3)/{\sim_\gamma}$ is $\QD_{1,1}(\mathfrak l_3)^\#$, where $\mathfrak l_3$ is the quantum length of $\eta_3$. 
\end{proposition}

\begin{proof}
Let $\mathbb F$ be a measure on $H^{-1}(\C)$ such that the law of $(\C, h)/{\sim_\gamma}$ is $\QS$
if $h$ is sampled from $\mathbb F$. Sample $(h,\Gamma,\eta,z_1,z_2,z_3)$  from $1_E\prod_{i=1}^{3}\mu_h(d^2z_i)\, \mathbb F(dh)\,\mathrm{Count}_{\Gamma}(d \eta)\CLE_\kappa^\C(d\Gamma)$, where $E$ is the event that $\eta$ separates $z_3$ from $\{z_1,z_2\}$. Let $\mathfrak \ell$ be the quantum length of $\eta$.
Let  $p$ be sampled on $\eta$ from the probability measure proportional to the quantum length of $\eta$.  Let $D_{\eta}$ and $D^c_{\eta}$ be the two components of $\C\setminus \eta$ where $D_{\eta}$ contains $z_3$. 
By \cite[Theorem 7.1]{ACSW24a}, conditioned on $(D^c_{\eta}, h,\Gamma|_{D^c_{\eta}}, z_1, z_2,p)/{\sim_\gamma}$, the conditional law of $(D_{\eta} , h,z_3, p)/{\sim_\gamma}$ is $\QD_{1,1}(\mathfrak l)^\#$.

In this sample space we still define $\eta_1,\eta_2,\eta_3$ in terms of  $\Gamma, z_1,z_2,z_3$ as in Proposition~\ref{prop-M2}. 
Let $F$ be the event that $z_1$ and $z_2$ are surrounded by distinct  outermost loops of $\Gamma|_{D^c_{\eta}}$. 
Then $E\cap F$ occurs if and only if $\eta=\eta_3$.  Therefore under $1_F\cdot 1_E\prod_{i=1}^{3}\mu_h(d^2z_i) \mathbb F(dh)\mathrm{Count}_{\Gamma}(d \eta)\CLE_\kappa^\C(d\Gamma)$ the law of $(\C, h,\Gamma,z_1,z_2,z_3)/{\sim_\gamma}$ is simply $\QS_3\otimes \CLE_\kappa$ as in Proposition~\ref{prop-M2}.
Since $F$ only depends on $(D^c_{\eta}, h,\Gamma|_{D^c_{\eta}}, z_1, z_2,p)/{\sim_\gamma}$, even under the further conditioning on $F$, the conditional law of $(D_{\eta} , h,z_3, p)/{\sim_\gamma}$ is $\QD_{1,1}(\mathfrak l)^\#$. Since under this conditioning $\eta=\eta_3$, we conclude the proof.
\end{proof}

\begin{proof}[Proof of Theorem~\ref{thm-weld-QA-3}]
Note that Proposition~\ref{prop-M2} still holds if $D_{\eta_1}$ is replaced by $D_{\eta_2}$ or $D_{\eta_3}$. 
Recall $\wt \QP$ from the disintegration in Definition~\ref{def-QP} . By repeatedly applying Proposition~\ref{prop-M2}, we see that 
the law of  $(\C, \phi, \eta_1, \eta_2, \eta_3, z_1,z_2,z_3)/{\sim_\gamma}$ is	
\[	
\iiint_{\R^3_+} \mathrm{Weld}\Big( \QD_{1,0}(\ell_1)^{\#}, \QD_{1,0}(\ell_2)^{\#}, 
\QD_{1,0}(\ell_3)^{\#}, 
\wt\QP(\ell_1, \ell_2, \ell_3) \Big) \, d\ell_1 \, d\ell_2 \, d\ell_3.
\]
Now Theorem~\ref{thm-weld-QA-3} follows from $\QD_{1,0}(\ell) =|\QD_{1,0}(\ell)| \QD_{1,0}(\ell)^{\#}$.
\end{proof}

Finally, we give a conformal welding result we need in Section~\ref{subsec-QP-law}.
 \begin{lemma}\label{lem-QP-weld-disk}
For $a>0$, sample  $(D,h,\Gamma,\eta, \hat \eta,z)$ as follows: sample $(D,h,\Gamma,z)/{\sim_\gamma}$ from 
$\QD_{1,0}(a)\otimes\CLE_\kappa$ and sample $\hat \eta$  from the counting measure on the set of outermost loops of $\Gamma$ except the loop $\eta$ surrounding $z$. 
 Then for some constant $C = C(\gamma)$ the law of $(D, h,\eta,  \hat \eta, z)/{\sim_\gamma}$ is
 \[C \iint_{\R^2_+} bc \mathrm{Weld}\left(\QD_{1,0}(b), \QD(c), \QP(a,b,c)\right) \, db \, dc.\]
 \end{lemma}
 \begin{proof}
We retain the notations in the proof of Proposition~\ref{prop-M2}. Sample $(h, \Gamma, \eta_1, \eta_2, \eta_3, z_1, z_2)$ from $\mathsf M := 1_G \mu_h(d^2z_1)\,\mu_h(d^2z_2)\, \mathbb F(dh) \,\mathrm{Count}_\Gamma(d\eta_1)\,\mathrm{Count}_\Gamma(d\eta_2)\,\mathrm{Count}_\Gamma(d\eta_3)\CLE_\kappa^\C(d\Gamma)$, where $G$ is the event that $\eta_1, \eta_2, \eta_3$ are distinct, $\eta_1, \eta_2, z_1, z_2$ all lie on the same side of $\eta_3$, and in $\C \backslash \eta_3$ the outermost loops around $z_1, z_2$ are $\eta_1, \eta_2$ respectively. 
We claim that the $\mathsf M$-law of $(\C, \eta_1, \eta_2, \eta_3, z_1, z_2)/{\sim_\gamma}$ is
\eqb\label{eq-disk-QP-1}
\iiint_{\R_+^3} abc \mathrm{Weld}(\QD_{1,0}(a), \QD_{1,0}(b), \QD(c), \QP(a,b,c)) \, da\, db\, dc. 
\eqe
Indeed, by Theorem~\ref{thm-weld-QA-3}, if we sample $(h, \Gamma, \eta_1, \eta_2, \eta_3, z_1, z_2)$ from $\mu_h|_{D_3}(d^2z_3) \mathsf M$ where $D_3$ is the component of $\C \backslash \eta_3$ not containing $z_1, z_2$, then the law of $(\C, \eta_1, \eta_2, \eta_3, z_1, z_2,z_3)/{\sim_\gamma}$ is~\eqref{eq-pants-welding}. Forgetting the point $z_3$ and unweighting by $\mu_h(D_3)$, we get~\eqref{eq-disk-QP-1}.

On the other hand, by Proposition~\ref{prop-tabula-rasa} applied to the loop $\eta_1$ and the two-pointed quantum sphere $(\C, h, z_1, z_2)/{\sim_\gamma}$, we see that the $\mathsf M$-law of $(\C, \eta_1, \eta_2, \eta_3, z_1, z_2)/{\sim_\gamma}$ is 
$C\int_0^\infty a \mathrm{Weld}(\QD_{1,0}(a), \widetilde{ \mathsf{M}}(a)) \, da$,
where $\wt{ \mathsf{M}}(a)$ is the measure described in Lemma~\ref{lem-QP-weld-disk}. Comparing this to~\eqref{eq-disk-QP-1}, we conclude that $C \wt{\mathsf M}(a) = \iint_{\R_+^2} bc \mathrm{Weld}(\QD_{1,0}(b), \QD(c), \QP(a,b,c)\, db \, dc$ as desired.
 \end{proof}

\subsection{Area distribution for the quantum annulus}\label{subsec-QA-law}
The goal of this section is to prove the following. 

\begin{theorem}\label{thm-QA2-area}
	For $a>0,b>0$, let $A$ be the quantum area of a sample from $\QA(a, b)$. Then
\begin{equation*}
\QA(a, b)[e^{-\mu A}]= \frac{\cos(\pi (\frac4{\gamma^2}-1))}\pi  \cdot \frac {e^{-(a+b)\sqrt{\mu/\sin(\pi\gamma^2/4)}} }{\sqrt{ab} (a+b)} \quad \textrm{for }\mu\ge 0.
\end{equation*}
\end{theorem}

Theorem~\ref{thm-QA2-area} itself will not  be directly used in the rest of the paper, but we find it interesting in its own right, and its proof can serve as a warm up for its counterpart for $\QP$. We first recall the total mass of $\QA(a,b)$ derived in \cite{ACSW24a} via Propositions~\ref{prop-ccm} and~\ref{prop-QA2-pt}.

\begin{proposition}[{\cite[Proposition 5.7]{ACSW24a}}]\label{prop-QA-partition}
	For $a>0,b>0$, we have 
	\[\left|\QA(a,b)\right|=  \frac{\cos(\pi (\frac4{\gamma^2}-1))}{\pi\sqrt{ab} (a+b)}.\]
\end{proposition}

Proposition~\ref{prop-QA-partition} was proved in~\cite{ACSW24a} using the following fact on Levy process.  Recall the stable L\'evy process $\P^\beta$ from Proposition~\ref{prop-ccm} with $\beta = \frac4{\gamma^2} + \frac12$.
\begin{lemma}[{\cite[Lemma 5.10]{ACSW24a}}]\label{lem:palm}
For $(\zeta_t)$ sampled from $\P^\beta$, let $J$ be the set of jumps and $J_a$ be the jumps before hitting $-a$; see the beginning of Section~\ref{sec:excursion}. Let $(\bfb, \bft)\in J$ be sampled from the counting measure on $J$, and denote the joint law of $(\zeta,\bfb,\bft)$ by $M^\beta$. Then the joint law of $(\bfb, \bft)$ and $(\wt\zeta_t)=(\zeta_t-1_{t\ge\bft}\bfb)$ is $(1_{b>0,t>0} b^{-\beta-1} \, db \, dt) \times \P^\lexp$.
\end{lemma}
We now use Lemma~\ref{lem:palm} to give a description of the quantum lengths of all the outermost loops in $\Gamma$ except $\eta$  and the laws of the quantum surfaces surrounded by these loops. 
\begin{proposition}\label{prop-QA2-Levy}
For $a>0$, let $(D,h,\Gamma,z)/{\sim_\gamma}$ be a sample of $\QD_{1,0}(a)\otimes\CLE_\kappa$. 
Let $(\ell_i)_{i\ge 1}$ be the quantum lengths of the outermost loops in $\Gamma$ 
and $\ell_h(\eta)$ the quantum length of the  loop $\eta$ surrounding $z$.  
Then conditioned on $\ell_h(\eta)$, the conditional law of $\{ \ell_i: \ell_i\neq \ell_h(\eta), i\ge 1 \}$ 
equals the $\P^\lexp$-law of  $\{x: (x,t)\in  J_{a+\ell_h(\eta)} \textrm{ for some }t  \}$.  Moreover, conditioned on $(\ell_i)_{i \geq 1}$, the quantum surfaces surrounded by these outermost loops apart from $\eta$ are independent quantum disks with boundary lengths $\{ \ell_i : \ell_i \neq \ell_h(\eta), i \geq 1\}$.
\end{proposition}
\begin{proof}
In the setting of Lemma~\ref{lem:palm}, with $\wt\tau_{-\ell} = \inf\{t: \wt\zeta_t=-\ell  \}$, we have $\tau_{-a}= \wt \tau_{-a-\bfb}$.
Let $\wt M^\beta_a = \frac{\tau_{-a}^{-1}}{\E^\lexp[\tau_{-a}^{-1}]}M^\beta 1_{\bft\le\tau_{-a}}$; here the expectation $\E^\beta$ is taken with respect to $\P^\beta$. Then under $\wt M^\beta_a$, the conditional law of $J_a\setminus \{ (\bfb, \bft)\}$ conditioning on $\bfb$ is the $\P^\beta$-law of $J_{a+\bfb}$.
By Proposition~\ref{prop-ccm}, the law of $(\ell_i)_{i\ge1}$ equals $J_a$ under $\wt M_a^\beta$.
Note that $(D,h,\Gamma,z,\eta)/{\sim_\gamma}$ can be obtained by first sampling $(D,h,\Gamma)/{\sim_\gamma}$ from $\QD(a)\otimes\CLE_\kappa$, then sampling $\eta$ from the counting measure on $\Gamma$ and sampling $z$ from the restriction of  $\mu_h(d^2z)$ on the region surrounded by $\eta$. 
This gives  the first assertion.
The second assertion is implicitly proved in \cite{msw-cle-lqg}; see \cite[Proposition 4.2]{ACSW24a} for a detailed justification.
\end{proof}
\begin{proof}[Proof of Theorem~\ref{thm-QA2-area}]
Since $\QA(a,b)=|\QA(a,b)| \QA(a,b)^{\#}$ and Proposition~\ref{prop-QA-partition} gives $|\QA(a,b)|$, it remains to compute $\QA^{\#}(a,b)[e^{-\mu A}]$. Let  $(\zeta_t)_{t\ge 0}$ be sampled from $\P^\beta$. For $a,b>0$, let  $ (x_i)_{i \geq 1}$ be the collection of sizes of the jumps of $\zeta$ on $[0,\tau_{-a-b}]$ sorted in decreasing order. 
	Independently of $(x_i)_{i\ge 1}$, let $(A_i)_{i\ge 1}$ be independent copies of the quantum area of a sample from 
	$\QD(1)^\#$. Then  the quantum area of a sample from $\QA(a,b)^{\#}$ has the law of 
	$\sum x_i^2 A_i$.
	This follows from Proposition~\ref{prop-QA2-Levy} and the fact that the quantum area of a sample from $\QD(x)^\#$ agrees in law with $x^2 A$ where $A$ is the quantum area of a sample from $\QD(1)^\#$, see e.g.\ Proposition~\ref{prop-QD-area}. 
Now by Proposition~\ref{l}, we have $ \QA(a, b)^{\#} [e^{-\mu A}] =   e^{-(a+b) \sqrt{\mu/\sin(\pi\gamma^2/4)}}.$ 
By Proposition~\ref{prop-QA-partition} we conclude the proof.
\end{proof}
We record the following corollary of Proposition~\ref{prop-QA2-Levy}, which will be used in the next subsection.
\begin{lemma}\label{lem:QA-QP}
Let $(D,h,\Gamma,z)/{\sim_\gamma}$ as in Proposition~\ref{prop-QA2-Levy}. 
Let $\hat\eta$ be chosen from the counting measure on the outermost loops of $\Gamma$ except $\eta$. 
Let $(\ell_i)_{i\ge 1}$ be the lengths of the outermost loops of $\Gamma$ except $\eta$ and $\hat\eta$, ranked in decreasing order. 
Then for $b,c>0$, the disintegration of the law of $\{ \ell_i: i\ge 1 \}$ over $\{\ell_h(\eta)=b,\ell_h(\hat \eta)=c\}$ is the law of 
$\{x: (x,t)\in  J_{a+b+c}  \textrm{ for some }t  \}$ under
\[
b|\QA (a,b) | |\QD_{1,0}(b)|   c^{-\beta-1} \tau_{-a-b}  \P^\beta(d\zeta).
\]
\end{lemma}
\begin{proof}
By Propositions~\ref{prop-QA2-pt} and~\ref{prop-QA2-Levy}, the disintegration of the law of $\{ \ell_i: \ell_i\neq \ell_h(\eta), i\ge 1 \}$  over $\{\ell_h(\eta)=b\}$  is 
the law of $\{x: (x,t)\in  J_{a+b}  \textrm{ for some }t  \}$ under $b|\QA (a,b) | |\QD_{1,0}(b)|  \P^\beta(d\zeta)$.
Since we further remove the loop $\hat\eta$ chosen from counting measure, by Lemma~\ref{lem:palm}, we get the additional factor $c^{-\beta-1} \tau_{-a-b}$ and the change from $J_{a+b}$ to $J_{a+b+c}$ for the jump set.
\end{proof}

\subsection{Area and length distributions for the quantum pair of pants}\label{subsec-QP-law}
The goal of this section is to establish the following. 

\begin{theorem}\label{thm-QA3-laplace}
For $\ell_1, \ell_2, \ell_3>0$. let $A$ be the total quantum area of a sample from $\QP(\ell_1, \ell_2, \ell_3)$.
Then there is a  constant $C\in (0,\infty)$ only depending on $\gamma$ such that
\begin{equation*}
\QP(\ell_1, \ell_2, \ell_3) [e^{-\mu A}] = C\mu^{\frac14 - \frac2{\gamma^2}}\frac{1}{\sqrt{\ell_1\ell_2\ell_3}} e^{-(\ell_1+ \ell_2+ \ell_3)\sqrt{\mu/\sin(\pi \gamma^2/4)}} \quad \textrm{for }\mu>0.
\end{equation*}
\end{theorem}

Sending $\mu$ to $0$ in  Theorem~\ref{thm-QA3-laplace},  we see that   $\QP(\ell_1, \ell_2, \ell_3)$ is an infinite measure.
However, since $\QP(\ell_1, \ell_2, \ell_3)[A<x] < \infty$ for $x>0$, it is still $\sigma$-finite.

The proof of Theorem~\ref{thm-QA3-laplace} is based on the L\'evy process description of CLE loop lengths in Proposition~\ref{prop-ccm} (and the  L\'evy excursion theory developed in Section \ref{sec:excursion}), the conformal welding result Lemma~\ref{lem-QP-weld-disk}, and the joint area and length law of $\QD_{1,0}$ given in Theorem~\ref{thm-FZZ}. 

\begin{lemma}\label{lem-QP-translate}
For $\mu>0$, $x>0$, let $g(x)= \E[e^{-\mu x^2A_1}]$ where $A_1$ is the quantum area of a  sample from $\QD(1)^{\#}$.  	Then there is a constant $C\in (0,\infty)$ such  that for   $a,b,c>0$, we have
	\begin{equation}\label{eq:QP-Levy}
\QP(a, b, c) [e^{-\mu A}]=   
C|\QA(a,b)|c^{-1/2} \E^\beta[\tau_{-a-b} \prod_{i\ge 1}g(x_i)],	
	\end{equation}
where $\E^\beta$ is taken over $\P^\beta$ and $(x_i)_{i\ge 1}$ is the set of jumps that occur before time $\tau_{-a-b-c}$.
\end{lemma}
\begin{proof}
Sample $(D,h,\Gamma,z)/{\sim_\gamma}$ from 
$\QD_{1,0}(a)\otimes\CLE_\kappa$ and sample $\hat \eta$  from the counting measure on the outermost loops of $\Gamma$ except the loop $\eta$ surrounding $z$. Let $A_{\eta,\hat \eta}$ be the quantum area of the region bounded by $\eta,\eta'$ and $\partial D$.  Lemma~\ref{lem:QA-QP} gives the law of the quantum lengths of the outermost loops.	By Proposition~\ref{prop-QA2-Levy}, conditioning on the lengths,  the quantum surfaces surrounded by the outermost loops are independent quantum disks. 
Therefore the integral of $e^{-\mu A_{\eta,\hat \eta}}$ over this sample space is 
\begin{equation}\label{eq:cal1}
\iint_{\R_+^2} b|\QA (a,b) | |\QD_{1,0}(b)|   c^{-\beta-1} \E[\tau_{-a-b} \prod_{i\ge 1} g(x_i)]\, db \, dc.
\end{equation}
By the conformal welding result Lemma~\ref{lem-QP-weld-disk}, with the constant $C$ in Lemma~\ref{lem-QP-weld-disk} we have
\[
 \eqref{eq:cal1} = C\iint_{\R_+^2} bc |\QD_{1,0}(b)| |\QD(c)| \QP(a,b,c)[e^{-\mu A}]\, db \, dc
\]
Since $|\QD(c)| \propto c^{-\beta-\frac32}$ by Lemma~\ref{rem-QD-length}, we  get~\eqref{eq:QP-Levy} after disintegrating over $b,c$.
\end{proof}

We now compute $\E^\beta[\tau_{-a-b} \prod_{i\ge 1}g(x_i)]$ using the excursion theory for the Levy process $\zeta$,  as reviewed in Section \ref{sec:excursion}. Let $I_t= \inf\{\zeta_t: s\in [0,t] \}$, then $(\zeta_t-I_t)_{t\ge 0}$ is a Markov process and we can consider its excursions away from 0. We write $\ul N$ for the excursion measure.

\begin{lemma}\label{lem:Palm-Levy}
Given $\E^\beta$,  $\tau$, $\mu$, and $g$  in Lemma~\ref{lem-QP-translate}, let  $C(\mu)= \int T(e) \prod_{x\in J_e} g(x) \ul N (d e)$.  Then
\[
\E^\beta[\tau_{-a-b} \prod_{i\ge 1}g(x_i)]=
C(\mu)(a+b)e^{-(a+b+c) \sqrt{\mu/\sin(\pi\gamma^2/4)}}.
\]
\end{lemma}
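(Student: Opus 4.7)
The strategy is to decompose $\zeta$ into excursions above its past infimum and to encode both $\tau_{-L}$ and the jump collection on $[0,\tau_{-L}]$ as functionals of the resulting excursion Poisson point process. For the spectrally positive $\beta$-stable L\'evy process $\zeta$, the running infimum $-I_t=-\inf_{s\le t}\zeta_s$ serves as a continuous local time at $0$ of the reflected process $\zeta-I$, and its right-continuous inverse is precisely $(\tau_{-L})_{L\ge 0}$. By It\^o's excursion theory the excursions $(e_s)_{s\ge 0}$ of $\zeta-I$ away from $0$ form a Poisson point process on $[0,\infty)\times\cE$ with intensity $ds\otimes\ul N(de)$, where $\cE$ is the space of c\`adl\`ag excursions and $T(e)$ is the lifetime. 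Then $\tau_{-L}=\sum_{s\le L}T(e_s)$, and the jumps of $\zeta$ on $[0,\tau_{-L}]$ are the disjoint union of the jump sets $J_{e_s}$ over $s\le L$.

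Set $L_1:=a+b$, $L_2:=a+b+c$, and for $e\in\cE$ write $f(e):=\prod_{x\in J_e}g(x)\in[0,1]$. The quantity to evaluate becomes
\[
I:=\E^\beta\left[\tau_{-L_1}\prod_{i\ge 1}g(x_i)\right]=\E^\beta\left[\sum_{s\le L_1}T(e_s)\prod_{s'\le L_2}f(e_{s'})\right].
\]
Splitting $\prod_{s'\le L_2}=\prod_{s'\le L_1}\cdot\prod_{L_1<s'\le L_2}$ and using independence of the point process on disjoint local-time windows gives $I=I_1\cdot I_2$, where $I_2:=\E^\beta\bigl[\prod_{L_1<s'\le L_2}f(e_{s'})\bigr]=\exp(-c\Lambda)$ by the exponential formula, with $\Lambda:=\int(1-f(e))\,\ul N(de)$. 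For $I_1$ I apply the Mecke (Campbell--Palm) formula to the excursion Poisson point process: pulling the selected excursion into a single factor $T(e)f(e)$ and integrating against $\ul N$ while leaving the remaining product intact,
\[
I_1=\E^\beta\left[\sum_{s\le L_1}T(e_s)\prod_{s'\le L_1}f(e_{s'})\right]=L_1\int T(e)f(e)\,\ul N(de)\cdot e^{-L_1\Lambda}=L_1 C(\mu)e^{-L_1\Lambda}.
\]
Multiplying the two factors yields $I=(a+b)C(\mu)\exp(-(a+b+c)\Lambda)$, reducing the lemma to the identification $\Lambda=\sqrt{\mu/\sin(\pi\gamma^2/4)}$.

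To pin down $\Lambda$, a further use of the exponential formula gives, for every $L>0$,
\[
\E^\beta\left[\prod_{i\ge 1\text{ on }[0,\tau_{-L}]}g(x_i)\right]=\E^\beta\left[\prod_{s\le L}f(e_s)\right]=e^{-L\Lambda}.
\]
On the other hand, by Fubini in the independent $(A_i)$ the left-hand side equals $\E[e^{-\mu\sum x_i^2 A_i}]$, which by Lemma~\ref{prop-unweighted-area-imprecise} together with Proposition~\ref{cor-lambda} (applied with $A_i$ distributed as the quantum area of $\QD(1)^\#$) equals $e^{-L\sqrt{\mu/\sin(\pi\gamma^2/4)}}$. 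Matching exponents identifies $\Lambda$ and completes the proof.

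The main technical point is justifying the excursion setup rigorously --- namely the identification of $-I$ with a local time of $\zeta-I$ at $0$ and of its inverse with $(\tau_{-L})$ --- and verifying that $C(\mu)=\int T(e)f(e)\,\ul N(de)$ is finite so that the Mecke formula applies without a priori truncation. The finiteness of $C(\mu)$ for $\mu>0$ can be obtained by first truncating $T(e_s)$ at a level $N$, running the Mecke argument, and passing $N\to\infty$ by monotone convergence; alternatively it follows a posteriori from the finiteness of $\QP(a,b,c)[e^{-\mu A}]$ combined with Lemma~\ref{lem-QP-translate}.
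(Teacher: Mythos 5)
Your proof is correct and follows essentially the same route as the paper: decompose $\zeta-I$ into its excursion Poisson point process with intensity $ds\otimes\ul N(de)$, apply a Palm/Mecke-type identity to pull out the factor $C(\mu)$, and identify the exponential rate $\sqrt{\mu/\sin(\pi\gamma^2/4)}$ via Lemma~\ref{prop-unweighted-area-imprecise} and Proposition~\ref{cor-lambda}. The only cosmetic difference is bookkeeping: you split the local-time window at $a+b$ and use independence plus the exponential formula on $(a+b,a+b+c]$, whereas the paper keeps the full window and instead uses that, conditionally on the excursions, their local times are i.i.d.\ uniform, which replaces $\tau_{-a-b}$ by $\frac{a+b}{a+b+c}\tau_{-a-b-c}$ before applying Palm's theorem.
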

\begin{proof}
	 Recall that $\{x_i: i\ge 1\}$ is the set of jumps in $\zeta$ before $\tau_{-a-b-c}$. As stated in Section \ref{sec:excursion}, the excursions of $(\zeta_t-I_t)_{t\in  [0,\tau_{-a-b-c}]}$ can be viewed as a Poisson point process $\mathrm{Exc}_{a+b+c}=\{ (e,s): s\le a+b+c \}$ with intensity measure $\ul N\times 1_{s\in[0,a+b+c] }  ds$. For $(e,s)\in \mathrm{Exc}_{a+b+c}$, let $\sigma$ be the time when $(\zeta_t - I_t)_{ t\ge 0}$ starts the excursion corresponding to $e$. 
	Then $s = -I_\sigma$. Let $T(e)$ be the duration $T$ of $e$.
	Then $\tau_{-a-b-c}=\sum_{ (e,s)\in \mathrm{Exc}_{a+b+c}}  T(e)$ and 
 \(	\tau_{-a-b} = \sum_{ (e,s)\in \mathrm{Exc}_{a+b+c}}  T(e)1_{s\le a+b}\).
 
Let $\cF_{a+b+c}$ be the sigma algebra generated by $\{ e: (e,s)\in \mathrm{Exc}_{a+b+c} \}$ and denote the conditional expectation over $\cF_{a+b+c}$ by $\E_{a+b+c}$. As a general property of Poisson point processes, conditioning on $\cF_{a+b+c}$,
the conditional law of 	the time set $\{ s: (e,s)\in \mathrm{Exc}_{a+b+c} \}$ is given by a collection of independent uniform  random variables in $(0,a+b+c)$. Therefore $\E_{a+b+c}[1_{s\le a+b}]= \frac{a+b}{a+b+c}$. Then 
		\begin{equation}\label{eq:cond-exc}
		\E_{a+b+c}[\tau_{-a-b}] = \sum_{ (e,s)\in \mathrm{Exc}_{a+b+c}}  T(e)\E_{a+b+c}[1_{s\le a+b}]= \frac{a+b}{a+b+c} \tau_{-a-b-c}.
		\end{equation}	
	Let $F(e)=\prod_{x\in J_e} g(x) $ so that $\prod_{i\ge 1}g(x_i)= \prod_{e\in \mathrm{Exc}_{a+b+c} }F(e)$.
	 Since $\prod_{i\ge 1}g(x_i) $ is measurable with respect to $\cF_{a+b+c}$, we have 
	\[
	\E_{a+b+c}\big[\tau_{-a-b} \prod_{i\ge 1}g(x_i) \big]= \E_{a+b+c}\big[\tau_{-a-b} \prod_{e\in \mathrm{Exc}_{a+b+c} }F(e) \big]=
	\frac{a+b}{a+b+c} \tau_{-a-b-c}  \prod_{e\in \mathrm{Exc}_{a+b+c} }F(e).
	\]
		Further taking the expectation over $\cF_{a+b+c}$ we have 
		 \[
		 \E^\beta\big[\tau_{-a-b} \prod_{i\ge 1}g(x_i) \big]= \frac{a+b}{a+b+c} \E^{\beta} \big[\tau_{-a-b-c}  \prod_{e\in \mathrm{Exc}_{a+b+c} }F(e)\big].
		 \]
	 Since $\tau_{-a-b-c}=\sum_{e\in \mathrm{Exc}_{a+b+c} } T(e)$ we have
	\[
	\E^{\beta}  \big[\tau_{-a-b-c}  \prod_{e\in \mathrm{Exc}_{a+b+c} }F(e) \big]=\E^{\beta}  \big[\big(\sum_{e\in \mathrm{Exc}_{a+b+c} } T(e) \big) \prod_{e\in \mathrm{Exc}_{a+b+c} }F(e) \big].
	\]
By the first equation in Lemma~\ref{palm}, we have 
	\[
 \E^{\beta}  \big[\big(\sum_{e\in \mathrm{Exc}_{a+b+c} } T(e) \big) \prod_{e\in \mathrm{Exc}_{a+b+c} }F(e) \big]=(a+b+c)\E^{\beta}\big[ \prod_{e\in \mathrm{Exc}_{a+b+c} }F(e)\big] \int T(e) F(e) \ul N (d e).
\] 
	Therefore \(\E^{\beta}[\tau_{-a-b} \prod_{i\ge 1}g(x_i)]= C(\mu)(a+b) \E^{\beta}\big[\prod_{i\ge 1} g(x_i)\big]\). By Proposition~\ref{l},
we have	\(	\E^{\beta}\big[\prod_{i\ge 1} g(x_i)\big]= e^{-(a+b+c) \sqrt{\mu/\sin(\pi\gamma^2/4)}}\), which concludes the proof.
\end{proof}

\begin{proof}[Proof of Theorem~\ref{thm-QA3-laplace}]
	We write $X\propto_\gamma Y$ if $X=C_\gamma Y$  for some $\gamma$-dependent constant $C_\gamma\in (0,\infty)$. 
	By  Lemmas~\ref{lem-QP-translate} and~\ref{lem:Palm-Levy}, and expression for $|\QA(a,b)|$ in Proposition~\ref{prop-QA-partition},  we have 
	\begin{equation}\label{eq:QP-area-C}
		\QP(\ell_1, \ell_2, \ell_3) [e^{-\mu A}] \propto_\gamma \frac{C(\mu)}{\sqrt{\ell_1\ell_2\ell_3}} e^{-(\ell_1+ \ell_2+ \ell_3)\sqrt{\mu/\sin(\pi \gamma^2/4)}}
	\end{equation}
	
	It remains to show that $C(\mu) \propto_{\gamma} \mu^{\frac14 - \frac2{\gamma^2}}$. 
	Recall the setting of Theorem~\ref{thm-weld-QA-3}, where $(\C,h,\Gamma,z_1,z_2,z_3)/{\sim_\gamma}$ has the law of 
	$\QS_3 \otimes \CLE_\kappa$. By the welding equation~\eqref{eq-pants-welding}, the expression for $\QP(\ell_1, \ell_2, \ell_3) [e^{-\mu A}]$  in~\eqref{eq:QP-area-C}, and the areas of quantum disks (Theorem~\ref{thm-FZZ}), we have
	\[
	\QS_3[e^{-\mu \mu_h (\C) }] \propto_\gamma C(\mu) \prod_{i=1}^3 \int_0^\infty   \mu^{\frac1\gamma(Q-\gamma)} K_{\frac2\gamma(Q-\gamma)}\left(\ell_i \sqrt{\frac\mu{\sin (\pi\gamma^2/4)}}\right) \cdot  \frac{e^{-\ell_i \sqrt{\mu/\sin(\pi\gamma^2/4)}}}{\sqrt{\ell_i}} \, d\ell_i.
	\]
	By Lemma~\ref{lem:integration-fun} we have \(\int_0^\infty  y^{-\frac12}e^{-cy} K_\nu(cy) \, dy \propto_\gamma  c^{-\frac12} 
	\) 	
	for $c =\sqrt{\frac\mu{\sin (\pi\gamma^2/4)}}$ and $\nu =\frac2\gamma(Q-\gamma)$. Therefore
	$\QS_3[e^{-\mu \mu_h (\C) }] \propto_\gamma C(\mu) \left(\mu^{\frac1\gamma (Q-\gamma)-\frac14}\right)^3$.
	By Theorems~\ref{prop-DOZZ} and~\ref{def-QS-2}, $\QS_3[e^{-\mu \mu_h (\C) }] \propto_\gamma\mu^{\frac{2Q - 3\gamma}{\gamma}}$. Since \(\frac{2Q - 3\gamma}{\gamma}-
	3(\frac1\gamma (Q-\gamma)-\frac14) =\frac14 - \frac2{\gamma^2} \), we have $C(\mu) \propto_\gamma \mu^{\frac14 - \frac2{\gamma^2}}$. 
	\end{proof}

\section{The three-point  function for the nesting loop statistics}\label{sec:3-pt}
In this section, we prove  Theorem~\ref{thm:nesting} following the strategy in Section~\ref{subsec:method}. In Section~\ref{sec:loop-counting}, 
we reduce the nesting loop statistics to the the joint moment of conformal radius, which is exactly solvable as stated in Theorem~\ref{thm:conformal radius}.  The rest of this section is devoted to proving Theorem~\ref{thm:conformal radius}. The case  $\kappa\in (\frac{8}{3},4]$ is done in Sections~\ref{subsec-general-QP} and~\ref{subsec:area-matching} using the quantum pair of pants from  Section~\ref{sec-welding}. In Section~\ref{sec:cr-non-simple}, we introduce the generalized quantum pair of pants and prove the $\kappa\in (4,8)$ case.

\subsection{From conformal radius to loop counting: proof of Theorem~\ref{thm:nesting}}\label{sec:loop-counting}

The following lemma  relates the nesting loop statistic to the conformal radius. 
For a loop $\eta$,  let $\CR(\eta,z)$ be the conformal radius of the connected component of $\C\backslash\eta$ containing $z$, viewed from $z$.
\begin{lemma}\label{lem-trunc}
Fix $\kappa\in(\frac{8}{3},8)$ and $n = 2 \cos (\pi (1-\frac4\kappa))$.	Suppose $\eta_0$ is a loop surrounding $0$. 
Sample a $\CLE_\kappa$ in the domain surrounded by $\eta_0$ and let $S = \{ \eta_0\} \cup \{ \CLE_\kappa \text{ loops surrounding }0\}$. Let $X_\eps$ (resp. $\wt X_\eps$) be the number of loops $\eta \in S$ such that $\mathrm{dist}(\eta, 0) > \eps$ (resp.\ $\mathrm{CR}(\eta, 0) > \eps$).
For $n'\in (0,2]$, let $\Delta=\frac{\kappa}{8\pi^2}\arccos^2(\frac{n'}{2})-\frac{1}{2}(\frac{2}{\sqrt{\kappa}}-\frac{\sqrt{\kappa}}{2})^2$. 
Then there exists $c(\Delta)$ and  $\wt c(\Delta)$ such that
	\[\lim_{\eps \to 0} \E[\eps^{-\Delta} ( n'/n )^{X_\eps}] =  c(\Delta) \mathrm{CR}(\eta_0, 0)^{-\Delta} \quad \textrm{and}\qquad \lim_{\eps \to 0} \E[\eps^{-\Delta} (n'/n)^{\wt X_\eps}] = \wt c(\Delta) \mathrm{CR}(\eta_0, 0)^{-\Delta}.\]
	Moreover, $\E[(\eps/\mathrm{CR}(\eta_0,0))^{-\Delta} ( n'/n )^{X_\eps}]$ (resp.\  $\E[(\eps/\mathrm{CR}(\eta_0,0))^{-\Delta} ( n'/n )^{\wt X_\eps}]$) is uniformly bounded over all $\eta_0$ and $\eps < \mathrm{dist}(\eta_0, 0)$  (resp.\ $\eps < \mathrm{CR}(\eta_0,0)$). 
\end{lemma}
The assertion for $\wt X_\eps$ is \cite[Lemma 1.7]{holden2022liouville}. 
The proof for the assertion  for $X_\eps$ requires some additional technical work which we defer
to the end of this section. From Lemma~\ref{lem-trunc}, we can express  $\lim_{\eps \to 0} \E \left[ \prod_{i=1}^3 \eps^{-\Delta_i} (n_i / n)^{X_\eps (z_i)} \right]$ using the joint moment of the conformal radii of $(\eta_i)_{1\le i\le 3}$, where $\eta_i$ is the outermost CLE loop separating $z_i$ and other two $z_j$'s as defined above Definition~\ref{def-QP}. 
We now define the three-point function for conformal radii.

\begin{lemma}\label{lem-main-confinv}
There exists a function $\CCLE \in (0, \infty]$ such that 
\begin{equation}\label{def-structure-CLE}
\E[\prod_{i=1}^3\CR( \eta_i, z_i)^{\lambda_i}] = \CCLE \prod_{i=1}^3 |z_i - z_{i+1}|^{\lambda_i + \lambda_{i+1} - \lambda_{i+2}},
\end{equation}
where we identify $z_j$ with $z_{j-3}$ and $\lambda_j$ with $\lambda_{j-3}$.
\end{lemma}
\begin{proof}
This follows from the conformal invariance of the full plane $\CLE_\kappa$, the scaling property of conformal radius, and the fact that any triple of points on the plane are conformally equivalent.
\end{proof}

Now Theorem~\ref{thm:nesting} will follow from the explicit formula of $\CCLE$.
\begin{theorem}\label{thm:conformal radius}
For $\kappa\in (\frac83,8)$, we have $\CCLE < \infty$ if and only if $\lambda_1, \lambda_2, \lambda_3 > \frac{3\kappa}{32} -1+\frac2\kappa$.
Set $\gamma=\sqrt{\kappa}$ for $\kappa\in(\frac{8}{3},4]$ and $\gamma=\frac{4}{\sqrt{\kappa}}$ for $\kappa\in(4,8)$. Set $Q=\frac{\gamma}2+\frac2\gamma$.
Let $\alpha_i$ be a root of $\alpha(Q-\frac{\alpha}2) - 2 =\lambda_i$ for $i=1,2,3$. Then 
\begin{equation}\label{eq:3ptCLE}
\CCLE =  \frac {\prod_{i=1}^3 N_\gamma(\alpha_i)}{C^\mathrm{DOZZ}_\gamma(\alpha_1, \alpha_2, \alpha_3)},
\end{equation}
where for $\kappa\in(\frac{8}{3},4]$
\eqb\label{eq-Ngamma}
 N_\gamma(\alpha)= \left( -\pi\cos(\frac{4\pi}{\gamma^2}) \frac{\Gamma(\frac4{\gamma^2}-1)}{\Gamma(1-\frac{\gamma^2}4)} C_\gamma^\mathrm{DOZZ}(\gamma, \gamma, \gamma)^{1/3}\right)
\frac{\Gamma(\frac\gamma2(\alpha - \frac\gamma2))}{\Gamma(\frac2\gamma(Q-\alpha)) \cos(\frac{2\pi}\gamma(Q-\alpha))} \left( \frac{\pi \Gamma(\frac{\gamma^2}4)}{\Gamma(1-\frac{\gamma^2}4)} \right)^{-\frac\alpha\gamma}.
\eqe
and for $\kappa\in(4,8)$,
\eqb\label{eq-N'gamma} 
 N_\gamma(\alpha)= \left( -\pi\cos(\frac{\gamma^2\pi}{4}) \frac{\Gamma(\frac{\gamma^2}{4})}{\Gamma(2-\frac{4}{\gamma^2})} C_\gamma^\mathrm{DOZZ}(\gamma, \gamma, \gamma)^{1/3}\right)
\frac{\Gamma(\frac2\gamma(\alpha - \frac2\gamma))}{\Gamma(\frac\gamma2(Q-\alpha)) \cos(\frac{\gamma\pi}2(Q-\alpha))} \left( \frac{\pi \Gamma(\frac{\gamma^2}4)}{\Gamma(1-\frac{\gamma^2}4)} \right)^{-\frac\alpha\gamma}.
\eqe
\end{theorem}

Sections~\ref{subsec-general-QP} and~\ref{subsec:area-matching} are devoted to proving Theorem~\ref{thm:conformal radius}. 
Theorem~\ref{thm:nesting} is now a corollary of Theorem~\ref{thm:conformal radius} and the following variant of Lemma~\ref{lem-trunc}.

\begin{lemma}\label{lem-main-limit}
Let $c(\Delta)$ be the function defined in Lemma~\ref{lem-trunc} and $\eta_i$'s be as just above. Then
\[ \lim_{\eps \to 0} \E \left[ \prod_{i=1}^3 \eps^{-\Delta_i} (n_i / n)^{X_\eps (z_i)} \right]= c(\Delta_1) c(\Delta_2) c(\Delta_3) \E \left[ \prod_{i=1}^3 \mathrm{CR}(\eta_i, z_i)^{-\Delta_i} \right]. \]
\end{lemma}
\begin{proof}
    Let $E_{\eps, i } = \{ \mathrm{dist}(\eta_i, z_i) < \eps \}$ and $E_\eps = \bigcup_i E_{\eps, i}$. By Lemma~\ref{lem-trunc}, on $E_{\eps, i}^c$ the random variable $Y_{\eps, i} := \E[(\eps/{\mathrm{CR}(\eta_i, z_i)})^{-\Delta_i} (n_i/n)^{X_\eps(z_i)} \mid \eta_i ]$ is uniformly bounded by a deterministic constant $C$. Moreover, by Theorem~\ref{thm:conformal radius} $\E [ \prod_{i=1}^3 \mathrm{CR}(\eta_i, z_i)^{-\Delta_i}]$ is finite.
    Thus, Lemma~\ref{lem-trunc} and
	the dominated convergence theorem give
	\alb
	\lim_{\eps \to 0} \E \left[ 1_{E_\eps^c} \prod_{i=1}^3 \eps^{-\Delta_i} (n_i / n)^{X_\eps (z_i)} \right] &=  \E\left[\lim_{\eps \to 0} 1_{E_\eps^c}\prod_{i=1}^3 \mathrm{CR}(\eta_i, z_i)^{-\Delta_i} Y_{\eps,i}\right] = \prod_{i=1}^3 c(\Delta_i) \E \left[ \prod_{i=1}^3 \mathrm{CR}(\eta_i, z_i)^{-\Delta_i} \right].
	\ale
	Now we show that $\E[1_{E_\eps} \prod_{i=1}^3 \eps^{-\Delta_i} (n_i / n)^{X_\eps (z_i)} ]\to 0$. Pick an index $i$. 
    On the event $E_{\eps,i}^c$, we have $\eps^{-\Delta_i} (n_i/n)^{X_\eps(z_i)} = \CR(\eta_i, z_i)^{-\Delta_i} Y_{\eps, i} \leq C \CR(\eta_i, z_i)^{-\Delta_i}$. On the event $E_{\eps,i}$, we have $X_\eps(z_i) = 0$; if $-\Delta_i \geq 0$, then  $\eps^{-\Delta_i} (n_i/n)^{X_\eps(z_i)} = \eps^{-\Delta_i} < 1$, and if instead $-\Delta_i < 0$, since $\eps > \mathrm{dist}(\eta_i,z_i) > \CR(\eta_i, z_i)/4$ by the Koebe quarter theorem, we have  $\eps^{-\Delta_i} (n_i/n)^{X_\eps(z_i)} = (\eps/\CR(\eta_i, z_i))^{-\Delta_i} \CR(\eta_i, z_i)^{-\Delta_i} \leq 4^{\Delta_i} \CR(\eta_i, z_i)^{-\Delta_i}$. Combining all these cases, we conclude that 
    \[\E[ 1_{E_\eps} \eps^{-\Delta_i}(n_i/n)^{X_\eps(z_i)} \mid \eta_1, \eta_2, \eta_3] \leq  1_{E_\eps} (1+ 4^{\Delta_i} + C) (\CR(\eta_i, z_i)^{-\Delta_i} + 1). \]
    
    Multiplying these for $i=1,2,3$ gives, for some constant $C'>0$,
	\[\E[1_{E_\eps} \prod_{i=1}^3 \eps^{-\Delta_i} (n_i / n)^{X_\eps (z_i)}] \leq C'
    \E[1_{E_\eps}\prod_{i=1}^3 (\mathrm{CR}(\eta_i, z_i)^{-\Delta_i} + 1)]. \]
	Since $\E[\prod_{i=1}^3 (\mathrm{CR}(\eta_i, z_i)^{-\Delta_i} + 1)]< \infty$ by Theorem~\ref{thm:conformal radius} and $\lim_{\eps \to 0} 1_{E_\eps} = 0$ a.s., the dominated convergence theorem gives $\lim_{\eps \to 0}\E[1_{E_\eps} \prod_{i=1}^3 \eps^{-\Delta_i} (n_i / n)^{X_\eps (z_i)}] = 0$. Combining this with the first equation gives the desired result.
\end{proof}

\begin{proof}[{Proof of Theorem~\ref{thm:nesting} given Theorem~\ref{thm:conformal radius}}]
    The existence of the limit is shown in Lemma~\ref{lem-main-limit}. 
    In Lemma~\ref{lem-trunc}, since $n' = n$ corresponds to $\Delta = 0$, we have $c(0) $ in Lemma~\ref{lem-main-limit} equals $1$.
    Combining Lemmas~\ref{lem-main-limit}~\ref{lem-main-confinv} and Theorem~\ref{thm:conformal radius} gives $C_3(\Delta_1,\Delta_2,\Delta_3)=c(\Delta_1)c(\Delta_2)c(\Delta_3)\frac {\prod_{i=1}^3 N_\gamma(\alpha_i)}{C^\mathrm{DOZZ}_\gamma(\alpha_1, \alpha_2, \alpha_3)}$. Hence,
    \begin{equation}\label{eq:loop-counting}
        \frac{C_3(\Delta_1, \Delta_2, \Delta_3)}{\sqrt{\prod_{i=1}^3 C_3(\Delta_i, \Delta_i, 0)}} = N_\gamma(\gamma)^{-3/2} \frac{\sqrt{\prod_{i=1}^3 C_\gamma^{\mathrm{DOZZ}}(\alpha_i, \alpha_i, \gamma) }}{C_\gamma^{\mathrm{DOZZ}}(\alpha_1, \alpha_2, \alpha_3)},
    \end{equation}
     and $N_\gamma(\gamma)= C_\gamma^{\mathrm{DOZZ}}(\gamma,\gamma,\gamma)^{1/3}$ according to \eqref{eq-Ngamma}.  
      By \eqref{legfactor}, the right side of \eqref{eq:loop-counting} equals $C_b^{\rm ImDOZZ}(\wh\alpha_1,\wh\alpha_2,\wh\alpha_3)$ with  $b=\frac{2}{\sqrt{\kappa}}$ and $\wh\alpha_i=\frac{\alpha}{2}-b$ for $i=1,2,3$, as desired.
\end{proof}

\begin{remark}
Note that $\E[\prod_{i=1}^3\CR( \eta_i, z_i)^{\lambda_i}]$ is finite iff $\lambda_i$'s are all larger than $\frac{3\kappa}{32} -1+\frac2\kappa$, which equals $d-2$ where $d$ is the Hausdorff dimension of $\CLE_{\kappa}$ clusters. Indeed, for $\varepsilon>0$, the event $\eta_i\subset B_\varepsilon(z_i)$ means that $B_\varepsilon(z_i)$ intersects with the cluster associated with a macroscopic CLE loop surrounding $z_i$ and an other $z_j$. The probability for this event is of order $\varepsilon^{2-d}$. Therefore, to make $\E[\prod_{i=1}^3\CR( \eta_i, z_i)^{\lambda_i}]$ not explode, one needs  $\varepsilon^{\lambda_i}\cdot\varepsilon^{2-d}\to 0$ as $\varepsilon\to0$, i.e. $\lambda_i>d-2$.
\end{remark}

We end this subsection by the proof of Lemma~\ref{lem-trunc}.
\begin{proof}[Proof of Lemma~\ref{lem-trunc}]
The claims about $\wt X_\eps$ are shown in \cite[Lemma 1.7]{holden2022liouville}, and the uniform bound for $X_\eps$ follows from that of $\wt X_\eps$ and the Koebe quarter theorem. We now derive the limit for $X_\eps$.
In this proof we use $C$ to denote a positive constant whose value may change from line to line. 

Let $\mathfrak L$ be the collection of all loops $\eta_0$ surrounding 0 such that $\mathrm{CR}(\eta_0, 0) = 1$. 
Fix $\eta_0\in \mathfrak L$. Sample $\CLE_\kappa$ inside $\eta_0$ and let $\eta_1$ be the outermost loop around 0. Let $Y = -\log\mathrm{CR}(\eta_1, 0)$.
By \cite{ssw-radii} \eqb\label{eq-ssw}
    \E[e^{-s Y}] = \frac{\cos(\pi(1-\frac4\kappa))}{\cos(\pi\sqrt{(1-\frac4\kappa)^2 - \frac{8s}\kappa})} \qquad \text{ for } s \geq -1 +\frac2\kappa + \frac{3\kappa}{32} 
\eqe
and in particular $\E[\frac{n'}n e^{\Delta Y}] = 1$.

For $t>0$, let $X^{\eta_0}(t)$ be the number of  $\CLE_\kappa$ loops surrounding $0$ with distance at least $e^{-t}$ from $0$. 
Let $E$ be the event that $\eta_1$ hits $B_{e^{-t}}(0)$. 
Writing $k = n'/n$, we have the recursive relation
\[
e^{\Delta t} k^{X^{\eta_0}(t)} =  1_{E^c}  e^{\Delta Y} k \cdot e^{\Delta(t-Y)} k^{\wt X^{\wt \eta_1}(t-Y)} + 1_E e^{\Delta t}
\]
where $\wt \eta_1 = e^Y \eta_1$ is a rescaling of $\eta_1$ such that $\CR(\wt \eta_1, 0) = 1$, and $\wt X^{\wt \eta_1}(t-Y)$ counts the (rescaled) CLE loops not hitting $B_{e^{-(t-Y)}}(0)$.  Thus, taking expectations and defining 
\[ \qquad g(t) = \inf_{\eta_0 \in \mathfrak L} \E[e^{\Delta t} k^{X^{\eta_0}(t)}] \qquad \textrm{and}  \qquad  G(t) = \sup_{\eta_0 \in \mathfrak L} \E[e^{\Delta t} k^{X^{\eta_0}(t)}],\]
we obtain 
\eqb\label{eq-recur}
g(t) \geq \E [ 1_{Y \leq t} k e^{\Delta Y} g(t-Y)] \qquad \textrm{and}  \qquad G(t) \leq  \E [ 1_{Y \leq t} k e^{\Delta Y} G(t-Y)] + O(e^{-Ct}).
\eqe
Here we bound  $G(t)$ using the fact that $\E[1_E e^{\Delta t}] = e^{\Delta t}\P[E]$ decays exponentially in $t$, which follows from~\eqref{eq-ssw} and the Koebe quarter theorem.  

By Koebe distortion estimate for conformal maps,  there is a universal constant $C$ such that the following holds. If $D, D'$ are regions enclosed by $\eta_0, \eta_0'\in\mathfrak L$ and $f: D \to D'$ is the conformal map with $f(0) = 0$ and $f'(0) =1$, then for $r < 1/C$ we have $B_{r - Cr^2}(0) \subset f( B_r(0)) \subset B_{r + Cr^2} (0)$. Consequently, for any $\eta_0, \eta_0' \in \mathfrak L$, if we define $X^{\eta_0}(t)$ and $X^{\eta_0'}(t)$ where the CLE in $\eta'_0$ is that of $\eta_0$ under $f$, then $X^{\eta_0'}(t + \log (1 - Ce^{-t})) \leq X^{\eta_0}(t) \leq X^{\eta_0'}(t + \log (1 + Ce^{-t}))$. This implies
\begin{equation}\label{eq:gG}
G(t)\geq g(t) \geq (1 - O(e^{-t}))G(t).
\end{equation}

Let $M(T) = \sup_{t \leq T} G(t)$ and $q = \E[k e^{\Delta Y}1_{Y < 1}]$. Since $q<\E[k e^{\Delta Y}] = 1$, by~\eqref{eq-recur} we get $M(T) \leq O(e^{-CT}) + q M(T) + (1-q) M(T-1)$. Therefore $M(T) \leq M(T-1) + O(e^{-CT})$, so $\sup_{t \geq 0} G(t) < \infty$. Applying the same argument to $g$, we get
\eqb\label{eq-limit-bounded}
\liminf_{t \to \infty} g(t) > 0 \quad\textrm{and}\quad \sup_{t\geq0} G(t)< \infty.
\eqe
Combining with~\eqref{eq-recur} and~\eqref{eq:gG}, we obtain 
$G(t) \leq  \E[1_{Y \leq t} k e^{\Delta Y} g(t-Y)] + O(e^{-Ct})$,
and hence 
\[ g(t) = \E[1_{Y \leq t} k e^{\Delta Y} g(t-Y)] + O(e^{-Ct}).\]
Let $p: (0, \infty) \to (0,\infty)$ be the probability density function of $Y$, and let $\cL_g(s) = \int_0^\infty e^{-st} g(t)\,dt$ and $\cL_p(s) = \int_0^\infty e^{-st} p(t)\,dt=\E[e^{-sY}]$ be the Laplace transforms. 
Then for $\Re s > -C$ we have \[\cL_g(s) = k\cL_p(s - \Delta) \cL_g(s) + e(s) \]
where $e$ is an analytic function on $\{\Re s > -C\}$. Rearranging gives $\cL_g (s) = \frac{e(s)}{1 - k \cL_p(s - \Delta)}$. From \eqref{eq-ssw} the denominator $1 - k \cL_p(s - \Delta)$ has only a single root in $\{ \Re s \geq 0\}$, which is $s = 0$. Thus $\cL_g$ has no poles in $\{ \Re s \geq 0\} \backslash \{ 0\}$ and has at most a single pole at $s=0$. By the final value theorem implies the existence of the limit $c(\Delta):=\lim_{t\to\infty} g(t)$. By ~\eqref{eq-limit-bounded}, we have $c(\Delta)>0$.
By~\eqref{eq:gG} $\lim_{t\to\infty} G(t)=c(\Delta)$ as well. Therefore $\lim_{t\to\infty}\E[e^{\Delta t} k^{X^{\eta_0}(t)}]=c(\Delta)$ for all $\eta_0\in\mathfrak L$, as desired.
\end{proof}

\subsection{Reweighting and conformal welding with generic insertions}\label{subsec-general-QP}
The remaining sections are devoted to proving Theorem~\ref{thm:conformal radius}. We first restrict to the case $\kappa\in(\frac{8}{3},4)$ in this section. Let $\gamma=\sqrt{\kappa}$, and fix $(z_1, z_2, z_3) = (0,1,e^{i\pi/3})$, and let $\eta_i$ be the outermost loop of $\Gamma$ separating $z_i$ from the other two points, as defined in the paragraph above Lemma~\ref{lem-main-limit}.
We write $\sm_3$  for the law of the triples of loops $(\eta_1, \eta_2, \eta_3)$. For $\alpha_1,\alpha_2,\alpha_3\in \R$, let 
$\sm_3^{\alpha_1, \alpha_2, \alpha_3}$ be the following reweighting of $\sm_3$:
\begin{equation}\label{eq:loop-weight}
	\frac{d \sm_3^{\alpha_1, \alpha_2, \alpha_3}}{d \sm_3}(\eta_1,\eta_2,\eta_3) = \prod_{i=1}^3 \left(\frac12\CR(\eta_i, z_i)\right)^{\lambda_i}\quad \textrm{where }\lambda_i = -\frac{\alpha_i^2}2 + Q \alpha_i  - 2.
\end{equation}
By definition, the total mass of $\sm_3^{\alpha_1, \alpha_2, \alpha_3}$  is 
\begin{equation}\label{key}
	|\sm_3^{\alpha_1, \alpha_2, \alpha_3}| = \E\left[\prod_{i=1}^3 \left(\frac12\CR(\eta_i, z_i)\right)^{\lambda_i}\right]=\frac{\CCLE}{2^{\lambda_1+\lambda_2+\lambda_3}}.
\end{equation}

As outlined in Section~\ref{subsec:method}, we consider the Liouville field on the sphere with three insertions $\LF_\C^{(\alpha_i, z_i)_i}$.
Then by~\eqref{key} we have
\begin{lemma}\label{lem-sphere-laplace}
	For $\alpha_1,\alpha_2,\alpha_3$ satisfying the Seiberg bound~\eqref{eq-seiberg} in Theorem~\ref{prop-DOZZ}, we have 
	\begin{equation}\label{eq:prod-DOZZ}
		\left(\LF_\C^{(\alpha_i, z_i)_i} \times \sm_3^{\alpha_1, \alpha_2, \alpha_3}\right) [e^{-\mu_\phi(\C)}] =2^{-\lambda_1-\lambda_2-\lambda_3-1}  C^\mathrm{DOZZ}_\gamma(\alpha_1,\alpha_2,\alpha_3) \CCLE.
	\end{equation}
\end{lemma}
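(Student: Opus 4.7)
The plan is to observe that this lemma is essentially a bookkeeping exercise: the integrand $e^{-\mu_\phi(\C)}$ depends only on $\phi$, not on the loops, so under the product measure $\LF_\C^{(\alpha_i, z_i)_i} \times \sm_3^{\alpha_1, \alpha_2, \alpha_3}$ the two factors decouple by Fubini:
\[
\big(\LF_\C^{(\alpha_i, z_i)_i} \times \sm_3^{\alpha_1, \alpha_2, \alpha_3}\big)[e^{-\mu_\phi(\C)}] = \LF_\C^{(\alpha_i, z_i)_i}[e^{-\mu_\phi(\C)}] \cdot |\sm_3^{\alpha_1, \alpha_2, \alpha_3}|.
\]
The remaining work is to evaluate each factor separately against the identities that have already been set up.

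First I would evaluate the field factor using Theorem~\ref{prop-DOZZ} with $\mu = 1$. The hypothesis there is exactly the Seiberg bound $\sum_i \alpha_i > 2Q$ with each $\alpha_i < Q$, which is the unstated hypothesis on $(\alpha_1,\alpha_2,\alpha_3)$ propagated into the statement; the insertion points $(0,1,e^{i\pi/3})$ match the normalization used in the DOZZ formula. This yields $\LF_\C^{(\alpha_i, z_i)_i}[e^{-\mu_\phi(\C)}] = \tfrac12 C^\mathrm{DOZZ}_\gamma(\alpha_1,\alpha_2,\alpha_3)$.

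Next I would evaluate $|\sm_3^{\alpha_1,\alpha_2,\alpha_3}|$ directly from its definition~\eqref{eq:loop-weight}, which gives
\[
|\sm_3^{\alpha_1,\alpha_2,\alpha_3}| = 2^{-\lambda_1-\lambda_2-\lambda_3}\,\E\Big[\prod_{i=1}^3 \CR(\eta_i,z_i)^{\lambda_i}\Big].
\]
Since the chosen insertion points satisfy $|z_1-z_2| = |z_2-z_3| = |z_3-z_1| = 1$, the product $\prod_i |z_i-z_{i+1}|^{\lambda_i + \lambda_{i+1} - \lambda_{i+2}}$ in~\eqref{def-structure-CLE} equals $1$, so $\E[\prod_i \CR(\eta_i,z_i)^{\lambda_i}] = \CCLE$ and $|\sm_3^{\alpha_1,\alpha_2,\alpha_3}| = 2^{-\lambda_1-\lambda_2-\lambda_3}\,\CCLE$.

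Combining the two factors gives the claimed identity~\eqref{eq:prod-DOZZ}. I do not foresee any serious obstacle: the entire content is choosing $(z_1,z_2,z_3)$ so that the distance prefactor in~\eqref{def-structure-CLE} collapses, reading off DOZZ at $\mu=1$, and keeping track of the explicit factor $2^{-\lambda_i}$ that is absorbed into the definition of $\sm_3^{\alpha_1,\alpha_2,\alpha_3}$ in~\eqref{eq:loop-weight}. The real purpose of the lemma is not its own proof but to repackage $\CCLE$ in a form that can be compared, in the later conformal-welding computation of Section~\ref{sec:3-pt}, to an independent expression for the same Laplace transform coming from Theorem~\ref{thm-weld-QA-3} together with the FZZ formula and the law of the quantum pair of pants.
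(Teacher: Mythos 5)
Your proof is correct and follows essentially the same route as the paper: the paper likewise factorizes the product measure, applies Theorem~\ref{prop-DOZZ} at $\mu=1$ to get $\frac12 C^{\mathrm{DOZZ}}_\gamma(\alpha_1,\alpha_2,\alpha_3)$, and uses the total-mass identity $|\sm_3^{\alpha_1,\alpha_2,\alpha_3}|=2^{-\lambda_1-\lambda_2-\lambda_3}\CCLE$ (recorded just before the lemma), which rests on exactly your observation that $|z_i-z_j|=1$ for the chosen points. The only cosmetic difference is that the paper states that mass computation separately rather than re-deriving it inside the proof.
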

\begin{proof}
	Theorem~\ref{prop-DOZZ} yields  $\LF_\C^{(\alpha_i, z_i)_i} [e^{-\mu_\phi(\C)}] =\frac12C^\mathrm{DOZZ}_\gamma(\alpha_1,\alpha_2,\alpha_3)$. Now~\eqref{key} gives~\eqref{eq:prod-DOZZ}. 
\end{proof}

The product measure 
$\LF_\C^{(\alpha_i, z_i)_i} \times \sm_3^{\alpha_1, \alpha_2, \alpha_3}$ can be obtained from  conformally welding the quantum pair of pants  $\QP(\ell_1,\ell_2,\ell_3)$ and $\cM_{1,0}^\disk(\alpha_i;\ell_i)$ as in Theorem~\ref{thm-weld-QA-3}, 
which will allow us to compute $\CCLE$ for $\kappa\in (8/3,4)$ in Section~\ref{subsec:area-matching}.
\begin{theorem}\label{prop-weight-QA-3}
	Let $\alpha_i \in (Q - \frac\gamma4, Q)$ for $i=1,2,3$. 
	Let $(\phi,\eta_1,\eta_2,\eta_3)$ be sampled from $\LF_\C^{(\alpha_i, z_i)_i}\times \sm_3^{\alpha_1, \alpha_2, \alpha_3}$. 
	Then the law of the decorated quantum surface $(\C, \phi, \eta_1, \eta_2, \eta_3,  z_1,z_2,z_3)/{\sim_\gamma}$ is
	\alb
	\frac{\gamma^2}{4\pi^4(Q-\gamma)^4}\iiint_0^\infty \ell_1 \ell_2 \ell_3 \mathrm{Weld}\Big( \cM_{1,0}^\disk(\alpha_1;\ell_1), \cM_{1,0}^\disk(\alpha_2;\ell_2), 
	\cM_{1,0}
	^\disk(\alpha_3; \ell_3), \QP(\ell_1, \ell_2, \ell_3) \Big)\, d \ell_1 \, d\ell_2\, d\ell_3.
	\ale
	Here $\mathrm{Weld}$ means uniform conformal welding in the same sense as in Theorem~\ref{thm-weld-QA-3}.
\end{theorem}

We fix the range $(Q - \frac\gamma4, Q)$ in Theorem~\ref{prop-weight-QA-3} for concreteness since this is the range we will use to prove Theorem~\ref{thm:conformal radius}.
Since  $\gamma=\sqrt{\kappa}\in (\sqrt{8/3},2)$, we have $\gamma\in (Q - \frac\gamma4, Q)$. 
We first observe that Theorem~\ref{thm-weld-QA-3} is the special case of Theorem~\ref{prop-weight-QA-3} where $\alpha_1=\alpha_2=\alpha_3=\gamma$.
\begin{lemma}\label{prop-speical}
	Theorem~\ref{prop-weight-QA-3} holds for $\alpha_1=\alpha_2=\alpha_3=\gamma$.
\end{lemma}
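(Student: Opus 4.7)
The plan is to observe that the case $\alpha_1=\alpha_2=\alpha_3=\gamma$ is essentially a direct combination of Theorems~\ref{thm-QS3-field}, \ref{thm-weld-QA-3}, and~\ref{thm:def-QD}, together with a simple bookkeeping of the multiplicative constants.

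First I would compute the weight in~\eqref{eq:loop-weight} at $\alpha_i = \gamma$. Since $\lambda_i = -\tfrac{\alpha_i^2}{2}+Q\alpha_i - 2$ and $Q\gamma = \tfrac{\gamma^2}{2}+2$, we have $\lambda_i = 0$ for $i=1,2,3$. Therefore $\sm_3^{\gamma,\gamma,\gamma} = \sm_3$, so the measure $\LF_\C^{(\gamma,z_i)_i}\times \sm_3^{\gamma,\gamma,\gamma}$ is just $\LF_\C^{(\gamma,z_i)_i} \times \sm_3$, obtained by sampling the field and an independent $\CLE_\kappa$ and recording the three outermost separating loops.

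Next, by Theorem~\ref{thm-QS3-field} the law of $(\C,\phi,z_1,z_2,z_3)/{\sim_\gamma}$ under $\LF_\C^{(\gamma,z_i)_i}$ equals $\frac{2(Q-\gamma)^2}{\pi\gamma}\,\QS_3$. Since the CLE is sampled independently of $\phi$ in the definition of $\sm_3$ and independently of the field embedding in the definition of $\QS_3\otimes \CLE_\kappa$, one concludes that the law of $(\C,\phi,\eta_1,\eta_2,\eta_3,z_1,z_2,z_3)/{\sim_\gamma}$ under $\LF_\C^{(\gamma,z_i)_i}\times \sm_3$ equals $\frac{2(Q-\gamma)^2}{\pi\gamma}$ times the pushforward of $\QS_3 \otimes \CLE_\kappa$ under the map $(\C,h,\Gamma,z_1,z_2,z_3)/{\sim_\gamma}\mapsto (\C,h,\eta_1,\eta_2,\eta_3,z_1,z_2,z_3)/{\sim_\gamma}$.

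Then I would apply Theorem~\ref{thm-weld-QA-3}, which expresses this pushforward as
\[
\iiint_{\R_+^3} \ell_1\ell_2\ell_3\,\mathrm{Weld}\bigl(\QD_{1,0}(\ell_1),\QD_{1,0}(\ell_2),\QD_{1,0}(\ell_3),\QP(\ell_1,\ell_2,\ell_3)\bigr)\,d\ell_1\,d\ell_2\,d\ell_3.
\]
Finally, invoking Theorem~\ref{thm:def-QD} to substitute $\QD_{1,0}(\ell_i) = \frac{\gamma}{2\pi(Q-\gamma)^2}\cM_1^\disk(\gamma;\ell_i)$ produces the claimed integrand $\mathrm{Weld}(\cM_1^\disk(\gamma;\ell_1),\cM_1^\disk(\gamma;\ell_2),\cM_1^\disk(\gamma;\ell_3),\QP(\ell_1,\ell_2,\ell_3))$, and the overall prefactor is
\[
\frac{2(Q-\gamma)^2}{\pi\gamma}\cdot\left(\frac{\gamma}{2\pi(Q-\gamma)^2}\right)^{3} \;=\; \frac{\gamma^2}{4\pi^4(Q-\gamma)^4},
\]
matching the constant in Theorem~\ref{prop-weight-QA-3}. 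There is no real obstacle here; the only thing to be careful about is that the three loops $\eta_1,\eta_2,\eta_3$ are measurable functions of $(\C,h,\Gamma,z_1,z_2,z_3)/{\sim_\gamma}$, so the pushforward in the previous step is unambiguously defined and compatible with the conformal welding of Theorem~\ref{thm-weld-QA-3}.
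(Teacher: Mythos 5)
Your argument is correct and is essentially the paper's own proof: the paper also deduces this case by combining Theorem~\ref{thm-QS3-field}, Theorem~\ref{thm:def-QD}, and Theorem~\ref{thm-weld-QA-3}, and your constant bookkeeping $\frac{2(Q-\gamma)^2}{\pi\gamma}\cdot\bigl(\frac{\gamma}{2\pi(Q-\gamma)^2}\bigr)^3=\frac{\gamma^2}{4\pi^4(Q-\gamma)^4}$ checks out (with $\lambda_i=0$ so that $\sm_3^{\gamma,\gamma,\gamma}=\sm_3$). No discrepancies to report.
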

\begin{proof}
	By  the relation between $\LF_\C^{(\gamma, z_i)_i}$ and $\QS_3$ from
	Definition~\ref{def-QS-2} and the relation between $\QD_{1,0}(\ell) $ and $\cM_{1,0}^\disk(\gamma; \ell)$ from Definition~\ref{def-QD-alpha}, we see that 
	Lemma~\ref{prop-speical} follows from Theorem~\ref{thm-weld-QA-3}.
\end{proof}

We first explain how to
go from $\alpha_1=\alpha_2=\alpha_3=\gamma$ to the following case.
\begin{proposition}\label{prop-goal}
	Theorem~\ref{prop-weight-QA-3} holds for $\alpha_1=\alpha\in  (Q - \frac\gamma4, Q)$ and $\alpha_2=\alpha_3=\gamma$.
\end{proposition}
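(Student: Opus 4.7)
The plan is to derive the $(\alpha,\gamma,\gamma)$ welding by Girsanov reweighting the $(\gamma,\gamma,\gamma)$ welding of Lemma~\ref{prop-speical}, raising the insertion at $z_1$ from $\gamma$ to $\alpha$. Using the regularized Cameron--Martin expression for Liouville insertions from~\cite[Lemma 2.8]{AHS-SLE-integrability} (and its $\bbH$-analogue in~\cite{ARS-FZZ} which passes through the disintegration of Lemma~\ref{lem:field-disk}),
\[
\LF_\C^{(\alpha, z_1),(\gamma, z_2),(\gamma, z_3)} = \lim_{\eps\to 0}\eps^{(\alpha^2-\gamma^2)/2}\,e^{(\alpha-\gamma)\phi_\eps(z_1)}\,\LF_\C^{(\gamma, z_i)_i},
\]
\[
\cM_1^\disk(\alpha;\ell_1) = \lim_{\delta\to 0}\delta^{(\alpha^2-\gamma^2)/2}\,e^{(\alpha-\gamma) h_\delta(i)}\,\cM_1^\disk(\gamma;\ell_1).
\]
Since $\lambda_i(\gamma)=\gamma(Q-\gamma/2)-2=0$, we also have $d\sm_3^{\alpha,\gamma,\gamma}/d\sm_3^{\gamma,\gamma,\gamma} = (\CR(\eta_1,z_1)/2)^{\lambda_1(\alpha)}$. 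Multiplying the LHS of Lemma~\ref{prop-speical} by the combined density and passing to $\eps\to 0$ yields $\LF_\C^{(\alpha, z_1),(\gamma, z_2),(\gamma, z_3)}\times\sm_3^{\alpha,\gamma,\gamma}$, which is the LHS of the proposition.

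The main task is to transfer this reweighting to the welded side. Let $\psi:\bbH\to D_{\eta_1}$ be the conformal welding map with $\psi(i)=z_1$, so $|\psi'(i)|=\CR(\eta_1,z_1)/2$. For small $\eps$ the point $z_1$ lies strictly inside $D_{\eta_1}$, and the $\sim_\gamma$-relation $\phi = h\circ\psi^{-1}+Q\log|(\psi^{-1})'|$ gives $\phi_\eps(z_1) = h_\delta(i) - Q\log|\psi'(i)| + o(1)$ with $\delta = \eps/|\psi'(i)|$. Substituting and rewriting $\eps^{(\alpha^2-\gamma^2)/2} = \delta^{(\alpha^2-\gamma^2)/2}|\psi'(i)|^{(\alpha^2-\gamma^2)/2}$, the combined reweighting becomes
\[
\delta^{(\alpha^2-\gamma^2)/2}\,e^{(\alpha-\gamma)h_\delta(i)}\cdot |\psi'(i)|^{E} + o(1),\qquad E := \tfrac{\alpha^2-\gamma^2}{2} - Q(\alpha-\gamma) + \lambda_1(\alpha).
\]
The algebraic heart of the proof is the identity $E=0$, i.e.\ the KPZ-type relation $\lambda_1(\alpha)=Q(\alpha-\gamma)-\tfrac{1}{2}(\alpha^2-\gamma^2)=(\alpha-\gamma)(Q-(\alpha+\gamma)/2)$, which holds precisely because $\lambda_1(\gamma)=0$. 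Hence the $|\psi'(i)|^E$ factor disappears and the surviving density is exactly the $\bbH$-side Girsanov factor converting $\cM_1^\disk(\gamma;\ell_1)$ to $\cM_1^\disk(\alpha;\ell_1)$. Applied to the integral on the RHS of Lemma~\ref{prop-speical}, this produces the asserted expression with $\cM_1^\disk(\alpha;\ell_1)$ in place of $\cM_1^\disk(\gamma;\ell_1)$, and with the prefactor $\frac{\gamma^2}{4\pi^4(Q-\gamma)^4}$ unchanged.

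The main technical obstacle is justifying the limit interchange: the $\eps\to 0$ regularization must commute with the expectation over the random welding (in particular over the random map $\psi$ and its derivative at $i$) and over the lengths $\ell_1,\ell_2,\ell_3$. This requires uniform $L^1$-control on the error in the circle-average relation $\phi_\eps(z_1)=h_\delta(i)-Q\log|\psi'(i)|+o(1)$, which should follow from standard GFF circle-average estimates together with Liouville moment bounds available throughout the parameter range $\alpha\in(Q-\gamma/4,Q)$. Restricting to this range ensures the relevant Seiberg-type integrability conditions hold so the regularized limits converge and factor through the conformal welding in the strong sense needed.
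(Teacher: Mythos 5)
Your overall strategy is the paper's: start from the $(\gamma,\gamma,\gamma)$ welding of Lemma~\ref{prop-speical} and reweight the insertion at $z_1$ from $\gamma$ to $\alpha$, with the conformal-radius factor produced by transporting the insertion reweighting through the uniformizing map $\psi$ and the exponent cancellation $\frac{\alpha^2-\gamma^2}{2}-Q(\alpha-\gamma)+\lambda_1(\alpha)=0$; this bookkeeping is exactly the content of the paper's Lemma~\ref{lem:reweight} (there the factor $\eps^{\frac12(\alpha^2-\gamma^2)}$ cancels and $(\frac12\CR(\eta_1,z_1))^{-\frac{\alpha^2}{2}+Q\alpha-2}$ survives to turn $\sm_3$ into $\sm_3^{\alpha,\gamma,\gamma}$), and your identification of which side of the welding each factor acts on matches Lemmas~\ref{lem:har} and~\ref{lem:XDc}.

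The gap is in how you implement the reweighting. You regularize the insertion by genuine circle averages $\phi_\eps(z_1)$ and pass to the $\eps\to 0$ limit, which forces you to (i) relate $\phi_\eps(z_1)$ to a disk-side circle average at the \emph{random} scale $\delta=\eps/|\psi'(i)|$ with an $o(1)$ error, (ii) apply the disk-side Girsanov statement at this random, welding-dependent scale, and (iii) interchange the $\eps\to 0$ limit with the disintegration over the interface and the lengths, under measures that are infinite ($\LF_\C$) — none of which is justified by "standard circle-average estimates," and (iii) in particular is essentially the statement to be proved, since weak convergence of the total field law does not by itself identify the conditional law of the welded pieces in the limit. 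The paper's proof is structured precisely to avoid all of this: no limit is ever taken. Lemma~\ref{lem-disk-reweight} and Lemma~\ref{lem:reweight} are \emph{exact} identities for each fixed $\eps$, valid against test functions $f$ depending only on the field outside $\psi(B_\eps(i))$, because the sphere-side computation uses the pushforward measure $\widehat\theta_\eps=\psi_*\theta_\eps$ (harmonic measure on $\psi(\partial B_\eps(i))$ seen from $z_1$) rather than a circle average of $\phi$, so Girsanov gives closed-form covariances via the conformal radius with no error term. Reweighting both sides of the fixed-$\eps$ identity~\eqref{eq:gamma} by the same factor $\eps^{\frac12(\alpha^2-\gamma^2)}e^{(\alpha-\gamma)X_\eps(i)}$ — a functional of the disk-side field only, so the product/disintegration structure is preserved automatically — and letting $\eps$, $f$, $g$ be arbitrary then yields equality of measures directly. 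To repair your argument you would either have to supply the uniform integrability and conditional-convergence statements you allude to (nontrivial, and scale-dependence on $|\psi'(i)|$ makes them delicate), or simply switch to the fixed-$\eps$, image-measure formulation, at which point your computation becomes the paper's.
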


The proof of Proposition~\ref{prop-goal} is based on the reweighting argument as in  the proof of Proposition \ref{lem-G2-alpha} in Section \ref{section:reweight}.
For notational simplicity for $\ell>0$ we let $\mathfrak M_\ell$ be the law of the  decorated quantum surfaces corresponding to 
\[\frac{\gamma^2}{4\pi^4(Q-\gamma)^4}\iint_0^\infty \ell_2 \ell_3\mathrm{Weld}( \cM_{1,0}^\disk(\gamma; \ell_2), \cM_{1,0}^\disk(\gamma; \ell_3), \QP(\ell, \ell_2, \ell_3)) \, d\ell_2 \, d\ell_3,\]
so that the relevant integral for Proposition~\ref{prop-goal} is $\int_0^\infty \ell \mathrm{Weld}( \cM_{1,0}^\disk(\alpha; \ell),\mathfrak M_\ell )\, d\ell $.
We sample a decorated quantum surface from 
$	\int_0^\infty \ell\mathrm{Weld}( \cM_{1,0}^\disk(\alpha; \ell),\mathfrak M_\ell )\, d\ell $
and let $(\C, \phi, \eta_1, \eta_2, \eta_3, z_1,z_2,z_3)$ be its embedding;  since we specify the locations of the three marked points, the tuple $(\phi,\eta_1,\eta_2,\eta_3)$ is uniquely specified by the decorated quantum surface. 

Recall the notations in Lemma~\ref{lem:reweight}.
Let $D_{\eta_1}$ and $D^c_{\eta_1}$ be the connected components of $\C \backslash \eta_1$ such that $D_{\eta_1}$  contains $z_1$. 
Let $p\in \eta_1$ be a point sampled from the harmonic measure of $\bdy D_{\eta_1}$ viewed from $z_1$ and set 
$\cD_1=(D_{\eta_1}, \phi,z_1,p )/{\sim_\gamma}$. 
Let $\psi: \bbH\to D_{\eta_1}$ be the conformal map with $\psi(i) = z_1$ and $\psi(0) = p$.
Let $X = \phi\circ \psi + Q\log |\psi'|$ so that $\cD_1=(\bbH,X,i,0)/{\sim_\gamma}$.
Let $\cD^c_1$ be the decorated quantum surface $(D^c_{\eta_1}, \phi,\eta_2, \eta_3,z_2,z_3,p)/{\sim_\gamma}$. See Figure~\ref{fig-3-point} for an illustration.
The next lemma describes the  law of $(X, \cD^c_1)$.
\begin{lemma}\label{lem:XDc}
	Given a decorated quantum surface $\cS$ sampled from $\mathfrak M_\ell$,  we write $\cS^\bullet$ as the decorated quantum surface obtained by further sampling a point on the boundary of $\cS$ according to the probability measure  proportional to its quantum boundary length measure. 
	Let $\mathfrak M_\ell^\bullet$ be the law of $\cS^\bullet$.  Then the  law of $(X, \cD^c_1)$ defined right above is $\int_0^\infty  \ell\LF_\bbH^{(\alpha,i)} (\ell) \times \mathfrak M_\ell^{\bullet}d\ell$.
\end{lemma}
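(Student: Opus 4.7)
The plan is to reinterpret the sample as arising from a modified uniform conformal welding in which the boundary point on the $\cM_1^\disk(\alpha;\ell)$ side is drawn from harmonic measure viewed from $z_1$ rather than from quantum length, and then to apply Lemma~\ref{lem:har}.

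To sample from $\mathrm{Weld}(\cM_1^\disk(\alpha;\ell), \mathfrak M_\ell)$, one independently draws $\cS_1 \sim \cM_1^\disk(\alpha;\ell)$, $\cS_2 \sim \mathfrak M_\ell$, and boundary points $P \in \bdy\cS_1$ and $Q \in \bdy\cS_2$ from their respective quantum-length probability measures, and then identifies $P$ with $Q$ via the unique quantum-length-preserving welding. Parameterizing both boundaries by quantum length as $\R/\ell\Z$, the welded surface is determined by $(\cS_1, \cS_2)$ together with the rotation $t := Q - P \pmod{\ell}$. The key observation is that if instead $P$ is sampled from the harmonic measure on $\bdy\cS_1$ viewed from the bulk insertion $z_1$ of $\cS_1$ (keeping $Q$ independent and quantum-length uniform), then $t$ is still uniform on $[0,\ell]$, since $Q$ is uniform and independent of $P$. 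Hence the joint law of the welded surface together with the implicit welding marker is unchanged. Therefore, sampling from $\int_0^\infty \ell\,\mathrm{Weld}(\cM_1^\disk(\alpha;\ell), \mathfrak M_\ell)\,d\ell$ and then resampling $p$ from the harmonic measure of $\eta_1$ viewed from $z_1$ produces the same joint law as independently sampling $(\cS_1, P)$ from $\cM_1^\disk(\alpha;\ell)$ with $P$ harmonic from $z_1$, together with $(\cS_2, Q) \sim \mathfrak M_\ell^\bullet$ (with weight $\ell\,d\ell$), and welding with $p := P \equiv Q$.

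Under this alternative sampling, $\cD^c_1 = (\cS_2, \phi, \eta_2, \eta_3, z_2, z_3, Q)/\sim_\gamma$ is by definition a sample from $\mathfrak M_\ell^\bullet$. Meanwhile, $X$ is obtained by embedding $(\cS_1, P)/\sim_\gamma$ as $(\bbH, X, i, 0)/\sim_\gamma$, and Lemma~\ref{lem:har} gives $X \sim \LF_\bbH^{(\alpha,i)}(\ell)$. The independence of $(\cS_1, P)$ and $(\cS_2, Q)$ transfers to conditional independence of $X$ and $\cD^c_1$ given $\ell$, yielding the claimed product measure. The main obstacle is formalizing the reinterpretation of uniform conformal welding in the first step, which requires a careful check that the welded surface depends only on the rotation $t$ and that the law of $t$ is unaffected by replacing the quantum-length marginal of $P$ with the harmonic one while keeping $Q$ uniform and independent.
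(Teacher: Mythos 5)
Your proposal is correct and takes essentially the same route as the paper: the paper's proof simply asserts, by the definition of uniform conformal welding, that conditionally on $\cD_1$ (with its harmonically sampled boundary point $p$) the law of $\cD^c_1$ is the probability measure proportional to $\mathfrak M_\ell^\bullet$, and then invokes Lemma~\ref{lem:har}, which is exactly the content your rotation argument makes explicit. One minor wording caveat: the phrase that ``the joint law of the welded surface together with the implicit welding marker is unchanged'' is imprecise (the marker's conditional law changes from length-uniform to harmonic); what your computation actually establishes, and what is needed, is that the rotation $t$ is uniform and independent of $P$ given $(\cS_1,\cS_2)$, so the two sampling procedures give the same joint law of the welded surface and $p$.
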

\begin{proof}
	By the definition of uniform welding, conditioning on $\cD_1$, the conditional law of $\cD^c_1$ is the probability measure proportional to $\mathfrak M_\ell^\bullet$ with $\ell$ being the boundary length of $\cD_1$.
	Since the marked boundary point of $\cD_1$ is sampled from the harmonic measure,  Lemma~\ref{lem:har} now yields Lemma~\ref{lem:XDc}.
\end{proof}

\begin{figure}[ht!]
	\begin{center}
		\includegraphics[scale=0.4]{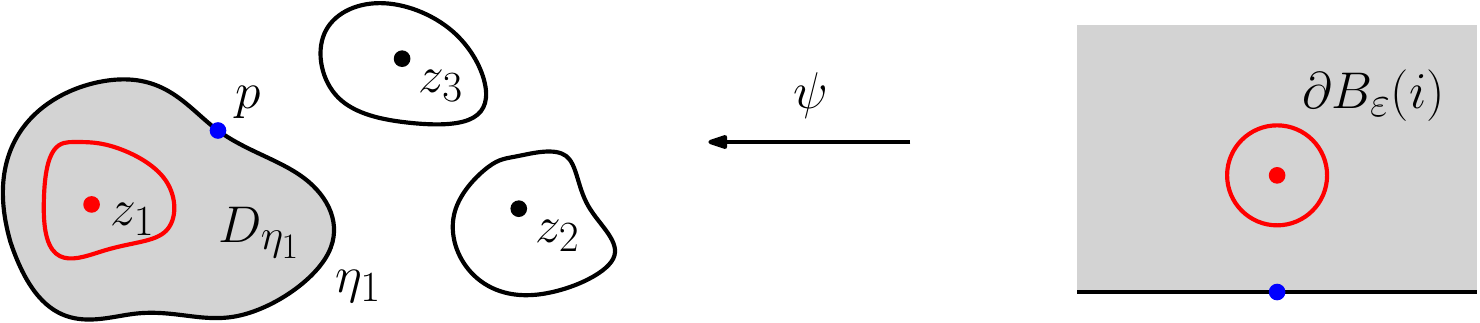}
	\end{center}
	\caption{\label{fig-3-point}   Illustration for Lemma~\ref{lem:XDc} and the proof of Proposition~\ref{prop-goal}.  The region $D_{\eta_1}$ surrounded by $\eta_1$ is colored grey. Both $(D_{\eta_1},\phi, z_1,p)$ and $(\bbH,X,0,i)$ are embeddings of $\cD_1$, which are related by $\psi$.  The domain $\C_{\eta_1,p,\eps}$ equals $\C\setminus \psi(B_\eps(i))$, the region outside the red curve on the left.
	} 
\end{figure}
\begin{proof}[Proof of Proposition~\ref{prop-goal}]
	In the setting of Lemma~\ref{lem:XDc} with $\alpha=\gamma$, by Lemma~\ref{prop-speical}
	the law of $(\phi, \eta_1, \eta_2, \eta_3,p)$ is \(\mathrm{Harm_{z_1,\eta_1}}(dp)\,\LF_\C^{(\gamma, z_1), (\gamma, z_2), (\gamma, z_3)}  (d\phi) \,\sm_3(d\eta_1,d\eta_2,d\eta_3)\),
	where $\mathrm{Harm}_{z_1,\eta_1}$ means the harmonic measure on $\eta_1$ viewed from $z_1$. Therefore
	Lemma~\ref{lem:XDc} with $\alpha=\gamma$  can be written as 
	\begin{equation}\label{eq:gamma}
		\LF_\C^{(\gamma, z_1), (\gamma, z_2), (\gamma, z_3)} (d\phi)\,\mathrm{Harm_{z_1,\eta_1}}(dp)\, \sm_3(d\eta_1,d\eta_2,d\eta_3)= \int_0^\infty  \ell\LF_\bbH^{(\gamma,i)} (\ell) \times \mathfrak M_\ell^{\bullet}d\ell,
	\end{equation}
	in the sense that $(\phi, \eta_1, \eta_2, \eta_3,p)$ and $(X, \cD^c_1)$  determine each other, and the two sides of~\eqref{eq:gamma} give the laws of $(\phi, \eta_1, \eta_2, \eta_3,p)$ and $(X, \cD^c_1)$, respectively.

	Now we use Lemma \ref{lem-disk-reweight} and Lemma \ref{lem:reweight} to prove the result from \eqref{eq:gamma}. For $\eps\in (0,\frac14)$, let $\C_{\eta_1,p,\eps}=\C\setminus \psi(B_\eps(i))$. For any nonnegative  measurable function $f$ of $\phi|_{\C_{\eta_1,p,\eps}}$, and any nonnegative measure function $g$ of $(\eta_1,\eta_2,\eta_3)$, we get from~\eqref{eq:gamma} that
	\begin{align}
		&\int f(\phi|_{\C_{\eta_1,p,\eps}})g(\eta_1,\eta_2,\eta_3) \eps^{\frac12(\alpha^2 - \gamma^2)} e^{(\alpha - \gamma)X_\eps(i)} \LF_\C^{(\gamma, z_1), (\gamma, z_2), (\gamma, z_3)}(d\phi)\, \mathrm{Harm_{z_1,\eta_1}}(dp)\, \sm_3(d\eta_1,d\eta_2,d\eta_3) \nonumber\\
		&= \int_0^\infty \left(\int f(\phi|_{\C_{\eta_1,p,\eps}})g(\eta_1,\eta_2,\eta_3) \eps^{\frac12(\alpha^2 - \gamma^2)} e^{(\alpha - \gamma)X_\eps(i)}  \ell\LF_\bbH^{(\gamma,i)} (\ell) \times \mathfrak M_\ell^{\bullet}\right)d\ell. \label{eq:gamma-eps}
	\end{align} 
	Recall that \(\sm_3^{\alpha,\gamma,\gamma} =   \left(\frac12\CR(\eta_1, z_1)\right)^{-\frac{\alpha^2}2 + Q \alpha  - 2}  \sm_3\).
 By Lemma~\ref{lem:reweight}, the left side of~\eqref{eq:gamma-eps} equals 
	\begin{align}
		&\int f g  \left(\frac12\CR(\eta_1, z_1)\right)^{-\frac{\alpha^2}2 + Q \alpha  - 2} \LF_\C^{(\alpha, z_1), (\gamma, z_2), (\gamma, z_3)}(d\phi)\, \mathrm{Harm_{z_1,\eta_1}}(dp)\, \sm_3(d\eta_1,d\eta_2,d\eta_3) \nonumber\\
		= & \int f g \LF_\C^{(\alpha, z_1), (\gamma, z_2), (\gamma, z_3)} (d\phi)\mathrm{Harm_{z_1,\eta_1}}(dp)\, \sm_3^{\alpha,\gamma,\gamma}\,(d\eta_1,d\eta_2,d\eta_3). \label{eq:compare1}
	\end{align}
    By Lemma~\ref{lem-disk-reweight}, the right side of~\eqref{eq:gamma-eps} corresponds to changing the weight of the marked point on the Liouville field, and equals 
\begin{equation}\label{eq:compare2}
	\int_0^\infty \left(\int fg \ell\LF_\bbH^{(\alpha,i)} (\ell) \times \mathfrak M_\ell^{\bullet}\right)d\ell.
\end{equation}
 
Since $\eps,f,g$  are arbitrary, comparing~\eqref{eq:compare1} and~\eqref{eq:compare2}  we get  in the same sense as in~\eqref{eq:gamma} that 
	\[
	\LF_\C^{(\alpha, z_1), (\gamma, z_2), (\gamma, z_3)} (d\phi)\mathrm{Harm_{z_1,\eta_1}}(dp)\, \sm_3^{\alpha,\gamma,\gamma}(d\eta_1,d\eta_2,d\eta_3)= \int_0^\infty  \ell\LF_\bbH^{(\alpha,i)} (\ell) \times \mathfrak M_\ell^{\bullet}d\ell.
	\] Forgetting about the point $p$ we get Proposition~\ref{prop-goal}.
\end{proof}

\begin{proof}[Proof of Theorem~\ref{prop-weight-QA-3}]
	The case $(\alpha_1, \alpha_2, \alpha_3) = (\alpha, \gamma, \gamma)$ was proved in Proposition~\ref{prop-goal}
	by reweighting from the  $(\gamma, \gamma, \gamma)$ case. The exact same argument allows us to obtain the general $(\alpha_1, \alpha_2, \alpha_3) $ case from the $(\gamma,\gamma,\gamma)$ case by reweighing around the three points sequentially.
\end{proof}

\subsection{Matching the quantum area: proof of Theorem~\ref{thm:conformal radius} for \texorpdfstring{$\kappa\in(\frac{8}{3},4]$}{g}}\label{subsec:area-matching}
Theorem~\ref{thm:conformal radius} for $\kappa\in(\frac{8}{3},4]$ is then an immediate consequence of Lemma~\ref{lem-sphere-laplace} and the following.
\begin{proposition} \label{prop-Z3}
Fix $(z_1, z_2, z_3) = (0,1,e^{i\pi/3})$.
There is a constant $C = C(\gamma) \in (0, \infty)$ such that for $\alpha_1, \alpha_2, \alpha_3 \in (Q - \frac\gamma4, Q)$, and $\sm_3^{\alpha_1, \alpha_2, \alpha_3}$ as defined in~\eqref{eq:loop-weight}  we have
\begin{equation}\label{eq:prop-3pt}
\left(\LF_\C^{(\alpha_i, z_i)_3} \times \sm_3^{\alpha_1, \alpha_2, \alpha_3}\right) [e^{- \mu_\phi(\C)}] = 
C \prod_{i=1}^3 \frac{2^{\alpha_i^2/2 - Q\alpha_i} \Gamma(\frac{\gamma\alpha_i}2 -\frac{\gamma^2}4) }{\Gamma(\frac2\gamma(Q-\alpha_i)) \cos(\frac{2\pi}\gamma (Q-\alpha_i))} \left( \frac{\pi\Gamma(\frac{\gamma^2}4)}{\Gamma(1 - \frac{\gamma^2}4)}\right)^{-\frac{\alpha_i}\gamma}.
\end{equation}
\end{proposition}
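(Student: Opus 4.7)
\textbf{Proof proposal for Proposition~\ref{prop-Z3}.} The plan is to apply the conformal welding identity of Theorem~\ref{prop-weight-QA-3} to replace the left-hand side with a triple integral whose integrand factorizes completely over $\ell_1,\ell_2,\ell_3$; each factor is then a known Bessel-type integral. Since quantum area is additive under welding, Theorem~\ref{prop-weight-QA-3} gives
\[
\bigl(\LF_\C^{(\alpha_i,z_i)_3}\times\sm_3^{\alpha_1,\alpha_2,\alpha_3}\bigr)[e^{-\mu_\phi(\C)}]
= \frac{\gamma^2}{4\pi^4(Q-\gamma)^4}\iiint_0^\infty \ell_1\ell_2\ell_3\,\QP(\ell_1,\ell_2,\ell_3)[e^{-A}]\prod_{i=1}^3 \cM_1^\disk(\alpha_i;\ell_i)[e^{-A_i}]\,d\ell_1d\ell_2d\ell_3.
\]

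The next step is to substitute the explicit formulas. Theorem~\ref{thm-QA3-laplace} at $\mu=1$ yields $\QP(\ell_1,\ell_2,\ell_3)[e^{-A}]= C_1(\ell_1\ell_2\ell_3)^{-1/2} e^{-(\ell_1+\ell_2+\ell_3)c}$ where $c:=1/\sqrt{\sin(\pi\gamma^2/4)}$ and $C_1$ is a $\gamma$-constant. The FZZ formula (Theorem~\ref{thm-FZZ}) gives, for $\nu_i:=\tfrac{2}{\gamma}(Q-\alpha_i)$,
\[
\cM_1^\disk(\alpha_i;\ell_i)[e^{-A_i}] = \frac{2}{\gamma}\,2^{-\alpha_i^2/2}\,\overline U(\alpha_i)\,\frac{2}{\Gamma(\nu_i)}\Bigl(\tfrac{c}{2}\Bigr)^{\nu_i}\ell_i^{-1}K_{\nu_i}(\ell_i c).
\]
Assembling the integrand, each $\ell_i$ contributes a single factor $\ell_i^{-1/2}e^{-\ell_i c}K_{\nu_i}(\ell_i c)$, so the triple integral factorizes into three independent integrals
\[
\int_0^\infty \ell^{-1/2}e^{-\ell c}K_{\nu_i}(\ell c)\,d\ell.
\]
The hypothesis $\alpha_i\in(Q-\tfrac{\gamma}{4},Q)$ ensures $\nu_i\in(0,\tfrac12)$, which is precisely the range in which the identity~\eqref{eq:integration-fun} applies (and also the range needed for the FZZ formula since $(Q-\tfrac{\gamma}{4},Q)\subset(\tfrac{\gamma}{2},Q)$); it evaluates each integral to $\pi^{3/2}/(\sqrt{2c}\cos(\pi\nu_i))$.

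It remains to simplify the product of prefactors and isolate the $\alpha_i$-dependence. Using the explicit form $\overline U(\alpha_i)=\bigl(\tfrac{2^{-\gamma\alpha_i/2}\,2\pi}{\Gamma(1-\gamma^2/4)}\bigr)^{\nu_i}\Gamma(\tfrac{\gamma\alpha_i}{2}-\tfrac{\gamma^2}{4})$, the $\alpha_i$-dependent piece consolidates to
\[
\frac{2^{\alpha_i^2/2-Q\alpha_i}\,\Gamma(\tfrac{\gamma\alpha_i}{2}-\tfrac{\gamma^2}{4})}{\Gamma(\tfrac{2}{\gamma}(Q-\alpha_i))\cos(\tfrac{2\pi}{\gamma}(Q-\alpha_i))}\cdot \Bigl(\tfrac{2\pi c}{2\,\Gamma(1-\gamma^2/4)}\Bigr)^{\nu_i},
\]
with all $\gamma$-only factors absorbed into the overall constant $C(\gamma)$. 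Writing $\nu_i=\tfrac{2Q}{\gamma}-\tfrac{2\alpha_i}{\gamma}$ splits $(\cdots)^{\nu_i}$ into a $\gamma$-constant times $\bigl(\tfrac{\sin(\pi\gamma^2/4)\Gamma(1-\gamma^2/4)^2}{\pi^2}\bigr)^{\alpha_i/\gamma}$, and the Gamma reflection identity $\Gamma(\gamma^2/4)\Gamma(1-\gamma^2/4)=\pi/\sin(\pi\gamma^2/4)$ rewrites this as $\bigl(\tfrac{\pi\Gamma(\gamma^2/4)}{\Gamma(1-\gamma^2/4)}\bigr)^{-\alpha_i/\gamma}$, matching the right-hand side of~\eqref{eq:prop-3pt}.

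There is no genuinely hard step; the computation is bookkeeping assembled from Theorems~\ref{prop-weight-QA-3},~\ref{thm-QA3-laplace},~\ref{thm-FZZ}, and identity~\eqref{eq:integration-fun}. The only delicate point is that the hypothesis $\alpha_i\in(Q-\tfrac{\gamma}{4},Q)$ is exactly what guarantees all three inputs (the conformal welding, the FZZ range $\alpha>\gamma/2$, and the Bessel integral range $\nu\in(-\tfrac12,\tfrac12)$) simultaneously apply; this is the constraint to watch throughout.
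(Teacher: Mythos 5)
Your proposal is correct and follows essentially the same route as the paper's proof: apply Theorem~\ref{prop-weight-QA-3}, insert the formulas of Theorem~\ref{thm-QA3-laplace} and the FZZ formula (Theorem~\ref{thm-FZZ}) at $\mu=1$, factorize the triple integral, evaluate each factor via~\eqref{eq:integration-fun}, and simplify the prefactors with the Gamma reflection identity. The bookkeeping (including the check that $\alpha_i\in(Q-\tfrac{\gamma}{4},Q)$ gives $\nu_i\in(0,\tfrac12)$ and $\alpha_i>\tfrac{\gamma}{2}$) matches the paper's argument.
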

\begin{proof} 
	Recall $\QP_3(\ell_1, \ell_2, \ell_3)[e^{-A}]$ from Theorem~\ref{thm-QA3-laplace}.  Recall $ \cM_{1,0}^\disk(\alpha_i; \ell_i)[e^{-A_i}]$ from  Theorem~\ref{thm-FZZ}, where $A_i$ is the quantum area of a sample from $\cM_{1,0}^\disk(\alpha_i; \ell_i)$.
	By Theorem~\ref{prop-weight-QA-3}, 	for some $\gamma$-dependent constants $C_1,C_2,C_3$, we have that
	$\left(\LF_\C^{(\alpha_i, z_i)_3} \times \sm_3^{\alpha_1, \alpha_2, \alpha_3}\right) [e^{- \mu_\phi(\C)}]$ equals	\alb
	C_1& \iiint_0^\infty \ell_1\ell_2\ell_3 \QP_3(\ell_1, \ell_2, \ell_3)[e^{-A}] \prod_{i=1}^3 \cM_{1,0}^\disk(\alpha_i; \ell_i)[e^{-A_i}] \, d\ell_1\, d\ell_2 \, d\ell_3 \\
	&= C_2   \prod_{i=1}^3 \frac{\ol U(\alpha_i) (4 \sin (\frac{\pi \gamma^2}4))^{\alpha_i/\gamma}}{2^{\alpha_i^2/2}\Gamma(\frac2\gamma(Q-\alpha_i))} \int_0^\infty \frac1{\sqrt{\ell_i}}e^{-\ell_i \sqrt{\frac{1}{\sin(\pi\gamma^2/4)}}} K_{\frac2\gamma(Q-\alpha_i)} \left( \ell_i\sqrt{\frac{1}{\sin(\pi\gamma^2/4)}} \right) \, d\ell_i \\
		&= C_3   \prod_{i=1}^3 \frac{\ol U(\alpha_i) (4 \sin (\frac{\pi \gamma^2}4))^{\alpha_i/\gamma}}{2^{\alpha_i^2/2}\Gamma(\frac2\gamma(Q-\alpha_i))  \cos (\frac{2\pi}\gamma(Q-\alpha_i) )} ,
	\ale
	where $\ol U$ is as in~\eqref{eq:U0-explicit} and Lemma~\ref{lem:integration-fun}  is used    to evaluate the integral on the second line.  Expanding $\ol U(\alpha_i)$ from~\eqref{eq:U0-explicit}  and substituting $\sin (\frac{\pi \gamma^2}4)$
	by $\frac{\pi}{\Gamma(\frac{\gamma^2}4)\Gamma(1-\frac{ \gamma^2}4)}$, we get~\eqref{eq:prop-3pt}.
\end{proof}

\begin{proof}[Proof of Theorem~\ref{thm:conformal radius}, {$\kappa\in(\frac{8}{3},4]$}]	
	We divide the proof into four cases depending on the parameter range. 
	
	\medskip 
	\noindent \textbf{Case I: $\kappa < 4$ and $\lambda_i \in (\frac{3\kappa}{32} - 1 + \frac2\kappa, \frac\kappa8 -1+\frac2\kappa)$ for all $i=1,2,3$.} 
	In this case we can find $\alpha_i \in (Q - \frac\gamma4, Q)$ satisfying $\lambda_i = -\frac{\alpha_i^2}2 + Q \alpha_i - 2$ for each $i$. Comparing Lemma~\ref{lem-sphere-laplace} and  Proposition~\ref{prop-Z3} yields that for some $\gamma$-dependent constant $C(\gamma)$  we have 
	\[\CCLE = \frac{C(\gamma)}{C_\gamma^{\mathrm{DOZZ}}(\alpha_1,\alpha_2,\alpha_3)} \prod_{i=1}^3 \frac{ \Gamma(\frac{\gamma\alpha_i}2 -\frac{\gamma^2}4) }{\Gamma(\frac2\gamma(Q-\alpha_i)) \cos(\frac{2\pi}\gamma (Q-\alpha_i))} \left( \frac{\pi\Gamma(\frac{\gamma^2}4)}{\Gamma(1 - \frac{\gamma^2}4)}\right)^{-\frac{\alpha_i}\gamma} .\]
	Now we can evaluate $C(\gamma)$ by setting $\alpha_1=\alpha_2=\alpha_3=\gamma$, in which case $\lambda_i = 0$ and $C_\kappa^{\CLE}(0,0,0) = 1$. This gives~\eqref{eq:3ptCLE} for $\CCLE$ and~\eqref{eq-Ngamma} for the factor $N_\gamma(\alpha)$. 
\medskip

	For the next two cases set $(z_1, z_2, z_3) = (0,1,e^{i\pi/3})$ so that $ C_\kappa^{\CLE}(\lambda_1, \lambda_2, \lambda_3)= \E[\prod_{i=1}^3 (\CR(\eta_i, z_i))^{\lambda_i}] $. Since the distance between $z_i$ and $\eta_i$ is less than $1$, Koebe 1/4 theorem yields $\frac14\CR(\eta_i, z_i) \leq 1$. Thus $ \E[\prod_{i=1}^3 ( \frac14\CR(\eta_i, z_i))^{\lambda_i}] \le \E[\prod_{i=1}^3 ( \frac14\CR(\eta_i,z_i))^{\wt \lambda_i}] $ 	for any real $\lambda_i, \wt \lambda_i$ such that  $\lambda_i \geq \wt \lambda_i$. Namely
	\begin{equation}\label{eq:mono}
		4^{-\lambda_1 - \lambda_2 - \lambda_3}C_\kappa^{\CLE}(\lambda_1, \lambda_2, \lambda_3) \le 4^{-\wt \lambda_1 - \wt \lambda_2 - \wt \lambda_3}C_\kappa^{\CLE}(\wt \lambda_1,\wt  \lambda_2,\wt  \lambda_3).
	\end{equation}	
	\medskip
	\noindent \textbf{Case II: $\kappa < 4$ and all $\lambda_i > \frac{3\kappa}{32} - 1 + \frac2\kappa$.} 
	By~\eqref{eq:mono} on $\{ (\lambda_1,\lambda_2,\lambda_3)\in \C^3: \Re \lambda_i > \frac{3\kappa}{32} -1+\frac2\kappa\}$, the function $\CCLE$ is finite, hence analytic. 
	On the other hand, the right hand side of~\eqref{eq:3ptCLE} is an explicit meromorphic function in $(\alpha_1,\alpha_2,\alpha_3)$.  Now, Case II follows from Case I.

	\medskip
	\noindent\textbf{Case III: $\kappa < 4$ and $\lambda_i \leq \frac{3\kappa}{32} - 1 + \frac2\kappa$ for some $i$.} 
	We will show that $\CCLE = \infty$ in this case. By the monotonicity~\eqref{eq:mono} and symmetry, it suffices to prove  $C_\kappa^{\CLE}(\lambda_1, \lambda_2, \lambda_3) = \infty$ for $\lambda_1 \leq \frac{3\kappa}{32} - 1 + \frac2\kappa$ and $\lambda_2, \lambda_3 > \frac{3\kappa}{32} -1+\frac2\kappa$. For $\lambda>\frac{3\kappa}{32} -1+\frac2\kappa$ we have 
	\[4^{-\lambda_1- \lambda_2 - \lambda_3}C_\kappa^{\CLE}(\lambda_1, \lambda_2, \lambda_3) > 4^{- \lambda - \lambda_2 - \lambda_3}C_\kappa^{\CLE}(\lambda, \lambda_2, \lambda_3), \]
	Suppose $(\alpha, \alpha_2, \alpha_3)\in (Q-\frac{\gamma}4, Q)^3$  correspond to $(\lambda, \lambda_2, \lambda_3)$. Then $\lambda\downarrow  \frac{3\kappa}{32} -1+\frac2\kappa $ means $\alpha \downarrow Q-\frac\gamma4$.
	Recall the explicit formula for $C_\kappa^{\CLE}(\lambda, \lambda_2, \lambda_3)$ in~\eqref{eq:3ptCLE} proven in Case II. Since $\lim_{\alpha \downarrow Q - \frac\gamma4}N_\gamma(\alpha) = \infty$ and $C_\gamma^\mathrm{DOZZ}(\alpha, \alpha_2, \alpha_3)\in (0,\infty)$ as $(\alpha, \alpha_2, \alpha_3)$ satisfies the Seiberg bounds~\eqref{eq-seiberg},
 we must have
	$C_\kappa^{\CLE}(\lambda, \lambda_2, \lambda_3)\rta \infty$ as $\lambda\downarrow  \frac{3\kappa}{32} -1+\frac2\kappa $.  Therefore $C_\kappa^{\CLE}(\lambda_1, \lambda_2, \lambda_3) = \infty$ as desired. 

	\medskip
	\noindent\textbf{Case IV: $\kappa =4$.} This follows from the $\kappa < 4$ case by sending $\kappa \nearrow 4$; see Lemmas~\ref{lem:cr-4} and~\ref{lem:cont4} in Appendix~\ref{kappato4}.
\end{proof}

\subsection{The three-point correlation function for \texorpdfstring{$\kappa'\in(4,8)$}{g}}\label{sec:cr-non-simple}
In this section, we explain how to prove the $\kappa'\in (4,8)$ case of Theorem~\ref{thm:conformal radius}. We first recall the definition of the generalized quantum annulus in \cite{ACSW24a} and introduce the generalized quantum pair of pants. We then compute their area and length distribution as the counterpart of Theorem \ref{thm-QA2-area} and Theorem \ref{thm-QA3-laplace} in the non-simple case.

\subsubsection{Generalized quantum annulus and generalized quantum pair of pants}\label{subsec:gqa}

The generalized quantum annulus is introduced in \cite[Definition 5.14]{ACSW24a}; we instead work with an equivalent definition stated in \cite[Proposition 5.16]{ACSW24a}. 
In the following, we will always work with generalized quantum surfaces parametrized by bounded domains. We say a loop \emph{surrounds} a point $z$ if it has nonzero winding number with respect to $z$, and in a collection of loops, the \emph{outermost loop surrounding $z$} is the loop surrounding all other loops that surround $z$.

\begin{definition}\label{def:GA}
     Let $(D, h, \Gamma, z)$ be an embedding of a sample from $\GQD_{1,0} \otimes \CLE_{\kappa'}$ and let $\eta \in \Gamma$ be the outermost loop surrounding $z$. Let $\eta^\mathrm{out}$ be the boundary of the unbounded connected component of $\C\backslash \eta$, and let $D_{\eta^\mathrm{out}}$ be the bounded connected component of $\C \backslash \eta^\mathrm{out}$. Choose $p \in \eta^\mathrm{out}$ such that  $(D \backslash D_{\eta^\mathrm{out}}, \Gamma,  p)/{\sim_\gamma}$ is measurable with respect to  $(D \backslash D_{\eta^\mathrm{out}}, \Gamma)/{\sim_\gamma}$. 
     Conditioned on 
    $(D \backslash D_{\eta^\mathrm{out}}, \Gamma, p)/{\sim_\gamma}$, sample a forested line segment $\cL$ conditioned on having quantum length (resp.\ generalized boundary length) equal to that of $\eta^\mathrm{out}$ (resp.\ $\eta$), and sample an independent $\CLE_{\kappa'}$ in each connected component of $\cL$. Glue $\cL$ to the boundary loop $\eta^\mathrm{out}$ of  $(D \backslash D_{\eta^\mathrm{out}}, \Gamma,  p)/{\sim_\gamma}$, identifying the endpoints of $\cL$ with $p$. Let $\wh \GA^{\rm d}$ be the law of the resulting generalized quantum surface, and let $\wh \GA^{\rm d}(a,b)$ be the disintegration of $\wh \GA^{\rm d}$ with respect to the outer and inner generalized boundary lengths. Let $\GA^{\rm d}(a,b) := (b\GQD_{1,0}(b))^{-1} \wh \GA^{\rm d}(a,b)$. Let $\GA(a,b)$ be the law of a sample from $\GA^{\rm d}(a,b)$ with the loops forgotten.
\end{definition}
Here is an imprecise but perhaps more intuitive description of $\wh{\GA}^\mathrm{d}$. In Definition~\ref{def:GA}, let $A \subset D$ be the set of points not surrounded by $\eta$. If one ``cuts'' some of the self-intersection points of $(A, h, \Gamma)/{\sim_\gamma}$, then 
$\wh{\GA}^\mathrm{d}$ is the law of the resulting surface. 
Definition~\ref{def:GA} sidesteps this ``cutting'' procedure: 
it defines $\wh{\GA}^\mathrm{d}$ by forgetting the forested boundary of $(A, h, \Gamma)/{\sim_\gamma}$ (which has extra self-intersection points) and resampling it so there are no extra self-intersection points. See \cite[Section 5.3]{ACSW24a} for further details.

\begin{proposition}[{\cite[Proposition 5.15]{ACSW24a}}]\label{prop-GA2-pt}
	For $a>0$, let $(D,h,\Gamma,z)$ be an embedding of a sample from $\GQD_{1,0}(a)\otimes\CLE_{\kappa'}$. Let $\eta$ be the outermost loop of $\Gamma$ surrounding $z$.
	Then the law of the decorated quantum surface $(D,h,\Gamma,\eta,z)/{\sim_\gamma}$ equals 
	\(\int_0^\infty  b \mathrm{Weld}(\GA^{\rm d}(a,b),\GQD_{1,0}(b)\otimes \CLE_{\kappa'}) \, db\).
\end{proposition}
We also know the total mass of $\GA(a,b)$ from {\cite[Definition 5.14]{ACSW24a}}.
\begin{proposition}\label{prop: ga total mass}
For $a,b>0$ and $\gamma^2=\frac{16}{\kappa'}\in (2,4)$
    \begin{equation}
    \left|\GA(a,b)\right|=  \frac{\cos(\pi (\frac{\gamma^2}4-1))}{\pi\sqrt{ab} (a+b)}.
\end{equation}
\end{proposition}
We now define the generalized quantum pair of pants in a manner parallel to Definition~\ref{def:GA}, using the pointed generalized quantum annulus $\GA^\mathrm{d}_1$ defined as follows. 
Let $(A,h,\Gamma_A)$ be an embedding of a sample from the weighted measure $\mu_h(A)\GA^\mathrm{d}$. Given $(A,h,\Gamma_A)$, sample $z \in A$ from the probability measure proportional to $\mu_h$, and let $\GA^\mathrm{d}_1$ denote the law of $(A, h, \Gamma_A, z)/{\sim_\gamma}$.

\begin{definition}\label{def: GP}
    Let $(A,h,\Gamma_A,z)$ be an embedding of a sample from $\GA^\mathrm{d}_1$. Fix a point $o\notin A$ in the bounded connected component of $\C \backslash A$.
    Let $\eta$ be the outermost loop surrounding $z$; denote its lift to the universal cover $\C\backslash\{o\}$ by $\wt\eta$. Then the unbounded connected component of the universal covering space minus $\wt\eta$ has a boundary; denote the projection of this boundary by $\eta^{\rm out}$.
    
    Let $D_{\eta^{\rm out}}$ be the interior of $\eta^{\rm out}$.
    Choose $p \in \eta^\mathrm{out}$ such that  $(A \backslash D_{\eta^{\rm out}}, \Gamma_A|_{A \backslash D_{\eta^{\rm out}}}, z,p)/{\sim_\gamma}$ is measurable with respect to  $(A \backslash D_{\eta^{\rm out}}, \Gamma_A|_{A \backslash D_{\eta^{\rm out}}}, z)/{\sim_\gamma}$. Conditioned on 
    $(A \backslash D_{\eta^{\rm out}}, \Gamma_A|_{A \backslash D_{\eta^{\rm out}}}, z,p)/{\sim_\gamma}$, sample a forested line segment $\cL$ conditioned on having quantum length (resp.\ generalized boundary length) equal to that of $\eta^\mathrm{out}$ (resp.\ $\eta$), and sample an independent $\CLE_{\kappa'}$ in each connected component of $\cL$. Glue $\cL$ to the boundary loop $\eta^\mathrm{out}$ of $(A \backslash D_{\eta^{\rm out}}, \Gamma_A|_{A\backslash D_{\eta^{\rm out}}}, z,p)/{\sim_\gamma}$, identifying the endpoints of $\cL$ with $p$.
    Let $\wh \GP^{\rm d}$ be the law of the resulting generalized quantum surface.
    
    Define $\wh \GP^{\rm d}(a,b,c)$ as the disintegration of $\wh \GP^{\rm d}$ according to all generalized boundary lengths. Let $\wt \GP^{\rm d}(a,b,c) := (c\GQD_{1,0}(c))^{-1} \wh \GP^{\rm d}(a,b,c)$. Let $\wt\GP(a,b,c)$ be the law of a sample from $\wt\GP^\mathrm{d}(a,b,c)$ with the loops forgotten.
\end{definition}

\begin{figure}[htb]
\centering
\includegraphics[width=0.32\linewidth]{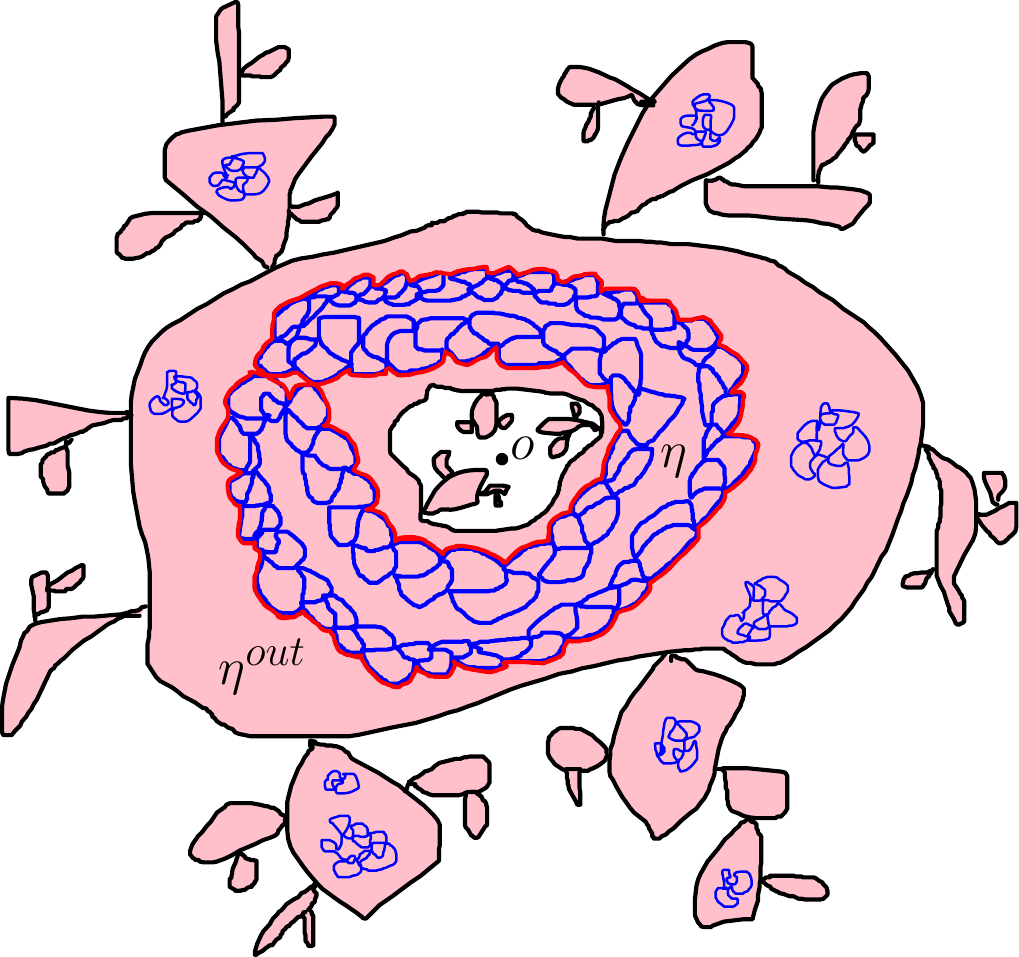}
\hspace{.02in}
\includegraphics[width=0.32\linewidth]{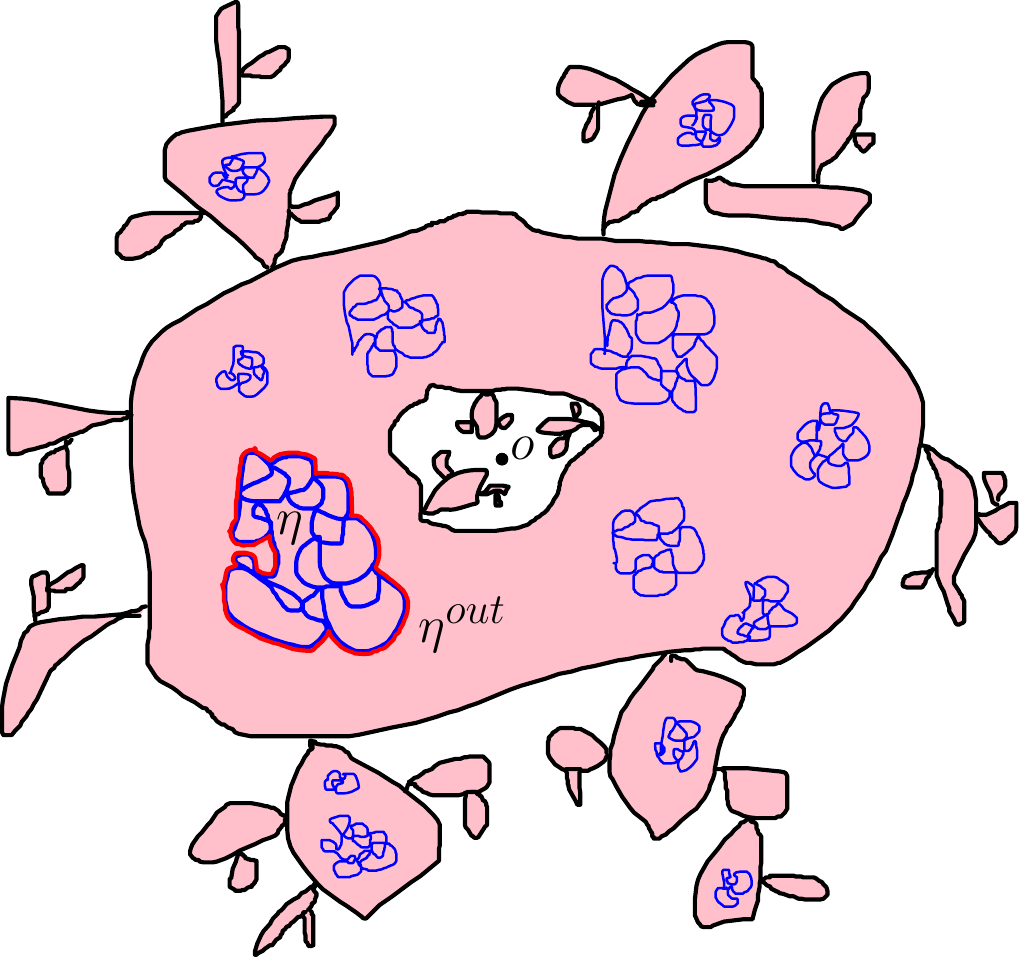}
\hspace{.02in}
\includegraphics[width=0.32\linewidth]{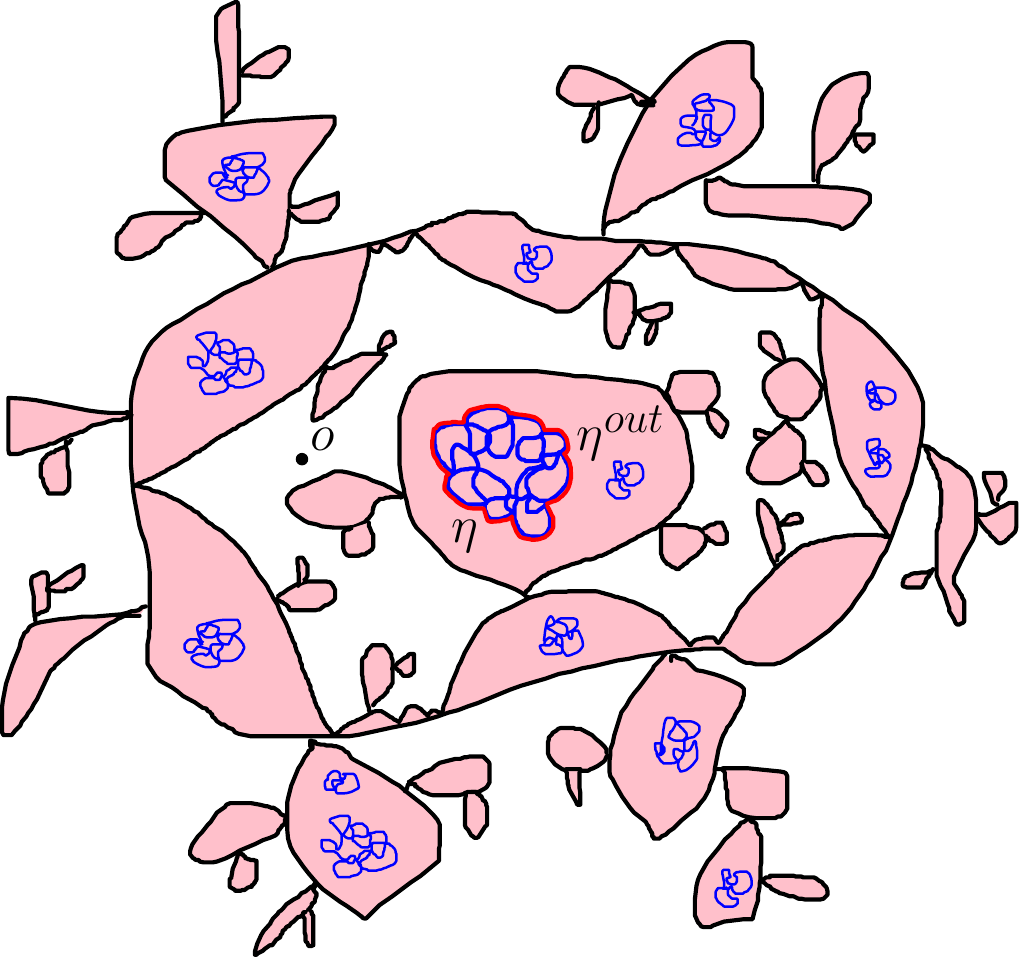}
\caption{Three cases of the generalized quantum pair of pants in Definition~\ref{def: GP}, where $\eta^{out}$ are colored red. \textbf{Left}: the outer boundary of $\eta$ is non-contractible, and $\eta^{out}$ is defined through lifting $\eta$ to the universal cover. \textbf{Middle/Right}: the outer boundary of $\eta$ is contractible and equals $\eta^{out}$.}
\label{fig:component}
\end{figure}

The existence of disintegration $\{\wt \GP^\mathrm{d}(a,b,c)\}_{a,b,c>0}$ comes from the L\'evy process description of the generalized quantum length of $\CLE_{\kappa'}$, see Proposition \ref{prop-ccm-ns}.

Similarly to Theorem~\ref{thm-weld-QA-3}, the generalized quantum pair of pants $\GP(a,b,c)$ conformally welded with three generalized quantum disks gives a quantum sphere decorated by $\CLE_{\kappa'}$. 
\begin{proposition}\label{thm-weld-GA-3}
There exists a constant $C_\gamma$ depending only on $\gamma$ such that, for $\GP(\ell_1,\ell_2,\ell_3):=C_\gamma \wt \GP(\ell_1,\ell_2,\ell_3)$, the following holds.     Sample $(\C, h, \Gamma, z_1, z_2, z_3)/{\sim_\gamma}$ from $\QS_3 \otimes \CLE_{\kappa'}$. For $i=1,2,3$,  let $\eta_i$ be the outermost loop of $\Gamma$ around $z_i$ separating it from the other two points. Then the law of the decorated generalized quantum surface $(\C, h, \eta_1, \eta_2, \eta_3,  z_1,z_2,z_3)/{\sim_\gamma}$ is	
	\eqb \label{eq-gpants-welding}
	\iiint_{\R^3_+} \ell_1\ell_2\ell_3\mathrm{Weld}\Big( \GQD_{1,0}(\ell_1), \GQD_{1,0}(\ell_2), 
	\GQD_{1,0}(\ell_3), \GP(\ell_1, \ell_2, \ell_3) \Big) \, d\ell_1 \, d\ell_2 \, d\ell_3.
	\eqe
\end{proposition}
\begin{proof}
For $i=1,2,3$, let ${D}_{\eta_i}$ be the (beaded) interior of $\eta_i$ that surround $z_i$, and ${D}_{\eta_i}^c=(\hat\C\setminus \eta_i)\setminus {D}_{\eta_i}$. Given $(h,\eta_1,\eta_2,\eta_3)$, let $p_3$ be a point sampled from the probability measure proportional to the generalized quantum length measure of $\eta_3$. By the same argument as in Proposition~\ref{prop-M2}, conditioned on $({D}_{\eta_3}^c,h,\eta_1,\eta_2,z_1,z_2,p_3)/\sim_\gamma$, the conditional law of $({D}_{\eta_3},h,z_3,p_3)/\sim_\gamma$ is $\GQD_{1,1}(\ell_3)^\#$, where $\ell_3$ is the generalized quantum length of $\eta_3$. Let $\frak{S}(\ell_3)$ be the law of the decorated generalized quantum surface $({D}_{\eta_3}^c,h,\Gamma\mid_{{D}_{\eta_3}^c},z_1,z_2,\eta_1,\eta_2)/{\sim_\gamma}$. By ~\cite[Theorem 7.4]{ACSW24a},
the law of $(\C, h, \eta_1, \eta_2, \eta_3,\Gamma,  z_1,z_2,z_3)/{\sim_\gamma}$ now equals
$$C_1\int_{\R_+}\ell_3\Weld(\GQD_{1,0}(\ell_3)\otimes \CLE_{\kappa'},\frak{S}(\ell_3))d\ell_3$$
where $C_1$ is a constant only depends on $\gamma$. 

Note that on an embedding $(D,h,\Gamma_D,z_1,z_2)$ of a sample from $\GQD_{2,0}(\ell)\otimes\CLE_{\kappa'}$, restricted on the event that there is no loop in $\Gamma_D$ surrounding both $z_1$ and $z_2$, let $\eta_1$ and $\eta_2$ be the outermost loop in $\Gamma_D$ surrounding $z_1$ and $z_2$ respectively, then the resulting law of $(D,h,\Gamma_D,z_1,z_2,\eta_1,\eta_2)/\sim_\gamma$ is $\frak{S}(\ell)$. Now suppose $(D,h,\Gamma_D,z_1)$ is an embedding of a sample from $\GQD_{1,0}(\ell)\otimes\CLE_{\kappa'}$ and $\eta_1$ be outermost loop surrounding $z_1$. Weight $(D,h,\Gamma_D,z_1,\eta_1)/\sim$ by $\mu_h(D\setminus D_{\eta_1})$, and sample $z_2\in D\setminus D_{\eta_1}$ from the probability measure proportional to the quantum area measure, and let $\eta_2$ be the outermost loop in $\Gamma_D$ surrounding $z_2$. On the one hand, the resulting law of $(D,h,\Gamma_D,z_1,z_2,\eta_1,\eta_2)/\sim$ is also $\frak{S}(\ell)$. On the other hand, from Proposition~\ref{prop-GA2-pt} and Definition~\ref{def: GP}, the resulting law of $(D,h,\Gamma_D,z_1,z_2,\eta_1,\eta_2)/\sim$ equals
$$C_2\iint_{\R^2_+} \ell_1\ell_2\mathrm{Weld}\Big( \GQD_{1,0}(\ell_1)\otimes\CLE_{\kappa'}, 
	\GQD_{1,0}(\ell_2)\otimes\CLE_{\kappa'}, \wt\GP^{\rm d}(\ell_1, \ell_2, \ell) \Big) \, d\ell_1 \, d\ell_2.$$
Combining the two indented equations and forgetting the loop decorations give the desired result (with $C_\gamma = C_1C_2$). 
\end{proof}

Now we give the quantitative description of the length and area distribution of $\GA$ and $\GP$, which is a counterpart of Theorems~\ref{thm-QA2-area} and~\ref{thm-QA3-laplace}.
\begin{theorem}\label{thm-laplace ns}
Let $\kappa'=\frac{16}{\gamma^2}\in (4,8)$ and $\ell_1,\ell_2,\ell_3>0$. Letting $A$ denote the quantum area of a sample from $\GA$ (resp.\ $\GP$) in the first (resp.\ second) equation below, we have 
\eqb\label{eq-GA-area}
\GA(\ell_1, \ell_2)[e^{-\mu A}] = \frac{\cos(\pi (\frac{\gamma^2}4-1))}\pi  \cdot \frac {e^{-2(\ell_1+\ell_2)\left(\mu/4\sin(\pi\gamma^2/4)\right)^{\kappa'/8}} }{\sqrt{\ell_1\ell_2} (\ell_1+\ell_2)} \quad \textrm{for }\mu\ge 0.
\eqe
\eqb\label{eq-GP-area}
\GP(\ell_1, \ell_2, \ell_3) [e^{-\mu A}]= C\mu^{\frac1{\gamma^2}-\frac{1}{2}}\frac{1}{\sqrt{\ell_1\ell_2\ell_3}} e^{-2(\ell_1+ \ell_2+ \ell_3)\left(\mu/4\sin(\pi \gamma^2/4)\right)^{\kappa'/8}} \quad \textrm{for }\mu>0.
\eqe
\end{theorem}
\begin{proof}
The proof uses the same L\'evy excursion computation as in the proof of  Theorems \ref{thm-QA2-area} and \ref{thm-QA3-laplace} for the simple case, so we only point out the minor changes. To obtain~\eqref{eq-GA-area}, let $\beta'=\frac{4}{\kappa'}+\frac{1}{2}$. Let $\zeta'$ be the $\beta'$-stable L\'evy process whose L\'evy measure is $1_{x>0} x^{-\beta'-1} \, dx$.
let 
$\P^{\beta'}$ denote its law, and  
for $y>0$ let $\tau_{-y}$ be the time $\zeta'$ first hits $-y$.
For $a,b>0$, let  $ (x_i)_{i \geq 1}$ be the collection of sizes of the jumps of $\zeta$ on $[0,\tau_{-a-b}]$ sorted in decreasing order. Independently of $(x_i)_{i\ge 1}$, 
	let $(A_i)_{i\ge 1}$ be independent copies of the quantum area of a sample from 
	$\GQD(1)^\#$. Then  the quantum area of a sample from $\GA(a,b)^{\#}$ has the law of 
	$\sum x_i^{8/\kappa'} A_i$.
The proof for this assertion is the same as in the simple case except that Proposition~\ref{prop-ccm} (used in the intermediate step Proposition~\ref{prop-QA2-Levy}) is replaced by Proposition~\ref{prop-ccm-ns}, and the quantum area scaling result is replaced by the following:
The quantum area of a sample from $\GQD(x)^\#$ agrees in law with $x^{8/\kappa'} A$ where $A$ is the quantum area of a sample from $\GQD(1)^\#$. This  scaling result follows from Proposition~\ref{prop:ns-area-law} instead of Proposition~\ref{prop-QD-area}. 
Now \eqref{eq-GA-area}   follows by combining Proposition~\ref{l-nonsimple} and Proposition~\ref{prop: ga total mass}.

Next, we prove~\eqref{eq-GP-area}. To that end, define $g(x)= \E[e^{-\mu x^{8/\kappa'}A_1}]$ where $A_1$ is the quantum area of a sample from $\GQD(1)^{\#}$. Let $\E^{\beta'}$ denote the expectation with respect to $\P^{\beta'}$. Using the argument of Lemma~\ref{lem-QP-translate}, there is a constant $C\in (0,\infty)$ such  that for $a,b,c>0$, we have
\begin{equation}\label{eq:GP-Levy}
\GP(a, b, c) [e^{-\mu A}]= 
C|\GA(a,b)|c^{-1/2} \E^{\beta'}[\tau_{-a-b} \prod_{i\ge 1}g(x_i)]
\end{equation}
where $(x_i)_{i\ge 1}$ is the set of jumps of $\zeta'$ that occur before time $\tau_{-a-b-c}$. Next, the proof of  Lemma~\ref{lem:Palm-Levy} carries over directly to the present setting to yield
\begin{equation}\label{eq:GP-excursion}
\E^{\beta'}[\tau_{-a-b} \prod_{i\ge 1}g(x_i)]=
C(\mu)(a+b)e^{-(a+b+c) 2(\mu/4\sin(\pi\gamma^2/4))^{\kappa'/8}}
\end{equation}
where $C(\mu)= \int T(e) \prod_{x\in J_e} g(x) \ul N (d e)$ and $\ul N (d e)$ is the excursion measure of $\zeta'$.
Combining \eqref{eq:GP-Levy} and \eqref{eq:GP-excursion} gives 
\begin{equation}\label{eq:GP-area-C}
\GP(\ell_1, \ell_2, \ell_3) [e^{-\mu A}] \propto_\gamma \frac{C(\mu)}{\sqrt{\ell_1\ell_2\ell_3}} e^{-(\ell_1+ \ell_2+ \ell_3)2\left(\mu/4\sin(\pi \gamma^2/4)\right)^{\kappa'/8}}.
\end{equation}
Finally, as in the proof of Theorem~\ref{thm-QA3-laplace}, taking the expectation of $e^{-\mu \mu_h (\C) }$ with respect to the left and right hand sides of~\eqref{eq-gpants-welding} gives $C(\mu) \propto_\gamma \mu^{\frac1{\gamma^2}-\frac{1}{2}}$. This finishes the proof of~\eqref{eq-GP-area}.
\end{proof}

\subsubsection{Proof of Theorem~\ref{thm:conformal radius} for $\kappa'\in (4,8)$.}
As in the simple case, let $(z_1, z_2, z_3) = (0,1,e^{i\pi/3})$ and $\eta_i$'s be the outermost loop in the full-plane $\CLE_{\kappa'}$ separating $z_i$ from the other two points. Denote the law of the triples of loops $(\eta_1, \eta_2, \eta_3)$ by $\sm_3$. Let $\CR(\eta_i,z_i)$ be the conformal radius of the connected component of $\C\backslash\eta_i$ containing $z_i$, viewed from $z_i$. For $(\alpha_i)_{1\le i\le3}\in \R$ and $\lambda_i = \alpha_i(Q -\frac{\alpha_i}2)  - 2$,
define $\sm_3^{\alpha_1, \alpha_2, \alpha_3}$ as
\begin{equation}\label{eq:loop-weight ns}
	\frac{d \sm_3^{\alpha_1, \alpha_2, \alpha_3}}{d \sm_3}(\eta_1,\eta_2,\eta_3) = \prod_{i=1}^3 \left(\frac12\CR(\eta_i, z_i)\right)^{\lambda_i}.
\end{equation}
Then as in Lemma \ref{lem-sphere-laplace}, for $\alpha_1,\alpha_2,\alpha_3$ satisfying the Seiberg bound~\eqref{eq-seiberg} in Proposition~\ref{prop-DOZZ}, we have
\begin{equation}\label{eq:prod-DOZZ-nonsimple}
\left(\LF_\C^{(\alpha_i, z_i)_i} \times \sm_3^{\alpha_1, \alpha_2, \alpha_3}\right) [e^{-\mu_\phi(\C)}] =2^{-\lambda_1-\lambda_2-\lambda_3-1}  C^\mathrm{DOZZ}_\gamma(\alpha_1,\alpha_2,\alpha_3) C_{\kappa'}^{\CLE} (\lambda_1,\lambda_2,\lambda_3).
\end{equation}

The measure $\LF_\C^{(\alpha_i, z_i)_i} \times \sm_3^{\alpha_1, \alpha_2, \alpha_3}$ can be obtained by conformal welding as in Theorem \ref{prop-weight-QA-3}. Recall the one-point marked generalized quantum disk $\cM_{1,0}^{\rm f.d.}(\alpha; \ell)$ in Definition \ref{def:forested-disk}. Its joint distribution of area and length is given by the following, which is essentially from \cite{holden2022liouville}.
\begin{lemma}[{\cite[Propositions 8.24, 8.25]{ACSW24a}}]\label{total mass gqd alpha}
For $\alpha>\frac{\gamma}{2}$, we have
\begin{equation*}
\begin{aligned}
|\mathcal{M}_{1,0}^{\rm f.d.}(\alpha;\ell)|=(2\pi)^{2-2\alpha/\gamma}2^{2-\alpha Q+\alpha^2/2}\frac{\Gamma\left(\frac{2\alpha}{\gamma}-\frac{4}{\gamma^2}\right)}{\Gamma\left(2-\frac{4}{\gamma^2}\right)}\Gamma\left(1-\frac{\gamma^2}{4}\right)^{2\alpha/\gamma-2}|\mathcal{M}_{1,0}^{\rm f.d.}(\gamma;1)|\ \ell^{\frac{\gamma}{2}(\alpha-Q)-1},\\
\mathcal{M}_{1,0}^{\rm f.d.}(\alpha;\ell)^{\#}[e^{-\mu A}]=\frac{2}{\Gamma(\frac{\gamma}{2}(Q-\alpha))}(\frac{M'\ell}{2})^{\frac{\gamma}{2}(Q-\alpha)}K_{\frac{\gamma}{2}(Q-\alpha)}(M'\ell)\text{ 
 with } M'=2\left(\frac{\mu}{4\sin\frac{\pi\gamma^2}{4}}\right)^{\frac{\kappa'}{8}}.
\end{aligned}
\end{equation*}
\end{lemma}

The following theorem gives the non-simple counterpart of Theorem~\ref{prop-weight-QA-3}.
\begin{theorem}\label{prop-weight-GA-3}
	Let $\alpha_i \in (Q - \frac1\gamma, Q)$ for $i=1,2,3$. 
	Let $(\phi,\eta_1,\eta_2,\eta_3)$ be sampled from $\LF_\C^{(\alpha_i, z_i)_i}\times \sm_3^{\alpha_1, \alpha_2, \alpha_3}$. 
	Then the law of the decorated quantum surface $(\C, \phi, \eta_1, \eta_2, \eta_3,  z_1,z_2,z_3)/{\sim_\gamma}$ is
	\alb
	\frac{\gamma^2}{4\pi^4(Q-\gamma)^4}\iiint_0^\infty \ell_1 \ell_2 \ell_3 \mathrm{Weld}\Big( \cM_{1,0}^{\rm f.d.}(\alpha_1;\ell_1), \cM_{1,0}^{\rm f.d.}(\alpha_2;\ell_2), 
	\cM_{1,0}^{\rm f.d.}(\alpha_3; \ell_3), \GP(\ell_1, \ell_2, \ell_3) \Big)\, d \ell_1 \, d\ell_2\, d\ell_3.
	\ale
\end{theorem}
\begin{proof}
  By Definition~\ref{def:forested-disk}, $\cM_{1,0}^\mathrm{f.d.}(\gamma)=\frac{\gamma}{2\pi(Q-\gamma)^2}\GQD_{1,0}$. 
  The proof Theorem~\ref{prop-weight-GA-3} follows the exact same steps as for Theorem~\ref{prop-weight-QA-3}, except that we use Proposition~\ref{thm-weld-GA-3} in place of Theorem~\ref{thm-weld-QA-3}, and the reweighting argument is applied to the generalized quantum disk with one interior point, which is treated in \cite[Proposition 8.26]{ACSW24a}.
\end{proof}
\begin{proof}[Proof of Theorem~\ref{thm:conformal radius}, $\kappa'\in (4,8)$.]
    Combining Theorem \ref{prop-weight-GA-3} with \eqref{eq:prod-DOZZ-nonsimple} we have
\begin{small}
\begin{equation*}
\begin{aligned}
    &\quad\quad 2^{-\lambda_1-\lambda_2-\lambda_3-1} C_\gamma^{\mathrm{DOZZ}}\left(\alpha_1, \alpha_2, \alpha_3\right) C_{\kappa'}^{\mathrm{CLE}}\left(\lambda_1, \lambda_2, \lambda_3\right)\\&=\frac{\gamma^2}{4 \pi^4(Q-\gamma)^4} \iiint_0^{\infty} \ell_1 \ell_2 \ell_3 \operatorname{Weld}\left(\cM_{1,0}^{\rm f.d.}\left(\alpha_1 ; \ell_1\right), \cM_{1,0}^{\rm f.d.}\left(\alpha_2 ; \ell_2\right),\cM_{1,0}^{\rm f.d.}\left(\alpha_3 ; \ell_3\right), \operatorname{GP}\left(\ell_1, \ell_2, \ell_3\right)\right) [e^{-\mu A}]d \ell_1 d \ell_2 d \ell_3.
\end{aligned}
\end{equation*}
\end{small}

Now the result follows by the similar calculations  as in Proposition \ref{prop-Z3}, with $\GP_3(\ell_1,\ell_2,\ell_3)[e^{-A}]$ given by Theorem \ref{thm-laplace ns} and  $\cM_{1,0}^{\rm f.d.}\left(\alpha ; \ell\right)[e^{-A}]$ given by Lemma~\ref{total mass gqd alpha}.
\end{proof}

\appendix

\section{Scaling limit of the cluster measure: proof of Proposition~\ref{prop:Rperc}}\label{percolation}
To prove Proposition \ref{prop:Rperc}, we first work on bounded domains where existing results from \cite{camia2019conformal,{cai2022natural}} can be used. 
Suppose $D\subset\C$ is a  Jordan domain with piecewise smooth boundary. Consider the critical Bernoulli percolation $\omega^\delta$ on the lattice $\delta\mathbb{T}\cap D$ with monochronic boundary condition, and let $\Gamma^\delta_D$ be the ensemble of interfaces of $\omega^\delta$. 
By \cite{camia-newman-sle6},  $\Gamma_D^\delta$ converges as  $\delta\to0$ to the $\CLE_6$ on $D$, which we denote by $\Gamma_D$.
For an open cluster $\cC^\delta$ in $\omega^\delta$, define  $\mu^\delta(\cdot;\cC^\delta)$ by the counting measure on its sites normalized by $\delta^2 \pi_\delta^{-2}(\delta,1)$.
The following  lemma, essentially from \cite{camia2019conformal,{cai2022natural}} gives the convergence of 
$\mu^\delta(\cdot;\cC^\delta)$ to the Miller-Schoug measure on the corresponding $\CLE_6$ cluster.

\begin{lemma}\label{lem:cme}
There is a constant $\frak C>0$ such that under a coupling where $\Gamma^\delta_D\to\Gamma_D$ a.s.,  the collection of measures $\{\mu^\delta(\cdot;K^\delta):  K^\delta \in \cC^\delta\}$ converges to 
the Miller-Schoug measures $\{\frak C\mu(\cdot;K): K\in \cC_D\}$, where $\cC^\delta_D$ enumerates open clusters in $\omega^\delta$ and $\cC_D$ enumerates the corresponding $\CLE_6$ clusters. Here the convergence is in the sense that for each $\varepsilon>0$, measures in the family whose clusters have diameter larger than $\varepsilon$ converge in the weak topology. 
\end{lemma}
\begin{proof}
\cite[Theorem 2.2]{camia2019conformal} gives the existence of the limit of $\{\mu^\delta(\cdot;K^\delta)\}$, and shows that the limit measures are supported on their corresponding CLE clusters.  For the outermost cluster, it was shown in \cite[Section 8.4]{cai2022natural} that $\lim_{\delta\to 0} \mu^\delta(\cdot;K^\delta_\partial)=\frak{C} \mu(\cdot;K_\partial)$ in probability for some constant $\frak C>0$ not depending on $D$, where $K^\delta_\partial$ and $K_\partial$ represent the discrete and continuum  the outermost cluster, respectively. It remains to treat the non-boundary touching clusters. Fix a Jordan domain $B$ inside $D$ with piecewise smooth boundary conditions. Then $\omega^\delta$ induces a percolation on $B\cap \delta \mathbb T$ with monochromatic boundary condition. Then the measure on its outermost cluster $K^\delta_{\partial B}$ converges in probability to $\frak C\mu(\cdot;K_{\partial B})$, where $\mu(\cdot;K_{\partial B})$ is the Miller-Schoug measure on the corresponding outermost cluster $K_{\partial B}$ for the CLE$_6$ on $B$.  
By~\cite[Theorem 1.10]{cai2022natural}, the Miller-Schoug measure satisfies the following locality property. Suppose $U$ is a measurable subset of a cluster $K\in \cC_D$, then
$\mu(U\cap B;K)=\mu(U\cap B;K_{\partial B})$. The discrete analogue of this locality is clearly true for percolation. By varying $B$, we see that  
$\lim_{\delta\to 0} \mu^\delta(\cdot;K^\delta)=\frak{C} \mu(\cdot;K)$ in probability.\qedhere
\end{proof}

Define $P_{n,D}^\delta(u_1,...,u_n)$ to be the probability that $u_1,...,u_n$ are in the same open cluster in $D$. Then by \cite[Theorem 1.1]{camia2023conformal}, $\pi_\delta^{-k}P_{n,D}^\delta(u_1,u_2,...,u_n)$ converges to a limit $P_{n,D}(u_1,...,u_n)$.  From Lemma~\ref{lem:cme} we can express $P_{n,D}(u_1,...,u_n)$ in terms of the Green function of 
$\{\mu(\cdot;K): K\in \cC_D\}$.

\begin{corollary}\label{cor:D-convergence}
Let $G^{\rm open}_{n, D}$ be such that
$G^{\rm open}_{n, D}(z_1,...,z_n)\prod_{i=1}^n dz_i= \int \sum_{K\in\cC_D}\prod_{i=1}^n\mu(dz_i;K){\rm CLE}_6^D(d\Gamma_D)$.
Then $P_{n,D}(u_1,...,u_n)=\frak{C}^n G^{\rm open}_{n, D}$, with the constant $\frak{C}$ as in Lemma~\ref{lem:cme}.
\end{corollary}
\begin{proof}
Note that
\begin{equation}\label{eq:discrete}
\pi_\delta^{-n}P_{n,D}^{\delta}(z_1^\delta,z_2^\delta,...,z_n^\delta)\prod_{i=1}^n \delta^2 dz_i^\delta=\E\left[\sum_{K^\delta\in\cC^\delta}\prod_{i=1}^n\mu^\delta(dz_i^\delta,K^\delta)\right].
\end{equation}
Suppose $U_i,1\le i\le k$ are non-intersecting open subsets in $\C$, and consider the integration of both two sides of \eqref{eq:discrete} on $U_1\times ...\times U_n$. The integration of the left hand sides converges to $\int_{U_1\times ... \times U_n} P_{n,D}(z_1,...,z_n)\prod_{i=1}^n dz_i$. By Lemma \ref{lem:cme}, the integration of the right hand side converges to $\frak{C}^nG^{\rm open}_{n, D}(z_1,...,z_n)\prod_{i=1}^n dz_i$.
\end{proof}

\begin{proof}[Proof of Proposition \ref{prop:Rperc}]
Taking  $D\nearrow \C$, we have  $P_{n,D} \nearrow P_{D} $. Moreover, $G^{\rm open}_{n, D}\nearrow \frac12G^{\rm cluster}_{n} $.
Here  the factor $\frac{1}{2}$ is due to the open-closed symmetry in percolation.
Therefore $2P_n(z_1,...,z_n)=\frak{C}^n G_n(z_1,...,z_n)$ as desired.\qedhere.
\end{proof}

\section{The continuity as \texorpdfstring{$\kappa\nearrow 4$}{g}}\label{kappato4}
In this appendix we provide detailed proof of various statements with $\kappa=4$ by taking the $\kappa\nearrow 4$ limit.
For $\kappa\in (0,4]$. Let $\Gamma_\kappa$ be the CLE$_\kappa$ on the unit disk $\D$ and $\mu_{\Gamma_\kappa}$ be the Miller-Schoug measure as defined in \cite{miller2023existence}.  Choose $c_\kappa$ such that  $c_\kappa\E[\mu_{\Gamma_\kappa}(\D)]=1$. Then by \cite[Proposition 4.5]{miller2023existence}, the measure $c_\kappa\mu_\kappa(\cdot;\Gamma_\kappa)$ weakly converges to a limit $\mu_{\Gamma_4}$, which we call 
the Miller-Schoug measure on the outermost cluster of CLE$_4$ on $\D$. When $\D$ is replaced by an arbitrary simply connected domain, 
the Miller-Schoug measure can be defined via its conformal covariance. This gives the definition of $ \cM^{\rm cluster}_i$ hence $G^{\rm cluster}$ and $R^{\rm cluster}(\kappa)$ for $\kappa=4$.
\begin{lemma}\label{lem:MS-cont}
Suppose $\kappa_n\nearrow 4$. Let $\Gamma_{\kappa_n}$ be a $\CLE_{\kappa_n}$ on a Jordan domain $D_n$, and $\Gamma_{4}$ is a $\CLE_4$ on a  Jordan domain $D$. Suppose  $D_n$ converge to $D$ in the Hausdorff topology, then the measure $\mu_{\Gamma_{\kappa_n}}$ converge in law to $\mu_{\Gamma_4}$. 
\end{lemma}
\begin{proof}
Since $D_n$ converge to $D$ in the Hausdorff topology, we can find conformal maps $f_n$ from $\D$ to $D_n$ and a conformal map $f$ from $\D$ to $D$ such that $f_n$ converge to $f$ uniformly on compact sets.
\end{proof}

\begin{lemma}\label{lem:Green-cont}
 $\lim_{\kappa\nearrow 4} R^{\rm cluster}(\kappa)= R^{\rm cluster}(4)$. 
\end{lemma}
\begin{proof}
Define $ \cM^{\rm cluster}_i$ and $G^{\rm cluster}$ as in~\eqref{eq:greendef}. In this proof we write them as $ \cM^{\rm cluster,\kappa}_{i}$ and $G^{\rm cluster,\kappa}$ to indicate the $\kappa$-dependence.  By Lemma~\ref{lem:MS-cont}, as $\kappa \nearrow 4$, the expectation $\E[\sum_{ i\ge 1 }  \int_{U_1\times \cdots U_n}   \prod_{k=1}^n c_\kappa^n  \cM^{\rm cluster,\kappa}_i(dx_i)]$ converge to  $\E[\sum_{ i\ge 1 }  \int_{U_1\times \cdots U_n} \prod_{k=1}^n \cM^{\rm cluster,4}_i(dx_i)]$, where $U_1,\cdots U_n$ are disjoint compact sets on the plane.  This is because for fixed $U_1,\cdots U_n$, there are  finitely many CLE loops intersect at least two of them, and all of them converge as $\kappa\nearrow 4$. 
Taking $n=3$, and setting $(u_1,u_2,u_3)=(0,1,e^{i\pi/3})$, we have $\int_{U_1\times U_2\times U_3}  c_\kappa^3 G^{\rm cluster, \kappa}(u_1,u_2,u_3)|z_1-z_2|^{-(2-d_\kappa)}  |z_2-z_3|^{-(2-d_\kappa)} |z_3-z_1|^{-(2-d_\kappa)} \prod_{i=1}^3dz_i$ converges to $\int_{U_1\times U_2\times U_3} G^{\rm cluster, 4}(u_1,u_2,u_3)|z_1-z_2|^{-(2-d_4)}  |z_2-z_3|^{-(2-d_4)} |z_3-z_1|^{-(2-d_4)} \prod_{i=1}^3dz_i$, where $d_\kappa=2-\frac{(3\kappa-8)(8-\kappa)}{32\kappa}$. Therefore, $\lim_{\kappa\nearrow 4} c_\kappa^3 G^{\rm cluster, \kappa}(u_1,u_2,u_3) = G^{\rm cluster, 4}(u_1,u_2,u_3)$.  Similarly,  $\lim_{\kappa\nearrow 4} c_\kappa^2 G^{\rm cluster, \kappa}_2(u_1,u_2) = G^{\rm cluster, 4}_2(u_1,u_2)$.
Since 
\[
R^{\rm cluster}(\kappa):=\frac{c^3_\kappa G^{\rm cluster}_3(u_1,u_2,u_3)}{\sqrt{c^2_\kappa G^{\rm cluster}_2(u_1,u_2) c^2_\kappa G^{\rm cluster}_2(u_1,u_3) c^2_\kappa G^{\rm cluster}_2(u_2,u_3)}},
\]we get $\lim_{\kappa\nearrow 4} R^{\rm cluster}(\kappa)= R^{\rm cluster}(4)$ as desired. 
\end{proof}
The Minkowski content measure on the CLE loops converge  as $\kappa\nearrow 4$ in a similar sense as for the Miller-Schoug measure. This was explained in \cite[Lemma A.6]{ACSW24a} using the formalism of the SLE loop measure. This gives:
\begin{lemma}\label{lem:Green-cont loop}
 $\lim_{\kappa\nearrow 4} R^{\rm loop}(\kappa)= R^{\rm loop}(4)$. 
\end{lemma}
\begin{proof}
This follows from the same proof as for Lemma~\ref{lem:Green-cont}.
\end{proof}

Next, we prove a continuity result for conformal radii used in the $\kappa=4$ case of Theorem~\ref{thm:conformal radius}.
\begin{lemma}\label{lem:cr-4}
For $z_1,z_2,z_3\in\C$ and $(\eta_i)$ be the outermost loop in the full-plane $\CLE_\kappa$ surrounding $z_i$ and separating other two $z_j$'s, the conformal radius $\{\CR(z_i, \eta_i)\}_{1\le i\le 3}$ for $\kappa<4$ converges in law as $\kappa\nearrow 4$ to $\{\CR(z_i, \eta_i)\}_{1\le i\le 3}$ with $\kappa=4$.
\end{lemma}
\begin{proof}
  Fix a small $\eps>0$.
  Let $S$ be the set of simple loops in $\wh \C$ separating $z_1$ from $\{z_2, z_3\}$. Let $S_\eps=\{ \eta\in S:  \CR(z_1,\eta)>\eps\}$. 
  For $\eta\in S$, let $D_\eta$ be the connected component of $\wh \C \backslash \eta$ containing $z_2, z_3$. 
  For $\kappa\in (8/3,4]$, the law of $(\eta_1,\Gamma)$ restricted to the event 
  $\eta_1\in S_\eps$ is the same as that of $(\eta, \Gamma)$ under $1_{\eta = \eta_1}\mathrm{Count}_{S_\eps\cap \Gamma}(d\eta)\CLE_\kappa(d\Gamma)$,	
  where $\mathrm{Count}_{S_\eps\cap \Gamma}(d\eta)$ is the counting measure on $S_\eps\cap \Gamma$.
  Given a sample of $(\eta, \Gamma)$  of $\mathrm{Count}_{S_\eps\cap \Gamma}(d\eta)\CLE_\kappa(d\Gamma)$,  let $\Gamma_+$ be the subset of $\Gamma$ in $D_\eta$ and $F_{\Gamma_+}$ be the event that no loop in $\Gamma_+$ surrounds both $z_2$ and $z_3$. Then the event $\eta = \eta_1$ is the same as  $F_{\Gamma_+}$. By the Markov property of the full-plane CLE \cite{werner-sphere-cle}, the law of $\eta$ under $1_{\eta = \eta_1}\mathrm{Count}_{S_\eps\cap \Gamma}(d\eta)\CLE_\kappa(d\Gamma)$ is 
  $\CLE_\kappa^{D_\eta}[F_{\Gamma_+}] 1_{\eta \in S_\eps}\cdot \SLE_\kappa^\mathrm{loop}(d\eta)$, where $\CLE_\kappa^{D_\eta}$ is the law of a $\CLE_\kappa$ in $D_\eta$.

  By \cite[Lemma A.3, Lemma A.5]{ACSW24a},   as $\kappa \nearrow 4$, the measure $\CLE_\kappa^{D_\eta}[F_{\Gamma_+}] 1_{\eta \in S_\eps}\cdot \SLE_\kappa^\mathrm{loop}(d\eta)$ on loops converges weakly with respect to the Hausdorff metric to $\CLE_4^{D_\eta}[F_{\Gamma_+}] 1_{\eta \in S_\eps}\cdot \SLE_4^\mathrm{loop}(d\eta)$.
  Therefore, the law of $\eta_1$ conditioned on the event $\eta_1\in S_\eps$ as $\kappa \nearrow 4$ converges weakly to the same law with $\kappa = 4$.
  Since $\lim_{\eps\to 0}\P[\Gamma\cap S_\eps \neq\emptyset]=\lim_{\eps \to 0}\P[\CR(z_1,\eta_1)>\eps]=1$ uniformly for $\kappa\in (\kappa_0,4]$ for a fixed $\kappa_0 \in (8/3,4)$, we can remove the restriction $\eta_1\in S_\eps$ and conclude that as $\kappa \nearrow 4$ the law of  $\eta_1$  converges weakly  to the same law when $\kappa=4$.
  
  The conditional law of $(\eta_2,\eta_3)$ given $\eta_1$ is the law of the  two outermost loops surrounding $z_2,z_3$ under the CLE measure $\CLE_\kappa^{D_{\eta_1}}$ conditioned on these two loops being distinct. By ~\cite[Lemma A.3]{ACSW24a},  the joint law of $(\eta_1,\eta_2,\eta_3)$ has the desired continuity as $\kappa \nearrow 4$.
\end{proof}

We also need the following continuity of special functions appeared in Theorems~\ref{thm: connectivity},~\ref{thm:nesting} and~\ref{thm:mag}.
\begin{lemma}\label{lem:cont4}
The right hand sides of \eqref{eq:carpet-formula}, \eqref{eq-main-idozz} and \eqref{eq:Rformula} are  continuous as $\kappa\nearrow 4$.
\end{lemma}
\begin{proof}
    By the shift equation for $\Upsilon_{\frac{\gamma}{2}}$ we get $\Upsilon_{\frac{\gamma}{2}}(\gamma)=\Upsilon_{\frac{\gamma}{2}}(\frac{\gamma}{2})\ell(\frac{\gamma^2}{4})(\frac{\gamma}{2})^{1-\gamma^2/2}$, hence $\Upsilon_{\frac{\gamma}{2}}(\gamma)\sim (1-\frac{\gamma^2}{4})$ $\gamma\nearrow2$. Plugging this into \eqref{eq:DOZZ} and by~\eqref{legfactor} we get the countinuity of the first two. The continuity of the right side of \eqref{eq:Rformula} follows from \eqref{eq:mag-exp} and that $C_\gamma^{\rm DOZZ}(\frac{\gamma}{2},\frac{\gamma}{2},\frac{\gamma}{2})\sim(1-\frac{\gamma^2}{4})^{1/2}$ as $\gamma\nearrow2$.
\end{proof}

\section{Integration formulas}\label{appendix:integration}
In this section, we prove some
 integral identities that are used in our paper.
\begin{proof}[Proof of Lemma~\ref{lem-int-KK}]
	\alb
	&\int_0^\infty \ell K_a(c \ell) K_{a'}(c\ell) \, d\ell = 2\iint_0^\infty \ell K_{a-a'}(2c\ell \cosh t) \cosh ((a+a')t)\, dt \, d\ell \\
	&= 2\iint_0^\infty x K_{a-a'}(x) \frac{\cosh ((a+a')t)}{(2c\cosh t)^2}\, dt \, dx = \frac1{2c^2}\int_0^\infty  x K_{a-a'}(x) \, dx \int_0^\infty  \frac{\cosh ((a+a')t)}{(\cosh t)^2}\, dt.
	\ale
	Here the first equality follows from \cite[(10.32.17)]{NIST:DLMF}, the second from the change of variables $x = 2c (\cosh t) \ell$. 
	By \cite[(10.43.19)]{NIST:DLMF} and some standard gamma function identities, we have
	\[\int_0^\infty  x K_{a-a'}(x) \, dx = \Gamma(1-\frac12(a-a'))\Gamma(1+\frac12(a-a')) = \frac12(a-a') \Gamma(1-\frac12(a-a'))\Gamma(\frac12(a-a')) =  \frac{\frac\pi2(a-a')}{\sin (\frac\pi2 (a-a'))}. \]
	By \cite[(4.40.9)]{NIST:DLMF} and the fact that $\cosh t$ is even, 
	\[\int_0^\infty \frac{\cosh((a+a')t)}{(\cosh t)^2} \, dt = \frac12\int_{-\infty}^\infty \frac{e^{(a+a')t}}{(\cosh t)^2} \, dt =  \frac{\frac\pi2(a+a')}{\sin (\frac\pi2(a+a'))}. \qedhere \]
\end{proof}

Now we prove  Lemma~\ref{lem:integration-fun} used in the proofs of Theorem~\ref{thm-QA3-laplace} as well as Proposition~\ref{eq:prop-3pt}.
\begin{lemma} \label{lem:integration-fun}
$\int_0^\infty \frac1{\sqrt x} e^{-cx} K_\nu(cx) \, dx = \frac{\pi^{3/2}}{\sqrt{2c} \cos (\pi \nu)}$ for $c>0$ and $\nu \in (-\frac12, \frac12)$.
\end{lemma}
	\begin{proof} 
	We will prove the $c=1$ case; the general case then follows from a change of variables. 
	Recall the integral definition of $K_\nu(x)$ in~\eqref{eq-Kv}.
	Using $1+\cosh t = 2(\cosh \frac{t}2)^2$ and the fact that $\cosh$ is even, we can express $\int_0^\infty \frac1{\sqrt x} e^{-x} K_\nu (x) \, dx$ as
	\[\iint_0^\infty \frac1{\sqrt x} e^{-x (1 + \cosh t)} \cosh(\nu t)\, dx \, dt = \sqrt \pi \int_0^\infty \frac{\cosh (\nu t)}{\sqrt{1+\cosh t}}\, dt = \frac12\sqrt {\pi/2} \int_{-\infty}^\infty \frac{\cosh (\nu t)}{\cosh(t/2)}\, dt.\]
Since $\int \frac{e^{\nu t}}{\cosh(t/2)} \, dt = \int \frac{e^{-\nu t}}{\cosh(t/2)} \, dt$, we have $\int \frac{\cosh (\nu t)}{\cosh(t/2)}\, dt = \int \frac{e^{\nu t}}{\cosh(t/2)}\, dt$.  
For a random variable $X$ following the \emph{hyperbolic secant distribution} $\frac{dx}{\pi \cosh x}$, it is known that $\E[e^{tX}]=\frac{1}{\cos(\pi t/2)}$ for $|t|<1$;  
see e.g.  \cite[Section 1.3]{Fischer-hypersec}. Therefore \(\int_{-\infty}^\infty \frac{e^{\nu t}}{\cosh(t/2)}\, dt = \frac{2\pi}{\cos (\pi \nu)}.\)
This concludes the proof.
\end{proof}

The hypergeometry function $_2F_1(\cdot,\cdot;\cdot;\cdot)$ is defined by
$$_2F_1(a,b;c;z) = \sum_{n=0}^\infty \frac{(a)_n (b)_n}{(c)_n} \frac{z^n}{n!},$$
where $(q)_n$ is the rising Pochhammer symbol defined by
$(q)_n = q(q+1) \cdots (q+n-1)$ for $n>0$ and $(q)_n=1$ for $n=0$. We need the following integration formula in the proofs of Lemmas~\ref{lem-6} and~\ref{lem:7}.
\begin{lemma}[{\cite[Volume 2, Section 7.7.3, Formula (26)]{Bateman1953HigherTF}}]\label{lem:bigint}
For ${\rm Re}\mu>|{\rm Re}\nu|$ and ${\rm Re}(\alpha+\beta)>0$,
\begin{equation}
    \int_0^\infty x^{\mu-1}e^{-\alpha x}K_\nu(\beta x)dx=\frac{\sqrt{\pi}(2\beta)^\nu}{(\alpha+\beta)^{\mu+\nu}}\frac{\Gamma(\mu+\nu)\Gamma(\mu-\nu)}{\Gamma(\mu+\frac{1}{2})}{}_2F_1\left(\mu+\nu,\nu+\frac{1}{2};\mu+\frac{1}{2};\frac{\alpha-\beta}{\alpha+\beta}\right).
\end{equation}
\end{lemma}

\bibliographystyle{alpha}
\footnotesize{
\def\cprime{$'$}

}

\end{document}